\newif\ifabstract
\newif\iffull
\newcommand{\myparskip}{3pt}
\newenvironment{proofof}[1]{\noindent{\bf Proof of #1.}}%
        {\hspace*{\fill}$\Box$\par\vspace{4mm}}
\newcommand{\cdstar}{c^{**}}
\newcommand{\tc}{\tilde c}
\newcommand{\misr}{{\sf MISR}\xspace}
\newcommand{\ceil}[1]{\ensuremath{\left\lceil#1\right\rceil}}
\newcommand{\floor}[1]{\ensuremath{\left\lfloor#1\right\rfloor}}
\newcommand{\band}{\wedge}
\newcommand{\event}{{\cal{E}}}
\newcommand{\NP}{\mbox{\sf NP}}
\newcommand{\ZPP}{\mbox{\sf ZPP}}
\newcommand{\opt}{\mathsf{OPT}}
\newcommand{\set}[1]{\left\{ #1 \right\}}
\newcommand{\tset}{{\mathcal T}}
\newcommand{\iset}{{\mathcal{I}}}
\newcommand{\pset}{{\mathcal{P}}}
\newcommand{\lset}{{\mathcal{L}}}
\newcommand{\bset}{{\mathcal{B}}}
\newcommand{\aset}{{\mathcal{A}}}
\newcommand{\cset}{{\mathcal{C}}}
\newcommand{\tcset}{\tilde{\mathcal{C}}}
\newcommand{\fset}{{\mathcal{F}}}
\newcommand{\ff}[1]{{\mathbb{F}}^{(#1)}}
\newcommand{\ffi}{{\mathbb{F}}^{(i)}}
\newcommand{\xset}{{\mathcal{X}}}
\newcommand{\wset}{{\mathcal{W}}}
\newcommand{\rset}{{\mathcal{R}}}
\newcommand{\trset}{\tilde{\mathcal{R}}}
\newcommand{\dset}{{\mathcal{D}}}
\newcommand{\hset}{{\mathcal{H}}}
\newcommand{\vset}{{\mathcal{V}}}
\newcommand{\sset}{{\mathcal{S}}}
\newcommand{\be}{\begin{enumerate}}
\newcommand{\ee}{\end{enumerate}}
\newcommand{\bd}{\begin{description}}
\newcommand{\ed}{\end{description}}
\newcommand{\bi}{\begin{itemize}}
\newcommand{\ei}{\end{itemize}}
\newtheorem{theorem}{Theorem}[section]
\newtheorem{lemma}[theorem]{Lemma}
\newtheorem{observation}[theorem]{Observation}
\newtheorem{corollary}[theorem]{Corollary}
\newtheorem{claim}[theorem]{Claim}
\newtheorem{proposition}[theorem]{Proposition}
\newtheorem{definition}{Definition}[section]
\newenvironment{proof}{\par \smallskip{\bf Proof:}}{\hfill\stopproof}
\def\stopproof{\square}
\def\square{\vbox{\hrule height.2pt\hbox{\vrule width.2pt height5pt \kern5pt
\vrule width.2pt} \hrule height.2pt}}
\renewcommand{\phi}{\varphi}
\newcommand{\eps}{\epsilon}
\newcommand{\half}{\ensuremath{\frac{1}{2}}}
\newcommand{\poly}{\operatorname{poly}}
\newcommand{\expect}[2][]{\text{\bf E}_{#1}\left [#2\right]}
\newcommand{\prob}[2][]{\text{\bf Pr}_{#1}\left [#2\right]}
\newenvironment{properties}[2][0]
{
\begin{enumerate} \setcounter{enumi}{#1}}{\end{enumerate}}
\def\MISR{\textsf{MISR}\xspace}
\def\PTAS{\textsf{PTAS}\xspace}
\def\QPTAS{\textsf{QPTAS}\xspace}
\def\etal{et al.\xspace}
\def\poly{\mathrm{poly}}
\begin{document}

\title{On Approximating Maximum Independent Set of Rectangles}
\author{Julia Chuzhoy\thanks{Toyota Technological Institute at Chicago. Email: {\tt cjulia@ttic.edu}. Supported in part by NSF grant CCF-1318242.}\and Alina Ene\thanks{Department of Computer Science and DIMAP, University of Warwick. Email: {\tt A.Ene@dcs.warwick.ac.uk}.}}

\begin{titlepage}

\maketitle
\thispagestyle{empty}

\begin{abstract}
	We study the Maximum Independent Set of Rectangles (\MISR)  problem: given a set of $n$ axis-parallel rectangles, find a largest-cardinality subset of the rectangles, such that no two of them overlap. \MISR is a basic geometric optimization problem with many applications, that has been studied extensively. Until recently, the best approximation algorithm for it achieved an $O(\log \log n)$-approximation factor. In a recent breakthrough, Adamaszek and Wiese provided a \emph{quasi-polynomial} time approximation scheme: a $(1-\eps)$-approximation algorithm with running time $n^{O(\poly(\log n)/\eps)}$. Despite this result, obtaining a \PTAS or even a polynomial-time constant-factor approximation remains a challenging open problem. In this paper we make progress towards this goal by providing an algorithm for \MISR that achieves a $(1 - \eps)$-approximation in time $n^{O(\poly(\log\log{n} / \eps))}$. We introduce several new technical ideas, that we hope will lead to further progress on this and related problems.
\end{abstract}
\end{titlepage}

\section{Introduction}

In the Maximum Independent Set of Rectangles (\MISR) problem, the input is a set $\rset$ of $n$ axis-parallel rectangles, and the goal is to find a maximum-cardinality subset of the rectangles, such that no two of them overlap.
\MISR is a fundamental geometric optimization problem with several applications to map labeling~\cite{AKS98,DF92}, resource allocation~\cite{LNO02}, and data mining~\cite{KMP98, FMMT01,LSW97}. It is also a special case of the classical \textsf{Maximum Independent Set} problem, where the input is an $n$-vertex graph $G$, and the goal is to find a maximum-cardinality subset $S$ of its vertices, so that  no edge of $G$ has both endpoints in $S$. \textsf{Maximum Independent Set} is one of the most fundamental and extensively studied problems in combinatorial optimization. Unfortunately, it is known to be very difficult to approximate: the problem does not have an $n^{1 - \eps}$-approximation algorithm for any constant $\eps$ unless $\NP = \ZPP$~\cite{Hastad01}, and the best current positive result gives an $O(n / \log^2{n})$-approximation algorithm~\cite{BH92}. It is therefore natural to focus on important classes of special cases of the problem, where better approximation guarantees may be achievable. This direction has proved to be especially fruitful for instances stemming from geometric objects in the plane. Results in this area range from Polynomial-Time Approximation Schemes (\PTAS) for ``fat objects", such as disks and squares~\cite{ErlebachJS05}, to an $n^{\eps}$-approximation for arbitrary geometric shapes \cite{FoxP11}. Unfortunately, the techniques used in algorithms for fat geometric objects seem to break down for other geometric shapes. Rectangles are among the simplest shapes that are not fat, which puts \MISR close to the boundary of the class of geometric problems for which \PTAS is achievable with current techniques.

\MISR is a basic geometric variant of Independent Set, and rectangles seem to be among the simplest shapes that capture several of the key difficulties associated with objects that are not fat. It is then not surprising that \MISR has attracted a considerable amount of interest from various research communities. Since the problem is known to be
\NP-hard~\cite{FPT81, IA83}, the main focus has been on designing approximation algorithms. Several groups of researches have
independently suggested $O(\log n)$-approximation algorithms for \MISR~\cite{AKS98,KMP98,Nielsen00}, and Berman \etal~\cite{BermanDMR01} showed that there is a $\ceil{\log_k{n}}$ approximation for any fixed $k$. More recently an $O(\log \log{n})$-approximation was shown~\cite{ChalermsookC09}, that remains the best current approximation algorithm that runs in polynomial time. The result of \cite{ChalermsookC09} also gives a matching upper bound on the integrality gap of the natural LP relaxation for \MISR. The best current lower bound on the integrality gap of this LP relaxation is a small constant, and understanding this gap is a long-standing open question with a beautiful connection to rectangle coloring; see \cite{Chalermsook11} and references therein.
In a recent breakthrough, Adamaszek and Wiese~\cite{AdamaszekW13} designed a Quasi-Polynomial Time Approximation Scheme (\QPTAS) for \MISR: namely, a $(1 - \eps)$-approximation algorithm with running time $n^{O(\poly(\log{n}/\eps))}$, using completely different techniques. Their result can be seen as a significant evidence that \MISR may admit a \PTAS. However, obtaining a \PTAS, or even an efficient constant-factor approximation remains elusive for now.

In this paper, we make progress towards this goal, by providing an algorithm for \MISR that achieves a $(1 - \eps)$-approximation and runs in time $n^{O\left ((\log\log{n} / \eps)^4\right)}$. We introduce several new technical ideas that we hope will lead to further progress on this and related problems.

The \MISR problem seems to be central to understanding several other geometric problems. The work of~\cite{AdamaszekW13} has been very influential, and has lead to several new results, including, for example, a \QPTAS for Maximum Independent Set of Polygons~\cite{AW2,Har-Peled14}, and \QPTAS for several geometric Set Cover problems~\cite{geom-SC}.

\paragraph{Other related work}
Several important special cases of \MISR have been studied extensively. In particular, there is a {\PTAS} for squares --- and more generally, rectangles with bounded aspect ratio \cite{ErlebachJS05} --- and large rectangles whose width or height is within a constant factor of the size of the bounding box that encloses the entire input \cite{AdamaszekW13}. 
We note that a more general weighted version of the \MISR problem has also been considered, where all input rectangles are associated with non-negative weights, and the goal is to find a maximum-weight subset of non-overlapping rectangles. As mentioned earlier, there are several algorithms for \MISR that achieve an $O(\log n)$-approximation, and these results hold in the weighted setting as well. The long-standing $O(\log n)$-approximation was improved in the work of Chan and Har-Peled that achieved an $O(\log n / \log \log n)$-approximation for the weighted problem \cite{ChanH12}. This result remains the best polynomial-time approximation algorithm for the weighted problem, as the $O(\log\log n)$-approximation algorithm of~\cite{ChalermsookC09} only applies to the unweighted version of \MISR. The work of Adamaszek and Wiese~\cite{AdamaszekW13} extends to the weighted version and provides a \QPTAS for it as well. There seem to be several technical difficulties in extending our results to the weighted version of the problem, and we leave this as an open problem.

\paragraph{Our Results and Techniques.}
Our main result is summarized in the following theorem.

\begin{theorem}\label{thm: main}
	There is an algorithm for the \misr problem that, given any set $\rset$ of $n$ axis-parallel rectangles and a parameter $0<\eps<1$, computes a $(1 - \eps)$\footnote{So far we have followed the convention that approximation factors of algorithms are greater than 1, but for our QPTAS it is more convenient for us to switch to approximation factors of the form $(1-\eps)$.}-approximate solution to instance $\rset$, in time $n^{O((\log\log{n}/\eps)^4)}$.
\end{theorem}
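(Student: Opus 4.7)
The plan is to build on the recursive partitioning framework of Adamaszek--Wiese, but to reduce the per-level complexity of the cutting polygons from $\poly(\log n)$ down to $(\log\log n/\eps)^{O(1)}$. As in their paper, the algorithm is a dynamic program whose states are ``simple'' rectilinear sub-regions of the bounding box: axis-parallel rectilinear polygons with at most $k=(\log\log n/\eps)^{O(1)}$ corners, whose corner coordinates are drawn from the $O(n)$ distinct coordinates induced by $\rset$. The value stored at a state is the maximum cardinality of an independent subset of $\rset$ whose rectangles are contained in the region. The transition enumerates all ways of splitting the region by a single low-complexity cut into two sub-regions of the same type and takes the best combination. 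Since the total number of states and the cost of each transition are both at most $n^{O(k)}$, the running time comes out to $n^{O((\log\log n/\eps)^4)}$ once $k$ is instantiated with the right constants.

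The correctness of this DP is driven by a structural theorem, which carries the real content of the paper: for every instance $\rset$ there exists a subset $\opt'\subseteq\opt(\rset)$ with $|\opt'|\ge(1-\eps)|\opt(\rset)|$, together with a rooted binary hierarchical partition $\tset$ of the bounding box, such that every node of $\tset$ is a rectilinear polygon with at most $k$ corners, the cut used to split each internal node into its two children has complexity $(\log\log n/\eps)^{O(1)}$, and every rectangle of $\opt'$ is entirely contained in some leaf of $\tset$. Given such a partition, a straightforward bottom-up induction shows that the DP recovers an independent subset of size at least $|\opt'|$, which proves Theorem~\ref{thm: main}.

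The main obstacle is proving the structural theorem itself, since the corresponding statement of Adamaszek--Wiese is only known with $k=\poly(\log n)$, and their random-shifting argument on a $\poly(\log n)$-granularity grid is essentially tight for a one-shot cut. To push $k$ down to $(\log\log n/\eps)^{O(1)}$, I would use a multi-scale decomposition: group the rectangles of $\opt$ into $O(\log\log n)$ geometric size classes and, inside each class, build a hierarchy of near-axis-aligned cuts with $O(1)$ corners each that loses only an $\eps/\text{depth}$ fraction of that class's rectangles per level. The hard step is to weave these per-class hierarchies into a single global partition tree of node complexity $k$, where one must simultaneously (a) accumulate the per-class, per-level losses so that they telescope to $\eps|\opt(\rset)|$, (b) control the combined corner count of cuts merged across classes at each level of the global tree, and (c) maintain a balanced distribution of the surviving rectangles between the two children of every internal node. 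The interplay between (b) and (c) is where I expect the bulk of the technical work to lie, and is presumably what forces the fourth power $(\log\log n/\eps)^4$ appearing in the final runtime.
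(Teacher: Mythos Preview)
Your proposal has a genuine gap: the structural theorem you are aiming for---that there exists a near-optimal hierarchical partition in which \emph{every} node is a polygon with only $k=(\log\log n/\eps)^{O(1)}$ corners, with corners drawn from the full $O(n)\times O(n)$ coordinate set---is essentially what the paper explicitly argues \emph{cannot} be achieved directly. Since the partitioning tree has depth $\Theta(\log n)$, a balanced cut at a generic node may only lose an $O(\eps/\log n)$-fraction of $\opt$, and there are simple instances forcing any such cut to have $\Omega(\log n/\eps)$ corners. Your size-class grouping sketch does not get around this: grouping $\opt$ into $O(\log\log n)$ size classes does not reduce the depth of the hierarchy (which is governed by $|\opt|$, not by aspect ratios), so the per-level budget is still $\Theta(\eps/\log n)$ and the corner lower bound still bites.

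The paper's actual mechanism is quite different from what you describe. It does \emph{not} keep all polygons at complexity $(\log\log n/\eps)^{O(1)}$; at the deepest recursion level the polygons still have $\Theta(\log n\cdot\operatorname{poly}(\log\log n)/\eps)$ corners. The saving comes from the other knob: restricting \emph{where} those corners can sit. The DP states are organized into $h=\Theta(\log\log n)$ levels. Level-$1$ states are polygons with $L_1=\Theta((\log\log n)^3/\eps)$ corners on the full $n\times n$ grid. For each level-$(i-1)$ state one computes a $\rho_{i-1}$-accurate grid (with $\rho_i=n^{1/2^i}$) of size roughly $\rho_{i-1}\times\rho_{i-1}$, and level-$i$ states are polygons with $L_i=2L_{i-1}$ corners \emph{aligned to that grid}. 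Thus as $i$ grows the corner count doubles but the coordinate space shrinks quadratically, and a telescoping product keeps the total number of states at $n^{O((\log\log n)^4/\eps)}$. The partitioning-tree analysis is organized by \emph{phases} in which $|\opt_{\fset}|$ drops by prescribed factors (not by rectangle size classes), with ``cleanup trees'' inserted between phases to push boundary complexity back down from $L_{i+1}$ to $L_i$. This grid-discretization idea---trading corner count against coordinate resolution across $\log\log n$ nested scales---is the missing ingredient in your plan.
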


In order to put our techniques in context, we first give a high-level overview of the approach of Adamaszek and Wiese \cite{AdamaszekW13}. The description here is somewhat over-simplified, and is different from the description of~\cite{AdamaszekW13}, though the algorithm is essentially the same. Their approach is based on dynamic programming, and uses the divide-and-conquer paradigm. Starting with the initial set of rectangles, the algorithm recursively partitions the input into smaller sub-instances. A key insight is the use of closed polygonal curves to partition the instances: given such a curve, one can discard the rectangles that intersect the curve; the remaining rectangles can be naturally partitioned into two sub-instances, one containing the rectangles lying in the interior of the curve and the other containing rectangles lying outside the curve\footnote{The sub-instances could have holes. In order to simplify the exposition, instead of working with instances with holes, we introduce what we call ``fake rectangles" and use them to partition the instance.}. Adamaszek and Wiese  show that for every set $\rset^*$ of non-overlapping rectangles and an integral parameter $L$, there is a closed polygonal curve $C$, whose edges are parallel to the axes, so that $C$ has at most $L$ corners; the number of rectangles of $\rset^*$ intersecting $C$ is at most $O(|\rset^*|/L)$; and at most a $3/4$-fraction of the rectangles of $\rset^*$ lie on either side of the curve $C$. We call such a curve $C$ a \emph{balanced $L$-corner partitioning curve} for set $\rset^*$. 

Given any subset $\rset'\subseteq\rset$ of rectangles, we denote by $\opt(\rset')$ the optimal solution to instance $\rset'$, and we denote $\opt=\opt(\rset)$. Throughout this exposition, all polygons and polygonal curves have all their edges parallel to the axes. We sometimes refer to the number of corners of a polygon as its boundary complexity.

The approach of \cite{AdamaszekW13} can now be described as follows. Let $L=\Theta(\log n/\eps)$ and $L^*=\Theta(L\cdot \log n)$. The algorithm uses dynamic programming. Every entry of the dynamic programming table $T$ corresponds to a polygon $P$ that has at most $L^*$ corners. The entry $T[P]$ will contain an approximate solution to instance $\rset(P)$, that consists of all rectangles $R\in \rset$ with $R\subseteq P$. We say that $P$ defines a basic instance if $|\opt(\rset(P))|\leq \log n$. We can check whether $P$ defines a basic instance, and if so, find an optimal solution for it in time $n^{O(\log n)}$ via exhaustive search.
In order to compute the entry $T[P]$ where $\rset(P)$ is a non-basic instance,  we go over all pairs $P',P''\subsetneq P$ of polygons with $P'\cap P''=\emptyset$, such that $P'$ and $P''$ have at most $L^*$ corners each, and we let $T[P]$ contain the best solution $T[P']\cup T[P'']$ among all such possible pairs of polygons. 
In order to analyze the approximation factor achieved by this algorithm, we build a partitioning tree, that will simulate an idealized run of the dynamic program. Every vertex $v$ of the partitioning tree is associated with some polygon $P(v)$ that has at most $L^*$ corners, and stores some solution to instance $\rset(v)=\rset(P(v))$, consisting of all rectangles $R\in \rset$ with  $R\subseteq P(v)$. For the root vertex of the tree, the corresponding polygon $P$ is the bounding box of our instance. Given any leaf vertex $v$ of the current tree, such that the instance $\rset(v)$ is non-basic, we add two children $v',v''$ to $v$, whose corresponding polygons $P'$ and $P''$ are obtained by partitioning $P$ with the balanced $L$-corner partitioning curve $C$ for the set $\opt(\rset(v))$ of rectangles. We terminate the algorithm when for every leaf vertex $v$, $\rset(v)$ is a basic instance. It is easy to verify that the height of the resulting tree is $O(\log n)$. The polygon associated with the root vertex of the tree has $4$ corners, and for every $1\leq i\leq \log n$, the polygons associated with the vertices lying at distance exactly $i$ from the root have at most $4+iL$ corners. Therefore, every polygon associated with the vertices of the tree has at most $L^*$ corners, and corresponds to some entry of the dynamic programming table. Once the tree is constructed, we compute  solutions to sub-instances associated with its vertices, as follows. For every leaf $v$ of the tree, the solution associated with $v$ is the optimal solution to instance $\rset(v)$; for an inner vertex $v$ of the tree with children $v'$ and $v''$, the solution associated with $v$ is the union of the solutions associated with $v'$ and $v''$. Let $\rset'$ be the solution to the \MISR problem associated with the root vertex of the tree. Then it is easy to see that the solution computed by the dynamic programming algorithm has value at least $|\rset'|$. Moreover, from our choice of parameters, $|\rset'|\geq |\opt(\rset)|(1-\eps)$. This is since for every inner vertex $v$ of the tree, with children $v'$ and $v''$, the loss incurred by the partitioning procedure at $v$, $\lambda(v)=|\opt(\rset(v))|-|\opt(\rset(v'))|-|\opt(\rset(v''))|\leq  |\opt(\rset(v))|/L\leq \eps|\opt(\rset(v))|/\log n$. It is then easy to verify that the total loss of all vertices that lie within distance exactly $i$ from the root, for any fixed $0\leq i\leq \log n$, is at most $\eps |\opt(\rset)|/\log n$, and the total loss of all vertices is at most $\eps|\opt(\rset)|$. It is also immediate to verify that the value of the solution stored at the root vertex of the tree is at least $|\opt(\rset)|-\sum_v\lambda(v)$, and so we obtain a $(1-\eps)$-approximation.

In order to bound the running time of the algorithm, it is not hard to show by a  standard transformation to the problem input, that it is enough to consider polygons $P$ whose corners have integral coordinates between $1$ and $2n$. The number of entries of the dynamic programming table, and the running time of the algorithm, are then bounded by $n^{O(L^*)}=n^{O(\log^2n/\eps)}$.
As a warmup, we show that this running time can be improved to $n^{O(\log n/\eps)}$. The idea is that, instead of computing a balanced $L$-corner partition of the set $\opt(\rset(P))$ of rectangles, we can compute a different partition that reduces the boundary complexities of the two resulting polygons. If $P$ has boundary complexity greater than $L$, then we can compute a polygonal curve $C$,  that partitions $P$ into polygons $P'$ and $P''$, such that the number of corners of each of the two polygons $P'$ and $P''$ is smaller than the number of corners of $P$ by a constant factor, and $|\opt(\rset(P))|-|\opt(\rset(P'))|-|\opt(\rset(P''))| \leq f(L) \cdot |\opt(\rset(P))|$, where $f(L)=O(1/L)$. In our partitioning tree we can then alternate between computing balanced $L$-corner curves, and computing partitions that reduce the boundary complexities of the polygons, so that the number of corners of the polygons does not accumulate as we go down the tree. This allows us to set $L^*=L=\Theta(\log n/\eps)$, and obtain a running time of $n^{O(\log n/\eps)}$.

The bottleneck in the running time of the above algorithm is the number of entries in the dynamic programming table, which is $n^{O(L^*)}$, where $L^*$ is the number of corners that we allow for our polygons, and the term $n$ appears since there are $\Theta(n^2)$ choices for each such corner.
In order to improve the running time, it is natural to try one of the following two approaches: either (i) decrease the parameter $L^*$, or (ii) restrict the number of options for choosing each corner. The latter approach can be, for example, implemented by discretization: we can construct an $(N\times N)$-grid $G$, where $N$ is small enough. We say that a polygon $P$ is \emph{aligned with $G$} if all corners of $P$ are also vertices of $G$. We can then restrict the polygons we consider to the ones that are aligned with $G$. Unfortunately, neither of these approaches works directly.
For the first approach, since the depth of the partitioning tree is $\Theta(\log n)$, we can only afford to lose an $O(\eps/\log n)$-fraction of rectangles from the optimal solution in every iteration, that is, on average, for an inner vertex $v$ of the tree $\lambda(v)\leq O(\eps/\log n)\cdot |\opt(\rset(v))|$ must hold. It is not hard to show that this constraint forces us to allow the partitioning curve $C$ to have as many as $\Omega(\log n/\eps)$ corners, and so in general $L^*=\Omega(\log n/\eps)$ must hold. For the second approach, over the course of our algorithm, we will need to handle sub-instances whose optimal solution values are small relatively to $|\opt|$, and their corresponding polygons have small areas. If we construct an $(N\times N)$-grid $G$, with $N<<n$, then polygons that are aligned with $G$ cannot capture all such instances.

In order to better illustrate our approach, we start by showing a $(1-\eps)$-approximation algorithm with running time $n^{O(\sqrt{\log n}/\eps^3)}$. This algorithm already needs to overcome the barriers described above, and will motivate our final algorithm. 
Consider the divide-and-conquer view of the algorithm, like the one we described in the construction of the partitioning tree. We can partition this algorithm into $\Theta(\sqrt{\log n})$ phases, where the values of the optimal solutions $|\opt(\rset(v))|$ of instances $\rset(v)$ considered in every phase go down by a factor of approximately $2^{\sqrt{\log n}}$. In other words, if we consider the partitioning tree, and we call all vertices at distance exactly $i$ from the root of the tree \emph{level-$i$ vertices}, then every phase of the algorithm roughly contains $\Theta(\sqrt{\log n})$ consecutive levels of the tree. Therefore, the number of such phases is only $O(\sqrt {\log n})$, and so we can afford to lose an $\Theta(\eps/\sqrt{\log n})$-fraction of the rectangles from the optimal solution in every phase. At the end of every phase, for every polygon $P$ defining one of the resulting instances $\rset(P)$ of the problem, we can then afford to repeatedly partition $P$ into sub-polygons, reducing their boundary complexity to $O(\sqrt{\log n}/\eps)$. This allows us to use polygons with only $L_1=\Theta(\sqrt{\log n}/\eps)$ corners as the ``interface'' between the different phases. Within each phase, we still need to allow the polygons we consider to have $L_2=\Theta(\log n/\eps)$ corners. However, now we can exploit the second approach: since the values of the optimal solutions of all instances considered within a single phase are relatively close to each other - within a factor of $2^{\Theta(\sqrt{\log n}})$, we can employ discretization, by constructing a grid with $2^{O(\sqrt{\log n})}$ vertical and horizontal lines, and requiring that polygons considered in the phase are aligned with this grid.

To summarize, we use a two-level recursive construction. The set of level-1 polygons (that intuitively serve as the interface between the phases), contains all polygons whose corners have integral coordinates between $1$ and $2n$, and they have at most $L_1=\Theta(\sqrt{\log n}/\eps)$ corners. The number of such polygons is $n^{O(L_1)}=n^{O(\sqrt{\log n})/\eps}$.
Given a level-1 polygon $P$, we construct a collection $\cset(P)$ of level-2 polygons $P'\subseteq P$. We start by constructing a grid $G_P$ that discretizes $P$, and has $2^{O(\sqrt{\log n})}$ vertical and horizontal lines. The grid has the property that for every vertical strip $S$ of the grid, the value of the optimal solution of the instance $\rset(P\cap S)$ is bounded by $|\opt(\rset(P))|/2^{\Theta(\sqrt{\log n})}$, and the same holds for the horizontal strips. Set $\cset(P)$ contains all polygons $P'\subseteq P$ that have at most $L_2=\Theta(\log n/\eps)$ corners, so that $P'$ is aligned with $G_P$. The number of such polygons is bounded by $\left(2^{O(\sqrt{\log n})}\right)^{O(L_2)}=2^{O(\log^{3/2}n/\eps)}\leq n^{O(\sqrt{\log n}/\eps)}$. The final set $\cset$ of polygons corresponding to the entries of the dynamic programming table includes all level-1 polygons, and for every level-1 polygon $P$, all level-$2$ polygons in the set $\cset(P)$. The algorithm for computing the entries of the dynamic programming table remains unchanged. This reduces the running time to $n^{O(\sqrt{\log n}/\eps^3)}$.

In order to improve the running time to $n^{\poly(\log\log n/\eps)}$, we extend this approach to $O(\log \log n)$ recursive levels.
As before, we partition the execution of the algorithm into phases, where a phase ends when the values of the optimal solutions of all instances involved in it decrease by a factor of roughly $\sqrt n$. Therefore, the algorithm has at most $2$ phases. At the end of each phase, we employ a ``clean-up'' procedure, that reduces the number of corners of every polygon to $L_1=\Theta((\log\log n)^3/\eps)$, with at most $2|\opt|\cdot f(L_1)$ total loss in the number of rectangles from the optimal solution, where $f(L)=O(1/L)$. These polygons, that we call level-1 polygons, serve as the interface between the different phases. The set $\cset_1$ of level-1 polygons then contains all polygons whose corners have integral coordinates between $1$ and $2n$, that have at most $L_1$ corners. The number of such polygons is bounded by $n^{O(L_1)}$. Consider now an execution of a phase, and let $P$ be our initial level-1 polygon. Since the values of the optimal solutions of instances considered in this phase are at least $|\opt(\rset(P))|/\sqrt n$, we can construct an $(O(\sqrt n)\times O(\sqrt n))$-grid $G_P$, that will serve as our discretization grid. We further partition the execution of the current phase (that we refer to as \emph{level-1 phase}) into two level-2 phases, where the value of the optimal solution inside each level-2 phase goes down by a factor of roughly $n^{1/4}$, and we let $L_2=2L_1$. The total number of level-2 phases, across the execution of the whole algorithm, is then at most $4$, and at the end of each such phase, we again apply a clean-up procedure, that decreases the number of corners in each polygon to $L_2$. The loss incurred in every level-2 phase due to the cleanup procedure can be bounded by $|\opt|\cdot f(L_2)=|\opt|\cdot f(L_1)/2$, and the total loss across all level-2 phases is at most $2|\opt| f(L_1)$. For every level-1 polygon $P$, we define a set $\cset_2(P)$ of level-2 polygons, that contains all polygons $P'\subseteq P$ with at most $L_2$ corners, so that $P'$ is aligned with $G_P$. We continue the same procedure for $\Theta(\log\log n)$ recursive levels. For each level $i$, we let $L_i=2L_{i-1}=2^{i-1}L_1$, and we let $\rho_i=n^{1/2^i}$. In each level-$i$ phase, the values of the optimal solutions to the instances defined by the corresponding polygons should decrease by a factor of roughly $\rho_i$, so there are approximately $2^i$ level-$i$ phases overall. At the end of each phase, we apply the clean-up procedure, in order to decrease the number of corners of each polygon to $L_i$. The loss at the end of each level-$i$ phase in the number of rectangles from the optimal solution is then at most  $f(L_i)\cdot |\opt|=f(L_1)\cdot |\opt|/2^{i-1}$, and since the number of level-$i$ phases is  $2^i$, the total loss due to the cleanup procedure in level-$i$ phases is bounded by $2f(L_1)|\opt|$. Summing up over all levels, the total loss due to the cleanup procedure is bounded by $2|\opt|f(L_1)\log \log n\leq \eps |\opt|/\log\log n$. Additional loss is incurred due to the balanced $L_{\log\log n}$-corner partitions of level-$(\log\log n)$ instances, but this loss is analyzed as before, since $L_{\log\log n}=\Omega(\log n/\eps)$. In order to define level-$i$ polygons, for every level-$(i-1)$ polygon $P$, we compute a grid $G_P$ that has roughly $O(\rho_i)$ vertical and horizontal lines, so that for every vertical or horizontal strip $S$ of the grid $G_P$, the value of the optimal solution of instance $\rset(S\cap P)$ is roughy $|\opt(\rset(P))|/\rho_i$. We then let $\cset_i(P)$ contain all polygons $P'\subseteq P$ that have at most $L_i$ corners and are aligned with $G_P$. The final set of level-$i$ polygons is the union of all sets $\cset_i(P)$ for all level-$(i-1)$ polygons $P$.
Let $\cset_i$ denote the set of all level-$i$ polygons, and let $\cset$ be the set of all polygons of all levels. Then it is immediate to verify that
$|\cset_i|\leq |\cset_{i-1}|\cdot \rho_i^{O(L_i)}\leq n^{O(L_1)}$, and since we employ $O(\log \log n)$ levels, overall $|\cset|\leq n^{O(L_1\log\log n)}=n^{O((\log\log n)^4/\eps)}$.

\paragraph{Future directions.} Unlike the algorithm of \cite{AdamaszekW13}, our algorithm does not extend to the weighted setting where each rectangle has a weight and the goal is to find a maximum weight set of independent rectangles. The main technical obstacle is the discretization procedure where we construct a grid and we restrict the partitions into sub-instances to be aligned with the grid. In the weighted setting, there may be some heavy rectangles of the optimal solution that are not aligned with the grid. The optimal solution uses only a small number of such rectangles, but there may be many of them present in the input instance, and we do not know beforehand which of these rectangles are in the optimal solution and thus which rectangles to remove to obtain the alignment property. We leave the extension to the weighted setting, as well as more general shapes such as rectilinear polygons, as directions for future work.

\paragraph{Organization.} We start with preliminaries in Section~\ref{sec: prelims}, and summarize the partitioning theorems that we use in Section~\ref{sec: partitions}. We then give a general outline of the dynamic programming--based algorithm and its analysis that we employ throughout the paper in Section~\ref{sec: alg outline}. Section~\ref{sec: running time exp log n} contains a recap of the algorithm of Adamaszek and Wiese~\cite{AdamaszekW13} in our framework; we also improve its running time to $n^{O(\log n)/\eps^3}$. In Section~\ref{sec: running time exp sqrt log n} we introduce the technical machinery we need in order to improve the running time of the algorithm, and show a \QPTAS with running time $n^{O(\sqrt{\log n})/\eps^3}$, using the two-level approach. This approach is then extended in Section~\ref{sec: n to polyloglog opt running time} to $O(\log\log n)$ recursive levels, completing the proof of Theorem~\ref{thm: main}.

\label{---------------------------------sec: prelims-----------------------------------------}
\section{Preliminaries}\label{sec: prelims}
In the Maximum Independent Set of Rectangles (\misr) problem, the input is a set $\mathcal{R} = \{R_1, R_2, ..., R_n\}$ of $n$ axis-parallel rectangles in the 2-dimensional plane. Each rectangle $R_i$ is specified by the coordinates of its lower left corner $(x_i^{(1)}, y_i^{(1)})$ and its upper right corner $(x_i^{(2)}, y_i^{(2)})$. We view the input rectangles as open subsets of points of the plane, so $R_i = \{(x,y) \mid x_i^{(1)} < x < x_i^{(2)} \;\textrm{and}\; y_i^{(1)} < y < y_i^{(2)}\}$, and we assume that all rectangles have non-zero area. We say that two rectangles $R_i$ and $R_j$ intersect if $R_i \cap R_j \neq \emptyset$, and we say that they are disjoint otherwise.
The goal in the MISR problem is to find a maximum-cardinality subset $\rset^*\subseteq \rset$ of rectangles, such that all rectangles in $\rset^*$ are mutually disjoint. 

\paragraph{Canonical Instances.}
We say that a set $\rset$ of rectangles is {\it non-degenerate}, if and only if for every pair $R,R'\in \rset$ of distinct rectangles, for every corner $p=(x,y)$ of $R$ and every corner $p'=(x',y')$ of $R'$, $x\neq x'$ and $y\neq y'$. 
We say that an input $\rset$ to the \misr problem is \emph{canonical}, if and only if $\rset$ is a non-degenerate set of rectangles, whose corners have integral coordinates between $1$ and $2n$. The following claim allows us to transform any input instance of the \misr problem into an equivalent canonical instance. The proof uses standard arguments and appears in the Appendix.

\begin{claim}\label{claim: canonical instances}
There is an efficient algorithm, that, given an instance $\rset$ of the \misr problem on $n$ rectangles, whose optimal solution value is denoted by $w^*$, computes a canonical instance $\rset'$ of \misr, whose optimal solution value is at least $w^*$. Moreover, given a solution $\sset'$ to $\rset'$, we can efficiently compute a solution $\sset$ to $\rset$ of the same value.
\end{claim}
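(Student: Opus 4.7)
The plan is a careful coordinate compression. The input has at most $2n$ distinct $x$-coordinates and $2n$ distinct $y$-coordinates among the rectangle edges, so it is natural to compress them into the integer range $\{1,\dots,2n\}$. The delicate point is that naive compression only gives distinct integers per distinct coordinate, whereas the non-degeneracy condition requires distinct integers per \emph{edge event} (left/right edge of each rectangle), even when two original edges sit at the same coordinate. I will therefore sort all $2n$ vertical edges and all $2n$ horizontal edges separately and assign them distinct consecutive integers, with a tie-breaking rule chosen so that the intersection pattern is preserved exactly.

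Concretely, collect the $2n$ ``vertical edge events'' $(R_i,L)$ and $(R_i,R)$ for $i=1,\dots,n$, each with associated $x$-coordinate $x_i^{(1)}$ or $x_i^{(2)}$. Sort them in nondecreasing order of $x$-coordinate, breaking ties by placing every right-edge event strictly before every left-edge event (and breaking remaining ties arbitrarily). Let $\phi^x$ assign the value $k\in\{1,\dots,2n\}$ to the $k$-th event in this sorted order. Do the analogous construction for horizontal edges to obtain $\phi^y$, with bottom-edge events placed before top-edge events whenever $y$-coordinates tie. Define $\rset'$ by replacing each $R_i=(x_i^{(1)},x_i^{(2)})\times(y_i^{(1)},y_i^{(2)})$ with the open rectangle having new corner coordinates $\phi^x(R_i,L),\phi^x(R_i,R)$ and $\phi^y(R_i,B),\phi^y(R_i,T)$. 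Since each rectangle has nonzero area, $\phi^x(R_i,L)<\phi^x(R_i,R)$ and likewise for $y$, so every element of $\rset'$ is a legitimate rectangle. All coordinates are integers in $\{1,\dots,2n\}$ by construction, and non-degeneracy is immediate because $\phi^x$ and $\phi^y$ are injective on the set of all edge events.

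The heart of the proof is verifying that intersection is preserved, which I will do coordinate-by-coordinate. Two open rectangles overlap iff their $x$-projections overlap and their $y$-projections overlap, so it suffices to show that for every pair $R_i,R_j$, the $x$-projections of their new versions overlap iff those of their old versions do. I will split into three cases based on how $x_i^{(1)}$ compares to $x_j^{(2)}$ (and symmetrically $x_j^{(1)}$ to $x_i^{(2)}$): strict inequality on either side is preserved by $\phi^x$ because it is strictly monotone across distinct coordinates; the equality case $x_i^{(1)}=x_j^{(2)}$ represents abutting intervals (no overlap in the open model), and the tie-breaking rule ensures $\phi^x(R_j,R)<\phi^x(R_i,L)$, so the compressed intervals also remain disjoint. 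Running the same analysis for $y$ completes the structural equivalence $\rset\cong\rset'$ as intersection graphs.

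The obstacle, and where tie-breaking pays off, is avoiding two failure modes simultaneously: (i) a pair of rectangles that merely share a boundary (disjoint in the open model) must not become overlapping after compression, and (ii) a pair that genuinely overlaps must not become disjoint. Rule (i) forces right edges to precede left edges at tied $x$-coordinates; rule (ii) is automatic because any genuine overlap is certified by strict inequalities of original coordinates, which $\phi^x$ preserves. Once the equivalence of intersection graphs is established, the conclusions of the claim follow immediately: the family of independent subsets of $\rset'$ equals that of $\rset$ under the obvious bijection, so $\opt(\rset')\ge w^*$ (in fact equality holds) and any independent set $\sset'\subseteq\rset'$ lifts back to the corresponding independent $\sset\subseteq\rset$ of the same cardinality.
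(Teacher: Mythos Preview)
Your single-pass coordinate compression with tie-breaking is a cleaner, deterministic alternative to the paper's two-step argument (random shrinking to force non-degeneracy, then standard order-preserving compression), and the core idea is sound. However, there is a concrete bug in your $y$-axis tie-breaking: you place bottom-edge events before top-edge events at ties, which is the wrong direction. Take two rectangles $R_j$ below and $R_i$ above, abutting at $y=a$ (so $y_j^{(2)}=y_i^{(1)}=a$); they are disjoint in the open model. Your rule yields $\phi^y(R_i,B)<\phi^y(R_j,T)$, and since also $y_j^{(1)}<a<y_i^{(2)}$ gives $\phi^y(R_j,B)<\phi^y(R_i,T)$, the compressed $y$-intervals now overlap, so a formerly disjoint pair becomes intersecting. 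The fix is to mirror the $x$-rule exactly: at a tied $y$-coordinate, place \emph{top}-edge events before \emph{bottom}-edge events (closing edges precede opening edges, just as right precedes left). With that correction, your case analysis goes through verbatim.

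For comparison, the paper first shrinks every rectangle by an independent random amount $\Delta_R\in(0,\Delta/4)$, where $\Delta$ is the minimum gap between distinct edge coordinates; since rectangles are open this preserves all intersections, and with probability $1$ the result is non-degenerate. Once the $2n$ $x$-coordinates (and $2n$ $y$-coordinates) are already distinct, the naive order-preserving map to $\{1,\dots,2n\}$ suffices with no tie-breaking needed. Your approach folds both steps into one and avoids randomness, at the cost of having to handle the abutting case explicitly---which is precisely where the sign error crept in.
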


 From now on, we assume without loss of generality that our input instance $\rset$ is a canonical one. 
 Let $B$ be the closed rectangle whose lower left corner is $(0,0)$ and upper right corner is $(2n+1,2n+1)$. We call $B$ the {\it bounding box} of $\rset$. Notice that the boundaries of the rectangles in $\rset$ are disjoint from  the  boundary of $B$. 


\paragraph{Sub-Instances.}
Over the course of our algorithm, we will define sub-instances of the input instance $\rset$. Each such sub-instance is given by some (not necessarily connected) region $S\subseteq B$, and it consists of all rectangles $R\in \rset$ with $R\subseteq S$. In~\cite{AdamaszekW13}, each such sub-instance was given by a polygon $S$, whose boundary edges are parallel to the axes, and the boundary contains at most $\poly\log n$ edges. We define our sub-instances slightly differently. Each sub-instance is defined by a family $\fset$ of at most $O(\log n)$ axis-parallel closed rectangles that are contained in $B$ (but they do not necessarily belong to $\rset$), where every pair of rectangles in $\fset$ are mutually internally disjoint (that is, they are disjoint except for possibly sharing points on their boundaries). We call such rectangles $F\in \fset$ {\it fake rectangles}. Let $S(\fset)=B\setminus \left (\bigcup_{F\in \fset} F\right )$. The sub-instance associated with $\fset$, that we denote by $\rset(\fset)$, consists of all rectangles $R\in \rset$ with $R\subseteq S(\fset)$. In other words, we view the fake rectangles as ``holes'' in the area defined by the bounding box $B$, and we only consider input rectangles that do not intersect these holes. For example, if $S$ is a simple polygon whose boundary is axis-parallel, and contains $L$ corners, then, as we show later, we can ``pad'' the area $B\setminus S$ with a set $\fset$ of $O(L)$ internaly disjoint fake rectangles, and obtain the instance $\rset(\fset)$, containing all rectangles $R\subseteq S$. However, our definition of sub-instances is slightly more general than that of~\cite{AdamaszekW13}, as, for example, $S(\fset)$ is not required to be connected. As we will see later, this definition is more convenient when performing boundary simplification operations. We notice that while the set $\rset(\fset)$ of rectangles is non-degenerate, we do not ensure that $\fset$ is non-degenerate.

Given a sub-instance $\rset(\fset)$, we denote the optimal solution for this sub-instance by $\opt_{\fset}$. The {\it boundary complexity} of the sub-instance $\rset(\fset)$ is defined to be the number of the fake rectangles, $|\fset|$. We denote by $\opt=\opt_{\emptyset}$ the optimal solution to the original problem.


We say that a set $\fset$ of rectangles is a \emph{valid set of fake rectangles} if and only if $\fset$ consists of closed rectangles $F\subseteq B$ that are mutually internally disjoint.

\paragraph{An $O(\log\log |\opt|)$-Approximation Algorithm.}
We will need to use an approximation algorithm for the problem, in order to estimate the values of optimal solutions of various sub-instances. In~\cite{ChalermsookC09}, an $O(\log \log n)$-approximation algorithm was shown for \MISR. Using the following theorem, whose proof is deferred to the Appendix, we can improve the approximation factor to $O(\log\log |\opt|)$. 

\begin{theorem}\label{thm: from n to opt}
There is an efficient algorithm, that, given an instance $\rset$ of \misr, whose optimal solution value is denoted by $w^*$, computes another instance $\rset'$ of \misr with $|\rset'|\leq O\left((w^*)^4\right)$, such that the value of the optimal solution to $\rset'$ is $\Omega(w^*)$. Moreover, given any solution $S'$ to instance $\rset'$, we can efficiently find a solution $S$ to $\rset$, with $|S|\geq |S'|$.
\end{theorem}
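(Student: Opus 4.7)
The plan is to produce $\rset'$ as a subset of $\rset$, which makes the recovery step trivial ($S := S'$), by using the $O(\log\log n)$-approximation algorithm $\algsc$ of~\cite{ChalermsookC09} to build a reference grid and then keeping only rectangles that are combinatorially distinct with respect to that grid.

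Since $w^*$ is unknown, I would enumerate candidate values $w\in\{2^0,2^1,\ldots,2^{\lceil\log n\rceil}\}$ and return the reduced instance with the best estimated optimum; the correct guess $w\in[w^*/2,w^*]$ yields the desired reduction. For a fixed guess $w$, I would run $\algsc$ on $\rset$ to obtain an independent set $A\subseteq\rset$ with $|A|\ge w/(c_0\log\log n)$. The corners of rectangles in $A$ furnish $O(|A|)$ distinct $x$- and $y$-coordinates; augmenting these with $\Theta(\log\log n)$ evenly spaced coordinates between each consecutive pair produces a grid $G$ with $\Theta(w)$ vertical and horizontal lines and $O(w^2)$ cells. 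I would then classify every rectangle $R\in\rset$ by its type $\tau(R):=(C_{LL}(R),C_{UR}(R))$, where $C_{LL}(R)$ and $C_{UR}(R)$ denote the grid cells containing the lower-left and upper-right corners of $R$; there are $O(w^4)$ types in total. For each realized type $t$, $\rset'$ retains a constant number of $\rset$-representatives chosen to be geometrically closest to the common intersection $I_t$ of all rectangles of type $t$, yielding $|\rset'|\le O(w^4)\le O((w^*)^4)$.

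To prove $|\opt(\rset')|=\Omega(w^*)$, the key geometric observation is that any two rectangles of the same ``separated'' type (one whose LL cell is strictly below-and-left of its UR cell) must pairwise intersect, since their $x$- and $y$-projections both cover the gap between the LL column and the UR column. Hence $\opt^*$ uses $\Omega(w^*)$ distinct separated types, up to the small loss from ``non-separated'' types, which are absorbed by the refinement ensuring that each grid cell carries bounded independence number. For each separated type $t$ realized by some $R^*\in\opt^*$, the retained representatives lie near $I_t\subseteq R^*$; since $\opt^*$ is pairwise disjoint, the common intersections $I_t$ across these types are pairwise disjoint, and hence the representatives can be chosen to form an independent set of size $\Omega(w^*)$ in $\rset'$.

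The main obstacle I foresee is the last step: even though two rectangles of the same type must intersect, two representatives of different types may intersect even when the corresponding $\opt^*$-rectangles do not. In particular, the ``ideal'' representative for type $t$ (namely $I_t$ itself) need not lie in $\rset$, so the best $\rset$-rectangle of type $t$ may extend significantly beyond $I_t$. I would address this by keeping $O(1)$ representatives per type---each minimal in one of the four ``extension directions'' beyond $I_t$---and arguing via a careful case analysis that for any disjoint pair $R_1^*,R_2^*\in\opt^*$ of types $t_1,t_2$, some choice of representatives from $\rset'$ is also disjoint; a greedy selection would then extend this pairwise guarantee to a full independent set of size $\Omega(w^*)$, completing the reduction with trivial solution recovery.
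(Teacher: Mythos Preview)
Your approach is genuinely different from the paper's, and the obstacle you flag in the last paragraph is fatal, not merely technical.

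\textbf{Why the representative-selection step fails.} Suppose type $t$ is realised by $R^*_t\in\opt^*$. Your $O(1)$ representatives for $t$ are chosen by extremal criteria (smallest left extension, smallest right extension, etc.), but $R^*_t$ need not be extremal in any single direction: for each of the four directions there may be another rectangle of type $t$ that is tighter in that direction but extends by, say, $10$ grid cells in the remaining three. Then none of your representatives is $R^*_t$, and each of them may overlap several other $I_{t'}$'s (and hence every rectangle of those types). Your pairwise claim therefore already fails, and even if it held, ``pairwise some choice works'' does not yield a large global independent set by greedy selection---that is a constraint-satisfaction statement, not a degree bound. There is also a secondary gap: your grid, built from corners of an $O(\log\log n)$-approximate solution with uniform refinement, does not force every rectangle to be crossed by a grid line, so an unbounded number of $\opt^*$-rectangles can live entirely inside one cell (non-separated types), and nothing in your construction controls that loss.

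\textbf{What the paper does instead.} The paper avoids representatives altogether. It builds the grid so that \emph{every} input rectangle is crossed by some vertical and some horizontal line (via a maximum independent set of the projected intervals that is inclusion-minimal), and then rounds each rectangle \emph{outward} to the grid, producing $R'\supseteq R$. Recovery is still trivial because containment goes the right way ($R'_1\cap R'_2=\emptyset\Rightarrow R_1\cap R_2=\emptyset$), and the number of distinct rounded rectangles is $O((w^*)^4)$. The work is in showing $|\opt(\rset')|=\Omega(w^*)$: this is done in four stages (round right edges, then left, then top, then bottom), each losing only a constant factor. The key lemma for one stage is that if two rectangles become intersecting after rounding their right edges, then one of them must have ``tagged'' the other via a horizontal ray shot from a corner---because any intervening rectangle would have supplied a closer grid line. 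Since each rectangle tags at most four others, the resulting intersection graph has bounded degeneracy and hence a linear-size independent set. This tagging argument is the missing idea; without it (or something equivalent) there is no way to control the blow-up in intersections that rounding or representative replacement creates.
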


\begin{corollary}\label{cor: an log log opt approx}
There is a universal constant $c_A$, and a $(c_A\cdot \log\log |\opt|)$-approximation algorithm for \MISR.
\end{corollary}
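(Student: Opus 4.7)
The plan is to obtain the improved approximation by composing the reduction of Theorem~\ref{thm: from n to opt} with the known $O(\log\log n)$-approximation algorithm of Chalermsook-Chuzhoy~\cite{ChalermsookC09}. The key observation is that Theorem~\ref{thm: from n to opt} produces a smaller instance $\rset'$ whose size is bounded in terms of $w^*=|\opt|$ rather than $n$, while (approximately) preserving the optimum value; running a $\log\log$-approximation on this smaller instance therefore yields a ratio expressed in terms of $|\opt|$.

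More concretely, first I would apply Theorem~\ref{thm: from n to opt} to the input instance $\rset$ to obtain an instance $\rset'$ with $|\rset'|\leq O((w^*)^4)$ and $\opt(\rset')\geq c_1 w^*$ for some universal constant $c_1>0$. Note that the reduction itself does not require knowledge of $w^*$; its guarantees are simply phrased in terms of $w^*$. Next, I would run the $O(\log\log n)$-approximation algorithm of~\cite{ChalermsookC09} on $\rset'$, obtaining a feasible solution $S'$ of size at least
\[
|S'|\;\geq\;\frac{\opt(\rset')}{c_2\log\log|\rset'|}\;\geq\;\frac{c_1 w^*}{c_2\log\log (c_3 (w^*)^4)}\;=\;\Omega\!\left(\frac{w^*}{\log\log w^*}\right),
\]
where the last equality uses $\log\log((w^*)^4)=\log(4\log w^*)=O(\log\log w^*)$. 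Finally, I would apply the solution-transfer procedure of Theorem~\ref{thm: from n to opt} to convert $S'$ into a solution $S$ of the original instance $\rset$ with $|S|\geq|S'|$.

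Putting everything together, the output $S$ is a feasible solution to $\rset$ whose cardinality satisfies $|S|=\Omega(w^*/\log\log w^*)=\Omega(|\opt|/\log\log|\opt|)$, which is exactly a $(c_A\log\log|\opt|)$-approximation for an appropriate universal constant $c_A$. There is no substantive obstacle here: the entire argument is a two-line composition of Theorem~\ref{thm: from n to opt} with the existing $O(\log\log n)$-approximation, and the only ``trick'' is to observe that after the reduction the relevant $\log\log$ factor is evaluated at $|\rset'|=\poly(|\opt|)$, which collapses $\log\log|\rset'|$ to $O(\log\log|\opt|)$.
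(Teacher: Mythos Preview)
Your proposal is correct and follows essentially the same argument as the paper: apply Theorem~\ref{thm: from n to opt} to obtain $\rset'$ of size $O(|\opt|^4)$ with $\opt(\rset')=\Omega(|\opt|)$, run the $O(\log\log n)$-approximation of~\cite{ChalermsookC09} on $\rset'$, and map the resulting solution back, noting that $\log\log|\rset'|=O(\log\log|\opt|)$. There is nothing to add.
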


\begin{proof}
Given an instance $\rset$ of \MISR, whose optimal solution value is denoted by $w^*=|\opt|$, we compute a new instance $\rset'$ with $|\rset'|=O\left (|\opt|^4\right )$, using Theorem~\ref{thm: from n to opt}. Let $\opt'$ denote an optimal solution for $\rset'$. We then run the $O(\log\log n)$-approximation algorithm of~\cite{ChalermsookC09} on $\rset'$. Let $S'$ be the solution produced by that algorithm. Then $|S'|\geq \Omega\left (|\opt'| / \log\log |\rset'|\right) = \Omega \left(|\opt| / \log\log |\opt| \right)$. Using Theorem~\ref{thm: from n to opt}, we then find a solution $S$ to $\rset$ of value at least $\Omega \left(|\opt| / \log\log |\opt|\right )$.
\end{proof}

Throughout the paper, we denote by $\aset$ the algorithm from Corollary~\ref{cor: an log log opt approx}. Given any valid set $\fset$ of fake rectangles, we denote by $\aset(\fset)$ the value of the solution returned by algorithm $\aset$ on instance $\rset(\fset)$.

\paragraph{Decomposition Pairs and Triples.}
Our algorithm employs the Divide-and-Conquer paradigm, similarly to the algorithm of~\cite{AdamaszekW13}. Intuitively, given a sub-instance $\rset(\fset)$ of the problem, associated with the polygon $S(\fset)$, we would like to partition it into two (or sometimes three) sub-instances, associated with polygons $S_1$ and $S_2$, respectively. We require that $S_1\cap S_2=\emptyset$ and $S_1, S_2\subsetneq S(\fset)$. Since we define the polygons in terms of the fake rectangles, we employ the following definition.

\begin{definition}
Let $\fset$ any valid set of fake rectangles. We say that $(\fset_1,\fset_2)$ is a \emph{valid decomposition pair} for $\fset$ if for each $i\in \set{1,2}$, $\fset_i$ is a valid set of fake rectangles with $S(\fset_i)\subsetneq S(\fset)$, and $S(\fset_1)\cap S(\fset_2)=\emptyset$. Similarly, we say that $(\fset_1,\fset_2,\fset_3)$ is a \emph{valid decomposition triple} for $\fset$ if for each $i\in \set{1,2,3}$, $\fset_i$ is a valid set of fake rectangles with $S(\fset_i)\subsetneq S(\fset)$, and for all $1\leq i\neq j\leq 3$, $S(\fset_i)\cap S(\fset_j)=\emptyset$. 
\end{definition}

\paragraph{Alignment.} Let $Z$ be any set of points in the plane. Let $X$ be the set of all $x$-coordinates of the points in $Z$, and $Y$ the set of all $y$-coordinates of the points in $Z$. We say that a point $p=(x,y)$ is \emph{aligned} with $Z$, if and only if $x\in X$ and $y\in Y$ (notice that this does not necessarily mean that $p\in Z$; however, if, for example, $X$ and $Y$ only contain integers, then the coordinates of $p$ must be integral). More generally, we say that a polygon $P$ is aligned with $Z$ if and only if every corner of the boundary of $P$ is aligned with $Z$. It is easy to see that the alignment property is transitive: if $P_1,P_2,P_3$ are polygons, where $P_2$ is aligned with the corners of $P_1$, and $P_3$ is aligned with the corners of $P_2$, then $P_3$ is aligned with the corners of $P_1$.

\paragraph{Rectangle Padding.}
We need the following two simple lemmas that allow us to pad polygons with rectangles. The proofs are deferred to the Appendix.

\begin{lemma}\label{lemma: simple tiling}
Let $P$ be any simple closed axis-parallel polygon whose boundary has $L$ corners. Then there is a valid set $\fset$ of fake rectangles, with $|\fset|\leq L-3$, such that $\bigcup_{F\in \fset}F=P$. Moreover, if $Z$ is the set of all points serving as the corners of $P$, then every rectangle in $\fset$ is aligned with $Z$.
\end{lemma}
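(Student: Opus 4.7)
The proof proceeds by induction on $L$. For the base case $L = 4$, $P$ is itself an axis-parallel rectangle; taking $\fset = \{P\}$ gives $|\fset| = 1 = L-3$, and $P$ is trivially aligned with its own corner set $Z$.

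For the inductive step with $L \geq 6$, I first locate a reflex corner of $P$. (A simple axis-parallel polygon with $L$ corners has exactly $(L-4)/2$ reflex corners by the interior-angle sum, so such a vertex exists.) Let $v$ be a reflex corner and $e_1$ one of its two incident edges; I extend $e_1$ through $v$ along a ray in the opposite direction, which points into $P$'s interior (since the interior occupies three of the four quadrants at a reflex vertex), and let $w$ be the first point at which this ray meets $\partial P$. The chord $vw$ is axis-parallel. If $w$ is not already a corner of $P$, then $w$ lies in the interior of an edge $e$ of $P$ perpendicular to the chord, so one coordinate of $w$ is inherited from $v$ and the other from the endpoints of $e$; either way, $w$ is aligned with $Z$.

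The chord $vw$ splits $P$ into two simple axis-parallel polygons $P_1, P_2$, each of whose corners are aligned with $Z$. The crux of the argument is the inequality $L_1 + L_2 \leq L + 2$, where $L_i$ denotes the number of corners of $P_i$. To see this, count contributions: each corner of $P$ other than $v$ and $w$ is a corner of exactly one $P_i$. Because the chord is collinear with $e_1$, the vertex $v$ is a straight-line (non-corner) boundary point of the sub-polygon that contains $e_1$, and a $90^\circ$ convex corner of the other, so it contributes $1$. The analysis at $w$ is analogous: if $w$ is new, it is a convex corner of both sub-polygons (contributing $2$); if $w$ was already a corner of $P$, then the chord is collinear with one of $w$'s edges and $w$ contributes only $1$. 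Summing, $L_1 + L_2 \leq L + 2$ in every case. Each $P_i$ is a simple axis-parallel polygon with positive area, so $L_i \geq 4$, forcing $L_i \leq L - 2 < L$, and the inductive hypothesis applies. It yields valid sets $\fset_i$ with $|\fset_i| \leq L_i - 3$, each rectangle aligned with the corners of $P_i$ and hence (by transitivity of alignment noted in the preliminaries) with $Z$. Setting $\fset = \fset_1 \cup \fset_2$ gives an internally-disjoint family whose union is $P$, with $|\fset| \leq (L_1 - 3) + (L_2 - 3) \leq L - 4 \leq L - 3$, completing the induction. The main subtle step is the corner count in the inductive case; the fact that the extending ray always reaches $\partial P$ and produces a valid chord follows from $P$ being bounded and $v$ lying on $\partial P$, and everything else is routine.
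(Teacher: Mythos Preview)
Your proof is correct and follows essentially the same approach as the paper's: induct on $L$, and in the inductive step locate a reflex (270$^\circ$) corner, extend one of its incident edges into the interior until it meets $\partial P$, and use the resulting chord to split $P$ into two smaller polygons with $L_1+L_2\le L+2$. The paper's proof fixes a particular orientation (the vertical edge is extended downward) whereas you allow any incident edge, but the corner-count and alignment arguments are the same in substance.
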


\begin{lemma}\label{lemma: non-simple-tiling}
Let $P$ be a simple open axis-parallel polygon whose boundary has $L$ corners, and let $B$ be a closed rectangle (the bounding box), such that $P\subseteq B$. Then there is a valid set $\fset$ of fake rectangles, with $|\fset|\leq L+2$, such that $\bigcup_{F\in \fset}F=B\setminus P$.  Moreover, if $Z$ denotes the set of all points serving as the corners of $P$ and $B$, then every rectangle in $\fset$ is aligned with $Z$.
\end{lemma}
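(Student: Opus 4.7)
The plan is to prove the lemma by making a single horizontal cut at the top of $P$ and applying Lemma~\ref{lemma: simple tiling} to each resulting simply-connected piece. Let $y_{\max}$ denote the maximum $y$-coordinate of any corner of $P$, let $R_0$ be the horizontal strip of $B$ lying strictly above $y = y_{\max}$ (a rectangle, possibly vacuous), and let $B'$ be the portion of $B$ lying at or below $y = y_{\max}$. Then $B \setminus P = R_0 \cup (B' \setminus P)$, so it suffices to tile $Q := B' \setminus P$ by at most $L + 1$ fake rectangles aligned with $Z$.

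Let $k$ denote the number of horizontal edges of $P$ at height $y = y_{\max}$; note $k \geq 1$. I claim that $Q$ has exactly $k$ connected components, each a simple closed axis-parallel polygon: one \emph{outer frame} $F_0$ wrapping the lower and side portions of $P$, plus $k - 1$ \emph{indentation} components, each trapped between two consecutive top edges of $P$ and the cut line $y = y_{\max}$. The outer frame is simply connected because each touching of $P$ with the top of $B'$ turns what would otherwise be a hole into a boundary inlet; each indentation is simply connected because $P$ is itself simply connected, ruling out any sub-hole forming inside it. Moreover, every corner of $P$ lies on the boundary of exactly one component, and the four corners of $B'$ (two at the bottom of $B$ and two at $y = y_{\max}$ created by the cut) all lie on $F_0$, so writing $c_i$ for the number of corners of $F_i$, we have $\sum_i c_i = L + 4$, with all these corners lying in $Z$.

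Applying Lemma~\ref{lemma: simple tiling} to each $F_i$ yields a tiling of $F_i$ by at most $c_i - 3$ fake rectangles aligned with the corners of $F_i$, and hence with $Z$. Summing over components gives at most $\sum_i (c_i - 3) = L + 4 - 3k$ rectangles for $Q$, and adding $R_0$ contributes at most one more, for a total of at most $L + 5 - 3k \leq L + 2$, using $k \geq 1$. Pairwise internal-disjointness of all the rectangles is immediate: the components of $Q$ are disjoint regions, and $R_0$ meets $B'$ only along the cut line.

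The main obstacle I expect is the structural claim in the second paragraph: carefully verifying that $Q$ decomposes into exactly $k$ simply-connected components whose corners partition the $L + 4$ boundary corners as stated. This is most delicate when $P$ has several topmost edges together with complicated substructure at lower heights; the key is to exploit the simple-connectedness of $P$ to rule out hidden sub-holes within indentation components, since any such sub-hole would correspond to $P$ enclosing a bounded region of its complement, contradicting simple-connectedness.
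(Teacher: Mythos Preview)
Your approach works and yields the bound $L+2$, but it differs from the paper's route. The paper's argument is shorter: when $\partial P\cap\partial B=\emptyset$, it takes the leftmost vertical edge $e$ of $P$ and forms a single \emph{bridge} rectangle $R$ extending $e$ to the left wall of $B$; this attaches $\bar P$ to $\partial B$, so $B\setminus(P\cup R)$ is one simple polygon with at most $L+4$ corners, and Lemma~\ref{lemma: simple tiling} together with $R$ gives at most $L+2$ rectangles. (When $\partial P\cap\partial B\neq\emptyset$ the paper observes no bridge is needed, since $B\setminus P$ is already a union of simple polygons with at most $L+4$ corners.) Your cut at $y=y_{\max}$ achieves the same effect---forcing $\bar P$ to touch the boundary of the working box $B'$---but it leaves you with several pieces to track instead of one; the paper's single bridge sidesteps this bookkeeping.

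Your structural claim that $Q$ has \emph{exactly} $k$ components with $\sum_i c_i = L+4$ is only correct when $\bar P\subseteq\mathrm{int}(B)$. If $P$ already touches a side or the bottom of $B$, the outer frame can itself split: for $P=(1,3)\times(0,2)$ inside $B=[0,4]\times[0,3]$ one has $k=1$ but two two-dimensional components. Fortunately the bound survives, because what your count actually needs is only $\sum_i c_i\le L+4$ together with at least one component. Both hold in general: every corner of each $F_i$ must be a corner of $P$ or of $B'$ (axis-parallel edges of $P$ and of $B'$ cannot meet transversally, only at such corners, since $\partial P\subseteq B'$), and any such corner lies on the boundary of at most one $F_i$ because the local complement of $\bar P$ at that point is connected. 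So the correct fix is to weaken the claim to ``at least one simply-connected component, with $\sum_i c_i\le L+4$'' and supply this short corner argument in place of the exact component count; the rest of your proof then goes through.
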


\label{----------------------------------------sec: partitions---------------------------------------}
\section{Balanced Partitions of \misr Instances}\label{sec: partitions}


Since we employ the divide-and-conquer paradigm, we need algorithms that partition a given instance $\iset$ of the \MISR problem into a small number of sub-instances. We would like this partition to be roughly balanced, so that the value of the optimal solution in each sub-instance is at most an $\alpha$-fraction of the optimal solution value for $\iset$, for some constant $0<\alpha<1$. We also need to ensure that we do not lose too many rectangles by this partition. A very useful  tool in obtaining such balanced partitions is $r$-good partitions, that we define below.

Suppose we are given any sub-instance of our problem, defined by a valid set $\fset$ of fake rectangles. Let $\opt'=\opt_{\fset}$ be any optimal solution to the instance $\rset(\fset)$. We would like to find a partition $\pset$ of the bounding box $B$ into rectangles, that we will refer to as ``cells''. 
Each cell $P\in \pset$ is viewed as a closed rectangle, and we say that a rectangle $R\in \opt'$ intersects $P$ if and only if $P\cap R\neq \emptyset$. 
Given a cell $P$, we let $N_P$ be the number of all rectangles of $\opt'$ intersecting $P$.

\begin{definition}
A partition $\pset$ of $B$ into rectangular cells is called $r$-good with respect to $\fset$ and $\opt'$, if and only if:
\begin{itemize}
\item each fake rectangle $F\in \fset$ is a cell of $\pset$;

\item for all other cells $P$, $N_P\leq 20|\opt'|/r$; and
\item $\pset$ contains at most $c^*r$ cells, for some universal constant $c^*>1$.
\end{itemize}
\end{definition}

We note that $\pset\setminus \fset$ defines a partition of $S(\fset)$. It may be more intuitive to view the $r$-good partition as a partition of $S(\fset)$ and not of $B$. However, it is more convenient for us to define $\pset$ as a partition of $B$, as we will see later. Notice that we only require that the number of the rectangles of $\opt'$ intersecting each cell $P\in \pset$ is at most $20|\opt'|/r$, and we ignore all other rectangles of $\rset$ (including those that do not participate in $\rset(\fset)$).

The following theorem shows that there exists an $r$-good partition of $B$ into rectangular cells, for a suitably chosen parameter $r$. Similar theorems have been proved before~\cite{CS95,CF90,Har-Peled14}. We include the proof of the theorem in the Appendix for completeness.

\begin{theorem}\label{thm: r-good partition}
Let $\fset$ be a valid set of fake rectangles with integral coordinates, where $|\fset|=m$ and let  $\opt'$ be any optimal solution to the instance $\rset(\fset)$. Then for every $\max\set{m,3}\leq r\leq |\opt'|/2$, there is an $r$-good (with respect to $\fset$ and $\opt'$) partition $\pset$ of $B$ into rectangular cells, such that the corners of each cell have integral coordinates.\end{theorem}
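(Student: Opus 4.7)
The plan is to build $\pset$ in two stages: first integrate the fake rectangles into a coarse partition of $B$, and then recursively refine the non-fake cells until each intersects at most $t := 20|\opt'|/r$ rectangles of $\opt'$, while maintaining $O(r)$ cells in total.

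\textbf{Stage 1 (integrating $\fset$).} Starting from $\fset$, I would apply Lemma~\ref{lemma: non-simple-tiling} (iteratively on each maximal simple component of $B\setminus\bigcup_{F\in\fset}F$ if necessary) to pad the complement of the fake rectangles with $O(m)$ additional axis-parallel, internally disjoint, integer-aligned rectangles. Together with $\fset$, this yields an initial partition $\pset_0$ of $B$ of size $O(m) \leq O(r)$ in which every fake rectangle is a cell, and whose cell corners are integral.

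\textbf{Stage 2 (recursive balanced subdivision).} While some non-fake cell $P\in\pset$ has $N_P>t$, I would split $P$ by an axis-parallel line at an integer coordinate, chosen to coincide with an edge of a rectangle of $\opt'$ intersecting $P$ (so integrality is preserved). The cut is produced by the standard balanced-cutting machinery for axis-aligned rectangles from~\cite{CS95,CF90,Har-Peled14}: exploiting that $\opt'$ is a set of pairwise \emph{disjoint} rectangles, there is an absolute constant $\alpha<1$ such that a vertical or horizontal cut always exists producing subcells $P_1,P_2$ with $N_{P_i}\le\alpha\,N_P$ for each $i$. Specifically, let $H$ (resp.\ $V$) be the rectangles of $\opt'$ whose intersection with $P$ spans $P$ horizontally (resp.\ vertically). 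Disjointness forces $\min(|H|,|V|)=0$, since any $R_1\in H$ and $R_2\in V$ would cross inside $P$. A case analysis on which of $|H|,|V|$ is the large set, together with a median argument on the remaining rectangles' edge coordinates, then produces the balanced cut.

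\textbf{Stage 3 (counting).} Each split of a cell of $N$-value $k>t$ yields two children of $N$-value at most $\alpha k$, so the refinement recursion rooted at any initial non-fake cell $P_0\in\pset_0$ with $N_{P_0}=k_0$ has depth $O(\log(k_0/t))$ and produces $O(k_0/t)$ leaf cells by a standard charging argument (every leaf had $N_P>t$ just before its final split, and the accumulated incidences stay within a constant factor of $\sum_{P_0}N_{P_0}=O(|\opt'|)$, since each $R\in\opt'$ is charged only to cells along its root-to-leaf path and the branching factor is 2). Summing yields $O(|\opt'|/t)=O(r)$ non-fake cells, plus $O(m)\le O(r)$ fake cells, for a total of $c^*r$ cells for some universal $c^*>1$.

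The main obstacle is executing the balanced-cut step of Stage 2: although it is classical, the proof that such a cut always exists depends crucially on the disjointness of $\opt'$, which rules out configurations where many rectangles simultaneously span both dimensions of $P$ and would otherwise block every cut. Once that cutting lemma is in hand, integrating the fake rectangles, running the recursion, and bounding the cell count are all routine.
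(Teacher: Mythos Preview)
Your approach departs substantially from the paper's, which uses Clarkson--Shor random sampling rather than deterministic recursion: it samples each rectangle of $\opt'$ with probability $\Theta(r/|\opt'|)$ (always including $\fset$), builds the vertical ray-shooting decomposition of the sample, and then refines each cell of ``excess'' $t\geq 10$ by a $t\times t$ grid; an exponential-decay bound $\expect{\#\{\text{cells of excess }t\}}=O(re^{-t})$, proved via a two-sample conditioning trick, keeps the total size $O(r)$. The references you invoke in Stage~2 (Clarkson--Shor, Chazelle--Friedman, Har-Peled) construct cuttings by exactly this sampling route or its derandomization, not by recursive single-line balanced cuts.

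The genuine gap is Stage~3. Your charging sentence ``each $R\in\opt'$ is charged only to cells along its root-to-leaf path'' is valid only when $R$ is never crossed by a cut; the moment a cut crosses $R$, it appears in \emph{both} children and the cells containing $R$ form a branching subtree, not a path. Even granting an $\alpha$-balanced single-line cut, you only get $N_{P_1}+N_{P_2}\le 2\alpha\,N_P$, and for disjoint rectangles $\alpha=3/4$ is tight (take the four-blade windmill with $k$ thin rectangles per blade: every axis-parallel line leaves at least $3k$ of the $4k$ rectangles on one side). Thus the total incidence count can grow by a factor $3/2$ per level over $\Theta(\log(N_P/t))$ levels, so the naive bound on the number of leaves is $(N_P/t)^{c}$ with $c>1$, not $O(N_P/t)$. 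The side claim $\sum_{P_0}N_{P_0}=O(|\opt'|)$ also fails as stated: a single long rectangle of $\opt'$ can cross $\Theta(m)$ of the $O(m)$ initial cells produced in Stage~1, so that sum is only $O(m\,|\opt'|)$. To rescue the recursive scheme you would need either a cut that crosses $O(1)$ rectangles of $\opt'$ (not available in general, even for disjoint rectangles), or a global potential argument bounding the total number of crossings over the entire tree; neither is supplied, and this is precisely what the sampling analysis in the paper sidesteps.
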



\paragraph{Balanced Partitions.}
Let $G=(V,E)$ be any embedded planar graph, with weights $w(v)\geq 0$ on the vertices of $G$, such that $\sum_{v\in V}w(v)=W$. 
The \emph{size} of a face $F$ in the embedding of $G$ is the number of vertices on the boundary of $F$, counting multiple visits when traversing the boundary. 
We say that a simple cycle $C$ of $G$ is a \emph{weighted separator} if the total weight of the vertices in the interior of $C$, and the total weight of the vertices in the exterior of $C$ is at most $2W/3$. We use the following theorem of Miller~\cite{Miller}.

\begin{theorem}\label{thm: balanced separators in planar graphs} Let $G=(V,E)$ be an embedded $n$-vertex $2$-connected planar graph, with an assignment $w(v)$ of weights to its vertices. Then there exists a simple cycle weighted separator containing at most $2\sqrt{2\floor{\frac s 2}n}$ vertices, where $s$ is the maximum face size.
\end{theorem}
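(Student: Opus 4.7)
The plan is to follow Miller's approach, which refines the Lipton--Tarjan planar separator theorem to produce a \emph{simple cycle} separator rather than a general vertex separator, by exploiting the bounded face size.

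First I would root a BFS tree $T$ at an arbitrary vertex of $G$ and let $L_0, L_1, \ldots$ denote its BFS levels. I would locate a ``median'' level $L_\mu$ where the cumulative weight transitions across $W/2$, so that the total weight strictly above and strictly below $L_\mu$ is at most $W/2$ on each side. Next I would search for two BFS levels $L_a$ (with $a \le \mu$) and $L_b$ (with $b \ge \mu$) that are both small. By pigeonholing over $\sum_i |L_i| = n$, one obtains a pair with $|L_a| + |L_b|$ controlled; the refinement to $O(\sqrt{\floor{s/2}\,n})$ uses that faces of size at most $s$ restrict how quickly BFS layers can spread out in a planar embedding, so the ``thin'' levels cannot all be too far from the median.

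The candidate cycle separator would be built from a carefully chosen non-tree edge $(u,v)$ together with the tree paths from $u$ and $v$ up to their least common ancestor. The $2$-connectivity hypothesis ensures that this fundamental cycle is genuinely simple (no cut vertices force revisits). The cycle uses vertices from $L_a$ and $L_b$ together with at most $O(s)$ extra vertices from a single face that the non-tree edge bounds, and optimizing the choice of $a,b$ yields the bound $2\sqrt{2\floor{s/2}\,n}$. Finally, I would verify that the interior and exterior of this cycle each carry weight at most $2W/3$ by choosing the non-tree edge so that its endpoints straddle the weight median level $L_\mu$, ensuring that the interior captures roughly the weight of the band of levels between $a$ and $b$.

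The hard part will be simultaneously guaranteeing three properties: the precise length bound $2\sqrt{2\floor{s/2}\,n}$, the simplicity of the cycle, and the $2W/3$ weight balance on each side. Miller's technical insight is that these can be reconciled by a case split: if the weight is essentially concentrated on a single BFS level, one handles it directly by taking a short cycle through that level; otherwise the fundamental-cycle construction above applies. The $2$-connectivity assumption is essential in both cases, as it rules out cut vertices that would obstruct simplicity. I expect the bulk of the technical work to go into showing that the non-tree edge witnessing the fundamental cycle can always be chosen so that all three properties hold simultaneously, which ultimately reduces to a careful averaging argument over faces between levels $a$ and $b$, exploiting the Euler formula together with the face-size cap $s$.
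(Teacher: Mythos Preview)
The paper does not prove this theorem at all: it is stated as a known result of Miller and simply cited (``We use the following theorem of Miller~\cite{Miller}''). So there is nothing to compare your sketch against in the paper itself; the authors treat it as a black box and move on to applying it in the proofs of Theorems~\ref{thm: balanced partition reducing boundary complexity} and~\ref{thm: balanced partition general}.

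As for the sketch on its own merits: it captures the high-level flavor of the Lipton--Tarjan/Miller framework (BFS levels, a median level, fundamental cycles from non-tree edges), but the mechanism you describe for obtaining a \emph{simple} cycle with the stated length bound is not quite Miller's argument and, as written, would not go through. In particular, a fundamental cycle formed by a non-tree edge plus two tree paths can have length up to twice the BFS depth, which is not controlled by $\sqrt{sn}$; Miller's actual proof does not simply pick a non-tree edge straddling the median level. Instead, he works with a spanning tree of bounded radius (obtained after contracting shallow and deep levels), triangulates, and then uses a careful weighted argument on the dual tree together with the face-size bound to locate a fundamental cycle that is simultaneously short, simple, and balanced. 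The dependence on $\lfloor s/2\rfloor$ comes from bounding the radius increase when one ``unfolds'' a face of size $s$, not from ``$O(s)$ extra vertices from a single face'' as you describe. If you actually needed to reprove this, you would want to follow Miller's paper directly rather than the outline here; but for the purposes of this paper, a citation suffices.
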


The following two theorems are used to partition a given instance into sub-instances. Similar techniques were used in previous work~\cite{AdamaszekW13,Har-Peled14}, so we defer the proofs of these theorems to the Appendix. The first theorem allows us to reduce the boundary complexity of a given sub-instance, by partitioning it into two smaller sub-instances with smaller boundary complexities, while the second theorem allows us to partition a given sub-instance into several sub-instances whose optimal solution values are significantly smaller than the optimal solution value of the original sub-instance. Both partition procedures guarantee that the optimal solution value only goes down by a small amount.


\begin{theorem}\label{thm: balanced partition reducing boundary complexity}
There is a universal constant $c_1>1$, such that the following holds.
Let $\fset$ be any valid set of fake rectangles, with  $|\fset|=L>3$,  and let  $\opt'$ be any optimal solution to $\rset(\fset)$. Let $L\leq r\leq |\opt'|/2Œ$ be any parameter, and let $\pset$ be any $r$-good (with respect to $\fset$ and $\opt'$) partition of $B$. Finally, let $Z$ be the set of the corners of all rectangles in $\pset$. Then there are two sets $\fset_1,\fset_2$ of fake rectangles, such that:

\begin{itemize}
\item $(\fset_1,\fset_2)$ is a valid decomposition pair for $\fset$;
\item $|\fset_1|,|\fset_2|\leq \frac 2 3L+c_1\sqrt r$;
\item $|\opt_{\fset_1}|+|\opt_{\fset_2}|\geq |\opt_{\fset}|\cdot \left (1-\frac{c_1}{\sqrt r}\right )$; and
\item all rectangles in $\fset_1\cup \fset_2$ are aligned with $Z$.
\end{itemize}
\end{theorem}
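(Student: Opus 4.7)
The plan is to apply Miller's balanced simple cycle separator theorem (Theorem~\ref{thm: balanced separators in planar graphs}) to the cell adjacency graph of $\pset$, and then convert the resulting cycle into the boundary of our decomposition. First I would build an embedded planar graph $G$ whose vertices are the cells of $\pset$, with an edge between two vertices whenever the corresponding cells share a non-trivial boundary segment. Because $\pset$ partitions $B$ into axis-parallel rectangles, at most four cells meet at any interior point of $B$, so every face of $G$ has size at most four; after a constant-factor preprocessing that handles the $2$-connectivity requirement (e.g., placing an auxiliary vertex inside each face and refining T-intersections of the partition into cross-intersections) the hypotheses of Miller's theorem are met. Assigning weight $1$ to every vertex corresponding to a fake rectangle of $\fset$ and weight $0$ to every other vertex gives total weight $W=L$, so Theorem~\ref{thm: balanced separators in planar graphs} returns a simple cycle $C$ of length $k=O(\sqrt{r})$ whose strict interior and strict exterior each contain at most $2L/3$ fake rectangles of $\fset$.

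Geometrically, $C$ traces a topological ring of cells $\pset^{\mathrm{cyc}}$ inside $B$; let $\pset^{\mathrm{in}}$ and $\pset^{\mathrm{out}}$ denote the cells strictly inside and strictly outside this ring. The union $Q=\bigcup_{P\in\pset^{\mathrm{in}}\cup\pset^{\mathrm{cyc}}}P$ is then a simple closed axis-parallel polygon whose boundary has $O(k)$ corners, all lying in $Z$ since they are corners of cells in $\pset$. I would define $\fset_1$ to consist of the fake rectangles of $\fset$ that lie in $\pset^{\mathrm{in}}$, together with every cell of $\pset^{\mathrm{cyc}}$ and the $O(k)$ tiling rectangles produced by Lemma~\ref{lemma: non-simple-tiling} for the region $B\setminus Q$, and I would define $\fset_2$ symmetrically (with interior and exterior swapped). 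By construction $S(\fset_1)$ equals the union of the non-fake cells in $\pset^{\mathrm{in}}$ and $S(\fset_2)$ equals the union of the non-fake cells in $\pset^{\mathrm{out}}$, so these are disjoint proper subsets of $S(\fset)$, and $(\fset_1,\fset_2)$ is a valid decomposition pair with $|\fset_i|\le 2L/3+c_1\sqrt{r}$. Alignment with $Z$ is immediate because both the cells of $\pset$ and the rectangles delivered by Lemma~\ref{lemma: non-simple-tiling} have corners in $Z$.

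For the loss estimate, any rectangle $R\in\opt_{\fset}$ that fails to lie in $\rset(\fset_1)\cup\rset(\fset_2)$ must intersect some cell of $\pset^{\mathrm{cyc}}$. The fake cells in $\pset^{\mathrm{cyc}}$ meet no rectangle of $\opt_{\fset}$ (since these rectangles are disjoint from the fake rectangles of $\fset$), and by the $r$-goodness of $\pset$ every non-fake cell is met by at most $20|\opt_{\fset}|/r$ rectangles of $\opt_{\fset}$; summing over the $k=O(\sqrt{r})$ cycle cells bounds the total loss by $O(|\opt_{\fset}|/\sqrt{r})$, which yields $|\opt_{\fset_1}|+|\opt_{\fset_2}|\ge|\opt_{\fset}|(1-c_1/\sqrt{r})$ for a suitable constant $c_1$.

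The main obstacle is the passage from the abstract simple cycle $C$ in the dual graph to a clean geometric ring of cells: one has to verify that a simple cycle in the cell adjacency graph of a rectangular partition genuinely bounds a simply connected region in $B$, and that the outer boundary of this region has complexity $O(k)$ so that Lemma~\ref{lemma: non-simple-tiling} produces only $O(k)$ tiling rectangles. Both facts follow from standard planar-duality arguments once the $2$-connectivity of $G$ has been arranged via the preprocessing above, so the remaining estimates are routine bookkeeping.
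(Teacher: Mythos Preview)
Your proposal is essentially the same argument as the paper's: apply Miller's separator to the face-dual of the partition with weight $1$ on the fake cells, take the ring of separator cells to bound a polygon, tile the two sides using Lemmas~\ref{lemma: simple tiling} and~\ref{lemma: non-simple-tiling}, and charge the loss to the $O(\sqrt r)$ separator cells. The only notable difference is in how the dual graph is set up: the paper first builds the primal graph $G_1$ on the corners of the cells and then takes the planar dual $G_2$, which automatically contains a vertex $v^*$ for the outer face and is $2$-connected without any preprocessing; the case where the separator cycle passes through $v^*$ is then handled explicitly by routing a segment along the boundary of $B$. This avoids the somewhat vague ``refining T-intersections / auxiliary vertex'' step you invoke, and it also cleanly disposes of the topological worry you flag at the end (that the ring of cells bounds a simply connected region), since in the paper's setup the interior of the cycle in $G_2$ corresponds directly to one side of a simple polygon $J$.
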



\begin{theorem}\label{thm: balanced partition general} There is a universal constant $c_2>1$, such that the following holds.
Let $\fset$ be a valid set of fake rectangles, with $|\fset|=L\geq 0$, and let $\opt'$ be any optimal solution to $\rset(\fset)$. Let $\max\set{L,2^{24}c^*}\leq r\leq |\opt'|/2$ be a parameter, where $c^*$ is the constant from the definition of $r$-good partitions, and let $\pset$ be any $r$-good partition of $B$ with respect to $\fset$ and $\opt'$. Finally, let $Z$ be the set of the corners of all rectangles in $\pset$. Then there are two sets $\fset_1,\fset_2$ of fake rectangles, such that:

\begin{itemize}
\item $(\fset_1,\fset_2)$ is a valid decomposition pair for $\fset$;
\item $|\fset_1|,|\fset_2|\leq L+c_2\sqrt r$;
\item $|\opt_{\fset_1}|,|\opt_{\fset_2}|\leq 3|\opt_{\fset}|/4$;
\item $|\opt_{\fset_1}|+|\opt_{\fset_2}|\geq |\opt_{\fset}|\cdot \left (1-\frac{c_2}{\sqrt r}\right )$; and
\item all rectangles in $\fset_1\cup \fset_2$ are aligned with $Z$.
\end{itemize}
\end{theorem}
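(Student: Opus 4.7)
The plan is to apply Miller's cycle separator theorem (Theorem~\ref{thm: balanced separators in planar graphs}) to a planar graph derived from the $r$-good partition $\pset$, convert the resulting cycle into a closed axis-parallel polygonal curve, and use that curve to split $B$ into two regions that define $\fset_1$ and $\fset_2$.

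First I would build a $2$-connected embedded planar graph $G$ whose vertices are the cells of $\pset$, with edges representing adjacency between cells, arranged so that every face of $G$ has size at most some absolute constant $s_0$. Each vertex $P\in\pset$ would be weighted by the number of rectangles of $\opt_{\fset}$ that are entirely contained in $P$; crucially the fake rectangles in $\fset$ receive weight zero, so the balance constraint will not tear them apart. The total weight $W$ is at most $|\opt_{\fset}|$, and $|\pset|\leq c^* r$, so Theorem~\ref{thm: balanced separators in planar graphs} yields a simple cycle $C$ with $|C|\leq 2\sqrt{2\lfloor s_0/2\rfloor c^* r}=O(\sqrt{r})$, separating the cells into an interior and an exterior family, each of total weight at most $2W/3$.

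Next I would trace the topological boundary of the union of interior cells to obtain a closed axis-parallel polygonal curve $\gamma$ with $O(\sqrt{r})$ corners, all of which lie in the corner set $Z$. Let $S_1$ and $S_2$ denote the interior and exterior regions of $\gamma$ inside $B$. I would place every fake rectangle of $\fset$ on the appropriate side (which is well-defined because $\gamma$ runs only along cell boundaries and the fake rectangles are themselves cells of $\pset$). Applying Lemmas~\ref{lemma: simple tiling} and~\ref{lemma: non-simple-tiling} to pad the complements of the two regions inside $B$ with fake rectangles, I obtain the required $\fset_1$ and $\fset_2$, each of size at most $L+c_2\sqrt{r}$ and with all rectangles aligned with $Z$. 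Because $\gamma$ is a proper cycle separator and both sides carry positive weight, $S(\fset_i)\subsetneq S(\fset)$ for $i=1,2$ and $S(\fset_1)\cap S(\fset_2)=\emptyset$, so $(\fset_1,\fset_2)$ is a valid decomposition pair.

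Finally, I would verify the remaining two bullets. The only rectangles of $\opt_{\fset}$ that can fail to appear in $\opt_{\fset_1}\cup\opt_{\fset_2}$ are those that intersect some cell in $C$; since $|C|=O(\sqrt{r})$ and every cell $P$ has $N_P\leq 20|\opt_{\fset}|/r$, the total loss is $O(|\opt_{\fset}|/\sqrt{r})$, giving the fourth bullet. For the balance bullet, each $|\opt_{\fset_i}|$ is bounded by $2W/3 + O(|\opt_{\fset}|/\sqrt{r}) \leq 3|\opt_{\fset}|/4$, which holds once $r$ exceeds the stated threshold $2^{24}c^*$ so that the $O(1/\sqrt{r})$ slack is absorbed. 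The main technical obstacle is constructing $G$ so that Miller's bound is usable --- namely, making it $2$-connected with constant face size --- despite the fact that cells of $\pset$ may share boundaries in complicated ways, with T-joints preventing the naive arrangement graph from having bounded face size. I expect this can be resolved by defining adjacency via both shared boundary segments and shared corners, and, if necessary, by adding a few auxiliary edges so that cycles of cells in $G$ correspond to simple axis-parallel polygonal curves on the plane; the analogous construction has been carried out in \cite{AdamaszekW13,Har-Peled14}, so I would follow those templates.
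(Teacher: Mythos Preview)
Your outline follows the paper's approach quite closely: build a planar graph from the cell partition, apply Miller's separator, convert the cycle into a polygon, and tile the two sides using Lemmas~\ref{lemma: simple tiling} and~\ref{lemma: non-simple-tiling}. Two remarks, one minor and one a genuine gap.

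First, the bounded-face-size issue you flag is handled in the paper not by an adjacency graph on cells, but by taking the dual: let $G_1$ be the graph whose vertices are the corners of the rectangles in $\pset$ and whose edges are the boundary segments of those rectangles, and let $G_2$ be its planar dual. Since each corner of $G_1$ is adjacent to at most four cells, every face of $G_2$ has at most four vertices, and $G_2$ is automatically $2$-connected. This sidesteps the ad hoc adjacency construction you were worried about.

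Second, your weight assignment does not give the balance bullet. You weight a cell $P$ by the number of rectangles of $\opt_{\fset}$ \emph{entirely contained} in $P$. The separator then guarantees that each side carries at most $2W/3$ such single-cell rectangles, but a rectangle of $\opt_{\fset}$ that spans several cells, all lying on the same side of the cycle, is counted nowhere in $W$ yet still contributes to $|\opt_{\fset_i}|$. Nothing in the definition of an $r$-good partition bounds the number of multi-cell rectangles, so the step ``$|\opt_{\fset_i}|\leq 2W/3 + O(|\opt_{\fset}|/\sqrt r)$'' fails. The paper fixes this by weighting each cell $P$ by the number of rectangles of $\opt'$ whose \emph{upper-left corner} lies in $P$ (with a careful boundary convention so that each rectangle is counted exactly once). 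Then the total weight is exactly $|\opt'|$, and if a rectangle $R$ is contained in the interior region, its upper-left corner lies in some interior cell, so $R$ is counted in the interior weight; the separator bound $2|\opt'|/3$ is therefore a genuine upper bound on $|\opt_{\fset_i}|$ (up to the $O(|\opt'|/\sqrt r)$ contribution of the cycle cells themselves, which is absorbed using $r\geq 2^{24}c^*$). Replacing your weighting with this corner-based one makes the rest of your argument go through.
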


The following corollary plays an important role in all our algorithms. Roughly speaking, it allows us to partition any sub-instance of the problem into three smaller sub-instances, whose optimal solution values go down by a constant factor, but whose boundary complexities remain appropriately bounded.

\begin{corollary}\label{corollary: main partition}
There is a universal constant $c_3>10$, such that the following holds. For any parameter $L^*>c_3$, for any valid set $\fset$ of fake rectangles with $|\fset|=L\leq L^*$, such that $|\opt_{\fset}|\geq 64 (L^*)^2$, there are three sets $\fset_1,\fset_2, \fset_3$ of fake rectangles, such that:

\begin{itemize}
\item $(\fset_1,\fset_2,\fset_3)$ is a valid decomposition triple for $\fset$;
\item $|\fset_1|,|\fset_2|,|\fset_3|\leq L^*$;
\item For each $1\leq i\leq 3$, $|\opt_{\fset_i}|\leq 3|\opt_{\fset}|/4$;
\item $\sum_{i=1}^3|\opt_{\fset_i}|\geq |\opt_{\fset}|\cdot \left (1-\frac{c_3}{L^*}\right )$; and
\item if the rectangles in $\fset$ have integral coordinates, then so do the rectangles in $\fset_1\cup\fset_2\cup \fset_3$.
\end{itemize}
\end{corollary}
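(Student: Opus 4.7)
The plan is to chain Theorem~\ref{thm: balanced partition reducing boundary complexity} with Theorem~\ref{thm: balanced partition general}: first I would use the former to cut $\fset$ into a pair $(\fset_a,\fset_b)$ whose boundary complexities are safely below $L^*$, and then apply the latter to whichever of the two has the larger $|\opt|$ in order to split it into two $\tfrac{3}{4}$-balanced pieces without exceeding the $L^*$ budget. This produces exactly the required three sub-instances: $\fset_1,\fset_2$ are the output of the second stage and $\fset_3$ is the untouched piece from the first stage.

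In the main case $L>3$, set $r_1:=\lceil (L^*/(6c_1))^2\rceil$ and $r_2:=\lceil (L^*/(6c_2))^2\rceil$; taking $c_3$ large enough (as a function of $c_1,c_2,c^*$), the hypothesis $|\opt_\fset|\geq 64(L^*)^2$ guarantees that both $r_1$ and $r_2$ lie in the admissible ranges demanded by Theorem~\ref{thm: r-good partition} (so the required $r$-good partitions of $B$ exist) and by Theorems~\ref{thm: balanced partition reducing boundary complexity} and~\ref{thm: balanced partition general}. Invoking Theorem~\ref{thm: balanced partition reducing boundary complexity} with $r=r_1$ yields a valid decomposition pair $(\fset_a,\fset_b)$ of $\fset$ with $|\fset_a|,|\fset_b|\leq \tfrac{2}{3}L+c_1\sqrt{r_1}\leq \tfrac{5L^*}{6}$ and $|\opt_{\fset_a}|+|\opt_{\fset_b}|\geq |\opt_\fset|(1-c_1/\sqrt{r_1})$. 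I relabel so that $|\opt_{\fset_a}|\geq |\opt_{\fset_b}|$; then $|\opt_{\fset_b}|\leq |\opt_\fset|/2$ and $|\opt_{\fset_a}|\geq (1-c_1/\sqrt{r_1})|\opt_\fset|/2\geq |\opt_\fset|/3$, which makes $r_2\leq |\opt_{\fset_a}|/2$. Now apply Theorem~\ref{thm: balanced partition general} to $\fset_a$ with $r=r_2$ to obtain $(\fset_{a,1},\fset_{a,2})$ with $|\fset_{a,i}|\leq |\fset_a|+c_2\sqrt{r_2}\leq \tfrac{5L^*}{6}+\tfrac{L^*}{6}=L^*$, $|\opt_{\fset_{a,i}}|\leq \tfrac{3}{4}|\opt_{\fset_a}|\leq \tfrac{3}{4}|\opt_\fset|$, and $|\opt_{\fset_{a,1}}|+|\opt_{\fset_{a,2}}|\geq |\opt_{\fset_a}|(1-c_2/\sqrt{r_2})$. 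Output $\fset_1:=\fset_{a,1}$, $\fset_2:=\fset_{a,2}$, $\fset_3:=\fset_b$.

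Verification of the four required properties is then routine: the decomposition-triple property follows because each $S(\fset_{a,j})\subsetneq S(\fset_a)$ is disjoint from $S(\fset_b)$ (as $S(\fset_a)\cap S(\fset_b)=\emptyset$) and from $S(\fset_{a,3-j})$, with all three strictly inside $S(\fset)$; the boundary bounds and the $\tfrac{3}{4}|\opt_\fset|$ bound hold by construction (for $\fset_3=\fset_b$ we use $|\opt_{\fset_b}|\leq|\opt_\fset|/2$); chaining the two loss bounds gives
\[\sum_{i=1}^{3}|\opt_{\fset_i}|\;\geq\;|\opt_{\fset_b}|+|\opt_{\fset_a}|\bigl(1-c_2/\sqrt{r_2}\bigr)\;\geq\;|\opt_\fset|\bigl(1-c_1/\sqrt{r_1}-c_2/\sqrt{r_2}\bigr)\;=\;|\opt_\fset|\bigl(1-6(c_1^2+c_2^2)/L^*\bigr),\]
so $c_3:=\max\{6(c_1^2+c_2^2),\text{the constants forced by the preconditions}\}$ works; and integrality is inherited since Theorem~\ref{thm: r-good partition} produces integral-corner partitions, and Theorems~\ref{thm: balanced partition reducing boundary complexity} and~\ref{thm: balanced partition general} align their outputs with those corners. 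The degenerate case $L\leq 3$ is dispatched by skipping the first step, applying Theorem~\ref{thm: balanced partition general} directly to $\fset$ with a slightly modified $r_2$ so that $|\fset_i|\leq L^*$, and taking $\fset_3:=\{B\}$ (so $S(\fset_3)=\emptyset$ and the triple property is trivial).

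The main obstacle is the three-way juggling of $r_1$ and $r_2$. Each $r_i$ must simultaneously be (i) small enough that $c_i\sqrt{r_i}\leq L^*/6$, so that the boundary complexities fit inside the $L^*$ budget after both stages, (ii) large enough that $c_i/\sqrt{r_i}=O(1/L^*)$, keeping the total loss within the required $c_3/L^*$ bound, and (iii) inside the admissible range $[\max\{L,2^{24}c^*\},|\opt|/2]$ that the three partitioning theorems demand. The hypothesis $|\opt_\fset|\geq 64(L^*)^2$ is precisely what provides the headroom needed for (iii) once (i) and (ii) have fixed the scale of each $r_i$.
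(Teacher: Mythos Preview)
Your proposal is correct and follows essentially the same two-stage approach as the paper: first apply Theorem~\ref{thm: balanced partition reducing boundary complexity} to bring the boundary complexity safely below $L^*$, then apply Theorem~\ref{thm: balanced partition general} to the larger piece, with the degenerate case $L\le 3$ handled by skipping the first step and padding with $\{B\}$. The only cosmetic differences are that the paper uses a single parameter $r=\lfloor (L^*/(3(c_1+c_2)))^2\rfloor$ for both stages rather than separate $r_1,r_2$, and uses a floor rather than a ceiling (your ceiling makes the inequality $c_1\sqrt{r_1}\le L^*/6$ go the wrong way by an $O(1/L^*)$ term, but this is absorbed once $c_3$ is large enough).
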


Notice that in all of the above partitioning theorems, the optimal values of the resulting sub-instances reduce by the factor of at least $3/4$. It is therefore more natural for us to work with logarithms to the base of $4/3$. In the rest of this paper, unless stated otherwise, all logarithms are to the base of $4/3$.


\label{----------------------------------------sec: algorithm outline------------------------------------------------}
\section{Algorithm Outline}\label{sec: alg outline}

All our algorithms follow the same general outline, that we describe here. We define a family $\cset$ of \emph{important sets of fake rectangles}, that contains $\emptyset$, so that every element of $\cset$ is a valid set of fake rectangles. As an example, $\cset$ may contain all valid sets $\fset$ of fake rectangles with $|\fset|\leq L^*$ for some bound $L^*$, such that all corners of all rectangles in $\fset$ have integral coordinates. Additionally, we define a set $\cset'\subseteq \cset$ of \emph{basic important sets of fake rectangles}. Intuitively, for each $\fset\in\cset'$, the corresponding instance $\rset(\fset)$ is ``simple'' in some sense: for example, its optimal solution value may be suitably small. We assume that we are given an algorithm $\aset'$, that, given a set $\fset\in \cset$ of fake rectangles, tests whether $\fset\in \cset'$, and an algorithm $\aset''$, that computes a $(1-\eps/2)$-approximate solution to each instance $\rset(\fset)$ with $\fset\in \cset'$. We discuss the running times of these algorithms later.
 We will ensure that $\set{B}\in \cset'$ (the set of fake rectangles containing the bounding box only). This guarantees that for every set $\fset\in \cset\setminus \cset'$ there is always a valid decomposition pair $(\fset_1,\fset_2)$  with $\fset_1,\fset_2\in \cset$ - for example, where $\fset_1=\fset_2=\set{B}$.

Once the families $\cset,\cset'$, and the algorithms $\aset',\aset''$ are fixed, our algorithm is also fixed, and proceeds via simple dynamic programming, as follows. The dynamic programming table $T$ contains, for every important set $\fset\in \cset$ of fake rectangles, an entry $T[\fset]$, that will contain an approximate solution to the corresponding instance $\rset(\fset)$. 
In order to initialize $T$, for every important set $\fset\in \cset$ of fake rectangles, we test whether $\fset\in \cset'$ using algorithm $\aset'$, and if so, we apply algorithm $\aset''$ to compute a valid $(1-\eps/2)$-approximate solution to instance $\rset(\fset)$, which is then stored at $T[\fset]$.
Once we finish the initialization step, we start to fill out the entries $T[\fset]$ for $\fset\in \cset\setminus\cset'$ from smaller to larger values of the area of $S(\fset)$.
%

Consider now some important set $\fset\in \cset\setminus\cset'$ of fake rectangles, and assume that for all $\fset'\in \cset$ with $S(\fset')\subsetneq S(\fset)$, we have already processed the entry $T[\fset']$. Entry $T[\fset]$ is computed as follows. First, for every triple $\fset_1,\fset_2,\fset_3\in \cset$ of important sets of fake rectangles, such that $(\fset_1,\fset_2,\fset_3)$ is a valid decomposition triple for $\fset$, we consider the solution $\xset=T[\fset_1]\cup T[\fset_2]\cup T[\fset_3]$. We do the same for every pair $\fset_1,\fset_2\in \cset$ of important sets of fake rectangles, such that $(\fset_1,\fset_2)$ is a valid decomposition pair for $\fset$. Among all such solutions $\xset$, let $\xset^*$ be the one of maximum value. We then store the solution $\xset^*$ in $T[\fset]$. 
Note that since we ensure that every set $\fset\in \cset\setminus\cset'$ has a valid decomposition pair $(\fset_1,\fset_2)$ with $\fset_1,\fset_2\in \cset$, this step is well defined.
This finishes the description of the algorithm. The final solution is stored in the entry $T[\emptyset]$. Notice that the choice of $\cset,\cset'$, and the algorithms $\aset',\aset''$ completely determines our algorithm. The running time depends on $|\cset|$, and on the running times of the algorithms $\aset'$ and $\aset''$.


It is immediate to see that every entry $T[\fset]$ of the dynamic programming table contains a feasible solution to instance $\rset(\fset)$. We only need to show that the value of the solution stored in $T[\emptyset]$ is close to $|\opt|$.
This is done by constructing a \emph{partitioning tree}, that we define below.

\begin{definition}
Suppose we are given a canonical instance $\rset$ of the \misr problem, a family $\cset$ of important sets of fake rectangles, and a subset $\cset'\subseteq \cset$ of basic important sets of fake rectangles. Assume also that we are given a set $\fset\in \cset$ of fake rectangles. A partitioning tree $\tset(\fset)$ for $\fset$ is a rooted tree, whose every vertex $v\in V(\tset)$ is labeled with an important set $\fset(v)\in \cset$ of fake rectangles, such that the following hold:

\begin{itemize}

\item if $v$ is the root of the tree, then $\fset(v)=\fset$; and
\item if $v$ is an inner vertex of the tree, and $\set{v_1,\ldots,v_r}$ are its children, then $r\in\set{2,3}$, and  $\set{\fset(v_i)}_{i=1}^r$ is either a valid decomposition pair or a valid decomposition triple for $\fset(v)$.
\end{itemize}

We say that $\tset(\fset)$ is a \emph{complete partitioning tree} for $\fset$, if additionally for every leaf vertex $v$ of $\tset(\fset)$, $\fset(v)\in \cset'$.

Let $\iset(\tset(\fset))$ and $\lset(\tset(\fset))$ denote the sets of all inner vertices and all leaf vertices of the tree $\tset(\fset)$, respectively. For every inner vertex $v\in \iset(\tset(\fset))$, whose children are denoted by $v_1,\ldots,v_r$, we define the loss at $v$ to be $\lambda(v)=|\opt_{\fset(v)}|-\sum_{i=1}^r|\opt_{\fset(v_i)}|$. The loss of the tree $\tset(\fset)$, denoted by $\Lambda(\tset(\fset))$ is $|\opt_{\fset}|-\sum_{v\in \lset(\tset(\fset))}|\opt_{\fset(v)}|=\sum_{v\in \iset(\tset(\fset))}\lambda(v)$.
\end{definition}

The following useful observation is immediate from the definition of the partitioning tree.

\begin{observation}\label{obs: partitioning tree - disjointness of non-descendants}
Let $\tset(\fset)$ be a partitioning tree for some set $\fset\in \cset$. If $U\subseteq V(\tset(\fset))$ is any subset of vertices of the tree, such that no vertex of $U$ is a descendant of the other in the tree, then $\set{S(\fset(v))}_{v\in U}$ are all mutually disjoint, and

\[\sum_{v\in U}|\opt_{\fset(v)}|\leq |\opt_{\fset}|.\]
\end{observation}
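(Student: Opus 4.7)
The observation has two parts: disjointness of the regions $S(\fset(v))$ for $v\in U$, and the bound on the sum of optimal values. The plan is to prove a monotonicity statement about the tree first, and then derive both parts from it.

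First I would establish the following auxiliary claim by induction on the depth of $v$ in $\tset(\fset)$: for every vertex $v\in V(\tset(\fset))$, $S(\fset(v))\subseteq S(\fset)$. The base case (the root) is immediate. For the inductive step, if $v$ is a child of some vertex $u$, then by the definition of a valid decomposition pair/triple applied to $\fset(u)$ and its children, $S(\fset(v))\subsetneq S(\fset(u))$; combined with the inductive hypothesis $S(\fset(u))\subseteq S(\fset)$, this gives $S(\fset(v))\subseteq S(\fset)$. The same argument actually shows the stronger statement that $S(\fset(v'))\subseteq S(\fset(v))$ whenever $v'$ is a descendant of $v$ in the tree.

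Next I would prove disjointness of $\{S(\fset(v))\}_{v\in U}$. Pick any two distinct $u,v\in U$ and let $w$ be their lowest common ancestor in $\tset(\fset)$. Since neither $u$ nor $v$ is a descendant of the other, we have $w\notin\{u,v\}$, so $w$ is an inner vertex and the paths from $w$ to $u$ and to $v$ pass through distinct children $w_1,w_2$ of $w$. By the definition of a valid decomposition pair/triple for $\fset(w)$, we have $S(\fset(w_1))\cap S(\fset(w_2))=\emptyset$. By the monotonicity claim above (applied to the subtrees rooted at $w_1$ and $w_2$), $S(\fset(u))\subseteq S(\fset(w_1))$ and $S(\fset(v))\subseteq S(\fset(w_2))$, so $S(\fset(u))\cap S(\fset(v))=\emptyset$.

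For the sum bound, I would form $\xset=\bigcup_{v\in U}\opt_{\fset(v)}$ and argue that $\xset$ is a feasible solution to the instance $\rset(\fset)$; since $\opt_{\fset}$ is optimal, this immediately yields $|\xset|=\sum_{v\in U}|\opt_{\fset(v)}|\leq |\opt_{\fset}|$. Feasibility has two aspects. First, every rectangle $R\in \opt_{\fset(v)}$ satisfies $R\subseteq S(\fset(v))\subseteq S(\fset)$, so $R\in \rset(\fset)$. Second, $\xset$ is an independent set of rectangles: any two rectangles coming from the same $\opt_{\fset(v)}$ are disjoint because $\opt_{\fset(v)}$ is itself an independent set, while two rectangles from different $\opt_{\fset(u)}$ and $\opt_{\fset(v)}$ lie inside the disjoint regions $S(\fset(u))$ and $S(\fset(v))$ established in the previous paragraph, and hence are disjoint (recall rectangles are open).

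There is no real technical obstacle here; the statement is essentially a bookkeeping consequence of the definitions of valid decomposition pair/triple and of a partitioning tree. The only thing to be careful about is the monotonicity claim, since the disjointness of children's regions is a local property at each inner vertex and must be propagated to arbitrary descendants through the inductive argument.
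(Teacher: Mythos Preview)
Your proof is correct and complete. The paper itself does not supply a proof of this observation, stating only that it ``is immediate from the definition of the partitioning tree''; your argument via the monotonicity claim, the lowest-common-ancestor disjointness argument, and the feasibility of $\bigcup_{v\in U}\opt_{\fset(v)}$ for $\rset(\fset)$ is exactly the natural way to unpack that immediacy.
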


\begin{definition}
Suppose we are given a canonical instance $\rset$ of the \misr problem, a family $\cset$ of important sets of fake rectangles, and a subset $\cset'\subseteq \cset$ of basic important sets of fake rectangles. A full partitioning tree for $\rset$ is a complete partitioning tree $\tset(\fset)$ for $\fset=\emptyset$.
\end{definition}


Given a full partitioning tree $\tset$, we will denote by $\iset(\tset)$ and $\lset(\tset)$ the sets of all the inner vertices and all the leaf vertices of $\tset$, respectively. For every vertex $v$ of $\tset$, we associate a value $\mu(v)$ with $v$, as follows. If $v$ is a leaf of $\tset$, then $\mu(v)$ is the value of the $(1-\eps/2)$-approximate solution to instance $\rset(\fset(v))$ returned by the algorithm $\aset''$. If $v$ is an inner vertex of $\tset$, then $\mu(v)$ is the sum of values $\mu(v')$ for all children $v'$ of $v$.
It is immediate to see that for every vertex $v$ of the full partitioning tree $\tset$, $\mu(v)$ is the sum of the values $\mu(v')$ for all descendants $v'$ of $v$ that are leaves. We denote by $\mu(\tset)$ the value $\mu(v)$ of the root vertex $v$ of $\tset$. The following observation connects the value of the solution computed by the dynamic programming algorithm to $\mu(\tset)$.

\begin{observation}\label{obs: DP to tree}
For every vertex $v$ of the tree $\tset$, the entry $T[\fset(v)]$ of the dynamic programming table contains a solution to instance $\rset(\fset(v))$, whose value is at least $\mu(v)$.
\end{observation}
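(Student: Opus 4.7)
The plan is to prove the observation by bottom-up induction on the partitioning tree $\tset$. The base case concerns leaves $v$ of $\tset$: since $\tset$ is a full partitioning tree, we have $\fset(v)\in \cset'$, so during initialization the algorithm stores in $T[\fset(v)]$ the solution produced by $\aset''$ on instance $\rset(\fset(v))$. By the definition of $\mu$ on leaves, this is precisely a solution of value $\mu(v)$, establishing the base case.

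For the inductive step, let $v$ be an inner vertex of $\tset$ with children $v_1,\ldots,v_r$, where $r\in\set{2,3}$, and assume inductively that for each $i$, $T[\fset(v_i)]$ contains a valid solution to $\rset(\fset(v_i))$ of value at least $\mu(v_i)$. I first need to justify that $T[\fset(v_i)]$ is already filled by the time $T[\fset(v)]$ is computed: this follows from the fact that $S(\fset(v_i))\subsetneq S(\fset(v))$ (an immediate consequence of $\set{\fset(v_i)}$ being a valid decomposition pair or triple for $\fset(v)$), so the areas of $S(\fset(v_i))$ are strictly smaller than that of $S(\fset(v))$, matching the order in which the DP processes entries.

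Next, since $\set{\fset(v_i)}_{i=1}^r\subseteq \cset$ forms a valid decomposition pair or triple for $\fset(v)$, the algorithm explicitly considers the candidate solution $\xset=\bigcup_{i=1}^r T[\fset(v_i)]$ when computing $T[\fset(v)]$. I then need to check that $\xset$ is a feasible solution to $\rset(\fset(v))$. Every rectangle $R\in T[\fset(v_i)]$ satisfies $R\subseteq S(\fset(v_i))\subseteq S(\fset(v))$, so $R\in \rset(\fset(v))$; within each $T[\fset(v_i)]$ rectangles are pairwise disjoint by the inductive hypothesis; and rectangles drawn from $T[\fset(v_i)]$ and $T[\fset(v_j)]$ with $i\neq j$ are disjoint since $S(\fset(v_i))\cap S(\fset(v_j))=\emptyset$. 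Hence $\xset$ is feasible, and its value is $\sum_{i=1}^r |T[\fset(v_i)]|\geq \sum_{i=1}^r \mu(v_i)=\mu(v)$ by the inductive hypothesis and the recursive definition of $\mu$ at inner vertices.

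Since $T[\fset(v)]$ stores the maximum-value solution over all candidate pairs/triples considered by the DP, we conclude $|T[\fset(v)]|\geq |\xset|\geq \mu(v)$, completing the inductive step. No step here is genuinely hard: the only point requiring a small amount of care is the verification that $\xset$ is a feasible independent set of rectangles, which is handled by invoking the disjointness condition on the sets $S(\fset(v_i))$ built into the definition of a valid decomposition pair/triple.
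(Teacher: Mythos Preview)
Your proof is correct and follows essentially the same inductive approach as the paper's own proof. You include more detail than the paper does---in particular, you explicitly verify the DP processing order via the area argument and the feasibility of the union $\xset$ via the disjointness condition on the $S(\fset(v_i))$---but the core argument (induction on the tree, base case at leaves in $\cset'$, inductive step using that the DP enumerates the children's decomposition pair/triple) is identical.
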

\begin{proof}
The proof is by induction on the depth of the vertex $v$. The assertion is clearly true for the leaves of the tree. Consider now some inner vertex $v$ of the tree, and the corresponding important set $\fset(v)\in \cset$ of fake rectangles. Let $v_1,\ldots,v_r$ be the children of $v$ (where $r\in \set{2,3}$), and  let $\fset_1,\ldots,\fset_r$ be the sets of fake rectangles associated with them. Then $\set{\fset_i}_{i=1}^r$ is either a valid decomposition pair or a valid decomposition triple for $\fset(v)$, and for all $1\leq i\leq r$, $\fset_i\in \cset$. Therefore, the dynamic programming algorithm considers the solution $\xset$, obtained by taking the union of the solutions stored in $\set{T[\fset_i]}_{i=1}^r$. From the induction hypothesis, for each $1\leq i\leq r$, the value of the solution stored in $T[\fset_i]$ is at least $\mu(v_i)$, and so the value of the final solution stored at $T[\fset]$ is at least $\mu(v)$.
\end{proof}

The following simple observation will be useful in analyzing the approximation factors achieved by our algorithms.

\begin{observation}\label{obs: final analysis}
Suppose we are given a canonical instance $\rset$ of the \misr problem, a family $\cset$ of important sets of fake rectangles, and a subset $\cset'\subseteq \cset$ of basic important sets of fake rectangles. Assume further that there exists a full partitioning tree $\tset$ for $\rset$, whose loss $\Lambda(\tset)\leq \eps |\opt|/2$. Then the dynamic programming-based algorithm described above computes a $(1-\eps)$-approximate solution to $\rset$.
\end{observation}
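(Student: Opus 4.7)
The plan is to combine the two earlier observations (Observations~\ref{obs: partitioning tree - disjointness of non-descendants} and~\ref{obs: DP to tree}) with the definition of loss, so that the value stored at $T[\emptyset]$ can be bounded below in terms of $|\opt|$ and $\Lambda(\tset)$. There are essentially two steps: first translate the output of the dynamic program to the quantity $\mu(\tset)$, and then compare $\mu(\tset)$ to $|\opt|$ through the leaves of $\tset$.

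For the first step, I would apply Observation~\ref{obs: DP to tree} at the root of $\tset$. Since $\tset$ is a full partitioning tree for $\rset$, its root is labeled with $\fset=\emptyset$, so the entry $T[\emptyset]$ contains a feasible solution to $\rset(\emptyset)=\rset$ of value at least $\mu(\tset)$. Thus it suffices to show $\mu(\tset)\geq (1-\eps)|\opt|$.

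For the second step, I would unfold $\mu(\tset)$ along the tree. By the recursive definition of $\mu$ (and a trivial induction on depth), $\mu(\tset)=\sum_{v\in \lset(\tset)}\mu(v)$. Since $\tset$ is a complete partitioning tree, for every leaf $v$ we have $\fset(v)\in\cset'$, so the algorithm $\aset''$ computes a $(1-\eps/2)$-approximate solution to $\rset(\fset(v))$, giving $\mu(v)\geq (1-\eps/2)|\opt_{\fset(v)}|$. Summing over leaves,
\[
\mu(\tset)\;\geq\;(1-\eps/2)\sum_{v\in\lset(\tset)}|\opt_{\fset(v)}|.
\]
Now I would invoke the definition of the tree's loss directly: $\Lambda(\tset)=|\opt|-\sum_{v\in\lset(\tset)}|\opt_{\fset(v)}|$, so the hypothesis $\Lambda(\tset)\leq \eps|\opt|/2$ gives $\sum_{v\in\lset(\tset)}|\opt_{\fset(v)}|\geq (1-\eps/2)|\opt|$. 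Combining yields $\mu(\tset)\geq (1-\eps/2)^2|\opt|\geq (1-\eps)|\opt|$, as desired.

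There is no real obstacle here; the proof is a direct bookkeeping argument. The only small subtlety is to remember that the loss is defined \emph{per-tree} rather than inductively along internal vertices, so one should avoid redundantly re-charging the approximation error at the leaves against the losses at the internal vertices. Using the identity $\Lambda(\tset)=\sum_{v\in\iset(\tset)}\lambda(v)=|\opt|-\sum_{v\in\lset(\tset)}|\opt_{\fset(v)}|$ in a single step keeps the accounting clean and lets the two factors of $(1-\eps/2)$ — one from the leaf-level approximation and one from the cumulative partitioning loss — multiply into the final $(1-\eps)$ bound.
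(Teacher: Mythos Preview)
Your proposal is correct and follows essentially the same argument as the paper: apply Observation~\ref{obs: DP to tree} at the root to bound $T[\emptyset]\geq \mu(\tset)$, expand $\mu(\tset)$ as the sum over leaves, use the $(1-\eps/2)$-approximation at each leaf, and then invoke the identity $\sum_{v\in\lset(\tset)}|\opt_{\fset(v)}|=|\opt|-\Lambda(\tset)$ together with the hypothesis $\Lambda(\tset)\leq \eps|\opt|/2$ to obtain $\mu(\tset)\geq (1-\eps/2)^2|\opt|\geq (1-\eps)|\opt|$. The paper's proof is the same chain of inequalities, just written in a single displayed line.
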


\begin{proof}
Let $v_r$ be the root vertex of the tree $\tset$, so $\fset_{v_r}=\emptyset$. From the above discussion, entry $T[v_r]$ of the dynamic programming table stores a valid solution to instance $\rset=\rset(\fset(v_r))$ of value at least $\mu(v_r)$. It now remains to show that $\mu(v_r)\geq (1-\eps)|\opt|$. Indeed:

\[\mu(v_r)=\sum_{v\in \lset(\tset)}\mu(v)\geq \sum_{v\in \lset(\tset)}(1-\eps/2)|\opt_{\fset(v)}|=(1-\eps/2)\left(|\opt_{v_r}|-\Lambda(\tset)\right)\geq (1-\eps/2)^2|\opt|\geq (1-\eps)|\opt|.\]
\end{proof}

Notice that in order to analyze our algorithm, we now only need to show the existence of a suitable partitioning tree. We do not need to provide an efficient algorithm to construct such a tree, and in particular, we can assume that we know the optimal solution $\opt$ to our instance $\rset$ when constructing the tree.

As a warm-up, we show a QPTAS whose running time is $n^{O(\log |\opt|/\epsilon^3)}$. This algorithm is very similar to the algorithm of \cite{AdamaszekW13}, except that we use a somewhat more sophisticated partitioning scheme (namely, Corollary~\ref{corollary: main partition}). In order to develop the intuition for our final algorithm, we then improve the running time to $n^{O(\sqrt{\log |\opt|}/\epsilon^3)}$, and show an algorithm with running time $n^{O((\log\log |\opt|/\epsilon)^4)}$ at the end. Notice that since we have assumed that our instance is canonical, $|\opt|\leq O(n^4)$, and so the running time of our final algorithm is $n^{O((\log\log n/\epsilon)^4)}$.

\label{------------------------------------------sec: running time exp log n-----------------------------------------}
\section{A QPTAS with running time $n^{O(\log |\opt| /\epsilon^3)}$}\label{sec: running time exp log n}

We need the following parameters for our algorithm. Let $L^*=\frac{2c_3\log |\opt|}{\epsilon}$, where $c_3$ is the constant from Corollary~\ref{corollary: main partition}, and let $\tau=64\left({L^*}\right )^2=\Theta((\log |\opt|/\eps)^2)$. 
Notice that any valid set $\fset$ of fake rectangles with $|\opt_{\fset}|> \tau$ and $|\fset|\leq L^*$ satisfies the conditions of Corollary~\ref{corollary: main partition}. 

We let the family $\cset$ of important sets of fake rectangles contain all valid sets $\fset$ of fake rectangles with $|\fset|\leq L^*$, such that all corners of all rectangles in $\fset$ have integral coordinates. Therefore, $|\cset|=n^{O(L^*)}=n^{O(\log |\opt|/\eps)}$.
The family $\cset'\subseteq \cset$ of basic sets of fake rectangles contains all sets $\fset$ with $|\opt_{\fset}|\leq \tau$. We can verify whether $\fset\in \cset'$, and if so, we can find the optimal solution to instance $\rset(\fset)$ in time $n^{O(\tau)}=n^{O(\log^2|\opt|/\eps^2)}$. This defines the algorithms $\aset'$ and $\aset''$, and we can now employ the dynamic programming-based algorithm, described in Section~\ref{sec: alg outline}. In order to analyze the running time of the algorithm, observe that the initialization step takes time $O(|\cset|)\cdot n^{O(\log^2|\opt|/\eps^2)}=n^{O(\log^2|\opt|/\eps^2)}$, and the remaining part of the algorithm takes time $O(|\cset|^4\poly(n))=n^{O(\log |\opt|/\eps)}$, so overall the running time is $n^{O(\log^2|\opt|/\eps^2)}$. We later show how to improve the running time to $n^{O(\log |\opt|/\eps^3)}$ by replacing algorithms $\aset'$ and $\aset''$ with more efficient algorithms.

It now remains to show that the value of the solution computed by the algorithm is at least $(1-\eps)|\opt|$. We do so by the constructing a full partitioning tree $\tset$ for $\rset$, as described in Section~\ref{sec: alg outline}. We start with the tree $\tset$ containing a single vertex $v$, with $\fset(v)=\emptyset$. While there is a leaf vertex $v\in \tset$ with $\fset(v)\in \cset\setminus\cset'$ we add three children $v_1,v_2$ and $v_3$ to vertex $v$. Applying Corollary~\ref{corollary: main partition} to $\fset(v)$, we obtain a decomposition triple $(\fset_1,\fset_2,\fset_3)$ for $\fset(v)$, and we associate each of the three new vertices $v_1,v_2,v_3$ with the sets $\fset_1,\fset_2$ and $\fset_3$, respectively. Notice that for $i\in\set{1,2,3}$, $|\fset_i|\leq L^*$, and if $\fset(v)\in \cset$, then all corners of all rectangles in $\fset_i$ have integral coordinates, so $\fset_1,\fset_2,\fset_3\in \cset$. Notice also that from Corollary~\ref{corollary: main partition}, the loss of vertex $v$ is $\lambda(v)=|\opt_{\fset(v)}|-\sum_{i=1}^3|\opt_{\fset_i}|\leq \frac{c_3|\opt_{\fset(v)}|}{L^*}$.

The algorithm terminates when for every leaf vertex $v$ of $\tset$, $\fset(v)\in \cset'$.

It is now enough to prove that the loss of the tree $\tset$ is at most $\eps|\opt|/2$. We partition the inner vertices of the tree $\tset$ into subsets $U_1,U_2,\ldots$, where a vertex $v\in \iset(\tset)$ belongs to $U_i$ if and only if the number of vertices of $\tset$ lying on the unique path connecting $v$ to the root of the tree is exactly $i$. Since the values $|\opt_{\fset(v)}|$ decrease by the factor of at least $3/4$ as we go down the tree, the number of non-empty subsets $U_i$ is bounded by $\log |\opt|$. Consider now some $1\leq i\leq \log |\opt|$. Notice that for every pair $v,v'\in U_i$ of vertices, neither is a descendant of the other, and so from Observation~\ref{obs: partitioning tree - disjointness of non-descendants}, $\sum_{v\in U_i}|\opt_{\fset(v)}|\leq |\opt|$. Therefore, the total loss of all vertices in $U_i$ is:

\[\sum_{v\in U_i}\lambda(v)\leq \sum_{v\in U_i}\frac{c_3|\opt_{\fset(v)}|}{L^*}\leq \frac{c_3|\opt|}{L^*}\]

Overall, the total loss of the tree $\tset$ is:

\[\sum_{i=1}^{\log|\opt|}\sum_{v\in U_i}\lambda(v)\leq \frac{c_3|\opt|\cdot \log |\opt|}{L^*}\leq \frac{\eps|\opt|}2,\]

since $L^*=\frac{2c_3\log|\opt|}{\epsilon}$.

So far, we have shown an algorithm that computes a $(1-\eps)$-approximate solution in time $n^{O(\log^2|\opt|/\eps^2})$. We can also use it to compute, for any instance $\rset(\fset)$ defined by any valid set $\fset$ of fake rectangles, a $(1-\eps/2)$-approximate solution for $\rset(\fset)$, in time $n^{O(\log^2|\opt_{\fset}|/\eps^2)}$. We denote this algorithm by $\aset^*$.

We now describe a slightly modified version of the algorithm, whose running time is $n^{O(\log |\opt|/\eps^3)}$. 
We assume that $\eps>1/\log^2|\opt|$, since otherwise algorithm $\aset^*$ provides an $(1-\eps)$-approximation in time $n^{O(\log |\opt|/\eps^3)}$.
The family $\cset$ of important sets of fake rectangles remains the same as before, but the family $\cset'\subseteq \cset$ of basic sets of fake rectangles is defined slightly differently: it contains all sets $\fset\in \cset$ of fake rectangles, such that algorithm $\aset$ from Corollary~\ref{cor: an log log opt approx} returns a solution of value at most $\tau$ to instance $\rset(\fset)$. We then let $\aset'$ be the algorithm $\aset$. Notice that if $\fset\in \cset'$, then $|\opt_{\fset}|\leq O(\aset(\rset(\fset)) \log\log |\opt|)=O(\log^2|\opt|\log\log |\opt|/\eps^2)=O(\poly\log |\opt|)$. We can now use algorithm $\aset^*$ to compute a $(1-\eps/2)$-approximate solution for every instance $\rset(\fset)$ with $\fset\in \cset'$, in time $n^{O(\log^2|\opt_{\fset}|/\eps^2)}=n^{O((\log\log |\opt|)^2/\eps^2)}$. The rest of the algorithm remains unchanged, except that we now use the algorithm $\aset^*$ instead of $\aset''$. It is immediate to verify that the running time of the algorithm is $n^{O(\log |\opt|/\eps^3)}$. For every set $\fset\in \cset\setminus\cset'$ of fake rectangles, we now have $|\opt_{\fset}|\geq \aset(\fset)\geq \tau$, and so $\fset$ is a valid input to Corollary~\ref{corollary: main partition}. We can use the same construction of the partitioning tree as before, to show that the value of the solution computed by the algorithm is at least $(1-\eps)|\opt|$.

\label{------------------------------------------sec: running time exp sqrt log n-----------------------------------------}
\section{A QPTAS with Running Time $n^{O(\sqrt {\log |\opt|}/\eps^3)}$}\label{sec: running time exp sqrt log n}
In Section~\ref{sec: running time exp log n}, we have presented a $(1-\epsilon)$-approximation algorithm for MISR with running time $n^{O(\log|\opt|/\eps^3)}$. The running time directly depends on $|\cset|=n^{O(\log |\opt|/\eps)}$, and this value is the bottleneck in the running time of the algorithm. Since the corners of the rectangles in $\fset$ have integral coordinates between $0$ and $2n+1$, we have $\Theta(n^2)$ choices for every corner of each such rectangle, and since we allow the sets $\fset\in \cset$ to contain up to $L^*=\Theta(\log |\opt|/\eps)$ such rectangles, we obtain $n^{O(\log|\opt|/\eps)}$ choices overall.

Let us informally define a set $\iset$ of points in the plane to be a set of \emph{points of interest} if $\iset$ contains all points that may serve as corners of fake rectangles in sets $\fset\in \cset$. In the algorithm from Section~\ref{sec: running time exp log n}, set $\iset$ contains all points $(x,y)$, where $x$ and $y$ are integers between $0$ and $2n+1$, and so $|\iset|=\Theta(n^2)$.
In order to improve the running time of the algorithm, it is natural to try one of the following two approaches: (i) reduce the number of points of interest; or (ii) reduce the parameter $L^*$ (the maximum allowed cardinality of sets $\fset\in \cset$).

Unfortunately, it is not hard to see that neither of these approaches works directly. Since we eventually need to consider sub-instances $\rset(\fset)$ where $|\opt_{\fset}|$ is very small, we need to allow many points of interest - almost as many as $\Theta(n^2)$. This rules out the first approach.

In order to see that the second approach does not work directly, consider the partitioning tree $\tset$ that we have defined for the analysis of the algorithm, and the partition $(U_1,U_2,\ldots,U_z)$ of the inner vertices of $\tset$ into levels, according to their distance from the root vertex. Recall that the total loss of all vertices at a given level $U_i$ was $O(\eps |\opt|/\log |\opt|)$, and the number of levels is $z=O(\log |\opt|)$, thus giving us a total loss of $O(\eps |\opt|)$ overall. In order to obtain a $(1-\eps)$-approximation, the loss at every level must be bounded by $O(\eps |\opt|/\log |\opt|)$, and so on average, when we apply Corollary~\ref{corollary: main partition} to some set $\fset(v)\in \cset$ of fake rectangles, corresponding to some vertex $v\in V(\tset)$, we cannot afford to lose more than an $O(\eps |\opt_{\fset}|/\log |\opt|)$-fraction of the rectangles of $\opt_{\fset}$. It is not hard to show that this forces us to set $L^*=\Omega(\log |\opt|/\eps)$ (in other words, we cannot obtain a substantially better tradeoff between $L^*$ and the number of the rectangles lost, than that in Corollary~\ref{corollary: main partition}).

We get around this problem as follows. Consider the process of constructing the partitioning tree $\tset$, and let us assume that in every iteration, we choose the leaf $v$ in the current tree with the largest value $|\opt_{\fset(v)}|$ to process, breaking ties arbitrarily. We divide the execution of the algorithm into $O(\sqrt{ \log |\opt|})$ phases, where the $j$th phase finishes when for every leaf $v$ in the current tree, $|\opt_{\fset(v)}|\leq |\opt|/2^{j\sqrt {\log |\opt|}}$. At the end of each phase, while there is a leaf $v$ in $\tset$, whose corresponding set $\fset(v)$ of fake rectangles has boundary complexity $|\fset(v)|>\Omega({\sqrt {\log|\opt|}/\eps})$, we repeatedly apply Theorem~\ref{thm: balanced partition reducing boundary complexity} in order to find a valid decomposition pair $(\fset_1,\fset_2)$ for $\fset(v)$, with $|\fset_1|,|\fset_2|<3|\fset(v)|/4$. This allows us to lower the boundary complexities of the instances that we consider to $O(\sqrt{\log |\opt|}/\eps)$. In the process of doing so, we will lose roughly an $O(\eps/\sqrt{\log |\opt|})$-fraction of the rectangles from the optimal solution. However, since the number of phases is bounded by $O(\sqrt{\log |\opt|})$, we can afford this loss.
Therefore, sets $\fset$ of fake rectangles that we obtain at the end of each phase will have a small boundary complexity - only $O(\sqrt{\log |\opt|}/\eps)$. We call such sets $\fset$ \emph{level-1 sets}.

Inside each phase, we still need to allow the boundary complexities of the instances we consider to be as high as $\Theta(\log |\opt|/\eps)$. However, we can now exploit the fact that for all instances processed in a phase, the values of their optimal solutions are  close to each other - to within a factor of $2^{\sqrt {\log |\opt|}}$. This allows us to define a smaller set of points of interest for the sub-instances considered in every phase, through discretization. The resulting sets $\fset$ of fake rectangles are called \emph{level-2 sets}.
In the next section, we provide the technical machinery that allows us to perform this discretization, as well as the analogues of Theorem~\ref{thm: balanced partition reducing boundary complexity} and Corollary~\ref{corollary: main partition} in this discretized setting. We then describe our algorithm and its analysis. The technical tools developed in this section are also used in our final $n^{O(((\log\log n)/\eps)^4)}$-time algorithm.

\subsection{Grid-Aligned $r$-good Partitions}
A grid $G$ of size $(z\times z)$ is defined by a collection $\vset=\set{V_0,\ldots,V_z}$ of vertical lines, and a collection $\hset=\set{H_0,\ldots,H_z}$ of horizontal lines, where $V_0$ and $V_z$ coincide with the left and the right boundaries of the bounding box $B$ respectively, and $H_0$, $H_z$ coincide with the bottom and the top boundaries of $B$ respectively. Each vertical line $V_i$ is specified by its $x$-coordinate $x_i$, and it starts at the bottom boundary of $B$ and ends at the top boundary of $B$. Similarly, each horizontal line $H_i$ is specified by its $y$-coordinate $y_i$, and it starts at the left boundary of $B$ and ends at the right boundary of $B$.
We assume that the vertical lines are indexed by their left-to-right order, that is, for each $0\leq i<z$, $V_i$ lies to the left of $V_{i+1}$, and similarly, all horizontal lines are indexed by their bottom-to-top order. Every consecutive pair $V_i,V_{i+1}$ of vertical lines defines a vertical strip $S^V_i$ of the bounding box, and every consecutive pair $H_j,H_{j+1}$ of horizontal lines defines a horizontal strip $S^H_j$. The set of vertices of $G$ is the set of all intersection points of its vertical and horizontal lines.

Suppose we are given a grid $G$, and a valid set $\fset$ of fake rectangles. We say that $\fset$ is \emph{aligned} with $G$, if every corner of every rectangle of $\fset$ belongs to the set $Z$ of the vertices of the grid $G$.

\begin{definition}
Given a valid set $\fset$ of fake rectangles and a parameter $\rho\geq 1$, we say that a grid $G=(\vset,\hset)$ is $\rho$-accurate for $\fset$, if and only if:

\begin{itemize}
\item for every vertical strip $S_i^V$ of the grid, the value of the optimal solution of the sub-instance defined by all rectangles contained in $S_i^V\cap S(\fset)$ is at most $\ceil{|\opt_{\fset}|/\rho}$, and the same holds for every horizontal strip; and

\item $\fset$ is aligned with the grid $G$.
\end{itemize}
\end{definition}

Notice that we allow $\rho>|\opt_{\fset}|$.

The following two observations, that we repeatedly use later, follow easily from the definition of $\rho$-accurate grids.
\begin{observation}\label{obs: rho and rho' accurate grids}
If $G$ is a $\rho$-accurate grid for some valid set $\fset$ of fake rectangles, then for every $\rho'\leq \rho$, $G$ is a $\rho'$-accurate grid for $\fset$.
\end{observation}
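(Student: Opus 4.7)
The plan is to observe that this is essentially a monotonicity statement with respect to $\rho$, and that neither condition in the definition of $\rho$-accuracy becomes harder to satisfy when $\rho$ decreases. First I would unpack the two conditions: the alignment condition ``$\fset$ is aligned with $G$'' makes no reference to $\rho$ at all, so it transfers directly from $\rho$ to $\rho'$ with no work.

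Next I would handle the strip condition. Since $\rho' \leq \rho$ and $|\opt_{\fset}| \geq 0$, we have $|\opt_{\fset}|/\rho' \geq |\opt_{\fset}|/\rho$, and therefore $\lceil |\opt_{\fset}|/\rho' \rceil \geq \lceil |\opt_{\fset}|/\rho \rceil$. By the hypothesis that $G$ is $\rho$-accurate, for every vertical strip $S_i^V$ the value of the optimal solution to the sub-instance induced by the rectangles contained in $S_i^V \cap S(\fset)$ is at most $\lceil |\opt_{\fset}|/\rho \rceil$, hence at most $\lceil |\opt_{\fset}|/\rho' \rceil$. The same argument applies to every horizontal strip $S_j^H$. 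Combining with the alignment condition, $G$ satisfies both requirements of $\rho'$-accuracy for $\fset$.

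There is essentially no obstacle here; the statement is a direct consequence of the definition, and the only thing to be careful about is the ceiling (which is harmless because $x \mapsto \lceil x \rceil$ is monotone non-decreasing). I would present the proof in two short sentences, without any case analysis.
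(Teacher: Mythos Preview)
Your proposal is correct and matches the paper's approach: the paper does not give an explicit proof, stating only that the observation ``follow[s] easily from the definition of $\rho$-accurate grids,'' and your two-line argument is exactly the intended verification of the two defining conditions.
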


\begin{observation}\label{obs: rho accurate for big subinstances}
Let $G$ be a $\rho$-accurate grid for some valid set $\fset$ of fake rectangles, and let $\fset'$ be any valid set of fake rectangles with $S(\fset')\subseteq S(\fset)$, such that $\fset'$ is aligned with $G$. If $|\opt_{\fset'}|\geq \alpha|\opt_{\fset}|$, then $G$ is an $\alpha\cdot \rho$-accurate grid for $\fset'$.
\end{observation}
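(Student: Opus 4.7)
The plan is to verify the two defining properties of an $\alpha\rho$-accurate grid for $\fset'$ directly from the hypotheses, with almost no real work.

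The alignment property is free: $\fset'$ being aligned with $G$ is explicitly assumed, so the second condition in the definition of an $\alpha\rho$-accurate grid holds immediately.

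The main content is the strip-size bound. Fix an arbitrary vertical strip $S_i^V$ of $G$ (the horizontal case is symmetric). Since $S(\fset')\subseteq S(\fset)$, we have the containment $S_i^V\cap S(\fset')\subseteq S_i^V\cap S(\fset)$, so every rectangle $R\in \rset$ satisfying $R\subseteq S_i^V\cap S(\fset')$ also satisfies $R\subseteq S_i^V\cap S(\fset)$. Consequently, the set of feasible rectangles for the sub-instance inside $S_i^V\cap S(\fset')$ is a subset of the feasible rectangles for the sub-instance inside $S_i^V\cap S(\fset)$, so the optimal value of the former is at most the optimal value of the latter. By the $\rho$-accuracy of $G$ for $\fset$, the latter is at most $\ceil{|\opt_\fset|/\rho}$.

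To finish, I would combine this with the hypothesis $|\opt_{\fset'}|\geq \alpha|\opt_\fset|$, which gives $|\opt_\fset|/\rho\leq |\opt_{\fset'}|/(\alpha\rho)$ and hence $\ceil{|\opt_\fset|/\rho}\leq \ceil{|\opt_{\fset'}|/(\alpha\rho)}$. Stringing these inequalities together yields the required bound for the vertical strip, and the identical argument handles horizontal strips. No real obstacle arises: the observation is essentially monotonicity of strip optima under shrinking the region, combined with rescaling the denominator using the assumed lower bound on $|\opt_{\fset'}|$.
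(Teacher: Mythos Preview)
Your proof is correct and is exactly the straightforward verification the paper has in mind; the paper itself does not spell out a proof, stating only that the observation ``follows easily from the definition of $\rho$-accurate grids.'' Your two steps---monotonicity of the strip optimum under the containment $S_i^V\cap S(\fset')\subseteq S_i^V\cap S(\fset)$, and monotonicity of the ceiling applied to $|\opt_\fset|/\rho\leq |\opt_{\fset'}|/(\alpha\rho)$---are precisely the intended argument.
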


We will also need the following claim. The proof is deferred to the Appendix.

\begin{claim}\label{claim: find a rho-accurate grid}
There is an efficient algorithm, that, given a valid set $\fset$ of fake rectangles and a parameter $\rho\geq 1$, computes a $\rho$-accurate grid for $\fset$ of size $(z\times z)$, where $z\leq 4c_A\rho\log\log |\opt_{\fset}|+2|\fset|$, and $c_A$ is the constant from Corollary~\ref{cor: an log log opt approx}. Moreover, if all corners of all rectangles in $\rset\cup \fset$ have integral coordinates, then so do all vertices of the grid $G$.
\end{claim}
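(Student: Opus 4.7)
The plan is to build the vertical and horizontal line collections of the grid separately by symmetric sweep procedures; I describe only the vertical one. Set $T = \lceil |\opt_\fset|/\rho\rceil$, $\alpha = c_A\log\log|\opt_\fset|$, and threshold $\theta = T/\alpha$; this choice is dictated by the fact that, for any sub-instance, an upper bound of $\theta$ on the value returned by the algorithm $\aset$ of Corollary~\ref{cor: an log log opt approx} immediately upgrades, via the approximation guarantee, to an upper bound of $\alpha\theta = T$ on the true optimum. Initialize the vertical line set $\vset$ with the two vertical sides of $B$ together with the $x$-coordinates of all corners of rectangles in $\fset$; this contributes at most $2|\fset|+2$ lines, guarantees $\fset$-alignment, and preserves integrality whenever the input does.

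Within each maximal sub-interval $(V,V')$ between consecutive lines of the initial $\vset$, I will perform a left-to-right greedy sweep over the breakpoints, by which I mean the $x$-coordinates of corners of rectangles in $\rset\cup\fset$ that lie in $[V,V']$. Maintain a pointer $x_{\text{left}}$, initially $V$. At each step, let $x^*$ be the largest breakpoint for which $\aset$, applied to the sub-instance of $\rset(\fset)$ consisting of those rectangles contained in the strip $[x_{\text{left}},x^*]$, returns a value at most $\theta$; if $x^*>x_{\text{left}}$, add a line at $x^*$ and reset $x_{\text{left}}:=x^*$, while otherwise (the ``degenerate'' case, in which even advancing by a single breakpoint already trips the threshold) add a line at the immediate next breakpoint and reset $x_{\text{left}}$ accordingly. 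Continue until $x_{\text{left}} = V'$.

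For $\rho$-accuracy, each resulting vertical strip $S$ is of one of two types. Either $S$ arose from a non-degenerate step, so that $\aset(S)\leq\theta$, which yields $|\opt_S|\leq\alpha\theta = T$; or $S$ spans a single breakpoint interval, in which case non-degeneracy of the canonical instance $\rset$ (all corner $x$-coordinates of rectangles of $\rset$ are distinct) forces any rectangle contained in $S$ to have its left edge at the left breakpoint of $S$ and its right edge at the right breakpoint, leaving at most one rectangle of $\rset$ inside $S$ and hence $|\opt_S|\leq 1\leq T$. To count strips, associate to each sub-strip $S$ (except possibly the final one in each $\fset$-interval) an extended strip $S^+$ with $\aset(S^+)>\theta$: in the degenerate case take $S^+ = S$ itself, and in the non-degenerate case take $S^+$ to be $S$ extended to the right by one breakpoint. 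Any two extended strips $S^+_i, S^+_j$ with $|i-j|\geq 2$ have disjoint interiors, so each positive-area rectangle of $\rset(\fset)$ sits in at most two consecutive $S^+$'s. Splitting the extended strips by parity and unioning the independent sets returned by $\aset$ within each parity class yields two independent sets in $\rset(\fset)$, of sizes exceeding $\theta$ times the cardinalities of the respective classes; each is bounded above by $|\opt_\fset|$, so the total number of extended strips is at most $2|\opt_\fset|/\theta\leq 2c_A\rho\log\log|\opt_\fset|$. Adding the at most $2|\fset|+1$ possibly un-extended last sub-strips across $\fset$-intervals, combining with the horizontal analogue, and taking $z$ to be the larger of the two axis counts (padding the shorter axis with duplicate lines) yields the claimed bound $z\leq 4c_A\rho\log\log|\opt_\fset| + 2|\fset|$.

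The principal technical obstacle is the interplay between the approximation slack of $\aset$ and the degenerate single-breakpoint case: the approximation guarantee certifies $|\opt_S|\leq T$ only when $\aset(S)\leq T/\alpha$, yet the sweep must admit strips on which $\aset$ exceeds $\theta$. What saves the argument is that non-degeneracy of the canonical instance independently caps the number of rectangles in a single-breakpoint strip at one, giving $|\opt_S|\leq T$ for free; simultaneously, such strips still supply the ``$\aset>\theta$'' witnesses driving the counting bound.
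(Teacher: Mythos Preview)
Your sweep-and-count construction is close in spirit to the paper's, but there is a genuine gap: the threshold you use, $\theta = T/\alpha = \lceil|\opt_\fset|/\rho\rceil /(c_A\log\log|\opt_\fset|)$, is defined in terms of $|\opt_\fset|$, which the algorithm does not know (and computing it is \NP-hard). As written, your procedure is therefore not an efficient algorithm at all, and the claim explicitly asks for one.

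This is not a cosmetic issue that a one-line substitution fixes. If you replace $|\opt_\fset|$ by the computable estimate $W'=\aset(\fset)$, the two halves of your argument pull in opposite directions. Accuracy requires $\theta$ small enough that $\alpha\theta\leq\lceil|\opt_\fset|/\rho\rceil$, which forces $\theta\lesssim |\opt_\fset|/(\rho\alpha)$; the counting step, on the other hand, bounds the number of extended strips by $2|\opt_\fset|/\theta$, so you need $\theta\gtrsim |\opt_\fset|/(\rho\alpha)$ to meet the stated bound on $z$. With only a $(c_A\log\log|\opt_\fset|)$-factor handle on $|\opt_\fset|$, plugging $W'$ into either side of this loses an extra $c_A\log\log|\opt_\fset|$ factor, and you end up with $z=O(c_A^2\rho(\log\log|\opt_\fset|)^2)$, not the claimed $4c_A\rho\log\log|\opt_\fset|+2|\fset|$.

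The paper confronts exactly this obstacle and resolves it by a guessing step: it runs the approximation once to get $W'=\aset(\fset)$, then for every integer $W$ in the short range $[W',\,2c_AW'\log\log W']$ (which provably contains $|\opt_\fset|$) it executes the sweep with threshold $\tau_W=W/(\rho\, c_A\log\log W)$, and finally returns the grid for the \emph{smallest} $W$ whose line count already meets the target bound. Because $W=|\opt_\fset|$ is one of the values tried, such a $W$ exists; because the chosen $W$ is at most $|\opt_\fset|$, every strip has optimum at most $\lceil W/\rho\rceil\leq\lceil|\opt_\fset|/\rho\rceil$. This extra idea is what makes the construction algorithmic while preserving the tight bound, and it is missing from your argument.

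Aside from this, your parity-based counting (splitting the extended strips into two interior-disjoint families) is a pleasant alternative to the paper's direct count, and the handling of the degenerate single-breakpoint strips via non-degeneracy of $\rset$ is correct.
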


Over the course of our algorithm, we will construct $\rho$-accurate grids $G$ with respect to some valid sets $\fset$ of fake rectangles, for some values $\rho$ that we specify later. We would like then to find valid decomposition pairs and triples for $\fset$, that are aligned with the grid $G$. In order to be able to do so, we need an analogue of Corollary~\ref{corollary: main partition}, that allows us to find a valid decomposition triple for a $G$-aligned valid set $\fset$ of fake rectangles. Recall that Corollary~\ref{corollary: main partition} uses Theorems~\ref{thm: balanced partition reducing boundary complexity} and \ref{thm: balanced partition general} in the two partitions it performs, and these two theorems in turn start with some $r$-good partition of the input instance $\rset(\fset)$. The sub-instances produced by these theorems are then guaranteed to be aligned with the $r$-good partition. We need the final sub-instances to be aligned with the grid $G$. Therefore, we generalize the notion of $r$-good partitions to grid-aligned $r$-good partitions. We then show an analogue of Theorem~\ref{thm: r-good partition}, proving that such $r$-good grid-aligned partitions can be constructed with the right choice of  parameters. Finally, we prove an analogue of Corollary~\ref{corollary: main partition} that produces grid-aligned sub-instances.
We start with a definition of a grid-aligned $r$-good partition.

\begin{definition} Let $\fset$ be any valid set of fake rectangles, $\opt'$ any optimal solution to instance $\rset(\fset)$, and let $G$ a be grid.
A partition $\pset$ of $B$ into rectangular cells is called a $G$-aligned $r$-good partition with respect to $\fset$ and $\opt'$, iff:
\begin{itemize}
\item Every cell in the partition intersects at most $20|\opt'|/r$ rectangles of $\opt'$; 
\item $\pset$ contains at most $c^{**}r$ cells, where $c^{**}\geq 1$ is some universal constant; 
\item Each fake rectangle $F\in \fset$ is a cell of $\pset$; and
\item Every cell in the partition is aligned with the grid $G$.
\end{itemize}
\end{definition}

Notice that a $G$-aligned $r$-good partition is in particular also an $r$-good partition (where we use a constant $\cdstar$ instead of $c^*$, but this is immaterial because both are some universal constants). In particular, Theorems~\ref{thm: balanced partition reducing boundary complexity} and \ref{thm: balanced partition general} still remain valid if we apply them to $G$-aligned $r$-good partitions, except that we need to replace $c^*$ by $\cdstar$. The proof of the following Theorem is deferred to the Appendix.

\begin{theorem}\label{thm: rho-aligned r-good partition}
Let $\fset$ be any valid set of fake rectangles with $|\fset|=m$ and $\opt'$ any optimal solution to $\rset(\fset)$. Let $G$ be a $\rho$-accurate grid with respect to $\fset$, for some $\rho>\max\set{m,3}$, and let $r$ be a parameter, with $\max\set{m,3}\leq r\leq \min\set{\rho,|\opt'|/16}$. 
Then there is a $G$-aligned $r$-good partition $\pset$ of $B$, with respect to $\fset$ and $\opt'$.
\end{theorem}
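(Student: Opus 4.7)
The plan is to mimic the recursive construction of Theorem~\ref{thm: r-good partition}, snapping every cut to the grid $G$. First I would build an initial $G$-aligned partition $\pset_0$ of $B$ into $O(m)$ axis-parallel rectangular cells such that each $F\in\fset$ appears as a cell of $\pset_0$. Since $\fset$ is $G$-aligned, every corner of every fake rectangle is a vertex of $G$, so by iteratively applying Lemma~\ref{lemma: non-simple-tiling} (peeling off one fake rectangle at a time from the current complement), one decomposes $B\setminus\bigcup_{F\in\fset}\operatorname{int}(F)$ into $O(m)$ rectangles whose corners are grid vertices. Together with the fake rectangles themselves as singleton cells, this yields $\pset_0$.

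Next I would repeatedly apply the cutting step used in the proof of Theorem~\ref{thm: r-good partition}: while some non-fake cell $C\in\pset$ has $N_C > 20|\opt'|/r$, the original argument produces an axis-parallel line $\ell$ that splits $C$ into two cells, each intersecting at most a $(2/3)$-fraction (say) of the rectangles of $\opt'$ crossing $C$. I would then snap $\ell$ to the closest parallel grid line $\ell'$ of $G$ that lies strictly inside $C$ and does not coincide with the boundary of any fake rectangle. The key quantitative observation is that because $G$ is $\rho$-accurate and the sliver between $\ell$ and $\ell'$ is contained in a single vertical (or horizontal) strip of $G$, the number of rectangles of $\opt'$ whose side of the cut is affected by the snap is at most $\lceil|\opt'|/\rho\rceil\leq |\opt'|/r+1$, using $r\leq\rho$. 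Since $r\leq|\opt'|/16$, this additive error is dominated by $|\opt'|/r$, so each of the two resulting cells still intersects at most $(2/3)N_C + O(|\opt'|/r)$ rectangles. Iterating this balanced-cut step terminates after $O(\log r)$ levels of recursion in each branch, once the threshold $20|\opt'|/r$ is reached.

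To count cells, I would reuse the potential/charging argument from the proof of Theorem~\ref{thm: r-good partition}: each cutting step replaces one cell by two, and the total number of cuts is $O(r)$ because each cut is charged a constant amount against the total rectangle--cell incidences of $\opt'$ in the final partition, which is $O(|\opt'|)$, while every remaining cell absorbs $\Omega(|\opt'|/r)$ of these incidences before being cut. Adding the $O(m)\leq O(r)$ cells contributed by $\pset_0$ yields at most $\cdstar r$ cells for a suitable universal constant $\cdstar$. All cells are $G$-aligned by construction, every $F\in\fset$ appears as a cell, and the intersection bound is enforced; this gives the desired $G$-aligned $r$-good partition. The main obstacle is verifying that the snapped line $\ell'$ is always available, i.e.\ that some grid line of $G$ lies strictly between $\ell$ and the nearest parallel boundary of $C$ (including fake-rectangle boundaries inside $C$, if any). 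This is precisely where $\rho$-accuracy is used: if no such grid line existed, the entire (sub)strip would contain all of the rectangles of $\opt'$ crossing $C$, contradicting the bound $\lceil|\opt'|/\rho\rceil$ once $N_C>20|\opt'|/r$. Modulo carefully tracking constants so that the factor $20$ in the threshold absorbs the snapping error, the rest is a straightforward adaptation of the argument behind Theorem~\ref{thm: r-good partition}.
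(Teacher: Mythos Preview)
Your plan has a genuine gap in the snapping step. You assert that moving the cut from $\ell$ to the nearest grid line $\ell'$ changes the count on either side by at most $\lceil|\opt'|/\rho\rceil$ because the sliver lies inside one grid strip. But the $\rho$-accuracy of $G$ bounds only the number of rectangles of $\opt'$ that are \emph{contained} in a strip; the rectangles whose count is affected by the snap are those with a vertical \emph{edge} in the sliver, and these need not be contained in any single strip. Concretely, $\Omega(|\opt'|)$ pairwise-disjoint rectangles can all have their right edge inside one thin vertical band while extending far to the left, so snapping can raise $N_{C_2}$ by $\Omega(|\opt'|)$. The same confusion undermines your availability argument: even if $C$ lies entirely in one grid strip you cannot conclude $N_C\le\lceil|\opt'|/\rho\rceil$, since $N_C$ counts rectangles \emph{intersecting} $C$, not contained in it. Separately, the proof of Theorem~\ref{thm: r-good partition} in this paper is not a recursive balanced-cut construction but random sampling plus ray-shooting, and the incidence-charging you invoke for the cell count does not work here because a rectangle crossing a cut contributes to both children, so $\sum_C N_C$ is not subadditive under cutting.

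The paper's route is quite different: it invokes Theorem~\ref{thm: r-good partition} as a black box with parameter $r'=8r$ to obtain a (non-aligned) $r'$-good partition $\pset$, and only then aligns it with $G$. Each cell of $\pset$ is sliced along the two extreme vertical and two extreme horizontal grid lines crossing it, producing at most nine pieces that are either grid-aligned (``large'') or lie inside a single grid strip (``small''). Grid cells of $G$ touching a corner of this intermediate partition are kept individually as ``neutral'' cells; the remaining unmarked grid cells in each strip are merged into maximal runs. The key structural observation is that each long side of such a run is entirely contained in a single cell of the intermediate partition, so the number of $\opt'$-rectangles crossing it is at most $20|\opt'|/r'$; together with at most four rectangles through the corners and the $\rho$-accuracy bound on rectangles inside the strip, this yields $20|\opt'|/r$. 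The cell count stays $O(r)$ because runs and neutral cells are charged to corners of the intermediate partition, of which there are $O(r')=O(r)$.
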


Finally, we prove an analogue of Corollary~\ref{corollary: main partition} to obtain partitions into sub-instances that are aligned with $G$.
The proof is almost identical to the proof of Corollary~\ref{corollary: main partition}, except that we use Theorem~\ref{thm: rho-aligned r-good partition} instead of Theorem~\ref{thm: r-good partition} to find grid-aligned $r$-good partitions. For completeness, the proof appears in the Appendix.

\begin{corollary}\label{corollary: main partition with grid}
There is a universal constant $\tc>10$, such that the following holds. For any parameter $L^*>\tc$, for any valid set $\fset$ of fake rectangles, with  $|\fset|=L\leq L^*$  and $|\opt_{\fset}|\geq 512(L^*)^2$, given any $\rho$-accurate grid $G$ for $\fset$, where $\rho\geq 32(L^*)^2$, there is a valid decomposition triple $(\fset_1,\fset_2,\fset_3)$ for $\fset$, such that:

\begin{itemize}
\item for all $1\leq i\leq 3$, $|\fset_i|\leq 3L^*/4$;
\item for all $1\leq i\leq 3$, $|\opt_{\fset_i}|\leq 3|\opt_{\fset}|/4$;
\item $\sum_{i=1}^3|\opt_{\fset_i}| \geq |\opt_{\fset}|\cdot \left (1-\frac{\tc}{L^*}\right )$; and
\item The rectangles in $\fset_1\cup\fset_2\cup \fset_3$ are aligned with the grid $G$.
\end{itemize}
\end{corollary}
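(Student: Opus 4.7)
The plan is to follow the same strategy as the proof of Corollary~\ref{corollary: main partition}, but using the grid-aligned analogues of Theorems~\ref{thm: balanced partition reducing boundary complexity} and~\ref{thm: balanced partition general} (which, as remarked above, remain valid when the underlying $r$-good partition is replaced by a $G$-aligned one coming from Theorem~\ref{thm: rho-aligned r-good partition}). In broad strokes: one first applies a boundary-reducing split (Theorem~\ref{thm: balanced partition reducing boundary complexity}) to $\fset$, and then, if the resulting pair is not balanced in optimum value, one further splits the heavier piece via the balanced-partition theorem (Theorem~\ref{thm: balanced partition general}), obtaining at most three pieces total.

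Concretely, I would choose a parameter $r=\Theta((L^*)^2)$ calibrated so that three groups of constraints hold simultaneously: $r\le \min\set{\rho,\,|\opt_{\fset}|/16}$, which will follow from the hypotheses $\rho\ge 32(L^*)^2$ and $|\opt_{\fset}|\ge 512(L^*)^2$; the boundary budget $(c_1+c_2)\sqrt{r}\le L^*/12$, which will keep every output piece's boundary complexity at most $3L^*/4$; and the loss budget $\sqrt{r}\ge (c_1+c_2)L^*/\tc$, which will keep the total optimum loss at most $\tc|\opt_{\fset}|/L^*$. The last two constraints are compatible provided the universal constant $\tc$ is chosen at least roughly $12(c_1+c_2)^2$, which I absorb into the definition of $\tc$.

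Using Theorem~\ref{thm: rho-aligned r-good partition}, I construct a $G$-aligned $r$-good partition $\pset$ of $B$ with respect to $\fset$ and $\opt_{\fset}$. Assuming $L>3$ (the edge case $L\le 3$ is handled by applying Theorem~\ref{thm: balanced partition general} to $\fset$ directly, since then $L+c_2\sqrt{r}\le 3L^*/4$), I apply Theorem~\ref{thm: balanced partition reducing boundary complexity} to $\fset$ using $\pset$, yielding a $G$-aligned valid decomposition pair $(\fset_a,\fset_b)$ with $|\fset_a|,|\fset_b|\le 2L/3+c_1\sqrt{r}\le 3L^*/4$. Without loss of generality $|\opt_{\fset_a}|\ge |\opt_{\fset_b}|$. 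If $|\opt_{\fset_a}|\le 3|\opt_{\fset}|/4$, I return the triple $(\fset_a,\fset_b,\set{B})$, using $\set{B}$ as a trivial third piece with $S(\set{B})=\emptyset$. Otherwise $|\opt_{\fset_a}|>3|\opt_{\fset}|/4$, in which case I invoke Theorem~\ref{thm: rho-aligned r-good partition} again to build a $G$-aligned $r$-good partition $\pset'$ for $\fset_a$, then apply Theorem~\ref{thm: balanced partition general} to $\fset_a$ using $\pset'$ to obtain $(\fset_{a,1},\fset_{a,2})$, and return the triple $(\fset_{a,1},\fset_{a,2},\fset_b)$.

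Verifying the four conditions then reduces to routine accounting: the boundary bound is $2L/3+(c_1+c_2)\sqrt{r}\le 3L^*/4$ by our choice of $r$; each piece has optimum at most $3|\opt_{\fset}|/4$ directly from the two partition theorems; the total loss telescopes to at most $(c_1+c_2)|\opt_{\fset}|/\sqrt{r}\le \tc|\opt_{\fset}|/L^*$; and grid-alignment propagates because Theorems~\ref{thm: balanced partition reducing boundary complexity} and~\ref{thm: balanced partition general} produce new fake rectangles aligned with the corners of the supplied $r$-good partition, and both $\pset$ and $\pset'$ are $G$-aligned by construction. The main technical obstacle I anticipate, and the reason the hypotheses include $\rho\ge 32(L^*)^2$ rather than a weaker bound, is checking that $G$ remains a sufficiently accurate grid to support the \emph{second} application of Theorem~\ref{thm: rho-aligned r-good partition} on $\fset_a$: here Observation~\ref{obs: rho accurate for big subinstances} is critical, since $|\opt_{\fset_a}|>3|\opt_{\fset}|/4$ implies that $G$ is still $(3\rho/4)$-accurate for $\fset_a$, and $3\rho/4\ge r$, so Theorem~\ref{thm: rho-aligned r-good partition} does apply to build $\pset'$.
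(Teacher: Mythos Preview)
Your proposal is correct and follows essentially the same two-step strategy as the paper: first apply Theorem~\ref{thm: balanced partition reducing boundary complexity} on a $G$-aligned $r$-good partition (from Theorem~\ref{thm: rho-aligned r-good partition}) to reduce boundary complexity, then apply Theorem~\ref{thm: balanced partition general} to the heavier piece to achieve balance; the choice $r=\Theta\bigl((L^*)^2\bigr)$ with $(c_1+c_2)\sqrt{r}\le L^*/12$ matches the paper's $r=\lfloor(L^*/(12(c_1+c_2)))^2\rfloor$, and your use of Observation~\ref{obs: rho accurate for big subinstances} to recover grid accuracy for the second invocation is exactly the key point. The only cosmetic difference is that the paper always applies the second split (using the weaker bound $|\opt_{\fset_1'}|\ge|\opt_{\fset}|/4$ to retain $\rho/4$-accuracy), whereas you skip it when the first split already achieves the $3/4$ balance and pad with $\set{B}$; both variants are fine.
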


The following corollary follows immediately from the above discussion, and it will also be useful for us later.

\begin{corollary}\label{corollary: partition with grid to decrease boundary}
 For any valid set $\fset$ of fake rectangles with $|\fset|=L>\tc$ and $|\opt_{\fset}|\geq 512L^2$, given any $\rho$-accurate grid $G$ for $\fset$, where $\rho \geq 32L^2$, there is a valid decomposition pair $(\fset_1,\fset_2)$ for $\fset$, such that:

\begin{itemize}
\item $|\fset_1|,|\fset_2|\leq 3L/4$;
\item $|\opt_{\fset_1}|+|\opt_{\fset_2}|\geq |\opt_{\fset}|\cdot \left (1-\frac{\tc}{L}\right )$; and
\item The rectangles in $\fset_1\cup\fset_2$ are aligned with the grid $G$,
\end{itemize}

where $\tc$ is the constant from Corollary~\ref{corollary: main partition with grid}.
\end{corollary}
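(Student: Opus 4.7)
The plan is to derive this corollary as a direct one-shot application of the boundary-reduction machinery, essentially by executing only the first half of the proof that was already carried out for Corollary~\ref{corollary: main partition with grid}. Since we only need a decomposition \emph{pair} with reduced boundary complexity (and not a decomposition triple where the optimal values also drop by a factor of $3/4$), a single invocation of Theorem~\ref{thm: balanced partition reducing boundary complexity} on top of a $G$-aligned $r$-good partition suffices.

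Concretely, I would first choose an intermediate parameter $r = \Theta(L^2)$, specifically something like $r = \lfloor L^2/(144 c_1^2) \rfloor$ where $c_1$ is the constant from Theorem~\ref{thm: balanced partition reducing boundary complexity}. The constant $\tc$ from Corollary~\ref{corollary: main partition with grid} will be assumed large enough (say $\tc \geq \max\{12 c_1^2, 144 c_1^2\}$) so that all the inequalities below work out; this is the only place the hypothesis $L > \tc$ is used. With this choice of $r$, I would verify the four range conditions needed to invoke Theorem~\ref{thm: rho-aligned r-good partition} on $\fset$ with the given grid~$G$: (i) $r \geq \max\{L,3\}$, which holds since $L > \tc$; (ii) $r \leq \rho$, which holds since $\rho \geq 32 L^2$; and (iii) $r \leq |\opt_{\fset}|/16$, which holds since $|\opt_{\fset}| \geq 512 L^2$. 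This yields a $G$-aligned $r$-good partition $\pset$ of $B$ with respect to $\fset$ and some optimal solution $\opt'$ of $\rset(\fset)$.

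Next, I would apply Theorem~\ref{thm: balanced partition reducing boundary complexity} to the pair $(\fset, \pset)$; the hypotheses $|\fset| = L > 3$ and $L \leq r \leq |\opt'|/2$ are immediate from our choice of $r$. This produces a valid decomposition pair $(\fset_1, \fset_2)$ for $\fset$ with
\[
|\fset_i| \leq \tfrac{2}{3} L + c_1 \sqrt{r}, \qquad |\opt_{\fset_1}| + |\opt_{\fset_2}| \geq |\opt_{\fset}|\Bigl(1 - \tfrac{c_1}{\sqrt{r}}\Bigr),
\]
and such that every rectangle of $\fset_1 \cup \fset_2$ is aligned with the set $Z$ of corners of the cells of $\pset$. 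Substituting $r = \Theta(L^2)$ into the first inequality gives $|\fset_i| \leq \tfrac{2}{3}L + \tfrac{L}{12} \leq \tfrac{3}{4} L$, and into the second gives a loss factor $c_1/\sqrt{r} = O(1/L) \leq \tc/L$ by the choice of $\tc$.

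Finally, the alignment with $G$ comes for free: because $\pset$ is a $G$-aligned partition, every corner of every cell of $\pset$ is a vertex of $G$, i.e.\ $Z$ is contained in the vertex set of $G$. The alignment property being transitive in the sense noted in Section~\ref{sec: prelims}, $\fset_1 \cup \fset_2$ is therefore aligned with $G$, completing the proof. The main (and essentially only) thing to be careful about is the bookkeeping on the universal constants---ensuring the single fixed $\tc$ works simultaneously for Corollary~\ref{corollary: main partition with grid} and for the inequalities $c_1 \sqrt{r} \leq L/12$ and $c_1/\sqrt{r} \leq \tc/L$---but no further geometric or combinatorial ideas are needed beyond what has already been developed.
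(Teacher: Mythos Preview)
Your proposal is correct and is essentially the same as the paper's proof: the paper states that the corollary follows from the first step in the proof of Corollary~\ref{corollary: main partition with grid} (after setting $L^*=L$), which is exactly the single application of Theorem~\ref{thm: rho-aligned r-good partition} followed by Theorem~\ref{thm: balanced partition reducing boundary complexity} that you describe. The only difference is the precise choice of the constant in $r$ (the paper uses $r=\lfloor (L^*/(12(c_1+c_2)))^2\rfloor$ rather than your $\lfloor L^2/(144c_1^2)\rfloor$), but this is immaterial since both choices yield $c_1\sqrt r\leq L/12$ and $c_1/\sqrt r\leq \tc/L$ for an appropriate universal constant $\tc$.
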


The proof follows from the first step in the proof of Corollary~\ref{corollary: main partition with grid}, that produces two sub-instances $\fset_1',\fset_2'$ of $\fset$ with the desired properties, after setting $L^*=L$.

\subsection{Cleanup Trees}
Recall that in order to analyze our dynamic programming algorithm, we employ partitioning trees. Recall also that we divide the execution of our algorithm into $O(\sqrt{ \log {|\opt|}})$ phases, where phase $j$ ends when for every leaf $v$ of the current tree, $|\opt_{\fset(v)}|<\frac{|\opt|}{2^{j\sqrt{\log |\opt|}}}$. Once a phase ends, we would like to reduce the boundary complexity of each instance corresponding to the leafs of the current tree, and we will employ  cleanup trees in order to do so.
We  now define cleanup trees. 

\begin{definition}
Suppose we are given integral parameters $\tc <L_2<L_1$,  where $\tilde c$ is the constant from Corollary~\ref{corollary: partition with grid to decrease boundary}, and denote $\delta=\ceil{\log_{4/3}(L_1/L_2)}$. Assume further that we are given
 a valid set $\fset$ of fake rectangles with $|\fset|\leq L_1$, and a $\rho$-accurate grid $G$ for $\fset$, for some parameter $\rho$. An $(L_1,L_2)$-cleanup tree for $\fset$ and $G$ is a rooted binary tree $\tset$, such that every vertex $v$ of $\tset$ is associated with some valid set $\fset(v)$ of fake rectangles, and the following holds:

\begin{itemize}
\item $\fset(v)$ is aligned with $G$ and $|\fset(v)|\leq L_1$;

\item if $v$ is the root vertex of $\tset$, then $\fset(v)=\fset$;

\item if $v$ is an inner vertex of $\tset$, then it has exactly two children, that we denote by $v_1$ and $v_2$, and $(\fset(v_1),\fset(v_2))$ is a valid decomposition pair for $\fset(v)$; and

\item if $v$ is a leaf of $\tset$, then either $|\fset(v)|\leq L_2$ with $|\opt_{\fset(v)}|\geq |\opt_{\fset}|/L_1^{\delta+2}$, or $S(\fset(v))=\emptyset$.
\end{itemize}

Given a cleanup tree $\tset$, let $\iset(\tset)$ and $\lset(\tset)$ denote the sets of its inner vertices and leaves, respectively. For an inner vertex $v\in \iset(\tset)$, whose children are denoted by $v_1$ and $v_2$, the loss at $v$ is $\lambda(v)=|\opt_{\fset(v)}|-|\opt_{\fset(v_1)}|-|\opt_{\fset(v_1)}|$. The total loss of the tree $\tset$, $\Lambda(\tset)=|\opt_{\fset}|-\sum_{v\in \lset(\tset)}|\opt_{\fset(v)}|=\sum_{v\in \iset(\tset)}\lambda(v)$.
\end{definition}

\begin{theorem}\label{thm: cleanup tree}
Suppose we are given integral parameters $\tc <L_2<L_1$,  where $\tilde c$ is the constant from Corollary~\ref{corollary: partition with grid to decrease boundary}, and denote $\delta=\ceil{\log_{4/3}(L_1/L_2)}$. Assume further that we are given
 a valid set $\fset$ of fake rectangles with $|\fset|\leq L_1$  and $|\opt_{\fset}|\geq 512L_1^{\delta+2}$, and a $\rho$-accurate grid $G$ for $\fset$, for some parameter $\rho>32L_1^{\delta+2}$. Then there is an $(L_1,L_2)$-cleanup tree $\tset$ for $\fset$ and $G$, whose loss $\Lambda(\tset)\leq \frac{12\tilde c|\opt_{\fset}|}{L_2}$.
\end{theorem}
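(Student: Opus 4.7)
The plan is to construct $\tset$ greedily, top-down, by applying Corollary~\ref{corollary: partition with grid to decrease boundary} at each vertex whenever its hypotheses are met, and otherwise ``killing'' the vertex. Starting from the root $v_r$ with $\fset(v_r)=\fset$, I process each current leaf $v$ by the following rule. If $S(\fset(v))=\emptyset$, declare $v$ a type-2 leaf; else, if $|\fset(v)|\le L_2$ and $|\opt_{\fset(v)}|\ge|\opt_\fset|/L_1^{\delta+2}$, declare $v$ a type-1 leaf; else, if $|\fset(v)|>L_2$ and Corollary~\ref{corollary: partition with grid to decrease boundary} is applicable at $\fset(v)$, apply it to obtain a valid decomposition pair $(\fset(v_1),\fset(v_2))$ with $|\fset(v_i)|\le 3|\fset(v)|/4$ and both aligned with $G$; else, ``kill'' $v$ by creating two children with $\fset(v_1)=\fset(v_2)=\{B\}$, which become type-2 leaves (here we use that $S(\fset(v))\ne\emptyset$, so $(\{B\},\{B\})$ is a valid decomposition pair). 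Every $\fset(v)$ produced this way is $G$-aligned and of cardinality at most $L_1$, so the construction does yield a valid $(L_1,L_2)$-cleanup tree.

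The second step is to identify exactly when Corollary~\ref{corollary: partition with grid to decrease boundary} is applicable at a vertex $v$ with $L_v:=|\fset(v)|>L_2$. Since $\fset(v)$ is $G$-aligned, Observation~\ref{obs: rho accurate for big subinstances} guarantees that $G$ is $\rho_v$-accurate for $\fset(v)$ with $\rho_v\ge\rho\cdot|\opt_{\fset(v)}|/|\opt_\fset|$. Using $\rho>32L_1^{\delta+2}$ and $|\opt_\fset|\ge 512L_1^{\delta+2}$, both the size requirement $|\opt_{\fset(v)}|\ge 512L_v^2$ and the grid requirement $\rho_v\ge 32L_v^2$ collapse into the single condition $|\opt_{\fset(v)}|\ge L_v^2|\opt_\fset|/L_1^{\delta+2}$, while the requirement $L_v>\tc$ holds automatically because $L_v>L_2>\tc$. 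Consequently a kill happens at $v$ only when $|\opt_{\fset(v)}|<L_v^2|\opt_\fset|/L_1^{\delta+2}\le|\opt_\fset|/L_1^\delta$, which is what makes kills cheap.

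To bound the total loss from corollary-splits, partition the corollary-split inner vertices into levels $V_0,\ldots,V_{\delta-1}$, where $v\in V_i$ iff $L_1(3/4)^{i+1}<L_v\le L_1(3/4)^i$. Because each corollary-split strictly decreases $L_v$ by a factor of at most $3/4$, each $V_i$ is an antichain in $\tset$, so Observation~\ref{obs: partitioning tree - disjointness of non-descendants} gives $\sum_{v\in V_i}|\opt_{\fset(v)}|\le|\opt_\fset|$. The per-vertex loss $\tc|\opt_{\fset(v)}|/L_v$ is at most $\tc(4/3)^{i+1}|\opt_{\fset(v)}|/L_1$ on level $i$, so the level-$i$ contribution is at most $\tc(4/3)^{i+1}|\opt_\fset|/L_1$. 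Summing the geometric series $\sum_{i=0}^{\delta-1}(4/3)^{i+1}$ and using $(4/3)^\delta\le(4/3)(L_1/L_2)$ yields a total split-loss of at most $(16/3)\tc|\opt_\fset|/L_2$.

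For the kill-loss, each kill contributes at most $|\opt_\fset|/L_1^\delta$ by the threshold above. Since corollary-splits strictly decrease $L_v$, the tree depth is at most $\delta+1$, so it contains fewer than $2^{\delta+1}$ kill vertices; thus the total kill-loss is at most $2(2/L_1)^\delta|\opt_\fset|$. Since $L_2>\tc>10\ge 2$ and $L_1\ge L_2$, one verifies $L_1^\delta\ge L_2^\delta\ge L_2\cdot 2^{\delta-1}$, which gives $(2/L_1)^\delta\le 2/L_2$ and hence a kill-loss of at most $4|\opt_\fset|/L_2$. Combining with the split-loss bound and using $\tc>10$, the total loss is at most $((16/3)\tc+4)|\opt_\fset|/L_2\le 12\tc|\opt_\fset|/L_2$, as required. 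The subtle part of the plan is ensuring that the grid-accuracy hypothesis of Corollary~\ref{corollary: partition with grid to decrease boundary} is preserved as we descend into ever-smaller sub-instances; this is precisely why the kill threshold has the form $L_v^2|\opt_\fset|/L_1^{\delta+2}$ rather than the simpler leaf threshold $|\opt_\fset|/L_1^{\delta+2}$, and why the hypotheses $|\opt_\fset|\ge 512L_1^{\delta+2}$ and $\rho>32L_1^{\delta+2}$ take the form stated.
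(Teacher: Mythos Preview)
Your proof is correct and shares its core structure with the paper's: both stratify the inner vertices by $|\fset(v)|$-level, bound the corollary-split loss at level $i$ by $\tc(4/3)^{i+1}|\opt_\fset|/L_1$ via the antichain property (Observation~\ref{obs: partitioning tree - disjointness of non-descendants}), and sum the resulting geometric series. The one genuine difference is how vertices with small $|\opt_{\fset(v)}|$ are handled. The paper maintains an explicit invariant---an interesting level-$i$ vertex has $|\opt_{\fset(v)}|\ge|\opt_\fset|/L_1^i$---by replacing only the \emph{smaller} child with $\{B\}$ whenever its optimum drops below $|\opt_{\fset(v)}|/L_1$; this invariant guarantees that Corollary~\ref{corollary: partition with grid to decrease boundary} is always applicable at every interesting vertex with $|\fset(v)|>L_2$, and folds the kill cost into the per-vertex loss (whence the factor $2\tc$ rather than $\tc$). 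You instead test applicability directly at each vertex, kill the \emph{whole} vertex when the test fails, and bound the total kill-loss separately by the crude count $2^{\delta+1}\cdot|\opt_\fset|/L_1^\delta$. Both routes reach $12\tc|\opt_\fset|/L_2$; yours trades the invariant for a separate kill-count argument, which is slightly more bookkeeping but avoids tracking the invariant through the recursion.
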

%

\begin{proof}
For all integers $0\leq i\leq \log L_1-1$, we say that a vertex $v$ of tree $\tset$ is a level-$i$ vertex if $\frac{L_1}{(4/3)^{i+1}}< |\fset(v)|\leq \frac{L_1}{(4/3)^i}$. We say that it is an interesting vertex if $S(\fset(v))\neq \emptyset$. Throughout the construction of the tree $\tset$ we maintain the following invariant: if a vertex $v$ of $\tset$ is an interesting level-$i$ vertex, for some $0\leq i\leq \log L_1-1$, then:

\[|\opt_{\fset(v)}|\geq \frac{|\opt_{\fset}|}{L_1^i}.\]

We also ensure that for every vertex $v$ of $\tset$, $|\fset(v)|\leq L_1$, and $\fset(v)$ is aligned with $G$.

We start our construction with the root $v_0$ of $\tset$, and we set $\fset(v_0)=\fset$. Clearly, our invariant holds for $\tset$. While there is an interesting leaf vertex $v$ in the tree, with $|\fset(v)|>L_2$, select any such vertex, and assume that $v$ belongs to some level $i$, for $0\leq i\leq \delta$. Recall that from our invariant, $|\opt_{\fset(v)}|\geq  \frac{|\opt_{\fset}|}{L_1^i}\geq  \frac{|\opt_{\fset}|}{L_1^{\delta}}$. From Observation~\ref{obs: rho accurate for big subinstances}, $G$ remains a $\rho'=\frac{\rho}{L_1^{\delta}}$-accurate grid for $\fset(v)$.  Since we have assumed that $\rho>32L_1^{\delta+2}$, we get that $\rho'\geq 32L_1^2\geq 32|\fset(v)|^2$, and since we have assumed that $|\opt_{\fset}|\geq 512 L_1^{\delta+2}$, we get that $|\opt_{\fset(v)}|\geq 512 L_1^2$.

Therefore, we can apply Corollary~\ref{corollary: partition with grid to decrease boundary} to $\fset$, to obtain a valid decomposition pair $(\fset_1,\fset_2)$ for $\fset$, such that both $\fset_1$ and $\fset_2$ are aligned with $G$, and $|\fset_1|,|\fset_2|\leq 3|\fset|/4$. Assume without loss of generality that $|\opt_{\fset_1}|\leq |\opt_{\fset_2}|$. We add two children, $v_1$ and $v_2$ to the tree $\tset$. 
If $|\opt_{\fset_1}|\geq \frac{|\opt_{\fset(v)}|}{L_1}$, then we set $\fset(v_1)=\fset_1$ and $\fset(v_2)=\fset_2$. Notice that both $v_1$ and $v_2$ now belong to level $(i+1)$, and:

\[|\opt_{\fset_1}|,|\opt_{\fset_2}|\geq \frac{|\opt_{\fset(v)}|}{L_1}\geq \frac{|\opt_{\fset}|}{L_1^{i+1}},\]
so our invariant continues to hold.

Otherwise, we let $\fset(v_1)=\set{B}$, so $S(\fset(v_1))=\emptyset$, and we let $\fset(v_2)=\fset_2$. As before, $v_2$ is a level-$(i+1)$ vertex, and the invariant continues to hold.

Recall that $|\opt_{\fset(v)}|-\left(|\opt_{\fset_1}|+|\opt_{\fset_2}|\right)\leq \frac{\tc\cdot |\opt_{\fset(v)}|}{|\fset(v)|}\leq \frac{\tc\cdot |\opt_{\fset(v)}|(4/3)^{i+1}}{L_1}$, since we assumed that $v$ lies at level $i$. Additionally, we may discard up to $\frac{|\opt_{\fset(v)}|}{L_1}$ rectangles if $|\opt_{\fset_1}|<\frac{|\opt_{\fset(v)}|}{L_1}$. Therefore, in total:

\[\lambda(v)=|\opt_{\fset(v)}|-\left(|\opt_{\fset(v_1)}|+|\opt_{\fset(v_2)}|\right )\leq \frac{2\tc\cdot |\opt_{\fset(v)}|(4/3)^{i+1}}{L_1}.\]

The algorithm terminates when for every leaf vertex $v$, either $|\fset(v)|\leq L_2$, or $S(\fset(v))=\emptyset$. It is immediate to verify that the algorithm constructs a valid cleanup tree, and for every leaf vertex $v\in \lset$, with $S(\fset(v))\neq \emptyset$, we get that $|\fset(v)|\leq L_2$, and  $|\opt_{\fset(v)}|\geq |\opt_{\fset}|/L_1^{\delta+2}$. We now only need to bound the loss of the tree. Let $\iset=\iset(\tset)$ denote the set of all inner vertices of $\tset$. 
For all $0\leq i\leq \delta$, let $U_i$ denote the set of all inner vertices of $\tset$ that belong to level $i$. Then $\iset=\bigcup_{i=0}^{\delta}U_i$. If $v,v'\in U_i$, then neither can be a descendant of the other in $\tset$, and so from Observation~\ref{obs: partitioning tree - disjointness of non-descendants}, $\sum_{v\in U_i}|\opt_{\fset(v)}|\leq |\opt_{\fset}|$. Since the loss at every level-$i$ vertex $v$ is bounded by $\frac{2\tc\cdot |\opt_{\fset(v)}|(4/3)^{i+1}}{L_1}$, we get that:

\[\sum_{v\in U_i}\lambda(v)\leq \sum_{v\in U_i}\frac{2\tc\cdot |\opt_{\fset(v)}|(4/3)^{i+1}}{L_1}\leq \frac{2\tc\cdot |\opt_{\fset}|(4/3)^{i+1}}{L_1},\]

and so overall:

\[\Lambda(\tset)=\sum_{i=0}^{\delta}\sum_{v\in U_i}\lambda(v)\leq 2\tc\cdot |\opt_{\fset}|\sum_{i=0}^{\delta}\frac{(4/3)^{i+1}}{L_1}\leq \frac{12\tc |\opt_{\fset}|}{L_2}.\]
\end{proof}

\subsection{The Algorithm}\label{subsec: alg}
Recall that all logarithms in this section are to the base of $4/3$. 
For convenience, we denote $|\opt|$ by $N$.
We assume that $\eps>1/\log^4N$, since otherwise the $(1-\eps)$-approximation algorithm $\aset^*$ with running time $n^{O(\log^2N/\eps^2)}$ from Section~\ref{sec: running time exp log n} gives an $(1-\eps)$-approximation in time $n^{O(\sqrt {\log N}/\eps^3)}$.
 We use three parameters: $L^*_1=\frac{100\tc \sqrt{\log N}}{\eps}$, $L_2^*=\frac{100 \tc \log N}{\eps}$, and $\rho=\left(\frac 4 3\right )^{2\sqrt{ \log N}}$, where $\tc$ is the parameter from Corollary~\ref{corollary: main partition with grid}. Let $\delta=\ceil{\log(L_2^*/L_1^*)}=\Theta(\log \log N)$, and let $\eta=512(L_2^*)^{\delta+3}\cdot (4/3)^{\sqrt{\log N}}=(\log N)^{O(\log\log N)}\cdot (4/3)^{\sqrt{\log N}}=2^{\Theta(\sqrt{\log N})}$. We will assume that $N$ is large enough, so, for example, $\left(\frac 4 3\right )^{\sqrt{ \log N}}>32(L_2^*)^{\delta+3}=2^{\Theta((\log\log N)^2)}$, as otherwise $N$ is bounded by some constant and the problem can be solved efficiently via exhaustive search. 

We define the family $\cset$  of important sets of fake rectangles in two steps. Set $\cset$ will consist of two subsets $\cset_1$ and $\cset_2$, that are defined at step 1 and step 2, respectively. 

\paragraph{Step 1: Family $\cset_1$.}
Family $\cset_1$ contains all valid sets $\fset$ of fake rectangles, such that $|\fset|\leq L^*_1$, and all rectangles in $\fset$ have integral coordinates between $0$ and $2n+1$. It is immediate to verify that $|\cset_1|=n^{O(\sqrt{\log N}/\eps)}$. Notice that $\set{\emptyset}\in\cset_1$.

\paragraph{Step 2: Family $\cset_2$.}
Consider now any important set $\fset\in \cset_1$ of fake rectangles. We define a collection $\cset_2(\fset)$ of sets of fake rectangles, and we will eventually set $\cset_2=\bigcup_{\fset\in \cset_1}\cset_2(\fset)$. 

In order to define the family $\cset_2(\fset)$ of fake rectangles, we apply Claim~\ref{claim: find a rho-accurate grid} to construct a $\rho$-accurate grid $G$ for $\fset$, so that all vertices of $G$ have integral coordinates. The size of the grid is $(z\times z)$, where $z=O(\rho \log\log N+|\fset|)\leq O(\rho\log\log N+\sqrt{ \log N}/\eps)\leq 2^{O(\sqrt{\log N})}$. We then let $\cset_2(\fset)$ contain all valid sets $\fset'$ of fake rectangles, with $S(\fset')\subseteq S(\fset)$ and $|\fset'|\leq L_2^*$, such that $\fset'$ is aligned with $G$. 
 Notice that $\cset_2(\fset)\cap \cset_1\neq \emptyset$, as for example, both families contain the sets $\fset$ and $\set{B}$.

Clearly, $|\cset_2(\fset)|\leq z^{O(L_2^*)}=2^{O(\log^{3/2}N/\eps)}=n^{O(\sqrt{\log N}/\eps)}$, and we can compute the family $\cset_2(\fset)$ in time $n^{O(\sqrt{\log N}/\eps)}$.

Finally, we set $\cset_2=\bigcup_{\fset\in \cset_1}\cset_2(\fset)$, and $\cset=\cset_1\cup \cset_2$. Then $|\cset|\leq |\cset_1|+ \sum_{\fset\in \cset_1}|\cset_2(\fset)|\leq |\cset_1|\cdot n^{O(\sqrt{\log N}/\eps)}\leq n^{O(\sqrt{\log N}/\eps)}$.

We now define the family $\cset'\subseteq \cset$ of basic sets of fake rectangles, and the corresponding algorithms $\aset'$ and $\aset''$. 
Recall that $\aset$ is the $(c_A\log\log |\opt|)$-approximation algorithm for \MISR from Corollary~\ref{cor: an log log opt approx}, and for any valid set $\fset$ of fake rectangles, we denote by $\aset(\fset)$ the value of the solution produced by algorithm $\aset$ on input $\rset(\fset)$.
Family  $\cset'$ contains all sets $\fset\in \cset$ with $\aset(\fset)<\eta$, and we use the algorithm $\aset$ in order to identify the sets $\fset\in \cset'$. Notice that if $\fset\in \cset'$, then $|\opt_{\fset}|\leq O(\eta \log\log N)\leq 2^{O(\sqrt{ \log N})}$. We can compute an $(1-\eps/2)$-approximate solution to each such instance $\rset(\fset)$ in time $n^{O(\sqrt{\log N}/\eps^3)}$, using the $(1-\eps)$-approximation algorithm from Section~\ref{sec: running time exp log n}, whose running time is $n^{O(\log |\opt_{\fset}|/\eps^3)}=n^{O(\sqrt{\log N}/\eps^3)}$. We employ this algorithm as $\aset''$.

 We can now use the dynamic programming-based algorithm from Section~\ref{sec: alg outline}. 
The initialization step takes time at most $|\cset|\cdot n^{O(\sqrt{\log N}/\eps^3)}=n^{O(\sqrt{\log N}/\eps^3)}$, and the rest of the algorithm runs in time $O(|\cset|^4)=n^{O(\sqrt{\log N}/\eps)}$, so the total running time is $n^{O(\sqrt{\log N}/\eps^3)}$. It now remains to show that the algorithm computes a solution of value at least $(1-\eps)|\opt|$. As before, we do so using partitioning trees.

\subsection{Analysis}\label{subsec: analysis}
In this section, we analyze the algorithm, by constructing the partitioning tree $\tset$. Our tree $\tset$ will be composed of a number of smaller trees, that we compute using the following theorem.

\begin{theorem}\label{thm: inner tree}
For every set $\fset\in \cset_1\setminus\cset'$ of fake rectangles, there is a partitioning tree $\tset(\fset)$, such that for every leaf vertex $v\in \lset(\tset(\fset))$, either (i) $S(\fset)= \emptyset$; or (ii) $\fset(v)\in \cset_1$ and $|\opt_{\fset(v)}|\leq\frac{|\opt_{\fset}|}{(4/3)^{\sqrt{\log N}}}$. The loss of the tree $\Lambda(\tset)\leq \frac{24\tilde c|\opt_{\fset}|}{L_1^*}$.
\end{theorem}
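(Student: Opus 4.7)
The plan is a two-phase construction of $\tset(\fset)$. In the \emph{main partition phase}, I start with a single-node tree rooted at $\fset$ and repeatedly select a leaf $v$ with $|\opt_{\fset(v)}| > T := |\opt_\fset|/(4/3)^{\sqrt{\log N}}$, then apply Corollary~\ref{corollary: main partition with grid} (with $L^*=L_2^*$) to partition $\fset(v)$ using the $\rho$-accurate grid $G$ for $\fset$ that underlies the definition of $\cset_2(\fset)$ (whose vertices are integral by Claim~\ref{claim: find a rho-accurate grid}). This produces a $G$-aligned decomposition triple whose components have boundary complexity at most $L_2^*$ and $|\opt|$ at most $(3/4)|\opt_{\fset(v)}|$, with partitioning loss at most $\tc|\opt_{\fset(v)}|/L_2^*$ at $v$.

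To verify the corollary's hypotheses at each such $\fset(v)$: since $\fset \notin \cset'$ we have $|\opt_\fset| \geq \eta$, hence $T \geq 512(L_2^*)^{\delta+3}$, so $|\opt_{\fset(v)}| > T$ easily exceeds the required $512(L_2^*)^2$; and by Observation~\ref{obs: rho accurate for big subinstances}, the grid $G$ is $(|\opt_{\fset(v)}|/|\opt_\fset|)\rho$-accurate for $\fset(v)$, which is at least $\rho/(4/3)^{\sqrt{\log N}} = (4/3)^{\sqrt{\log N}} \geq 32(L_2^*)^2$ by the assumption on $N$. Because $|\opt|$ shrinks by a factor of at least $3/4$ down every branch, this phase terminates with a tree of depth at most $\sqrt{\log N}$ whose leaves are $G$-aligned, have $|\fset(v)|\leq L_2^*$, and satisfy $|\opt_{\fset(v)}|\leq T$. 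Grouping internal nodes by depth and applying Observation~\ref{obs: partitioning tree - disjointness of non-descendants} at each depth gives a per-depth loss of at most $\tc|\opt_\fset|/L_2^*$, so the total Phase-1 loss is at most $\sqrt{\log N}\cdot \tc|\opt_\fset|/L_2^* = \tc|\opt_\fset|/L_1^*$ using $L_2^* = \sqrt{\log N}\cdot L_1^*$.

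In the \emph{cleanup phase}, at each leaf $v$ of the main-phase tree I splice in the $(L_2^*,L_1^*)$-cleanup tree from Theorem~\ref{thm: cleanup tree} applied to $\fset(v)$ and $G$. Its hypotheses are checked in the same manner as for the corollary: the relevant leaves have $|\opt_{\fset(v)}|$ within a bounded factor of $T$, hence at least $512(L_2^*)^{\delta+2}$, and $G$ remains sufficiently accurate by Observation~\ref{obs: rho accurate for big subinstances}. The cleanup-tree leaves are either empty (case (i) of the goal theorem) or have boundary complexity at most $L_1^*$ and are $G$-aligned with integral $G$, hence lie in $\cset_1$ (case (ii)), with $|\opt|\leq T$ carried over. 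The cleanup loss at $v$ is at most $12\tc|\opt_{\fset(v)}|/L_1^*$, and summing over all Phase-1 leaves (which are mutually non-descendant) via Observation~\ref{obs: partitioning tree - disjointness of non-descendants} gives Phase-2 loss at most $12\tc|\opt_\fset|/L_1^*$. Combining the two phases yields a total loss of at most $13\tc|\opt_\fset|/L_1^* \leq 24\tc|\opt_\fset|/L_1^*$, as required.

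The main obstacle I expect is the careful bookkeeping needed to ensure the numerical hypotheses (on both the optimum lower bound and the grid accuracy) are satisfied at every sub-instance encountered throughout the recursion; this is what pins down the parameter choices of $\eta$, $L_1^*$, $L_2^*$, $\rho$, and $\delta$, and in particular the gap between the main-phase threshold $T$ and the cleanup requirement $512(L_2^*)^{\delta+2}$. The minor corner case of main-phase leaves whose $|\opt|$ drops well below $T$ is handled by directly reassigning $\fset(v) = \set{B}$ and placing them in case (i) at no loss.
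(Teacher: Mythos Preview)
Your two-phase approach is essentially the paper's proof, but there is a real gap in your handling of the ``minor corner case.'' You assert that main-phase leaves with $|\opt_{\fset(v)}|$ well below $T$ can be reassigned to $\{B\}$ ``at no loss,'' and that the remaining leaves have $|\opt_{\fset(v)}|$ ``within a bounded factor of $T$.'' Neither claim is established. Corollary~\ref{corollary: main partition with grid} gives \emph{no} lower bound on the optimum of a child, so a main-phase leaf can have $|\opt_{\fset(v)}|$ arbitrarily small; replacing such a leaf by $\{B\}$ incurs loss exactly $|\opt_{\fset(v)}|$ at its parent, and you give no argument bounding the total of these losses. Without such a bound you also cannot verify the hypothesis $|\opt_{\fset(v)}|\geq 512(L_2^*)^{\delta+2}$ of Theorem~\ref{thm: cleanup tree} at the remaining leaves.

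The paper handles this by discarding \emph{during} the main phase rather than afterward. At each internal vertex $v$, among the three children produced by the corollary, the two largest necessarily satisfy $|\opt_{\fset_j}|\geq |\opt_{\fset(v)}|/16$ (since their sum is at least $(1-\tc/L_2^*)|\opt_{\fset(v)}|$ while the maximum is at most $(3/4)|\opt_{\fset(v)}|$), so at most one child can have $|\opt|<|\opt_{\fset(v)}|/L_2^*$; if so, that child is immediately replaced by $\{B\}$. This maintains the invariant $|\opt_{\fset(v)}|\geq |\opt_\fset|/(L_2^*\cdot(4/3)^{\sqrt{\log N}})$ for every non-empty node throughout the construction, which is precisely what is needed to invoke Theorem~\ref{thm: cleanup tree} at every non-empty main-phase leaf. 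The per-vertex loss becomes $2\tc|\opt_{\fset(v)}|/L_2^*$ instead of your $\tc|\opt_{\fset(v)}|/L_2^*$, giving Phase-1 loss $\leq 2\tc\sqrt{\log N}\,|\opt_\fset|/L_2^*=2\tc|\opt_\fset|/L_1^*$; combined with the cleanup loss $12\tc|\opt_\fset|/L_1^*$ this still fits comfortably under $24\tc|\opt_\fset|/L_1^*$.
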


\begin{proof}
Let $G$ be the $\rho$-accurate grid that we have computed for $\fset$.
Our initial tree $\tset(\fset)$ consists of a single vertex $v$, with $\fset(v)=\fset$.
The construction of the tree $\tset(\fset)$ consists of two stages. The first stage is executed as long as there is any leaf vertex $v$ in $\tset(\fset)$ with $|\opt_{\fset(v)}|\geq\frac{|\opt_{\fset}|}{(4/3)^{\sqrt{\log N}}}$. During this stage, we will ensure that throughout its execution, for every vertex $v$ of the tree, if $S(\fset(v))\neq \emptyset$, then $|\opt_{\fset(v)}|\geq \frac{|\opt_{\fset}|}{L_2^*\cdot(4/3)^{ \sqrt{\log N}}}$.
Notice that, since $\fset\not\in \cset'$, $|\opt(\fset)|\geq \aset(\fset)\geq \eta= 512(L_2^*)^{\delta+3}\cdot (4/3)^{\sqrt{\log N}}$, and so for each vertex $v$ of the tree with $S(\fset(v))\neq \emptyset$, we get that $|\opt_{\fset(v)}|\geq \frac{|\opt_{\fset}|}{L_2^*\cdot (4/3)^{\sqrt{\log N}}}\geq 512(L_2^*)^{\delta+2}$.

Clearly, the invariant holds at the beginning of the algorithm.
 In every iteration of the first stage, we consider some leaf vertex $v$ with $|\opt_{\fset(v)}|\geq\frac{|\opt_{\fset}|}{(4/3)^{\sqrt{\log N}}}$.
Let $\rho'=\frac{\rho}{(4/3)^{\sqrt{\log N}}}=(4/3)^{\sqrt{\log N}}$. From Observation~\ref{obs: rho and rho' accurate grids}, grid $G$ remains a $\rho'$-accurate grid for $\fset(v)$.

Since we are guaranteed that $\left(\frac 4 3\right )^{\sqrt{ \log N}}>32(L_2^*)^{\delta+3}$, we get that $\rho'\geq 32(L^*_2 )^2$, and we can apply Corollary~\ref{corollary: main partition with grid} to obtain a valid decomposition triple $(\fset_1,\fset_2,\fset_3)$ for $\fset(v)$, where for each $1\leq i\leq 3$, $|\fset_i|\leq L^*_2$ and $\fset_i$ is aligned with $G$, and so $\fset_i\in \cset_2(\fset)$. 

Assume without loss of generality that $|\opt_{\fset_1}|\leq |\opt_{\fset_2}|\leq |\opt_{\fset_3}|$. Notice that, since $|\opt_{\fset_3}|\leq 3|\opt_{\fset(v)}|/4$, and $\sum_{i=1}^3|\opt_{\fset_i}|\geq |\opt_{\fset(v)}|\left(1-\frac{\tilde c}{L^*_2}\right)$, we are guaranteed that 
$|\opt_{\fset_1}|+|\opt_{\fset_2}|\geq |\opt_{\fset(v)}|/8$, and so $|\opt_{\fset_2}|\geq |\opt_{\fset(v)}|/16$. We add three new vertices $v_1,v_2,v_3$ to the tree as the children of $v$, and we set $\fset(v_2)=\fset_2$ and $\fset(v_3)=\fset_3$. Notice that both $\fset_2,\fset_3\in \cset_2(\fset)$, and the invariant holds for them.

If $|\opt_{\fset_1}|\geq \frac{|\opt_{\fset(v)}|}{L_2^*}$, then we set $\fset(v_1)=\fset_1$, and otherwise we set $\fset(v_1)=\set{B}$. It is easy to see that our invariant continues to hold, and $\lambda(v)\leq \frac{2\tc |\opt_{\fset(v)}|}{L_2^*}$. This completes the description of the first stage. Let $\lset'$ be the set of all leaf vertices at the end of the first stage, and let $\iset'$ be the set of all inner vertices. We now bound the total loss $\sum_{v\in \iset'}\lambda(v)$, as follows. Notice that the longest root-to-leaf path in $\tset$ has length at most $\sqrt{\log N}$. We partition the vertices of $\iset'$ into $\sqrt{\log N}$ classes, where class $U_i$, for $1\leq i\leq \sqrt{\log N}$ contains all vertices $v$, such that the unique path from $v$ to the root of the tree contains exactly $i$ vertices. As before, if two vertices $v,v'\in U_i$, then neither of them is a descendant of the other, and so $\sum_{v\in U_i}|\opt_{\fset(v)}|\leq |\opt_{\fset}|$. We can now bound the total loss of all vertices in class $i$ by:

\[\sum_{v\in U_i}\lambda(v)\leq \sum_{v\in U_i}\frac{2\tc |\opt_{\fset(v)}|}{L_2^*}\leq \frac{2\tc |\opt_{\fset}|}{L_2^*}.\]

Overall, $\sum_{v\in \iset'}\lambda(v)\leq \sum_{i=1}^{\sqrt{\log N}}\sum_{v\in U_i}\lambda(v)\leq \frac{2\tc |\opt_{\fset}|\sqrt{\log N}}{L_2^*}$.

We now proceed to describe the second stage of the algorithm. Consider some leaf vertex $v\in \lset'$, such that $S(\fset(v))\neq \emptyset$. Our invariant guarantees that $|\opt_{\fset(v)}|\geq \frac{|\opt_{\fset}|}{L_2^*\cdot(4/3)^{ \sqrt{\log N}}}\geq \frac{\eta}{L_2^*\cdot(4/3)^{ \sqrt{\log N}}}\geq 512(L_2^*)^{\delta+2}$. Let $\rho''=\frac{\rho}{L_2^*\cdot(4/3)^{ \sqrt{\log N}}}=\frac{(4/3)^{\sqrt{\log N}}}{L_2^*}$.
 Then $G$ remains a $\rho''$-accurate grid for $\fset(v)$, from Observation~\ref{obs: rho and rho' accurate grids}. Since we have assumed that $\left(\frac 4 3\right )^{\sqrt{ \log N}}>32(L_2^*)^{\delta+3}$, we get that $\rho''\geq 32(L^*_2)^{\delta+2}$. Therefore, we can construct an $L_2^*$--$L_1^*$-cleanup tree $\tset'(v)$ for $\fset(v)$ and $G$. From the definition of the cleanup tree, it is easy to verify that for every vertex $v'\in V(\tset'(v))$, $\fset(v')\in \cset_2(\fset)\subseteq \cset$. Moreover, if $v'$ is a leaf of $\tset'(v)$ with $S(\fset(v'))\neq \emptyset$, then $|\fset(v')|\leq L_1^*$, and so $\fset(v')\in \cset_1$, as required. 
 
  Once we add a clean-up tree to each vertex $v\in \lset'$ with $S(\fset(v))\neq \emptyset$, we obtain the final tree $\tset(\fset)$. 
  Recall that the loss of the cleanup tree $\tset'(v)$ is at most $\frac{12\tc |\opt_{\fset(v)}|}{L_1^*}$. Since for every pair $v_1,v_2\in \lset'$ of vertices, neither vertex is a descendant of the other in tree $\tset(\fset)$, we get that the total loss of all cleanup trees is:
  
  \[\sum_{v\in \lset'}\Lambda(\tset'(v))\leq \sum_{v\in \lset'}\frac{12\tc |\opt_{\fset(v)}|}{L_1^*}\leq \frac{12\tc |\opt_{\fset}|}{L_1^*}.\]
  
 The total loss of the tree $\tset(\fset)$ can now be bounded as follows:
 
 \[\Lambda(\tset(\fset))=\sum_{v\in \iset(\tset(\fset))}\lambda(v)=\sum_{v\in \iset'}\lambda(v)+\sum_{v\in \lset'}\Lambda(\tset'(v))\leq 
  \frac{2\tc |\opt_{\fset}|\sqrt{\log N}}{L_2^*}+ \frac{12\tc |\opt_{\fset}|}{L_1^*}\leq  \frac{24\tc |\opt_{\fset}|}{L_1^*},\]

  since $L_2^*=L_1^*\cdot \sqrt{\log N}$.
\end{proof}

We are now ready to complete the construction of the final partitioning tree $\tset$. The construction consists of $\sqrt{\log N}$ phases. We start with tree $\tset$ containing a single vertex $v_0$, associated with the important set of fake rectangles $\fset(v_0)=\set{\emptyset}$. Throughout the execution of the algorithm, we ensure that if $v$ is a leaf vertex of the current tree $\tset$, then either $S(\fset(v))=\emptyset$, or $\fset(v)\in \cset_1$. The invariant is clearly true at the beginning of the algorithm.

 In order to execute the $i$th phase, we let $U_i$ contain all leaf vertices $v$ of the current tree $\tset$ with $\fset(v)\not\in\cset'$. For every vertex $v\in U_i$, we construct the tree $\tset(v)$ given by Theorem~\ref{thm: inner tree}, and add it to $\tset$, by identifying its root vertex with $v$. Let $\lset(v)$ be the set of all leaf vertices of the tree $\tset(v)$. Recall that $\Lambda(\tset(v))=|\opt(\fset(v))|-\sum_{v'\in \lset(v)}|\opt(\fset(v'))|\leq \frac{24\tc |\opt_{\fset(v)}|}{L_1^*}$. Notice that for every leaf vertex $v'\in \lset(\tset(v))$, we are guaranteed that either $S(\fset(v'))=\emptyset$, or $\fset(v')\in \cset_1$, and in the latter case $|\opt_{\fset(v')}|\leq \frac{|\opt_{\fset(v)}|}{(4/3)^{\sqrt {\log N}}}$.
Since for every pair $v_1,v_2\in U_i$ of vertices, neither vertex is a descendant of the other in $\tset$, $\sum_{v\in U_i}|\opt_{\fset(v)}|\leq |\opt|$, and so:

\[\sum_{v\in U_i}\Lambda(\tset(v))\leq \sum_{v\in U_i} \frac{24\tc |\opt_{\fset(v)}|}{L_1^*}\leq \frac{24\tc |\opt|}{L_1^*}.\]

Clearly, after at most $\sqrt{\log N}$ phases, we obtain a valid partitioning tree $\tset$, such that for every leaf vertex $v$ of $\tset$, $\fset(v)\in \cset'$. It is easy to verify that the total loss of the tree $\tset$ is bounded by:

\[\Lambda(\tset)\leq \sum_{i=1}^{\sqrt{\log N}}\sum_{v\in U_i}\Lambda(\tset(v))\leq \frac{24\tc |\opt|\sqrt{\log N}}{L_1^*}\leq \frac{|\opt|\eps}{2},\]

since $L_1^*=100\tc \sqrt{\log N}/\eps$. From Observation~\ref{obs: final analysis},  we conclude that our algorithm computes a $(1-\eps)$-approximate solution, in time $n^{O(\sqrt{\log N}/\eps^3)}$.


\label{------------------------------------------sec: n to polyloglog opt running time----------------------------}
\section{A QPTAS with Running Time $n^{O((\log\log |\opt|)^4/\eps^4)}$}
\label{sec: n to polyloglog opt running time}

We start with an intuitive high-level overview of the algorithm. This overview is over-simplified and imprecise, and it is only intended to provide intuition.
A natural way to further improve the running time of the QPTAS from Section~\ref{sec: running time exp sqrt log n} is to use more levels of recursion, namely: instead of just two sets $\cset_1,\cset_2\subseteq \cset$, we will have $h=\Theta(\log \log N)$ such sets, where we refer to the sets $\fset\in \cset_i$ as \emph{level-$i$ sets}, and to corresponding instances $\rset(\fset)$ as \emph{level-$i$ instances}. We will also use parameters $L_1,\ldots,L_h$ associated with the instances of different levels. As before, family $\cset_1$ will contain all valid sets $\fset$ of fake rectangles, whose corners have integral coordinates, and $|\fset|\leq L_1$. For each $1<i\leq h$, for every set $\fset\in \cset_{i-1}$ of fake rectangles, we will define a family $\cset_i(\fset)$ of sets of fake rectangles, as follows. We compute a $\rho_{i-1}$-accurate grid $G_{i-1}$ for $\fset$, for an appropriately chosen parameter $\rho_{i-1}$, and we let $\cset_i(\fset)$ contain all valid sets $\fset'$ of fake rectangles that are aligned with $G_{i-1}$, such that $|\fset'|\leq L_i$ and $S(\fset')\subseteq S(\fset)$. We then set $\cset_i=\bigcup_{\fset\in \cset_{i-1}}\cset_i(\fset)$. Notice that the same set $\fset'$ of fake rectangles may belong to several families $\cset_i(\fset)$. It will be convenient in our analysis to view each such set as a separate set (though the algorithm does not distinguish between them),  and to keep track of the sets of fake rectangles from $\cset_1,\ldots,\cset_{i-1}$, and their corresponding grids $G_1,\ldots,G_{i-1}$, that were used to create the set $\fset'$. In order to do so, we will denote each level-$i$ instance by $\ffi=(\fset_1,G_1,\ldots,\fset_{i-1},G_{i-1},\fset_i)$, where for $1\leq i'<i$, $G_{i'}$ is a $\rho_{i'}$-accurate grid for $\fset_{i'}$, though only the set $\fset_i$ is added to $\cset_i$. Before we proceed to a formal definition of the sets $\cset_i$, we need the following definition.

\begin{definition} Let $G=(\vset,\hset)$, $G'=(\vset',\hset')$ be two grids. We say that $G'$ is \emph{aligned} with $G$ iff $\vset'\subseteq \vset$ and $\hset'\subseteq \hset$.
\end{definition}

\begin{claim}\label{claim: grid-aligned rho-accurate grid}
Let $\fset$ be any valid set of fake rectangles, and let $G$ be a $\rho$-accurate grid for $\fset$, for some parameter $\rho\geq 1$. Then for any $1\leq \rho'\leq \rho$, we can efficiently construct a $\rho'$-accurate grid $G'$ for $\fset$ of size $(z\times z)$, where $z\leq 2(4\rho'c_A\log\log(|\opt_{\fset}|)+2|\fset|)$, such that $G'$ is aligned with $G$.
\end{claim}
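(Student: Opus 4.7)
My plan is to follow the template of the proof of Claim~\ref{claim: find a rho-accurate grid}, modified so that candidate positions for the lines of $G'$ are restricted to the lines of $G$. The key observation is that since $G$ is $\rho$-accurate with $\rho \geq \rho'$, every vertical (resp.\ horizontal) strip of $G$ already contains rectangles whose optimal independent set value is at most $\lceil |\opt_{\fset}|/\rho \rceil \leq \lceil |\opt_{\fset}|/\rho' \rceil$. This means that restricting the sweep to $G$-lines cannot cause us to ``skip'' a heavy region; the discrete granularity imposed by $G$ is finer than the target granularity for $G'$.

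First I would construct the vertical lines of $G'$ via a greedy left-to-right sweep over the vertical lines of $G$. Starting from an anchor position $x_a$ equal to the left boundary of $B$, I walk through consecutive vertical lines $x_a = x_0 < x_1 < x_2 < \cdots$ of $G$, and at each step use the approximation algorithm $\aset$ of Corollary~\ref{cor: an log log opt approx} to estimate the value of the sub-instance consisting of rectangles of $\rset$ contained in $[x_a,x_t]\cap S(\fset)$. As soon as this estimate exceeds the threshold $\lceil |\opt_{\fset}|/\rho'\rceil$, I place a vertical line of $G'$ at $x_{t-1}$, reset $x_a := x_{t-1}$, and continue. An entirely symmetric procedure produces the horizontal lines of $G'$. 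To guarantee that $\fset$ remains aligned with $G'$, I also include into $G'$ the $x$- and $y$-coordinates of every corner of every rectangle in $\fset$; since $\fset$ is already aligned with $G$, these coordinates come from $G$, so $G'$ remains aligned with $G$.

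Next I verify correctness and bound the size. For correctness: by construction, each vertical strip of $G'$ has estimated value at most $\lceil |\opt_{\fset}|/\rho'\rceil$, and since $\aset$ is a $(c_A\log\log |\opt|)$-approximation whose output lower-bounds $|\opt|$, the true optimal value of each strip is at most $\lceil|\opt_{\fset}|/\rho'\rceil$ as required. The inclusion of the coordinates of the corners of $\fset$ ensures the alignment property. For the size bound: each time the sweep inserts a new vertical line of $G'$, the previous strip $[x_a, x_{t-1}]$ combined with one additional $G$-strip $[x_{t-1}, x_t]$ contains enough rectangles that $\aset$ returns a value exceeding $|\opt_{\fset}|/\rho'$, and therefore the true optimal value in this union of two strips is at least $|\opt_{\fset}|/(\rho'\cdot c_A \log\log|\opt_{\fset}|)$. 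Summing over all placed lines and using that the optima over disjoint strips sum to at most $|\opt_{\fset}|$ (by Observation~\ref{obs: partitioning tree - disjointness of non-descendants} applied to any independent set contained in the strips), the sweep places at most $2\rho' c_A \log\log|\opt_{\fset}|$ vertical lines, where the factor of $2$ absorbs the overshoot into the next $G$-strip. Adding at most $2|\fset|$ lines for alignment gives at most $4\rho' c_A \log\log|\opt_{\fset}| + 2|\fset|$ vertical lines; the same bound holds for horizontal lines, and the total grid size is at most $2(4\rho' c_A \log\log|\opt_{\fset}| + 2|\fset|)$ as claimed.

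The main technical subtlety I expect is handling the ``overshoot'' caused by the $G$-restriction: placing a new $G'$-line at $x_{t-1}$ rather than at the ideal threshold position means we may under-load the current strip slightly while the candidate $x_t$ over-loads it, and one must argue this only costs a constant factor. This is controlled by the hypothesis $\rho \geq \rho'$, which guarantees that each individual $G$-strip contributes at most one unit of the ``budget'' $|\opt_{\fset}|/\rho'$. The remainder of the argument is a straightforward adaptation of the sweep analysis used for Claim~\ref{claim: find a rho-accurate grid}.
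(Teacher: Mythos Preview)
Your correctness step contains a genuine error. You write that ``since $\aset$ is a $(c_A\log\log |\opt|)$-approximation whose output lower-bounds $|\opt|$, the true optimal value of each strip is at most $\lceil|\opt_{\fset}|/\rho'\rceil$.'' But ``$\aset$ lower-bounds $|\opt|$'' means $\aset \leq |\opt|$, and from $\aset(\text{strip}) \leq T$ together with $\aset(\text{strip}) \leq |\opt(\text{strip})|$ you cannot conclude $|\opt(\text{strip})| \leq T$. In fact the only conclusion available is $|\opt(\text{strip})| \leq c_A\log\log(|\opt|)\cdot T$, which is too weak for $\rho'$-accuracy. (This is exactly why the proof of Claim~\ref{claim: find a rho-accurate grid} uses a smaller threshold $\tau \approx |\opt_{\fset}|/(\rho\cdot c_A\log\log|\opt_{\fset}|)$ and has to guess $|\opt_{\fset}|$.) Your sweep could be repaired by lowering the threshold by a $c_A\log\log|\opt_{\fset}|$ factor and handling the unknown $|\opt_{\fset}|$, but as written the accuracy guarantee does not hold.

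The paper takes a much shorter route that sidesteps the sweep entirely: apply Claim~\ref{claim: find a rho-accurate grid} as a black box to obtain some $\rho'$-accurate grid $G''$ (of size at most $4\rho'c_A\log\log|\opt_{\fset}|+2|\fset|$), and then \emph{snap} $G''$ to $G$: for each line of $G''$ that is not already a line of $G$, replace it by the two nearest $G$-lines, one on each side. This at most doubles the number of lines, giving the stated bound with the factor~$2$. The accuracy check is one line: every strip of the resulting grid $G'$ is contained either in a strip of $G''$ (hence has optimum $\leq \lceil|\opt_{\fset}|/\rho'\rceil$) or in a strip of $G$ (hence has optimum $\leq \lceil|\opt_{\fset}|/\rho\rceil \leq \lceil|\opt_{\fset}|/\rho'\rceil$). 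This avoids re-analyzing any threshold and explains the factor~$2$ cleanly as the cost of two-sided snapping rather than as an ``overshoot'' correction.
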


\begin{proof}
We start by constructing a $\rho'$-accurate grid $G''=(\vset'',\hset'')$ for $\fset$, of size $(z'\times z')$, where $z'\leq4\rho'c_A\log\log(|\opt(\fset)|)+2|\fset|$, using Claim~\ref{claim: find a rho-accurate grid}. In order to construct our final grid $G'=(\vset',\hset')$, start with $\vset'=\emptyset$. For each vertical line $V\in \vset''$, if $V\in \vset$, then we add $V$ to $\vset'$. Otherwise, we add to $\vset'$ two vertical lines of $\vset$: one that lies immediately to the left of $V$, and one that lies immediately to the right of $V$. This finishes the definition of the set $\vset'$ of vertical lines of $G'$. Notice that every vertical strip of $G'$ is either contained in some vertical strip of $G''$, or it is contained in some vertical strip of $G$. Since $\rho\geq \rho'$, the maximum number of mutually disjoint rectangles contained in each vertical strip of $G'$ is at most $\ceil{|\opt(\fset)|/\rho'}$, as required. 
The set $\hset'$ of the horizontal lines of $G'$ is constructed similarly.
\end{proof}

All logarithms in this section are to the base of $2$, unless stated otherwise. For convenience of notation, we denote $\exp(i)=2^i$.
We denote by $N$ the smallest integral power of $2$, such that $N\geq |\opt|$. We assume that $\eps>1/\log N$, since otherwise the $(1-\eps)$-approximation algorithm with running time $n^{O(\log N/\eps^3)}$ from Section~\ref{sec: running time exp log n} has running time $n^{O(1/\eps^4)}$. We assume that $N$ is large enough, so, for example, $\log N>\tc\cdot c_A(\log\log N)^5$, as otherwise $N$ is bounded by some constant and the problem can be solved efficiently via exhaustive search.
Recall that $\aset$ is the $(c_A\log\log |\opt|)$-approximation algorithm for \MISR from Corollary~\ref{cor: an log log opt approx}, and for any valid set $\fset$ of fake rectangles, we denote by $\aset(\fset)$ the value of the solution produced by algorithm $\aset$ on input $\rset(\fset)$.

\subsection{Parameter Setting}
We start with $h^*=\log\log N$.

For each $1\leq i\leq h^*$, we define a parameter $L_i=\frac{\tc\cdot (\log\log N)^3\cdot 2^{i}}{\eps}$, that will serve as the bound on the number of fake rectangles in each set $\fset\in \cset_i$. Notice that $L_1<L_2<\cdots<L_{h^*}=\frac{\tc\log N(\log\log N)^3}{\eps}$. 
Notice also that for $1\leq i<h^*$, $\ceil{\log_{4/3}(L_{i+1}/L_i)}=\ceil{\log_{4/3}2}= 3$, and we denote this value by $\delta$. 

We let $\eta=32L_{h^*}^{2\delta+4}$. Since we have assumed that $\eps>1/\log N$ and $N$ is large enough, it is easy to verify that $\log N<\eta\leq \log^{O(1)}N$.

For $1\leq i\leq h^*$, we define $\rho_i=N^{1/2^i}$. Clearly, for all $1< i\leq h^*$, $\rho_{i}=\sqrt{\rho_{i-1}}$.
We let $h$ be the largest integer, so that $\rho_{h}>\eta^{320}$. It is easy to verify that $\eta^{320}<\rho_{h}\leq\eta^{640}$, and $h<h^*$, as $\rho_{h^*}=1$. The number of the recursive levels in our construction will be $h$.
Finally, we define the value $\tau^*=\rho_{h-1}^3=(\log N)^{\Theta(1)}$.
From our definitions, we immediately obtain the following inequalities. First, for all $1\leq i <h$ and $1\leq j\leq h$,

\begin{equation}
\rho_i\geq (32 L_j^{2\delta+4})^{320}, \label{eq: rho bound}
\end{equation}

since $\rho_i\geq \rho_{h-1}\geq \eta^{320}\geq (32 L_{h^*}^{2\delta+4})^{320}\geq (32 L_j^{2\delta+4})^{320}$. Moreover, if $\fset$ is any valid set of fake rectangles with $|\opt_{\fset}|\geq \rho_{h-1}$, then for all $1\leq j\leq h$:

\begin{equation}
|\opt_{\fset}|\geq 512 L_j^{2\delta+4}, \label{eq: opt bound}
\end{equation}

using a similar reasoning as above.

 The family $\cset$ of important sets of fake rectangles will eventually be a union of $h$ subsets $\cset_1,\ldots,\cset_{h}$. 
Recall that the execution of the algorithm from Section~\ref{sec: running time exp sqrt log n} was partitioned into $O(\sqrt{\log N})$ phases, where the value of the optimal solution went down by a factor of roughly $2^{\Theta(\sqrt{\log N})}$ in every phase. At the end of every phase, we reduced the boundary complexities of all resulting instances from $L_2^*$ to $L_1^*$. This corresponded to adding clean-up trees to the partitioning trees $\tset(\fset)$ for $\fset\in \cset_1$. 

The algorithm in this section consists of $h$ recursive levels. The execution of the algorithm at every level is partitioned into a number of phases. The optimal solution value in each phase of level $i$ goes down by a factor of at least $(\rho_i)^{1/160}$. We then reduce the boundary complexities of the resulting level-$(i+1)$ instances from $L_{i+1}$ to $L_{i}$.

\subsection{The Algorithm}
Our algorithm uses the framework defined in Section~\ref{sec: alg outline}. Therefore, it is sufficient to define the family $\cset$ of important sets of fake rectangles, the family $\cset'\subseteq \cset$ of basic sets of fake rectangles, and the algorithms $\aset'$ and $\aset''$ for recognizing and approximately solving the instances corresponding to the basic sets of fake rectangles, respectively. 

For all $1\leq i\leq h$, it will be convenient to denote the level-$i$ sets of fake rectangles by $\ff i=(\fset_1,G_1,\ldots,G_{i-1},\fset_i)$, where for $1\leq i'<i$, $G_{i'}$ is a $\rho_{i'}$-accurate grid for $\fset_{i'}$, and all rectangles in $\fset_{i'+1}$ are aligned with $G_{i'}$. We also require that for all $1<i'<i$, grid $G_{i'}$ is aligned with $G_{i'-1}$, and that for all $1<i'\leq i$, $S(\fset_{i'})\subseteq S(\fset_{i'-1})$.

\paragraph{Level-$1$ Instances}
We let $\cset_1$ denote all valid sets $\fset$ of fake rectangles with $|\fset|\leq L_1$, such that all rectangles in $\fset$ have integral coordinates. For each $\fset\in \cset_1$, we define the level-$1$ set $\ff1=(\fset)$ of fake rectangles to be consistent with our notation for higher-level sets. We denote $\tcset_1=\set{\ff1=(\fset)\mid \fset\in \cset_1}$. Notice that $|\cset_1|\leq n^{O(L_1)}=n^{O((\log\log N)^3/\eps)}$. Notice also that sets $\set{B}$, $\set{\emptyset}$ of fake rectangles belong to $\cset_1$.

\paragraph{Level-$i$ instances}
Fix some $1<i\leq h$. For every level-$(i-1)$ instance $\ff{i-1}\in \tcset_{i-1}$, we define a set $\tcset_i(\ff{i-1})$ of level-$i$ instances, and we let
$\tcset_i=\bigcup_{\ff{i-1}\in \tcset_{i-1}}\tcset_i(\ff{i-1})$. We now describe the construction of the set $\tcset_i(\ff{i-1})$.

We assume that $\ff{i-1}=(\fset_1,G_1,\fset_2,G_2,\ldots,G_{i-2},\fset_{i-1})\in \tcset_{i-1}$ is a level-$(i-1)$ set of fake rectangles, where for each $1\leq i'< i-1$, $G_{i'}$ is a $\rho_{i'}$-accurate grid for $\fset_{i'}$, and if $i'>1$, then grid $G_{i'}$ is aligned with grid $G_{i'-1}$.
Moreover, for all  $1<i'\leq i-1$, set $\fset_{i'}$ contains at most $L_{i'}$ fake rectangles, that are aligned with the grid $G_{i'-1}$.

If $i>2$ and $\aset(\fset_{i-1})<\aset(\fset_{i-2})/\rho_{i-2}^{1/10}$, then we set $\tcset(\ff i)=\emptyset$. Assume now that $i> 2$ and $\aset(\fset_{i-1})\geq \aset(\fset_{i-2})/\rho_{i-2}^{1/10}$. Then:

\[|\opt_{\fset_{i-1}}|\geq \aset(\fset_{i-1})\geq \frac{\aset(\fset_{i-2})}{\rho_{i-2}^{1/10}}\geq \frac{|\opt_{\fset_{i-2}}|}{c_A\log\log N\cdot \rho_{i-2}^{1/10}}\geq \frac{|\opt_{\fset_{i-2}}|}{\rho_{i-2}^{1/5}},\]

since $\rho_{i-2}\geq\rho_{h}\geq \eta^{320}\geq \log N$, and if $N$ is large enough, we can assume that $\rho_{i-2}^{1/10}\geq c_A\log\log N$.
From Observation~\ref{obs: rho accurate for big subinstances}, grid $G_{i-2}$ then remains $\rho_{i-2}^{4/5}$-accurate for instance $\rset(\fset_{i-1})$, and, since $\rho_{i-1}=\sqrt{\rho_{i-2}}<\rho_{i-2}^{4/5}$, we can use Claim~\ref{claim: grid-aligned rho-accurate grid} to compute a $\rho_{i-1}$-accurate grid $G_{i-1}$ for $\fset_{i-1}$, so that $G_{i-1}$ is aligned with $G_{i-2}$. If $i=2$, then we simply compute any $\rho_1$-accurate grid $G_1$ for $\fset_1$. In either case, the size of the grid is $(z\times z)$, where:

\[
z=O(\rho_{i-1}\log\log N+L_{i-1})=O\left(\frac{\rho_{i-1}\log N(\log\log N)^3}{\eps}\right)=O\left(\rho_{i-1}\log^3 N\right )\leq O(\rho_{i-1}^2),
\]

since we have assumed that $\eps>1/\log N$, and $\rho_{i-1}\geq \rho_h\geq \eta^{320}\geq \log^{320}N$.

 We construct the set $\tcset_i(\ff{i-1})$ as follows. For every valid set $\fset'$ of fake rectangles, with $S(\fset')\subseteq S(\fset_{i-1})$, and $|\fset'|\leq L_i$, such that the rectangles in $\fset'$ are aligned with the grid $G_{i-1}$, we add a level-$i$ set $\ff i=(\fset_1,G_1,\fset_2,G_2,\ldots,G_{i-2},\fset_{i-1},G_{i-1},\fset')$ to $\tcset_i(\ff{i-1})$. Notice that $(\fset_1,G_1,\ldots,\fset_{i-1},G_{i-1},\fset_{i-1})\in \tcset_i(\ff{i-1})$, and:
 
 \[\begin{split}
 |\tcset_i(\ff{i-1})|&=z^{O(L_i)}\leq \rho_i^{O(\exp(i)(\log\log N)^3/\eps)}\\
 &=(N^{1/\exp(i)})^{O(\exp(i)(\log\log N)^3/\eps)}\\
 &=N^{O((\log\log N)^3/\eps)}.
 \end{split}\]

We set $\tcset_i=\bigcup_{\ff{i-1}\in \tcset_{i-1}}\tcset_i(\ff{i-1})$, and we let $\cset_i$ contain all sets $\fset$ of fake rectangles, such that for some $\ff i=(\fset_1,G_1,\ldots,G_{i-1},\fset_i)\in \tcset_i$, $\fset=\fset_i$.
Finally, we set $\cset=\bigcup_{i=1}^{h}\cset_i$.

We say that $\fset\in \cset$ is a basic set of fake rectangles, and add it to $\cset'$, iff $\aset(\fset)\leq \tau^*$. We can use the algorithm $\aset$ to determine, for each set $\fset\in \cset$, whether $\fset$ is a basic set. If $\fset$ is a basic set, then $|\opt_{\fset}|\leq c_A\cdot \aset(\fset)\cdot \log\log N\leq c_A\tau^*\log\log N=(\log N)^{O(1)}$, and we can use the algorithm from Section~\ref{sec: running time exp sqrt log n} to compute a $(1-\eps/2)$-approximate solution to instance $\rset(\fset)$ in time $n^{O(\sqrt{\log |\opt_{\fset}|}/\eps^3)}=n^{O(\log\log N/\eps^3)}$. We use this algorithm as algorithm $\aset''$ for the initialization step of the dynamic program.
This completes the definition of the family $\cset$ of important sets of fake rectangles, the family $\cset'\subseteq \cset$ of basic sets of fake rectangles, and the algorithms $\aset'$ and $\aset''$. We then use the dynamic programming-based algorithm from Section~\ref{sec: alg outline} to solve the problem. In order to analyze the running time of the algorithm, we first need to bound $|\cset|$. As we showed above, 

\[|\tcset_1|\leq O\left(n^{L_1}\right )=n^{O((\log\log N)^3/\eps)},\]

and for all $1<i\leq h$,

\[|\tcset_i|=\sum_{\ff{i-1}\in \tcset_{i-1}}|\tcset(\ff{i-1})|\leq |\tcset_{i-1}|\cdot N^{O((\log\log N)^3/\eps)}.\]

Since $h<\log\log N$, it is immediate to verify that $|\cset|=O(|\tcset_{h}|)\leq n^{O((\log\log N)^4/\eps)}$.
The initialization step then takes time $|\cset|\cdot n^{O(\log\log N/\eps^3)}=n^{O((\log\log N)^4/\eps^3)}$, and the remainder of the algorithm runs in time $|\cset|^{O(1)}$. Therefore, the total running time of the algorithm is bounded by $n^{O((\log\log N)^4/\eps^3)}$.
The following simple observation will be useful for us later.

\begin{observation}\label{obs: math}
Suppose we are given two valid sets  $\fset,\fset'$ of fake rectangles, such that for some $1\leq i\leq h$, $|\opt_{\fset}|\geq |\opt_{\fset'}|/\rho_{i}^{1/20}$. Then $\aset(\fset)\geq \aset(\fset')/\rho_{i}^{1/10}$.
\end{observation}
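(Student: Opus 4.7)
The plan is to chain together three elementary inequalities: the guarantee of the approximation algorithm $\aset$ on $\fset$, the trivial upper bound $\aset(\fset')\leq |\opt_{\fset'}|$, and the given hypothesis relating $|\opt_{\fset}|$ to $|\opt_{\fset'}|$. The only nontrivial ingredient is to verify that the resulting slack is absorbed by the gap between $\rho_i^{1/20}$ and $\rho_i^{1/10}$, which comes down to showing $\rho_i^{1/20}$ dominates the $c_A\log\log N$ factor lost by $\aset$.

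First, I would apply Corollary~\ref{cor: an log log opt approx} to bound
\[
\aset(\fset)\;\geq\;\frac{|\opt_{\fset}|}{c_A\log\log|\opt_{\fset}|}\;\geq\;\frac{|\opt_{\fset}|}{c_A\log\log N}.
\]
Using the hypothesis $|\opt_{\fset}|\geq |\opt_{\fset'}|/\rho_i^{1/20}$ and the trivial bound $|\opt_{\fset'}|\geq \aset(\fset')$, this yields
\[
\aset(\fset)\;\geq\;\frac{|\opt_{\fset'}|}{c_A\log\log N \cdot \rho_i^{1/20}}\;\geq\;\frac{\aset(\fset')}{c_A\log\log N\cdot \rho_i^{1/20}}.
\]

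To finish, I need to verify $c_A\log\log N\cdot \rho_i^{1/20}\leq \rho_i^{1/10}$, i.e.\ $c_A\log\log N\leq \rho_i^{1/20}$. This follows from the parameter settings: since $i\leq h$, we have $\rho_i\geq \rho_h>\eta^{320}\geq (\log N)^{320}$ (recall $\eta\geq \log N$ was established in the parameter setting). Hence $\rho_i^{1/20}\geq (\log N)^{16}$, which easily exceeds $c_A\log\log N$ under the paper's standing assumption that $N$ is large enough (the same sort of assumption is used earlier, e.g.\ to deduce $\rho_{i-2}^{1/10}\geq c_A\log\log N$ in the definition of $\tcset_i(\ff{i-1})$). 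Combining this with the previous display gives $\aset(\fset)\geq \aset(\fset')/\rho_i^{1/10}$, as required.

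I do not anticipate any real obstacle; the statement is a quantitative bookkeeping lemma that records exactly the kind of slack already exploited in the grid-construction step of $\tcset_i(\ff{i-1})$. The only thing to be slightly careful about is using the $\log\log|\opt|$ (rather than $\log\log n$) guarantee of Corollary~\ref{cor: an log log opt approx}, and bounding $\log\log|\opt_{\fset}|\leq \log\log N$, which is immediate since $|\opt_{\fset}|\leq |\opt|\leq N$.
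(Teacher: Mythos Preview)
Your proof is correct and follows essentially the same chain of inequalities as the paper's own proof: apply the approximation guarantee $\aset(\fset)\geq |\opt_{\fset}|/(c_A\log\log N)$, use the hypothesis and the trivial bound $|\opt_{\fset'}|\geq \aset(\fset')$, and absorb the $c_A\log\log N$ factor into the gap between $\rho_i^{1/20}$ and $\rho_i^{1/10}$ via $\rho_i\geq \rho_h\geq \eta^{320}\geq \log N$. Your justification of the last step is in fact slightly more explicit than the paper's.
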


\begin{proof}

\[\aset(\fset)\geq \frac{|\opt_{\fset}|}{c_A\log\log N}\geq \frac{|\opt_{\fset'}|}{\rho_{i}^{1/20}c_A\log\log N}\geq \frac{\aset(\fset')}{\rho_{i}^{1/20}c_A\log\log N}\geq \frac{\aset(\fset')}{\rho_{i}^{1/10}},\]

since $\rho_{i}\geq\rho_{h}\geq \eta^{320}\geq  \log N$, and if $N$ is large enough, we can assume that $\rho_{i}^{1/20}\geq c_A\log\log N$.

\end{proof}
\label{--------------------------------------------subsec: analysis-----------------------------------------------}

\subsection{Analysis}\label{sec: analysis}
Since the algorithm is guaranteed to produce a feasible solution to the MISR instance, it now remains to show that the value of the solution is within a factor of $(1-\eps)$ of the optimal one. As before, we do so by constructing a partitioning tree. The construction of the partitioning tree is recursive. We first construct partitioning trees for level-$(h-1)$ instances. For each $1\leq i<h-1$, we then show how to construct level-$i$ partitioning tree for each level-$i$ instance $\ff i\in \tcset_i$, by combining a number of level-$(i+1)$ partitioning trees. We will then use a number of level-$1$ partitioning trees in order to construct our final partitioning tree. 
We now define level-$i$ partitioning trees.

Fix any $1\leq i<h$, and let $\ff i\in \tcset_i$ be any level-$i$ set of fake rectangles, where $\ff i=(\fset_1,G_1,\fset_2,\ldots,G_{i-1},\fset_i)$, such that, if $i>1$, then $\aset(\fset_i)\geq \aset(\fset_{i-1})/\rho_{i-1}^{1/10}$. Let $G_i$ be the $\rho_i$-accurate grid that we have computed for $\fset_i$ when constructing $\tcset_{i+1}(\ff i)$.
A level-$i$ partitioning tree $\tset(\ff i)$ for $\ffi$ is a valid partitioning tree for $\fset_i$ (that is, the root of the tree is labeled by $\fset_i$), such that for every leaf vertex $v$ of $\tset(\ff i)$, either (i) $\fset(v)\in \cset'$, or (ii) $\fset(v)$ is aligned with $G_i$, $|\fset(v)|\leq L_i$, and $|\opt_{\fset_i}|/\rho_i^{1/40}\leq |\opt_{\fset(v)}|\leq |\opt_{\fset_i}|/\rho_i^{1/160}$.

We define $\Lambda_{h-1}=\frac{22\tc\log \eta}{L_{h-1}}$, and for $1\leq i<h-1$, we let $\Lambda_i=\left(2\Lambda_{i+1}+\frac{12\tc}{L_i}\right )$.
Following is the main theorem in our analysis.

\begin{theorem}\label{thm: analysis of main algorithm}
For every $1\leq i<h$, for every level-$i$ set $\ff i=(\fset_1,G_1,\fset_2,\ldots,G_{i-1},\fset_i)\in \tcset_i$ of fake rectangles, such that, if $i>1$, then $\aset(\fset_i)\geq \aset(\fset_{i-1})/\rho_{i-1}^{1/10}$, there is a level-$i$ partitioning tree $\tset(\ff i)$ for $\ff i$, whose loss is bounded by $\Lambda_i\cdot |\opt_{\fset_i}|$.
\end{theorem}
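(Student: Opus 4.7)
The plan is to prove the theorem by \emph{downward induction} on $i$, starting from $i=h-1$ as the base case and descending to $i=1$. For both cases we may assume $\fset_i \notin \cset'$, since otherwise the tree consisting of the single root vertex already satisfies condition~(i) with zero loss. Under this assumption, $\aset(\fset_i) \geq \tau^* = \rho_{h-1}^3$, which by Equation~\eqref{eq: opt bound} together with Equation~\eqref{eq: rho bound} will guarantee that the optimal value and grid accuracy are large enough for every application of Corollary~\ref{corollary: main partition with grid} and Theorem~\ref{thm: cleanup tree} encountered below.

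For the base case $i=h-1$, I would construct $\tset(\ff{h-1})$ directly, without recursion. Starting from a root labeled $\fset_{h-1}$, I would repeatedly apply Corollary~\ref{corollary: main partition with grid} with the grid $G_{h-1}$ and parameter $L^*=L_{h-1}$, following exactly the pattern used in the first stage of Theorem~\ref{thm: inner tree}: at each vertex $v$ with $|\opt_{\fset(v)}|>|\opt_{\fset_{h-1}}|/\rho_{h-1}^{1/160}$, I split $\fset(v)$ into three pieces, replace the smallest piece by $\set{B}$ whenever its optimum is below $|\opt_{\fset(v)}|/L_{h-1}$, and recurse on the other two. Since every child of the corollary satisfies $|\fset_j|\leq 3L_{h-1}/4<L_{h-1}$, the boundary-complexity constraint $|\fset(v)|\leq L_{h-1}$ is automatic, so no cleanup is needed. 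The depth of the resulting tree is $O(\log\rho_{h-1}^{1/160})=O(\log\eta)$; at each depth the losses from Corollary~\ref{corollary: main partition with grid} and from the discard step are each bounded by $\tc|\opt_{\fset_{h-1}}|/L_{h-1}$, yielding total loss $\Lambda_{h-1}\cdot|\opt_{\fset_{h-1}}| = \frac{22\tc\log\eta}{L_{h-1}}|\opt_{\fset_{h-1}}|$ as required.

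For the inductive step $i<h-1$, I would construct $\tset(\ff i)$ by chaining two rounds of level-$(i+1)$ trees followed by a cleanup stage, exactly mirroring the recursion $\Lambda_i=2\Lambda_{i+1}+12\tc/L_i$. First, observe that $\ff{i+1}^{(1)}:=(\fset_1,G_1,\ldots,G_i,\fset_i)$ is a valid element of $\tcset_{i+1}$: $\fset_i$ is aligned with $G_i$ by construction of $G_i$, $|\fset_i|\leq L_i\leq L_{i+1}$, and the required $\aset$ inequality at level $i+1$ is trivially satisfied. Applying the inductive hypothesis to $\ff{i+1}^{(1)}$ gives a tree $\tset_1$ with loss $\leq\Lambda_{i+1}|\opt_{\fset_i}|$ whose non-basic leaves $v_1$ have $|\opt_{\fset(v_1)}|\in[|\opt_{\fset_i}|/\rho_i^{1/80},|\opt_{\fset_i}|/\rho_i^{1/320}]$. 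For each such $v_1$, I form $\ff{i+1}^{(2)}(v_1):=(\fset_1,G_1,\ldots,G_i,\fset(v_1))$; its validity follows from $\fset(v_1)$ being aligned with $G_i$ with $|\fset(v_1)|\leq L_{i+1}$, while the condition $\aset(\fset(v_1))\geq\aset(\fset_i)/\rho_i^{1/10}$ is obtained from the lower bound on $|\opt_{\fset(v_1)}|$ via Observation~\ref{obs: math}. Applying the inductive hypothesis a second time contributes an additional loss $\leq\Lambda_{i+1}\sum_{v_1}|\opt_{\fset(v_1)}|\leq\Lambda_{i+1}|\opt_{\fset_i}|$ (using Observation~\ref{obs: partitioning tree - disjointness of non-descendants}), and brings the leaf optima down into the range $[|\opt_{\fset_i}|/\rho_i^{1/40},|\opt_{\fset_i}|/\rho_i^{1/160}]$. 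The non-basic leaves $v_2$ of the combined tree still have boundary complexity up to $L_{i+1}$, so I finally attach, at each such $v_2$, an $(L_{i+1},L_i)$-cleanup tree from Theorem~\ref{thm: cleanup tree} (with $\delta=\lceil\log_{4/3}2\rceil=3$), whose preconditions follow from Equations~\eqref{eq: rho bound}--\eqref{eq: opt bound} applied to $\fset(v_2)$ together with $G_i$ being $\rho_i^{39/40}$-accurate for $\fset(v_2)$. Each cleanup adds loss at most $12\tc|\opt_{\fset(v_2)}|/L_i$, summing to at most $12\tc|\opt_{\fset_i}|/L_i$, which yields exactly the recursive bound $\Lambda_i|\opt_{\fset_i}|$.

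The main obstacle, and the part requiring the most care, is verifying that all conditions of the invoked partition/cleanup theorems remain satisfied throughout the recursion: specifically, that the inherited grids stay sufficiently accurate after opt-values contract (using Observation~\ref{obs: rho accurate for big subinstances} and the $\rho_i^{1/40}$ lower bounds), that the boundary-complexity parameters satisfy the preconditions of Corollary~\ref{corollary: main partition with grid} and Theorem~\ref{thm: cleanup tree}, and that cleanup leaves still meet the strict opt-range in condition~(ii) of the level-$i$ tree definition; the margin between $\rho_i^{1/40}$ and $\rho_i^{1/160}$, which is much larger than $L_{i+1}^{\delta+2}$ by Equation~\eqref{eq: rho bound}, is what absorbs the $L_{i+1}^{\delta+2}$ loss factor introduced by the cleanup, and small leftover cases must be shown to fall into $\cset'$ by virtue of $\aset(\fset(v))\leq\tau^*$.
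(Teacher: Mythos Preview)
Your overall architecture matches the paper's proof (downward induction on $i$, direct construction at $i=h-1$, chaining level-$(i+1)$ trees plus cleanup for $i<h-1$), but there is a genuine gap in the inductive step, and a smaller constant-factor issue in the base case.

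\textbf{The inductive step does not produce a valid level-$i$ tree.} In your scheme you apply the level-$(i+1)$ tree \emph{exactly twice} and then clean up. After the two rounds your leaves $v_2$ satisfy
\[
|\opt_{\fset_i}|/\rho_i^{1/40}\ \le\ |\opt_{\fset(v_2)}|\ \le\ |\opt_{\fset_i}|/\rho_i^{1/160},
\]
with the \emph{lower} end tight (it comes from composing the two $\rho_{i+1}^{1/40}$ lower bounds). You then attach an $(L_{i+1},L_i)$-cleanup tree, whose leaves may drop by a further factor $L_{i+1}^{\delta+2}$; the final leaves can therefore have $|\opt|$ as small as $|\opt_{\fset_i}|/(\rho_i^{1/40}L_{i+1}^{\delta+2})$, which is strictly below the required threshold $|\opt_{\fset_i}|/\rho_i^{1/40}$ in the definition of a level-$i$ partitioning tree. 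The ``margin between $\rho_i^{1/40}$ and $\rho_i^{1/160}$'' you invoke is on the wrong side: cleanup pushes optima \emph{down}, so you need slack below your pre-cleanup lower bound, not above it. The paper avoids this by running the first stage as a \emph{while-loop} that only expands a leaf $v$ when $|\opt_{\fset(v)}|\ge |\opt_{\fset_i}|/\rho_i^{1/160}$; since one application of the level-$(i+1)$ tree drops the optimum by at most $\rho_{i+1}^{1/40}=\rho_i^{1/80}$, every stage-1 leaf then satisfies $|\opt|\ge |\opt_{\fset_i}|/\rho_i^{3/160}$. This leaves exactly a $\rho_i^{1/160}$ margin, which by Equation~(\ref{eq: rho bound}) absorbs the $L_{i+1}^{\delta+2}$ cleanup factor and lands the final leaves at $\ge |\opt_{\fset_i}|/\rho_i^{1/40}$. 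Your ``always two rounds'' version uses up the entire budget before cleanup.

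\textbf{Base case constant.} Your idea of applying Corollary~\ref{corollary: main partition with grid} with $L^*=L_{h-1}$ (so no cleanup is needed) is structurally fine and does give leaves with $|\fset(v)|\le L_{h-1}$ in the correct opt-range. However, the per-level loss is then $2\tc/L_{h-1}$ over about $\log_{4/3}\rho_{h-1}^{1/160}\le 20\log\eta$ levels, yielding roughly $40\tc\log\eta/L_{h-1}$, not the claimed $\Lambda_{h-1}=22\tc\log\eta/L_{h-1}$. The paper obtains the stated constant by applying the corollary with $L^*=L_h=2L_{h-1}$ (halving the per-level loss) and then paying a single cleanup cost of $12\tc/L_{h-1}$. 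Your variant proves the theorem only up to a constant factor in $\Lambda_{h-1}$, which is harmless for the final result but does not match the statement as written.
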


The majority of the remainder of this section is dedicated to the proof of Theorem~\ref{thm: analysis of main algorithm}. The proof is by induction on $i$, starting from $i=h-1$.

\subsubsection{Induction Basis: $i=h-1$.}

We assume that we are given a level-$(h-1)$ set $\ff {h-1}=(\fset_1,G_1,\fset_2,\ldots,G_{h-2},\fset_{h-1})$ of fake rectangles with $\aset(\fset_{h-1})\geq \aset(\fset_{h-2})/\rho_{h-2}^{1/10}$. Assume first that $|\opt_{\fset_{h-1}}|\leq \rho_{h-1}^3=\tau^*$. Then $\aset(\fset_{h-1})\leq |\opt_{\fset_{h-1}}|\leq \tau^*$, and $\fset_{h-1}\in \cset'$. We then let tree $\tset(\ff {h-1})$ contain a single vertex $v$ with $\fset(v)=\fset_{h-1}$. This is a valid level-$(h-1)$ partitioning tree for $\ff {h-1}$, and its loss is $0$.

We now assume that $|\opt_{\fset_{h-1}}|>\rho_{h-1}^3$. The construction of the tree $\tset(\ff {h-1})$ is very similar to the construction of the tree $\tset(\fset)$ in the proof of Theorem~\ref{thm: inner tree}, except that we use different parameters. 
Let $G_{h-1}$ be the $\rho_{h-1}$-accurate grid that we constructed for $\fset_{h-1}$ when computing $\tcset_h(\ff {h-1})$.
For convenience, we denote grid $G_{h-1}$ by $G$, and parameter $\rho_{h-1}$ by $\rho$.
We ensure that for every vertex $v$ of the tree, $\fset(v)$ is aligned with $G_{h-1}$ and $|\fset(v)|\leq L_h$, thus ensuring that $(\fset_1,G_1,\ldots,\fset_{h-1},G_{h-1},\fset(v))\in \tcset_h(\ff {h-1})$, and in particular $\fset(v)\in \cset$.
Our initial tree $\tset(\ff {h-1})$ consists of a single vertex $v$, with $\fset(v)=\fset_{h-1}$. From the definition of $G$, $\fset_{h-1}$ is aligned with $G$, and $|\fset_{h-1}|\leq L_{h-1}\leq L_h$.

The construction of the tree $\tset(\ff {h-1})$ consists of two stages. The first stage is executed as long as there is any leaf vertex $v$ in $\tset(\ff {h-1})$ with $|\opt_{\fset(v)}|\geq\frac{|\opt_{\fset_{h-1}}|}{\rho^{1/160}}$. We ensure that throughout the execution of the first stage, the following invariant holds: for every vertex $v$ of the tree, if $S(\fset(v))\neq \emptyset$, then $|\opt_{\fset(v)}|\geq \frac{|\opt_{\fset_{h-1}}|}{L_h\cdot \rho^{1/160}}$. 
We use the following easy observation.

\begin{observation}\label{obs: bounds stage 1}
Let $\fset$ be any valid set of fake rectangles with $S(\fset)\subseteq S(\fset_{h-1})$, such that $\fset$ is aligned with $G$, and $|\opt_{\fset}|\geq\frac{|\opt_{\fset_{h-1}}|}{L_h\cdot \rho^{1/160}}$. Then $|\opt_{\fset}|\geq 512L_h^{\delta+2}$, and $G$ is a $\rho'$-accurate grid for $\fset$, for some $\rho'>32L_h^{\delta+2}$.
\end{observation}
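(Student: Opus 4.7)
The plan is to prove each of the two claims by straightforward algebraic manipulation, using (i) the parameter setting laid out in Section~\ref{sec: n to polyloglog opt running time} and (ii) the lower bound $|\opt_{\fset_{h-1}}|>\rho_{h-1}^3=\rho^3$ that is active in the case we are in (recall we already disposed of the case $|\opt_{\fset_{h-1}}|\leq \rho_{h-1}^3=\tau^*$ before stating the observation). Both claims will follow from the fact that $\rho=\rho_{h-1}$ is enormous compared to any polynomial in $L_h$, because we chose $\rho_{h-1}\geq \eta^{320}\geq (32L_{h^*}^{2\delta+4})^{320}\geq (32L_h^{2\delta+4})^{320}$.

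For the first claim, I would chain the two lower bounds on $|\opt_{\fset}|$: by hypothesis and the case assumption,
\[
|\opt_{\fset}|\;\geq\;\frac{|\opt_{\fset_{h-1}}|}{L_h\cdot \rho^{1/160}}\;\geq\;\frac{\rho^{3}}{L_h\cdot \rho^{1/160}}\;=\;\frac{\rho^{3-1/160}}{L_h}.
\]
It then suffices to check that $\rho^{3-1/160}/L_h\geq 512 L_h^{\delta+2}$, i.e.\ $\rho^{3-1/160}\geq 512 L_h^{\delta+3}$, which is immediate from $\rho\geq (32L_h^{2\delta+4})^{320}$ (even $\rho^{1/320}$ already dominates the right-hand side).

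For the second claim, the natural move is to invoke Observation~\ref{obs: rho accurate for big subinstances} with $\alpha=1/(L_h\rho^{1/160})$: since $S(\fset)\subseteq S(\fset_{h-1})$, $\fset$ is aligned with $G$, and $|\opt_{\fset}|\geq \alpha|\opt_{\fset_{h-1}}|$, the grid $G$ is $\alpha\rho$-accurate for $\fset$. Thus it remains to verify
\[
\rho'\;:=\;\frac{\rho}{L_h\cdot \rho^{1/160}}\;=\;\frac{\rho^{159/160}}{L_h}\;>\;32 L_h^{\delta+2},
\]
which again reduces to $\rho^{159/160}>32 L_h^{\delta+3}$, and this is essentially the same calculation as above using the lower bound $\rho\geq (32L_h^{2\delta+4})^{320}$ from inequality~(\ref{eq: rho bound}).

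The ``hard part,'' such as it is, is purely organizational: making sure the two different lower bounds we use on $\rho$---one coming from our case assumption $|\opt_{\fset_{h-1}}|>\rho^3$ and the other coming from the parameter setting $\rho\geq \eta^{320}$---are invoked correctly, and that we are allowed to apply Observation~\ref{obs: rho accurate for big subinstances} (which needs precisely the hypotheses listed: $\fset$ aligned with $G$, $S(\fset)\subseteq S(\fset_{h-1})$, and a ratio bound on $|\opt_{\fset}|/|\opt_{\fset_{h-1}}|$, all of which are given). There is no genuine mathematical obstacle; the observation is a bookkeeping step ensuring that our parameters were chosen generously enough so that after we descend one level and lose a factor of $L_h\rho^{1/160}$ in the optimum, we still meet the hypotheses required to apply Corollary~\ref{corollary: main partition with grid} (or the cleanup-tree theorem) at the next step.
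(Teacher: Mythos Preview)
Your proposal is correct and follows essentially the same approach as the paper: both arguments use the case assumption $|\opt_{\fset_{h-1}}|>\rho^3$ together with inequality~(\ref{eq: rho bound}) to handle the first claim, and both invoke Observation~\ref{obs: rho accurate for big subinstances} with $\alpha=1/(L_h\rho^{1/160})$ followed by inequality~(\ref{eq: rho bound}) for the second. The only cosmetic difference is that the paper routes the first claim through the intermediate bound $|\opt_{\fset}|\geq\rho$ and then appeals to inequality~(\ref{eq: opt bound}), whereas you do the direct computation $\rho^{3-1/160}/L_h\geq 512L_h^{\delta+2}$; these are equivalent.
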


\begin{proof}
Since we have assumed that $|\opt_{\fset_{h-1}}|\geq \tau^*=\rho^3$, from Equation~(\ref{eq: rho bound}), $|\opt_{\fset}|\geq \rho$, and from Equation~(\ref{eq: opt bound}), $|\opt_{\fset}|\geq 512L_h^{\delta+2}$.

For the second assertion, let $\rho'=\frac{\rho}{L_h\cdot \rho^{1/160}}$. From Observation~\ref{obs: rho accurate for big subinstances}, grid $G$ remains $\rho'$-accurate for $\fset$. Since from Equation~(\ref{eq: rho bound}), $\rho\geq (32L_h^{2\delta+4})^{320}$, we get that $\rho'> 32L_h^{\delta+2}$. 
\end{proof}

In every iteration of the first stage, we consider some leaf vertex $v$ with $|\opt_{\fset(v)}|\geq \frac{|\opt_{\fset_{h-1}}|}{\rho^{1/160}}$. From Observation~\ref{obs: bounds stage 1}, $|\opt_{\fset(v)}|\geq 512L_h^{\delta+2}$, and $G$ is a $\rho'$-accurate grid for $\fset(v)$, for some $\rho'> 32L_h^{\delta+2}$.
 Therefore, we can apply Corollary~\ref{corollary: main partition with grid} to obtain a valid decomposition triple $(\fset^1,\fset^2,\fset^3)$ for $\fset(v)$, where for each $1\leq j\leq 3$, $|\fset^j|\leq L_h$ and $\fset^j$ is aligned with $G$. Then for all $1\leq j\leq 3$, $(\fset_1,G_1,\ldots,\fset_{h-1},G_{h-1},\fset^j)\in \tcset_h(\ff {h-1})$, and in particular $\fset^j\in \cset$.

Assume without loss of generality that $|\opt_{\fset^1}|\leq |\opt_{\fset^2}|\leq |\opt_{\fset^3}|$. Notice that, since $|\opt_{\fset^3}|\leq 3|\opt_{\fset(v)}|/4$, and $\sum_{j=1}^3|\opt_{\fset^j}|\geq |\opt_{\fset(v)}|\left(1-\frac{\tilde c}{L_h}\right)$, we are guaranteed that 
$|\opt_{\fset^1}|+|\opt_{\fset^2}|\geq |\opt_{\fset(v)}|/8$, and so $|\opt_{\fset^2}|\geq |\opt_{\fset(v)}|/16\geq |\opt_{\fset_{h-1}}|/(L_h\cdot \rho^{1/160})$. We add three new vertices $v_1,v_2,v_3$ to the tree as the children of $v$, and we set $\fset(v_2)=\fset^2$ and $\fset(v_3)=\fset^3$. 

If $|\opt_{\fset^1}|\geq \frac{|\opt_{\fset(v)}|}{L_h}$, then we set $\fset(v_1)=\fset_1$, and otherwise we set $\fset(v_1)=\set{B}$. It is easy to see that our invariant continues to hold. The loss at the vertex $v$ is bounded by: $\lambda(v)=|\opt_{\fset(v)}|-\sum_{j=1}^3|\opt_{\fset(v_j)}|\leq \frac{2\tc |\opt_{\fset(v)}|}{L_h}$. This completes the description of the first stage. Let $\lset'$ be the set of all leaf vertices of $\tset(\ff{h-1})$ at the end of the first stage, and let $\iset'$ be the set of all its inner vertices. We now bound the total loss $\sum_{v\in \iset'}\lambda(v)$, as follows. Notice that the longest root-to-leaf path in $\tset(\ff {h-1})$ has length at most $\ell=\log_{4/3}\rho_{h-1}^{1/160}$, as the values $|\opt_{\fset(v)}|$ decrease by a factor of at least $3/4$ along the path. 
Recall that $h$ is the largest integer with $\rho_{h}\geq \eta^{320}$, so $\rho_{h}\leq \eta^{640}$, and  $\rho_{h-1}=\rho_h^2\leq \eta^{1280}$. Therefore, $\ell=\log_{4/3}\rho^{1/160}\leq \log_{4/3}\eta^{1280/160}\leq 20\log \eta$.

 We partition the vertices of $\iset'$ into $\ell$ classes, where class $U_j$, for $1\leq j\leq \ell$ contains all vertices $v$, such that the unique path from $v$ to the root of the tree contains exactly $j$ vertices. As before, if two vertices $v,v'\in U_j$, then neither of them is a descendant of the other, and so $\sum_{v\in U_j}|\opt_{\fset(v)}|\leq |\opt_{\fset_{h-1}}|$. We can now bound the total loss of all vertices in class $j$ by:

\[\sum_{v\in U_j}\lambda(v)\leq \sum_{v\in U_j}\frac{2\tc |\opt_{\fset(v)}|}{L_h}\leq \frac{2\tc |\opt_{\fset_{h-1}}|}{L_h}.\]

Overall, $\sum_{v\in \iset'}\lambda(v)\leq \sum_{j=1}^{\ell}\sum_{v\in U_j}\lambda(v)\leq \frac{2\tc\ell |\opt_{\fset_{h-1}}|}{L_h}\leq \frac{40\tc\log \eta}{L_h}\cdot |\opt_{\fset_{h-1}}|$.

We now proceed to describe the second stage of the algorithm. Consider some leaf vertex $v\in \lset'$, such that $S(\fset(v))\neq \emptyset$. Our invariant guarantees that $|\opt_{\fset(v)}|\geq \frac{|\opt_{\fset_{h-1}}|}{L_h\cdot \rho^{1/160}}$, and so from Observation~\ref{obs: bounds stage 1},  $|\opt_{\fset(v)}|\geq 512L_h^{\delta+2}$, and $G$ is a $\rho'$-accurate grid for $\fset(v)$, for some $\rho'> 32L_h^{\delta+2}$.
 Therefore, we can construct an $L_h$--$L_{h-1}$-cleanup tree $\tset'(v)$ for $\fset(v)$ and $G$, using Theorem~\ref{thm: cleanup tree}. From the definition of the cleanup tree, it is easy to verify that for every vertex $v'\in V(\tset'(v))$, $\fset(v')$ is aligned with $G$, $|\fset(v')|\leq L_h$, and $S(\fset(v'))\subseteq S(\fset_{h-1})$,  so $(\fset_1,G_1,\ldots,\fset_{h-1},G_{h-1},\fset(v'))\in \tcset_h(\ff {h-1})$ and $\fset(v')\in \cset$. We add tree $\tset'(v)$ to $\tset(\ff {h-1})$, by identifying its root with vertex $v$. Once we add a clean-up tree $\tset'(v)$ to each vertex $v\in \lset'$ with $\fset(v)\neq \emptyset$, we obtain the final tree $\tset(\ff {h-1})$.

For each vertex $v\in \lset'$, let $\lset(v)$ and $\iset(v)$ denote the sets of all leaf and inner vertices of the tree $\tset'(v)$, respectively. From the definition of the cleanup trees, for every leaf $v'\in \lset(v)$ with $S(\fset(v'))\neq\emptyset$, $\fset(v)$ is aligned with $G$, and $|\fset(v)|\leq L_{h-1}$. 
We are  also guaranteed that:

\[|\opt_{\fset(v')}|\geq \frac{|\opt_{\fset(v)}|}{L_h^{\delta+2}}\geq \frac{|\opt_{\fset_{h-1}|}}{\rho_{h-1}^{1/160}L_h^{\delta+3}}\geq \frac{|\opt_{\fset_{h-1}}|}{\rho_{h-1}^{1/80}},\]

from Equation~(\ref{eq: rho bound}). Therefore, we obtain a valid level-$(h-1)$ tree $\tset(\ff {h-1})$  overall.
 Let $\lset$ be the set of all leaf vertices of $\tset(\ff {h-1})$. Then the total loss of the tree $\tset(\ff {h-1})$ is bounded by:

\[\begin{split} 
\Lambda(\tset(\ff {h-1}))=|\opt_{\fset_{h-1}}|-\sum_{v\in \lset}|\opt_{\fset(v)}|&=\sum_{v\in \iset'}\lambda(v)+\sum_{v\in \lset'}\left(|\opt_{\fset(v)}|-\sum_{v'\in \lset(v)}|\opt_{\fset(v')}|\right )\\
&\leq \frac{40\tc\log \eta}{L_h}\cdot |\opt_{\fset_{h-1}}|+\sum_{v\in \lset'}\frac{12\tilde c}{L_{h-1}}\cdot |\opt_{\fset(v)}|\\
&\leq  \frac{40\tc\log \eta}{L_h}\cdot |\opt_{\fset_{h-1}}|+\frac{12\tilde c}{L_{h-1}}\cdot |\opt_{\fset_{h-1}}|\\
&\leq \frac{22\tc\log \eta}{L_{h-1}}\cdot |\opt_{\fset_{h-1}}|=\Lambda_{h-1} |\opt_{\fset_{h-1}}|.\end{split}\]

(We have used the fact that for all $v,v'\in \lset'$, neither vertex is a descendant of the other, so $\sum_{v\in \lset'}|\opt_{\fset(v)}|\leq |\opt_{\fset_{h-1}}|$ from Observation~\ref{obs: partitioning tree - disjointness of non-descendants}. We also used the fact that $L_h=2L_{h-1}$.)
\subsubsection{Induction Step.} 
We now fix some $1\leq i<h-1$, and we assume that the theorem holds for all $i'>i$. Consider some level-$i$ set $\ff i=(\fset_1,G_1,\ldots, G_{i-1},\fset_{i})$, so that, if $i>1$, then $\aset(\fset_i)\geq \aset(\fset_{i-1})/\rho_{i-1}^{1/10}$. Let $G_i$ be the $\rho_i$-accurate grid that we have computed for $\ff i$, when defining $\tcset_{i+1}(\ff i)$. From our assumption, $\tcset_{i+1}(\ff i)\neq \emptyset$. 
%
%
For convenience, we will denote the tree $\tset(\ff i)$ by $\tset$. 

The algorithm for constructing the partitioning tree again consists of two stages. We start with the tree $\tset$ containing a single vertex $v$ with $\fset(v)=\fset_i$.  The first stage continues as long as 
there is some leaf $v$ in the tree $\tset$ with $\fset(v)\not\in \cset'$, and $|\opt_{\fset(v)}|\geq \frac{|\opt_{\fset_i}|}{\rho_i^{1/160}}$. Throughout this stage, we ensure the following invariants. Consider any leaf vertex $v$ of $\tset$, and denote $\fset(v)$ by $\fset$. Then either $\fset\in \cset'$, or:

\begin{properties}{I}
\item $\fset$ is aligned with $G_i$,  $S(\fset)\subseteq S(\fset_i)$, and $|\fset|\leq L_{i+1}$; and\label{prop 1}

\item $|\opt_{\fset}|\geq \frac{|\opt_{\fset_i}|}{\rho_i^{3/160}}$. \label{prop2}
\end{properties}



We need the following simple observation.

\begin{observation}\label{obs: from invariants}
Suppose we are given any set $\fset$ of fake rectangles, for which Properties~(\ref{prop 1}) and~(\ref{prop2}) hold.  Then:

\begin{enumerate}
\item If we denote $\ff{i+1}=(\fset_1,G_1,\ldots,G_{i-1},\fset_i,G_i,\fset)$, then $\ff{i+1}\in \tcset_{i+1}(\ff i)$, and in particular $\fset\in \cset$ - this is immediate from the invariants and the definition of the set $\tcset_{i+1}(\ff i)$;

\item $\aset(\fset)\geq \aset(\fset_i)/\rho_i^{1/10}$ - this follows from Observation~\ref{obs: math} and Invariant~(\ref{prop2}); 


\item If we denote $\rho'=\rho_i^{157/160}$, then $G_i$ is $\rho'$-accurate for $\fset$, and $\rho'\geq 32L_{i+1}^{\delta+2}$ - this follows from Observation~\ref{obs: rho accurate for big subinstances} together with Invariant~(\ref{prop2}), and Equation~(\ref{eq: rho bound}); and

\item if $\fset\not\in \cset'$, then $|\opt_{\fset(v)}|\geq 512L_{i+1}^{\delta+2}$ - this follows from Equation~(\ref{eq: opt bound}) and the definition of $\cset'$.
\end{enumerate}
\end{observation}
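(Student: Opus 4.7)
The plan is to verify each of the four items in turn, using the hints provided by the authors. Each item reduces to a short calculation combining the invariants with previously established facts, so the main task is simply to check the arithmetic is consistent; there is no single step that seems hard.

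For item (1), I would simply check that $\ff{i+1}$ satisfies every condition in the definition of $\tcset_{i+1}(\ff i)$. Invariant (\ref{prop 1}) gives exactly what is needed: $\fset$ is aligned with $G_i$, $S(\fset)\subseteq S(\fset_i)$, and $|\fset|\leq L_{i+1}$. The remaining fields of the tuple $\ff{i+1}$ coincide with those of $\ff i$ followed by $G_i$ and $\fset$, and $\ff i\in \tcset_i$ by hypothesis, so $\ff{i+1}\in \tcset_{i+1}(\ff i)$. Thus $\fset\in \cset_{i+1}\subseteq \cset$.

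For item (2), Invariant (\ref{prop2}) gives $|\opt_{\fset}|\ge |\opt_{\fset_i}|/\rho_i^{3/160}$, and since $3/160<1/20$ we have $|\opt_{\fset}|\ge |\opt_{\fset_i}|/\rho_i^{1/20}$. Applying Observation~\ref{obs: math} with $\fset'=\fset_i$ yields $\aset(\fset)\ge \aset(\fset_i)/\rho_i^{1/10}$. For item (3), Invariant (\ref{prop2}) says $|\opt_{\fset}|\ge \alpha |\opt_{\fset_i}|$ with $\alpha=1/\rho_i^{3/160}$, and $\fset$ is aligned with $G_i$ by Invariant (\ref{prop 1}), so Observation~\ref{obs: rho accurate for big subinstances} (applied to $\fset_i$ and its $\rho_i$-accurate grid $G_i$) gives that $G_i$ is $(\alpha \rho_i)=\rho_i^{157/160}$-accurate for $\fset$. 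For the lower bound $\rho'\ge 32L_{i+1}^{\delta+2}$, I would use Equation~(\ref{eq: rho bound}) with $j=i+1$, which gives $\rho_i\ge (32L_{i+1}^{2\delta+4})^{320}$, and therefore
\[
\rho' \;=\; \rho_i^{157/160} \;\ge\; (32L_{i+1}^{2\delta+4})^{320\cdot 157/160} \;\ge\; 32L_{i+1}^{\delta+2}.
\]

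For item (4), if $\fset\notin \cset'$ then by the definition of $\cset'$ we have $\aset(\fset)>\tau^*=\rho_{h-1}^3$, and since algorithm $\aset$ returns a feasible solution, $|\opt_{\fset}|\ge \aset(\fset)>\rho_{h-1}^3\ge \rho_{h-1}$. Equation~(\ref{eq: opt bound}) then yields $|\opt_{\fset}|\ge 512 L_j^{2\delta+4}$ for every $1\le j\le h$; taking $j=i+1$ gives $|\opt_{\fset}|\ge 512 L_{i+1}^{2\delta+4}\ge 512 L_{i+1}^{\delta+2}$, as required. None of these steps involves anything beyond plugging the invariants into the already-proved observations, so the proof should be only a few lines per item.
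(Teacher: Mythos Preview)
Your proof is correct and follows exactly the approach sketched in the paper: the observation itself only records brief pointers (to Observation~\ref{obs: math}, Observation~\ref{obs: rho accurate for big subinstances}, Equations~(\ref{eq: rho bound}) and~(\ref{eq: opt bound}), and the definition of $\cset'$), and you have faithfully expanded each pointer into the requisite one-line calculation. There is nothing to add.
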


 We maintain a set $U\subseteq V(\tset)$ of vertices, that will be used for the analysis. These are all vertices that serve as the leaves of the tree $\tset$ at any time during its construction. At the beginning, we let $\tset$ contain a single root vertex $v_r$ with $\fset(v_r)=\fset_i$, and we let $U=\set{v}$. Notice that all invariants hold for $v_r$.

The first stage is executed as follows. While there is some leaf vertex $v$ in the tree $\tset$, with  $\fset(v)\not\in \cset'$, and $|\opt_{\fset(v)}|\geq \frac{|\opt_{\fset_i}|}{\rho_i^{1/160}}$, let $v$ be any such vertex.
From the first two statements of Observation~\ref{obs: from invariants}, $\ff {i+1}=(\fset_1,G_1,\ldots,G_{i-1},\fset_i,G_i,\fset(v))$ is a valid input to Theorem~\ref{thm: analysis of main algorithm}, and so we can compute a level-$(i+1)$ partitioning tree $\tset(\ff {i+1})$, that we denote by  $\tset(v)$. Let $\lset(v)$ be the set of the leaves of this tree. We add  $\tset(v)$ to $\tset$, by identifying its root with the vertex $v$.  We also add all vertices of $\lset(v)$ to set $U$. We now verify that all invariants hold for every vertex $v'\in \lset(v)$.  Let $G_{i+1}$ be the $\rho_{i+1}$-accurate grid that we have computed for $\ff {i+1}$, when constructing $\tcset_{i+2}(\ff {i+1})$. 
From the definition of the level-$(i+1)$ tree, either (i) $\fset(v')\in \cset'$, or (ii) $\fset(v')$ is aligned with $G_{i+1}$ (and hence with $G_i$, as $G_{i+1}$ is aligned with $G_i$), $|\fset(v')|\leq L_{i+1}$, and $|\opt_{\fset(v)}|/\rho_{i+1}^{1/40}\leq |\opt_{\fset(v')}|\leq |\opt_{\fset(v)}|/\rho_{i+1}^{1/160}$. Since $\rho_i=\rho_{i+1}^2$, we get that $ |\opt_{\fset(v')}|\geq |\opt_{\fset(v)}|/\rho_i^{1/80}\geq |\opt_{\fset_i}|/\rho_i^{3/160}$.
We are also guaranteed that $S(\fset(v'))\subseteq S(\fset(v))\subseteq S(\fset_i)$ from the definition of the partitioning tree. Therefore, Invariants~(\ref{prop 1}) and (\ref{prop2}) hold for $\fset(v')$.


The first stage terminates when for every leaf $v$ of $\tset$, either $\fset(v)\in \cset'$, or $|\opt_{\fset(v)}|< \frac{|\opt_{\fset_i}|}{\rho_i^{1/160}}$. 
We partition the vertices of $U$ into classes, where class $U_j$ contains all vertices $v\in U$, such that the unique path in $\tset$ connecting $v$ to the root $v_r$ of $\tset$ contains exactly $j$ vertices of $U$. Since for every non-leaf vertex $v\in U$, for every vertex $v'\in \lset(v)$, $|\opt_{\fset(v')}|\leq |\opt_{\fset(v)}|/\rho_{i+1}^{1/160}$, and $\rho_i=\rho_{i+1}^2$, it is easy to see that the total number of non-empty sets $U_j$ is at most $3$, and only $U_1,U_2,U_3\neq \emptyset$, while $U_3$ only contains the leaves of the current tree, and $U_1$ contains a single vertex - the root of the tree.

For every pair $v,v'\in U_2$ of vertices, neither vertex is a descendant of the other, and so $\sum_{v\in U_2}|\opt_{\fset(v)}|\leq |\opt_{\fset_i}|$ from Observation~\ref{obs: partitioning tree - disjointness of non-descendants}. 
For every vertex $v\in U_1\cup U_2$, we define the modified loss of $v$ to be: $\tilde{\lambda}(v)=|\opt_{\fset(v)}|-\sum_{v'\in \lset(v)}|\opt_{\fset(v')}|$. Then for every vertex $v\in U_1\cup U_2$, $\tilde{\lambda}(v)\leq \Lambda_{i+1}|\opt_{\fset(v)}|$ from the induction hypothesis, and overall:

\[\sum_{j=1}^2\sum_{v\in U_j}\tilde{\lambda}(v)\leq \sum_{j=1}^2\sum_{v\in U_j}\Lambda_{i+1}|\opt_{\fset(v)}|\leq \sum_{j=1}^2  \Lambda_{i+1}|\opt_{\fset_i}|\leq 2\Lambda_{i+1}|\opt_{\fset_i}|.\]

For the second stage, consider any leaf vertex $v$ of $\tset$, with $\fset(v)\not\in \cset'$. Then $|\opt_{\fset(v)}|\geq |\opt_{\fset_i}|/\rho_i^{3/160}$. From Observation~\ref{obs: from invariants},  if we denote $\rho'=\rho_i^{157/160}$, then $G_i$ is $\rho'$-accurate for $\fset(v)$, $\rho'\geq 32L_{i+1}^{\delta+2}$, and $|\opt_{\fset(v)}|\geq 512 L_{i+1}^{\delta+2}$. Therefore, we can use Theorem~\ref{thm: cleanup tree}, to construct an $L_{i+1}$--$L_{i}$ cleanup tree $\tset(v)$, such that for every leaf $v'$ of tree $\tset(v)$, if $S(\fset(v'))\neq\emptyset$, then $\fset(v')$ is aligned with $G_i$, $|\fset(v')|\leq L_i$, and $|\opt_{\fset(v')}|\geq \frac{|\opt_{\fset(v)}|}{L_{i+1}^{\delta+2}}\geq \frac{|\opt_{\fset_i}|}{\rho_i^{3/160}L_{i+1}^{\delta+2}}\geq \frac{|\opt_{\fset_i}|}{\rho_i^{1/40}}$ from Equation~(\ref{eq: rho bound}). Therefore, we obtain a valid level-$i$ partitioning tree for $\ff i$. 
Let $\lset'$ be the set of all vertices $v$ that served as the leaves of the tree $\tset$ at the end of the first stage, with $\fset(v)\not\in \cset'$. For each such vertex $v$, let $\lset(v)$ be the set of the leaves in the cleanup tree $\tset(v)$. As before, we define the modified loss of vertex $v$ to be $\tilde{\lambda}(v)=|\opt_{\fset(v)}|-\sum_{v'\in \lset(v)}|\opt_{\fset(v')}|$. From Theorem~\ref{thm: cleanup tree}, $\tilde{\lambda}(v)\leq |\opt_{\fset(v)}|\cdot\frac{12\tc}{L_i}$. For every pair $v,v'$ of vertices in $\lset'$, neither vertex is a descendant of the other, and so $\sum_{v\in \lset'}|\opt_{\fset(v)}|\leq |\opt_{\fset_i}|$ from Observation~\ref{obs: partitioning tree - disjointness of non-descendants}. Therefore,

\[\sum_{v\in \lset'}\tilde{\lambda}(v)\leq \sum_{v\in \lset'}|\opt_{\fset(v)}|\cdot\frac{12\tc}{L_i}\leq \frac{12\tc}{L_i}|\opt_{\fset_i}|.\]

Overall, the total loss of tree $\tset$ is:

\[\Lambda(\tset)=|\opt_{\fset_i}|-\sum_{v\in \lset(\tset)}|\opt_{\fset(v)}|\leq \sum_{j=1}^2\sum_{v\in U_j}\tilde{\lambda}(v)+\sum_{v\in \lset'}\tilde{\lambda}(v)\leq \left (2\Lambda_{i+1}+\frac{12\tc}{L_i}\right )|\opt_{\fset_i}|=\Lambda_i|\opt_{\fset_i}|.\]

This completes the proof of Theorem~\ref{thm: analysis of main algorithm}.

We are now ready to construct our final partitioning tree $\tset$. We start with $\tset$ containing a single vertex $v_r$, with $\fset(v_r)=\emptyset$. Throughout the algorithm execution, we maintain the invariant that for every leaf vertex $v$ of the tree, $\fset(v)\in  \cset_1$.  The algorithm is executed as long as there is any leaf vertex $v\in \tset$ with $\fset(v)\not\in \cset'$. Given any such vertex $v$, we let $\ff 1=(\fset(v))$, and we let $\tset(v)$ be the level-1 tree $\tset(\ff 1)$ given by Theorem~\ref{thm: analysis of main algorithm}. We add the tree $\tset(v)$ to $\tset$, by identifying its root with the vertex $v$, and we denote by $\lset(v)$ the set of leaves of $\tset(v)$. We then continue to the next iteration. It is immediate to verify that the invariant continues to hold. The algorithm terminates, when for every leaf $v$ of $\tset$, $\fset(v)\in \cset'$. As before, we let $U$ contain all vertices of $\tset$, that served as the leaves of $\tset$ at any point of the algorithm execution. We partition the set $U$ into subsets $U_1,U_2,\ldots$, where set $U_j$ contains all vertices $v$, such that the unique path from $v$ to the root $v_r$ of $\tset$ in $\tset$ contains exactly $j$ vertices of $U$. Recall that for every vertex $v\in U$ that is not a leaf of $\tset$, for every vertex $v'\in \lset(v)$, if $\fset(v')\not\in \cset'$, then $|\opt_{\fset(v')}|\leq |\opt_{\fset(v)}|/\rho_1^{1/160}$. Since $\rho_1=\sqrt N$, the number of non-empty sets $U_j$ is bounded by $320$. 
For every vertex $v\in U$, we again define the modified loss at $v$ to be $\tilde{\lambda}(v)=|\opt_{\fset(v)}|-\sum_{v'\in \lset(v)}|\opt_{\fset(v')}|$. From Theorem~\ref{thm: analysis of main algorithm}, $\tilde{\lambda}(v)\leq \Lambda_1|\opt_{\fset(v)}|$ for all $v\in U$. As before, for all $1\leq j\leq 320$, no vertex of $U_j$ is a descendant of another, and so $\sum_{v\in U_j}|\opt_{\fset(v)}|\leq |\opt|$ from Observation~\ref{obs: partitioning tree - disjointness of non-descendants}. We can now bound the total loss of the tree as:

\[\Lambda(\tset)=|\opt|-\sum_{v\in \lset(\tset)}|\opt_{\fset(v)}|=\sum_{j=1}^{320}\sum_{v\in U_j}\tilde{\lambda}(v)\leq \sum_{j=1}^{320}\sum_{v\in U_j}\Lambda_1|\opt_{\fset(v)}|\leq 320\Lambda_1|\opt|.\]

 Using the recursive definition  $\Lambda_{h-1}=\frac{22\tc\log \eta}{L_{h-1}}$, and  $\Lambda_i=\left(2\Lambda_{i+1}+\frac{12\tc}{L_i}\right )$ for $1\leq i<h-1$, it is easy to verify that:

\[\begin{split}
\Lambda_1&\leq \sum_{i=1}^{h-2}\frac{2^i\cdot 12 \tc}{L_i}+2^h\cdot \frac{22\tc\log \eta}{L_h}\\
&\leq \sum_{i=1}^{h-2}\frac{2^i\cdot 12\tc}{\tc(\log\log N)^3\cdot 2^i/\eps}+2^h\cdot\frac{O(\log\log N)}{\tc 2^h(\log\log N)^3/\eps}\\
&\leq \frac{\eps}{640},
\end{split}\]

assuming that $N$ is large enough.
Therefore, the total loss of the tree $\tset$ is bounded by $\eps\cdot |\opt|/2$. From Observation~\ref{obs: final analysis}, our algorithm computes a $(1-\eps)$-approximate solution overall. As discussed above, the running time of the algorithm is bounded by $n^{O((\log\log N/\eps)^4)}$.

\label{---------------------------------------------------The end------------------------------------------------}

\bibliographystyle{alpha}
\bibliography{misr.v4}

\begin{thebibliography}{FMMT01}

\bibitem[AVKS98]{AKS98}
Pankaj~K Agarwal, Marc Van~Kreveld, and Subhash Suri.
\newblock Label placement by maximum independent set in rectangles.
\newblock {\em Computational Geometry}, 11(3):209--218, 1998.

\bibitem[AW13]{AdamaszekW13}
Anna Adamaszek and Andreas Wiese.
\newblock Approximation schemes for maximum weight independent set of
  rectangles.
\newblock In {\em 54th Annual {IEEE} Symposium on Foundations of Computer
  Science, {FOCS} 2013, 26-29 October, 2013, Berkeley, CA, {USA}}, pages
  400--409. {IEEE} Computer Society, 2013.

\bibitem[AW14]{AW2}
Anna Adamaszek and Andreas Wiese.
\newblock A {QPTAS} for maximum weight independent set of polygons with
  polylogarithmically many vertices.
\newblock In {\em Proceedings of the Twenty-Fifth Annual ACM-SIAM Symposium on
  Discrete Algorithms}, pages 645--656. SIAM, 2014.

\bibitem[BDMR01]{BermanDMR01}
Piotr Berman, Bhaskar DasGupta, S.~Muthukrishnan, and Suneeta Ramaswami.
\newblock Improved approximation algorithms for rectangle tiling and packing.
\newblock In S.~Rao Kosaraju, editor, {\em Proceedings of the Twelfth Annual
  Symposium on Discrete Algorithms, January 7-9, 2001, Washington, DC, {USA.}},
  pages 427--436. {ACM/SIAM}, 2001.

\bibitem[BH92]{BH92}
Ravi Boppana and Magn{\'u}s~M Halld{\'o}rsson.
\newblock Approximating maximum independent sets by excluding subgraphs.
\newblock {\em BIT Numerical Mathematics}, 32(2):180--196, 1992.

\bibitem[CC09]{ChalermsookC09}
Parinya Chalermsook and Julia Chuzhoy.
\newblock Maximum independent set of rectangles.
\newblock In Claire Mathieu, editor, {\em Proceedings of the Twentieth Annual
  {ACM-SIAM} Symposium on Discrete Algorithms, {SODA} 2009, New York, NY, USA,
  January 4-6, 2009}, pages 892--901. {SIAM}, 2009.

\bibitem[CF90]{CF90}
Bernard Chazelle and Joel Friedman.
\newblock A deterministic view of random sampling and its use in geometry.
\newblock {\em Combinatorica}, 10:229--249, 1990.

\bibitem[CH12]{ChanH12}
Timothy~M. Chan and Sariel Har{-}Peled.
\newblock Approximation algorithms for maximum independent set of pseudo-disks.
\newblock {\em Discrete {\&} Computational Geometry}, 48(2):373--392, 2012.

\bibitem[Cha11]{Chalermsook11}
Parinya Chalermsook.
\newblock Coloring and maximum independent set of rectangles.
\newblock In Leslie~Ann Goldberg, Klaus Jansen, R.~Ravi, and Jos{\'{e}} D.~P.
  Rolim, editors, {\em Approximation, Randomization, and Combinatorial
  Optimization. Algorithms and Techniques - 14th International Workshop,
  {APPROX} 2011, and 15th International Workshop, {RANDOM} 2011, Princeton, NJ,
  USA, August 17-19, 2011. Proceedings}, volume 6845 of {\em Lecture Notes in
  Computer Science}, pages 123--134. Springer, 2011.

\bibitem[CS95]{CS95}
Kenneth~L. Clarkson and Peter~W. Shor.
\newblock Applications of random sampling in computational geometry, ii.
\newblock {\em Discrete Comput. Geom}, 4:387--421, 1995.

\bibitem[DF92]{DF92}
Jeffrey~S Doerschler and Herbert Freeman.
\newblock A rule-based system for dense-map name placement.
\newblock {\em Communications of the ACM}, 35(1):68--79, 1992.

\bibitem[DP09]{measure-conc}
Devdatt~P Dubhashi and Alessandro Panconesi.
\newblock {\em Concentration of measure for the analysis of randomized
  algorithms}.
\newblock Cambridge University Press, 2009.

\bibitem[EJS05]{ErlebachJS05}
Thomas Erlebach, Klaus Jansen, and Eike Seidel.
\newblock Polynomial-time approximation schemes for geometric intersection
  graphs.
\newblock {\em {SIAM} J. Comput.}, 34(6):1302--1323, 2005.

\bibitem[FMMT01]{FMMT01}
Takeshi Fukuda, Yasuhiko Morimoto, Shinichi Morishita, and Takeshi Tokuyama.
\newblock Data mining with optimized two-dimensional association rules.
\newblock {\em ACM Transactions on Database Systems (TODS)}, 26(2):179--213,
  2001.

\bibitem[FP11]{FoxP11}
Jacob Fox and J{\'{a}}nos Pach.
\newblock Computing the independence number of intersection graphs.
\newblock In Dana Randall, editor, {\em Proceedings of the Twenty-Second Annual
  {ACM-SIAM} Symposium on Discrete Algorithms, {SODA} 2011, San Francisco,
  California, USA, January 23-25, 2011}, pages 1161--1165. {SIAM}, 2011.

\bibitem[FPT81]{FPT81}
Robert~J Fowler, Michael~S Paterson, and Steven~L Tanimoto.
\newblock Optimal packing and covering in the plane are {NP}-complete.
\newblock {\em Information processing letters}, 12(3):133--137, 1981.

\bibitem[Har14]{Har-Peled14}
Sariel Har{-}Peled.
\newblock Quasi-polynomial time approximation scheme for sparse subsets of
  polygons.
\newblock In Siu{-}Wing Cheng and Olivier Devillers, editors, {\em 30th Annual
  Symposium on Computational Geometry, SOCG'14, Kyoto, Japan, June 08 - 11,
  2014}, page 120. {ACM}, 2014.

\bibitem[H{\aa}s01]{Hastad01}
Johan H{\aa}stad.
\newblock Some optimal inapproximability results.
\newblock {\em Journal of the ACM (JACM)}, 48(4):798--859, 2001.

\bibitem[IA83]{IA83}
Hiroshi Imai and Takao Asano.
\newblock Finding the connected components and a maximum clique of an
  intersection graph of rectangles in the plane.
\newblock {\em Journal of algorithms}, 4(4):310--323, 1983.

\bibitem[KMP98]{KMP98}
Sanjeev Khanna, S.~Muthukrishnan, and Mike Paterson.
\newblock On approximating rectangle tiling and packing.
\newblock In {\em Proceedings of the ninth annual ACM-SIAM symposium on
  Discrete algorithms}, volume~95, page 384. SIAM, 1998.

\bibitem[LENO02]{LNO02}
Liane Lewin-Eytan, Joseph~Seffi Naor, and Ariel Orda.
\newblock {\em Routing and admission control in networks with advance
  reservations}.
\newblock Springer, 2002.

\bibitem[LSW97]{LSW97}
Brian Lent, Arun Swami, and Jennifer Widom.
\newblock Clustering association rules.
\newblock In {\em Data Engineering, 1997. Proceedings. 13th International
  Conference on}, pages 220--231. IEEE, 1997.

\bibitem[Mil86]{Miller}
Gary~L Miller.
\newblock Finding small simple cycle separators for 2-connected planar graphs.
\newblock {\em Journal of Computer and system Sciences}, 32(3):265--279, 1986.

\bibitem[MRR14]{geom-SC}
Nabil~H Mustafa, Raghu Raman, and Sambaran Ray.
\newblock Settling the {APX}-hardness status for geometric set cover.
\newblock In {\em Foundations of Computer Science (FOCS), 2014 IEEE 55th Annual
  Symposium on}, pages 541--550. IEEE, 2014.

\bibitem[Nie00]{Nielsen00}
Frank Nielsen.
\newblock Fast stabbing of boxes in high dimensions.
\newblock {\em Theoretical Computer Science}, 246(1):53--72, 2000.

\end{thebibliography}

\newpage
\appendix
\bigskip \bigskip \noindent \Large \textbf{APPENDIX}

\normalsize

\section{Proofs Omitted from Section~\ref{sec: prelims}}

\subsection{Proof of Claim~\ref{claim: canonical instances}}

Since we assume that the rectangles are open, we can obtain an equivalent non-degenerate instance $\rset''$, as follows. Intuitively, for each rectangle $R\in \rset$, we move its right boundary towards left by a very small random amount. Similarly, we move its left boundary towards right, top boundary down, and bottom boundary up, by very small random amounts. 

More formally, let $X$ be the set of all {\bf distinct} $x$-coordinates of the corners of the rectangles in $\rset$, and let $\Delta$ be the minimum value of $|x-x'|$ for any pair $x,x'\in X$ with $x\neq x'$. Each rectangle $R\in \rset$ chooses a random value $\Delta_R\in (0,\Delta/4)$. We then obtain a new rectangle $R'$ by increasing the $x$-coordinates of the two left corners of $R$ by $\Delta_R$ and reducing the $x$-coordinates of the two right corners of $R$ by $\Delta_R$. We define the set $Y$ of all distinct $y$-coordinates of the corners of the rectangles in $\rset$, and perform a similar transformation with the $y$-coordinates of the corners of the rectangles. Let $\rset''$ be the final set of the rectangles. Then with probability $1$, $\rset''$ is non-degenerate. Moreover, since the rectangles are open, it is easy to see that the transformation preserves rectangle intersections: that is, $R_i,R_j\in \rset$ intersect if and only if their corresponding new rectangles $R_i',R_j'\in \rset''$ intersect. Therefore, from now on we assume that our input instance is non-degenerate.

Given a non-degenerate instance $\rset''$, we can transform it into a combinatorially equivalent instance, where the coordinates of the rectangles' corners are integers between $1$ and $2n$. Indeed, let $X'$ be the set of all $x$-coordinates of the corners of the rectangles in $\rset''$, so $|X'|= 2n$. Assume that $X'=\set{a_1,a_2,\ldots,a_{2n}}$, where $a_1<a_2<\cdots<a_{2n}$. We define a mapping $f: X'\rightarrow \set{1,2,\ldots,2n}$, where $f(a_i)=i$. Let $Y'$ be the set of all distinct $y$-coordinates of the corners of the rectangles in $\rset''$. We define a mapping $g: Y'\rightarrow \set{1,2,\ldots,2n}$ similarly. The final set $\rset'$ of rectangles is defined as follows: $\rset'=\set{R_1',\ldots,R_n'}$, where for each $1\leq i\leq n$, the lower left corner of $R'_i$ is $(f(x_i^{(1)}), g(y_i^{(1)}))$, and its upper right corner is $(f(x_i^{(2)}), g(y_i^{(2)}))$. It is immediate to verify that for all $1\leq i\neq j\leq n$, $R'_i$ and $R'_j$ intersect if and only if $R_i$ and $R_j$ intersect. 

Therefore, for any set $\tilde R\subseteq \rset$ of disjoint rectangles, the corresponding set $\set{R'\mid R\in \rset}$ of rectangles in the new instance $\rset'$ is also disjoint and vice versa.

\subsection{Proof of Theorem~\ref{thm: from n to opt}}
We assume without loss of generality that instance $\rset$ is non-degenerate. We will construct an $\left(O(w^*)\times O(w^*)\right )$-grid $G$, and then round the boundaries of all rectangles in $\rset$ to the grid $G$. (We note that the value $w^*$ is not known to the algorithm).

We start by constructing a set $\vset$ of vertical lines of the grid, that have the following property: for every rectangle $R\in \rset$, at least one vertical line $V\in \vset$ intersects $R$. In order to construct $\vset$, let $\iset$ be the set of intervals, obtained by projecting all rectangles $R\in \rset$ onto the $X$-axis. Notice that the intervals in $\iset$ are open. Let $\iset^*\subseteq\iset$ be a maximum independent set of the intervals\footnote{Set $\iset^*\subseteq \iset$ of intervals is independent if and only if no pair of intervals in $\iset^*$ intersect.} in $\iset$ with the following additional property: if $I\in \iset\setminus\iset^*$, then no interval $I'\in \iset^*$ strictly contains $I$. In order to construct $\iset^*$, we start with any maximum independent set of $\iset$ (that can be computed efficiently via standard dynamic programming techniques), and then iterate. While there are intervals $I\in \iset\setminus\iset^*$, $I'\in \iset^*$ with $I\subsetneq I'$, we replace $I'$ with $I$ and continue. It is easy to see that after $O(n)$ iterations we obtain the desired set $\iset^*$. Notice that $|\iset^*|\leq w^*$, as the set of all rectangles whose intervals belong to $\iset^*$ must form an independent set. Let $X$ be the set of points, constructed as follows. For every interval $I\in \iset^*$, we add to $X$ the coordinates of the left endpoint of $I$, the right endpoint of $I$, and one arbitrary additional inner point on $I$. Observe that for every interval $I'\in \iset$, there is some point $x\in X$ with $x\in I'$.

The final set $\vset$ of vertical lines contains one vertical line $V_x$ for each coordinate $x\in X$, and also the left and right boundary of the bounding box (that is, the lines $x = 0$ and $x = 2n + 1$). Then $|\vset|\leq 3|\iset^*| + 2\leq 5w^*$, and for every rectangle $R\in \rset$, at least one line $V\in \vset$ intersects $R$. Similarly, we build a set $\hset$ of at most $5w^*$ horizontal lines, such that for each rectangle $R\in \rset$, at least one line in $\hset$ intersects $R$. Finally, we construct a new instance $\rset'=\set{R'\mid R\in \rset}$, as follows. Consider some rectangle $R\in \rset$. If its right boundary does not lie on any line $V\in \vset$, then we move the right boundary of $R$ to the right, until it lies on some such line. Similarly, we move its left boundary to the left, top boundary up and bottom boundary down, until all four edges lie on the lines of the grid $G$. This defines the rectangle $R'$, that is added to $\rset'$. Clearly, $R\subseteq R'$, and so any solution $\sset'\subseteq\rset'$ to the new instance immediately defines a solution $\sset\subseteq \rset$ of the same value to the original instance. It is easy to see that the number of distinct rectangles in $\rset'$ is at most $O\left ((w^*)^4\right)$, since there are at most $\binom{O(w^*)}{2}$ possible choices for the $x$-coordinates of the left and the right boundaries of each rectangle, and at most $\binom{O(w^*)}{2}$  possible choices for the $y$-coordinates of its top and bottom boundaries. Let $\opt'$ denote the  optimal solution to the resulting instance $\rset'$ of \MISR. The following lemma will then finish the proof of the theorem.

\begin{lemma}
 $|\opt'|\geq \Omega(w^*)$.
 \end{lemma}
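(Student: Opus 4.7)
The plan is to show $|\opt'| \geq \Omega(w^*)$ by extracting a large independent subset from $\{R' : R \in \opt\}$.

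First, I would verify that the rounding map $R \mapsto R'$, restricted to $\opt$, is injective. If two distinct $R_1, R_2 \in \opt$ satisfied $R_1' = R_2'$, then both would span exactly the same block of grid cells; since each original $R_i$ contains an inner vertical and horizontal grid line, $R_i'$ spans at least $2 \times 2$ cells in the grid, so both $R_1$ and $R_2$ contain a common inner grid vertex in their interiors, which means $R_1 \cap R_2 \neq \emptyset$, contradicting the disjointness of $\opt$. Hence $\{R' : R \in \opt\}$ is a family of $w^*$ distinct rectangles in $\rset'$.

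Second, I would establish a bounded-depth property: for every point $q$ in the plane, at most a constant $C$ rectangles of $\{R' : R \in \opt\}$ contain $q$. Place $q$ in a grid cell $\kappa = (V_i, V_{i+1}) \times (H_j, H_{j+1})$. Any $R \in \opt$ with $q \in R'$ has its $x$-projection intersecting $(V_i, V_{i+1})$, and since $R$ contains an inner vertical grid line, its $x$-projection cannot be strictly inside this range; thus it either spans $(V_i, V_{i+1})$, left-straddles it (right edge in the cell), or right-straddles it (left edge in the cell). The same trichotomy applies to the $y$-projection, giving $3 \times 3$ type combinations. Within each combination, a disjointness argument shows at most one $R \in \opt$ can occur (any two would share a small sub-rectangle inside $\kappa$, forcing them to overlap), and several combinations are mutually incompatible (e.g., a rectangle spanning $\kappa$ in both directions contains all of $\kappa$ and excludes every other type, and a $y$-spanner is incompatible with any $y$-straddler that overlaps it in $x$). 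This reduces the total count to a constant $C$.

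Third, from a family of $w^*$ axis-aligned rectangles in the plane with maximum depth $C$, I would extract an independent set of size $\Omega(w^*)$. Since the maximum clique $\omega$ in the intersection graph is at most $C$ (by Helly for rectangles), and since the intersection graph of axis-aligned rectangles is $\chi$-bounded by $\omega$ (Asplund--Gr\"unbaum: $\chi = O(\omega^2)$), the chromatic number is $O(1)$, and a largest color class gives the desired independent set of size $\Omega(w^*)$.

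The main obstacle lies in the second step: carefully enumerating the $3 \times 3$ type combinations and their pairwise compatibilities to obtain a small explicit depth bound $C$ is somewhat tedious, though elementary. The injectivity argument and the invocation of $\chi$-boundedness are routine.
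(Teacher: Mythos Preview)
Your approach is correct and takes a genuinely different route from the paper's proof.

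The paper rounds one side at a time, in four stages $\rset=\rset_0 \to \rset_1 \to \cdots \to \rset_4=\rset'$, showing at each stage that a large independent set survives. The device is a ``tagging'' argument: each corner of each rectangle in the current independent set shoots a horizontal ray until it hits another rectangle, and the paper shows (Claim~\ref{claim: aux}) that whenever two rectangles newly intersect after rounding a single side, one must have tagged the other. The resulting tag-graph has average degree at most $8$, so a greedy extraction recovers an independent set of size at least $|\sset_{i-1}|/9$ per stage; everything is elementary. You instead round all four sides at once, bound the clique number of $\{R':R\in\opt\}$ directly via a depth/corner-counting argument, and then appeal to Asplund--Gr\"unbaum $\chi$-boundedness ($\chi=O(\omega^2)$) to extract a large color class. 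This is structurally cleaner---one step instead of four---at the cost of invoking a nontrivial coloring theorem where the paper gets by with a greedy argument; it also yields a better hidden constant.

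One small point to tighten in your Step~2: since the $R'$ are closed rectangles, the common point of a clique (via Helly) could land on a grid edge or vertex rather than in an open cell, so your case analysis as stated does not literally cover every $q$. The fix is immediate with the same idea: if $q\in R'$ then $R$'s $x$-projection must contain some grid coordinate in $\{x_{i-1},x_i,x_{i+1}\}$ (where $q_x\in[x_i,x_{i+1}]$, or $q_x=x_i$) and similarly in $y$, so each such $R$ contains in its interior one of at most nine nearby grid vertices, and disjoint $R$'s contain distinct ones; hence $\omega\le 9$ in all cases.
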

We define the following five instances of the problem. Let $\rset_0=\rset$. Let $\rset_1$ be the instance obtained from $\rset$ after we round, for each rectangle $R\in \rset_0$, its right boundary only, by moving it to the right until it lies on some vertical line of the grid. Similarly, $\rset_2$ is obtained from $\rset_1$ by rounding the left boundaries of rectangles in $\rset_1$, $\rset_3$ is obtained from $\rset_2$ by rounding the top boundaries of the rectangles in $\rset_2$, and $\rset_4$ is obtained from $\rset_3$ by rounding the bottom boundaries of the rectangles in $\rset_3$. Notice that for each $0\leq i\leq 4$, the grid $G$ still has the property that for each rectangle $R\in \rset_i$, at least one vertical line of the grid intersects $R$, and at least one horizontal line of the grid intersects $R$. Let $\sset_0$ be any optimal solution for instance $\rset_0$, so $|\sset_0|=w^*$. We construct, for $i=1,2,3,4$, a feasible solution $\sset_i$ for $\rset_i$, such that $|\sset_i|\geq \Omega(|\sset_{i-1}|)$. It will then follow that $|\opt'|\geq |\sset_4|\geq \Omega(w^*)$.

For simplicity, we show how to obtain $\sset_1$ from $\sset_0$; the other three cases are analyzed similarly. Let $\sset\subseteq\rset_1$ be the set of rectangles, corresponding to the rectangles in $\sset_0$, that is, $\sset=\set{R'\mid R\in \sset_0}$, where rectangle $R'$ is obtained from $R$ by moving its right boundary to closest grid line to its right. We will find an independent set $\sset_1\subseteq\sset$ of size $\Omega(|\sset_0|)$.

For each rectangle $R_1\in \sset_0$, each of the two left corners of $R_1$ shoots a straight line to the left, until it hits some other rectangle $R_2\in \sset_0$ or its boundary. We say that $R_1$ \emph{tags} $R_2$ in this case. Similarly, each of the two right corners of $R_1$ shoots a straight line to the right, until it hits some other rectangle $R_3\in \sset_0$ or its boundary. We again say that $R_1$ tags $R_3$. Note that $R_1$ may tag at most four rectangles.
Following is the central claim in our analysis.

\begin{claim}\label{claim: aux}
Let $R_1',R_2'\in \sset$ be any two rectangles, and assume that they intersect. Then either $R_1$ tagged $R_2$, or $R_2$ tagged $R_1$.
\end{claim}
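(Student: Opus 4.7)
The plan is to assume without loss of generality that $R_1$ lies to the left of $R_2$ (i.e., $x_1^{(2)} < x_2^{(1)}$) and argue that $R_1$ tags $R_2$. Since $R_1, R_2$ are disjoint in $\sset_0$ but the rounded $R_1', R_2'$ intersect, and rounding only modifies $x$-coordinates, the open $y$-projections of $R_1, R_2$ must already overlap; consequently their open $x$-projections are disjoint, which (together with non-degeneracy) justifies the strict left-right orientation. The condition $R_1' \cap R_2' \neq \emptyset$ means the rounded right edge $x_1^{(2)'}$ satisfies $x_1^{(2)'} > x_2^{(1)}$. Since $x_1^{(2)'}$ is by definition the smallest vertical grid line at or above $x_1^{(2)}$, the key structural consequence is that \emph{no vertical grid line lies in the interval $[x_1^{(2)}, x_2^{(1)}]$}.

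I would then split on the relative heights of the two tops. Assume first $y_1^{(2)} \leq y_2^{(2)}$; the symmetric case $y_2^{(2)} < y_1^{(2)}$ is handled identically using the top-left corner of $R_2$ shooting leftward. The top-right corner of $R_1$ sits at $(x_1^{(2)}, y_1^{(2)})$, and non-degeneracy of $y$-coordinates (preserved because rounding did not touch them) places $y_1^{(2)}$ strictly inside the open interval $(y_2^{(1)}, y_2^{(2)})$. Shoot the horizontal ray to the right from this corner. Since $R_2$ lies to the right at this exact height, the ray must hit either $R_2$ itself (in which case $R_1$ tags $R_2$ and we are done) or some intervening $R_3 \in \sset_0 \setminus \{R_1, R_2\}$ first, whose left edge $x_3^{(1)}$ then lies in the open gap $(x_1^{(2)}, x_2^{(1)})$.

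To rule out the intervening $R_3$, I would invoke the grid's defining property: every rectangle in $\rset_0$ has its $x$-projection intersected by some vertical grid line. Since $R_3$'s $x$-projection is $(x_3^{(1)}, x_3^{(2)})$ with $x_3^{(1)} > x_1^{(2)}$, and $[x_1^{(2)}, x_2^{(1)}]$ contains no grid line, the required grid line must satisfy $x_3^{(2)} > x_2^{(1)}$. Therefore $R_3$'s $x$-projection overlaps $R_2$'s $x$-projection in an open interval, and disjointness of $R_3, R_2 \in \sset_0$ forces their open $y$-projections to be disjoint. On the other hand, the ray hitting $R_3$ at height $y_1^{(2)}$ means $y_1^{(2)} \in [y_3^{(1)}, y_3^{(2)}]$, and by non-degeneracy of $y$-coordinates this containment is strict; combined with $y_1^{(2)} \in (y_2^{(1)}, y_2^{(2)})$, this places $y_1^{(2)}$ in both open $y$-projections simultaneously, a contradiction.

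The main obstacle, as is often the case with these ``close-to-the-grid'' arguments, is invoking non-degeneracy at precisely the right moments: both to upgrade the boundary containments of $y_1^{(2)}$ to strict interior containments, and to rule out the coincidence $x_1^{(2)} = x_2^{(1)}$ that would make the gap degenerate. Once the open-interior strictness is secured, the remainder is a direct geometric consequence of how the grid was constructed so that every input rectangle's $x$-projection contains a grid line, together with the disjointness of rectangles in $\sset_0$ measured in the open sense.
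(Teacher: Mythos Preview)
Your proof is correct and follows essentially the same approach as the paper's. Both arguments establish the left-right orientation, identify a corner of one rectangle whose horizontal ray must reach the other, and derive a contradiction from an intervening $R_3$ by invoking the grid property that every input rectangle meets some vertical grid line. Your case split (on which top boundary is higher) is a minor variant of the paper's (on which closed $y$-interval contains an endpoint of the other), and your contradiction argument---showing $R_3$'s $x$-projection must overlap $R_2$'s, whence the $y$-projections are disjoint yet both contain $y_1^{(2)}$---is actually more explicit than the paper's terse ``should have been rounded to $V$,'' which glosses over the possibility that $R_3$ extends to the right of $x_2^{(1)}$.
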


\begin{proof}
Since the original rectangles $R_1,R_2$ do not intersect, but the new rectangles $R_1',R_2'$ intersect, the projections of $R_1,R_2$ onto the $y$-axis must intersect, and their projections onto the $x$-axis cannot intersect. Therefore, we can assume without loss of generality that the $x$-coordinate of the right boundary of $R_1$ is smaller than or equal to the $x$-coordinate of the left boundary of $R_2$ (that is, $R_1$ lies to the left of $R_2$). Let $x_1$ be the $x$-coordinate of the right boundary of $R_1$, and let $x_2$ be the $x$-coordinate of the right boundary of $R_1'$. Let $x'$ be the $x$-coordinate of the left boundary of $R_2$ (see Figure~\ref{fig: claim-illustration}).  Then $x_1\leq x'<x_2$, and $x_2$ is the smallest $x$-coordinate to the right of $x_1$ through which a vertical line of the grid passes.

\begin{figure}[h]
\scalebox{0.5}{\includegraphics{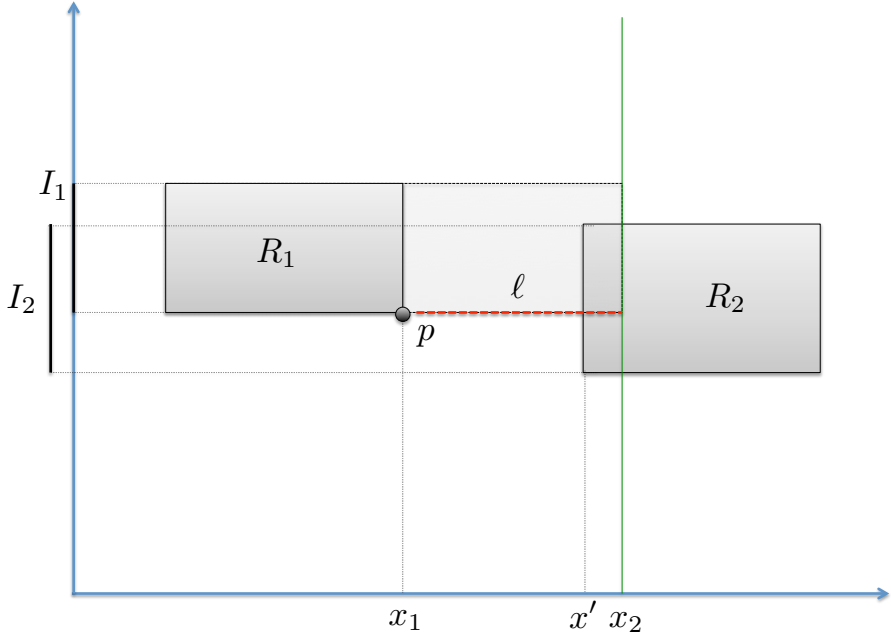}}\caption{Illustration to the proof of Lemma~\ref{claim: aux}\label{fig: claim-illustration}}
\end{figure}

 Let $I_1,I_2$ be the projections of $R_1',R_2'$ onto the $y$-axis, respectively, that we consider to be closed intervals. Then,  either one of the endpoints of $I_1$ is contained in $I_2$, or one of the endpoints of $I_2$ is contained in $I_1$. Assume without loss of generality that it is the former, and let $p$ be the right corner of $R_1$, whose corresponding endpoint of $I_1$ is contained in $I_2$. We claim that the line $\ell$ that $p$ shot to the right must have tagged $R_2$. Assume otherwise. Let $\ell'$ be the straight horizontal line connecting $p$ to some point $p'$ on the boundary of $R_2$. If $R_1$ did not tag $R_2$, then there is some other rectangle $R\in \sset_0$ that intersects $\ell'$, between $p$ and $p'$. But then there is some vertical line $V$ of the grid intersecting $R$. Then the right boundary of $R_1'$ should have been rounded to $V$, and so $R_1'$ cannot intersect $R_2'$. The case where one of the endpoints of $I_2$ is contained in $I_1$ is analyzed similarly.
 \end{proof}
 
  We now build a graph $H$, whose vertex set is $\set{v_R\mid R\in \sset_0}$, and there is an edge $(v_{R_1},v_{R_2})$ if and only if one of $R_1,R_2$ tags the other. Observe that if we find an independent set $\iset$ in $H$, then the rectangles corresponding to $\iset$ define an independent set in $\rset_1$. Therefore, it is enough to prove that there is an independent set in $H$ of size $\Omega(|\sset_0|)$. We do so using standard techniques. We show that for every subset $U\subseteq V(H)$ of vertices of $H$, at least one vertex of $U$ has a constant degree in $H[U]$. Indeed, every rectangle may tag at most $4$ other rectangles, and so the number of edges in $H[U]$ is at most $4|U|$. Therefore, at least one vertex of $U$ has degree at most $8$. In order to build the independent set $\iset$ of $H$, we start with any vertex $v\in V(H)$, whose degree is at most $8$. We add $v$ to $\iset$, and delete $v$ and all its neighbors from $H$. We then continue to the next iteration. From the above discussion, in every iteration, we can find a vertex of degree at most $8$ in the remaining graph, and it is easy to see that throughout the algorithm $\iset$ is an independent set. In each iteration, we add one vertex to $\iset$ and delete at most $9$ vertices from $H$. Therefore, in the end, $|\iset|\geq |\sset_0|/9$, and we obtain an independent set $\sset_1\subseteq \rset_1$, whose corresponding vertices belong to $\iset$, of size at least $|\sset_0|/9$.

\subsection{Proof of Lemma~\ref{lemma: simple tiling}}

The proof is by induction on the number of corners $L$ on the boundary of $P$. The base case is when $L=4$, and $P$ is a rectangle. In this case $\fset$ contains a single rectangle $P$. We now assume the correctness of the claim for polygons with up to $L-1$ corners on their boundary, for $L\geq 5$, and prove it for $L$.

Let $P$ be any polygon with $L$ corners on its boundary. Then $P$ is not a rectangle, and so there is at least one corner $p$ on the boundary of $P$, such that the two edges $e,e'$ of the boundary of $P$ adjacent to $p$ form a $270$-degree internal angle. We assume without loss of generality that $e$ is a vertical edge, $e'$ is a horizontal edge, and that $p$ is the bottom endpoint of $e$ (see Figure~\ref{fig: padding}). We draw a line $\ell$ from $p$ down, until it reaches any point $p'$ on the boundary of $P$. Line $\ell$ splits $P$ into two simple closed polygons, that we denote by $P_1$ and $P_2$. Let $L_1$ and $L_2$ denote the number of the corners on the boundaries of $P_1$ and $P_2$, respectively. We claim that $L_1+L_2\leq L+2$. Indeed, point $p$ served as a corner of $P$, and it continues to serve as a corner of exactly one of the two polygons $P_1,P_2$. Point $p'$ may now serve as a corner of both polygons. No other point serves as a corner in both polygons, and no other point, that did not serve as a corner of $P$, may become a corner of $P_1$ or $P_2$. Therefore, $L_1+L_2\leq L+2$. Since each of $P_1$ and $P_2$ must have at least four corners, $L_1,L_2<L$ must hold. From the induction hypothesis, there is a set $\fset_1$ of at most $L_1-3$ closed internally-disjoint axis-parallel rectangles whose union is $P_1$, and similarly there is such a set $\fset_2$ of cardinality at most $L_2-3$ for $P_2$. Setting $\fset=\fset_1\cup \fset_2$, we obtain a set of at most $L_1-3+L_2-3\leq L-3$ rectangles, whose union is $P$.

\begin{figure}[h]
\scalebox{0.4}{\includegraphics{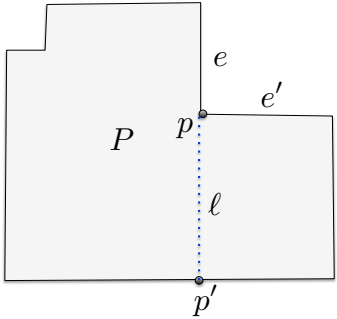}}\caption{Partitioning $P$ into smaller polygons \label{fig: padding}}
\end{figure}

It is easy to see that both $P_1$ and $P_2$ are aligned with $Z$, since $p\in Z$, and, if we denote $p'=(x,y)$, then $x$ is the $x$-coordinate of $p$, and $y$ is the $y$-coordinate of one of the corners of $P$, as either $p'$ is itself a corner of $P$, or it belongs to a horizontal edge on the boundary of $P$. From the induction hypothesis, every rectangle in $\fset_1$ and $\fset_2$ is aligned with $Z$.

\subsection{Proof of Lemma~\ref{lemma: non-simple-tiling}}

If the boundaries of $B$ and $P$ are not disjoint, then it is easy to see that $B\setminus P$ is a collection of disjoint simple polygons. Let $\pset$ denote this collection of polygons. Then the total number of corners on the boundaries of the polygons in $\pset$ is at most $L+4$: the $L$ corners of $P$ and $4$ additional corners of $B$. We then use Lemma~\ref{lemma: simple tiling} to tile each of the polygons in $\pset$ separately, thus obtaining a collection $\fset$ of at most $L+1$   closed internally-disjoint axis-parallel rectangles, such that $\bigcup_{F\in \fset}F=B\setminus P$. From Lemma~\ref{lemma: simple tiling}, every rectangle in $\fset$ is aligned with $Z$.

Assume now that the boundaries of $B$ and $P$ are disjoint. Among all vertical edges on the boundary of $P$, let $e$ be the left-most one (breaking ties arbitrarily). Let $R$ be the rectangle whose right boundary is the edge $e$, and the left boundary lies on the left boundary of $B$. Notice that except for the right boundary of $R$, rectangle $R$ is completely disjoint from $P$. Moreover, $R$ is aligned with $Z\cup P$, and $B\setminus (P\cup R)$ is a simple polygon with at most $L+4$ corners. From Lemma~\ref{lemma: simple tiling}, there is a set $\fset'$ of at most $L+1$ closed internally-disjoint axis-parallel rectangles, such that $\bigcup_{F\in \fset'}F=B\setminus (P\cup R)$, and each rectangle of $\fset'$ is aligned with $Z$. Setting $\fset=\fset'\cup \set{R}$ gives the desired set of rectangles.

\section{Proofs Omitted from Section~\ref{sec: partitions}}

\subsection{Proof of Theorem~\ref{thm: r-good partition}}


We find the partition $\pset$ in two steps. In the first step, we construct an initial partition $\pset'$ of $B$ into $O(r)$ rectangular cells. In the second step, we further subdivide some cells $P\in \pset'$ into smaller cells, thus obtaining the final partition $\pset$.
Our proof follows the arguments of~\cite{Har-Peled14}  very closely.

\paragraph{Step 1.} We start by constructing a set $\wset=\sset_1\cup \sset_2$ of rectangles as follows. Initially, $\sset_1=\fset$. Additionally, each rectangle $R\in \opt'$ is added to $\sset_1$ independently at random with probability $r/|\opt'|$. In order to construct the set $\sset_2$ of rectangles, we start with $\sset_2=\emptyset$, and add each rectangle $R\in \opt'$ to $\sset_2$ independently at random with probability $r/|\opt'|$. We then set $\wset=\sset_1\cup \sset_2$. (We note that we could have, equivalently, directly added each rectangle $R\in \opt'$ to $\wset$ with probability $2r/|\opt'|$, in addition to adding all rectangles of $\fset$ to $\wset$. However, as we will see later, this two-stage randomized procedure significantly simplifies the analysis). We say that the bad event $\event_1$ happens if and only if $|\wset|>18r$. Notice that the expected number of rectangles in $\wset$ is bounded by $2r+m\leq 3r$.
We use the following standard Chernoff bound:

\begin{theorem}\label{thm: Chernoff}(Theorem 1.1 in~\cite{measure-conc}.)
Let $X_1,\ldots,X_n$ be random variables independently distributed in $[0,1]$, and let $X=\sum_iX_i$. Then for any $t>2e\cdot \expect{X}$, $\prob{X>t}\leq 2^{-t}$.
\end{theorem}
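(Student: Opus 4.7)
The plan is to prove this standard Chernoff-type tail bound via the moment generating function method. First, I would denote $\mu = \expect{X}$, so the hypothesis reads $t > 2e\mu$. Setting $1 + \delta = t/\mu$, we have $1 + \delta > 2e$. My goal is to show $\prob{X > t} \leq (e/(1+\delta))^t$, after which the condition $1+\delta > 2e$ immediately yields $e/(1+\delta) < 1/2$, giving $\prob{X > t} \leq 2^{-t}$.

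To obtain the displayed tail bound, I would apply Markov's inequality to $e^{\lambda X}$ for $\lambda > 0$, getting $\prob{X > t} \leq e^{-\lambda t}\expect{e^{\lambda X}}$. By independence, $\expect{e^{\lambda X}} = \prod_i \expect{e^{\lambda X_i}}$. The key lemma is that for $X_i \in [0,1]$, convexity of $e^{\lambda x}$ on $[0,1]$ gives $e^{\lambda x} \leq 1 + x(e^\lambda - 1)$, so $\expect{e^{\lambda X_i}} \leq 1 + (e^\lambda - 1)\expect{X_i} \leq \exp\bigl((e^\lambda - 1)\expect{X_i}\bigr)$ using $1 + u \leq e^u$. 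Multiplying over $i$ yields $\expect{e^{\lambda X}} \leq \exp\bigl((e^\lambda - 1)\mu\bigr)$, so
\[
\prob{X > t} \leq \exp\bigl((e^\lambda - 1)\mu - \lambda t\bigr).
\]

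Now I would optimize by choosing $\lambda = \ln(1+\delta) = \ln(t/\mu) > 0$ (valid since $t > 2e\mu > \mu$). Then $e^\lambda - 1 = \delta$, so the exponent becomes $\delta \mu - \mu(1+\delta)\ln(1+\delta)$, and the bound simplifies to the classical form
\[
\prob{X > t} \leq \left(\frac{e^\delta}{(1+\delta)^{1+\delta}}\right)^{\mu}.
\]
Using the trivial bound $e^\delta \leq e^{1+\delta}$, the right-hand side is at most $(e/(1+\delta))^{(1+\delta)\mu} = (e/(1+\delta))^t$. Since $1+\delta = t/\mu > 2e$, we get $e/(1+\delta) < 1/2$, and therefore $\prob{X > t} \leq 2^{-t}$, as required.

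There is no real obstacle here: the argument is textbook, and the only thing to be careful about is verifying that the choice $\lambda = \ln(1+\delta)$ is legitimate (i.e.\ $\lambda > 0$, which follows from $t > \mu$) and that the final numerical comparison $1+\delta > 2e \Rightarrow e/(1+\delta) < 1/2$ is exactly what makes the constant $2e$ in the hypothesis the right one to produce a clean $2^{-t}$ tail.
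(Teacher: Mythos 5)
Your proof is correct; the paper does not prove this statement at all but simply cites it as Theorem~1.1 of~\cite{measure-conc}, so there is nothing to compare against. Your moment-generating-function derivation, with the convexity bound $e^{\lambda x}\leq 1+x(e^{\lambda}-1)$ on $[0,1]$ and the choice $\lambda=\ln(t/\mu)$, is the standard argument, and the final step $1+\delta>2e\Rightarrow e/(1+\delta)<1/2$ correctly explains the constant $2e$. The only (trivial) omission is the degenerate case $\expect{X}=0$, where $1+\delta=t/\mu$ is undefined; there $X=0$ almost surely and the bound holds vacuously.
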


From Theorem~\ref{thm: Chernoff}, the probability that $\event_1$ happens is bounded by $1/2^{18}$.

Suppose we are given any set $\xset\subseteq \opt'\cup \fset$ of rectangles, where $\fset\subseteq \xset$. We associate a partition $\pset(\xset)$ of $B$ into rectangular cells with $\xset$. The partition $\pset(\xset)$ is constructed as follows. For each rectangle $R \in \xset$, each of the two top corners of $R$ shoots a ray up, until it reaches the boundary of some other rectangle in $\xset$, or the bounding box $B$. Similarly, each of the two bottom corners of $R$ shoots a ray down, until it reaches the boundary of some other rectangle in $\xset$, or the bounding box $B$. Consider the partition $\pset(\xset)$ of $B$, defined by the boundaries of the rectangles in $\xset$, the bounding box $B$,  and the vertical lines that we have just constructed. 

\begin{observation}
Every cell of $\pset(\xset)$ is a rectangle.
\end{observation}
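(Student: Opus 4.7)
My plan is to analyze the local structure at every vertex of the arrangement defining $\pset(\xset)$ and argue that at every such vertex, each adjacent cell has a convex (ninety-degree) corner. This will imply that every cell is a rectilinear polygon with only convex corners, and a standard counting fact then yields that every cell is in fact a rectangle.

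First I would observe that every edge in the arrangement is axis-parallel: the boundary edges of the rectangles in $\xset$ and of $B$ are axis-parallel by assumption, and the rays shot from the corners of the rectangles in $\xset$ are vertical by construction. Thus every cell is a rectilinear (not necessarily simple) polygon. I would next argue that cells are simply connected: the ray-shooting rule guarantees that every corner of every rectangle in $\xset$ is connected to either $\partial B$ or the boundary of another rectangle in $\xset$ via a vertical segment of the arrangement, so no cell can ``wrap around'' a rectangle of $\xset$ or a collection of them.

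Next I would enumerate the local configurations at a vertex $q$ of the arrangement. The vertex $q$ is either (i) a corner of the bounding box $B$; (ii) a corner of some rectangle $R \in \xset$, at which two boundary edges of $R$ meet together with the vertical ray emitted from that corner; (iii) a $T$-junction, i.e.\ an endpoint of a vertical ray that terminates in the interior of a horizontal edge of some other rectangle of $\xset$ or of $B$; or (iv) a crossing of two segments of the arrangement (for example, where a ray crosses a horizontal edge of $B$ coincidentally in coordinate, or where two boundary segments cross). In each of these cases a short direct inspection shows that every cell incident to $q$ makes a ninety-degree convex corner at $q$. The key point is in Case~(ii): at a top-right corner $q$ of $R \in \xset$, the three segments leaving $q$ (the top edge going left, the right edge going down, and the upward ray) partition a neighborhood of $q$ into three ninety-degree sectors, each of which is locally inside a distinct cell; the fourth sector lies inside $R$ itself. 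The analogous statement holds for the other three corner types, and this is the place where the ray-shooting rule is essential, since without the ray the sector above $q$ could belong to a single cell that would have a reflex corner at $q$.

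Finally, I would conclude by invoking the elementary fact that a simply connected rectilinear polygon satisfies $(\text{\#convex corners}) - (\text{\#reflex corners}) = 4$, so if every corner is convex, the polygon has exactly four corners and is therefore a rectangle. The main (very mild) obstacle is the case analysis in the previous paragraph — in particular, checking that coincident $x$- or $y$-coordinates of several rectangles of $\xset$ do not create unexpected vertex types. Since $\opt'$ is an independent set (so rectangles of $\opt'$ are pairwise disjoint) and $\fset$ consists of mutually internally disjoint fake rectangles, any such coincidences only produce degree-four crossings or $T$-junctions, which are already covered above.
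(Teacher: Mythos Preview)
Your approach is essentially the paper's: both rule out reflex corners by observing that at every corner of a rectangle $R\in\xset$, the emitted vertical ray together with $R$'s own vertical edge produce vertical segments in \emph{both} directions, so no adjacent cell can have a $270^\circ$ turn there. The paper packages this more tersely as a proof by contradiction (assume a reflex corner $p$; its horizontal edge lies on the top or bottom of some $R$ with $p$ a corner of $R$; then both an upward and a downward vertical segment leave $p$, contradiction), rather than a full vertex-type enumeration.

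One small correction to your case~(ii): at a top-right corner $q$ of $R$, the three outgoing segments (top edge left, right edge down, ray up) partition a neighborhood of $q$ into \emph{three} sectors, not four---the downward right edge and the upward ray are collinear, so the cell to the right of $q$ has a straight boundary through $q$ and no corner there at all, while the remaining two $90^\circ$ sectors are the interior of $R$ and the upper-left cell. Your conclusion (no reflex corner) is unaffected, but the sector count as written is off.
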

\begin{proof}
Consider some cell $C\in \pset(\xset)$, and assume for contradiction that it is not rectangular. It is easy to see that the boundary of $C$ is a simple cycle. Consider a tour of the boundary of $C$, traversing it in a clock-wise fashion. Since all lines in our partition are parallel to the axes, every turn of the tour has either $90$ or $270$ degrees. Moreover, at least one turn must be a $270$-degree turn if $C$ is not a rectangle. Consider some corner $p$ of the boundary of $C$, where the tour makes a $270$-degree turn. Let $e,e'$ be the edges of the boundary of $C$ incident on $p$. One of these edges must be a horizontal line, and one a vertical line. Assume w.l.o.g. that $e$ is the horizontal edge. Then $e$ must be contained in the top or the bottom boundary of some rectangle $R\in \wset$, and $p$ must be a corner of that rectangle (see Figure~\ref{fig: wrong turn}). But then there must be two vertical lines adjacent to $p$: one going up and one going down, making it impossible that both $e$ and $e'$ lie on the boundary of the same cell $C$. 
\begin{figure}[h!]
    \begin{center}
    \scalebox{0.95}{\includegraphics{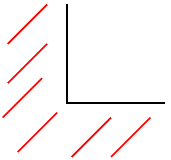}}
    \caption{A $270$-degree corner in the tour of the boundary of $C$.}
    \label{fig: wrong turn}
    \end{center}
\end{figure}
\end{proof}

 Let $B^t,B^b$ be the top and the bottom boundaries of the bounding box $B$, and let $\bset=\set{B^t, B^b}$. 
A rectangle $C\subseteq B$ is called a \emph{potential cell} if and only if there is some subset $\xset\subseteq \opt'\cup \fset$ of rectangles with $\fset\subseteq \xset$, such that $C$ is a cell in the partition $\pset(\xset)$. We next define a set $\dset(C)=\set{R^b,R^t,R^{\ell},R^r}$ of four rectangles of $\opt'\cup \fset\cup \bset$, that we view as defining the cell $C$. If $C$ itself is a rectangle of $\opt'\cup \fset$, then we set $R^b=R^t=R^{\ell}=R^r=C$. 

Assume now that $C\not\in \fset\cup \opt'$.
Notice that all horizontal lines in the partition $\pset(\xset)$ of $B$ are contained in the top or the bottom boundaries of the rectangles in $\xset\cup \bset$. 
Let $R^t\in \xset\cup \bset$ be the rectangle whose bottom boundary contains the top boundary of $C$. We think of the rectangle $R^t$ as defining the top boundary of $C$. Since all rectangles in $\opt'\cup \fset$ are internally disjoint, $R^t$ is uniquely defined. Similarly, let $R^b\in \xset\cup \bset$ be the rectangle whose top boundary contains the bottom boundary of $C$. We view $R^b$ as defining the bottom boundary of $C$. We next define rectangles $R^{\ell}$ and $R^r$, that we view as defining the left and the right boundaries of $C$, respectively. If the top left corner of $C$ is a corner of $R^t$, then we set $R^{\ell}=R^t$. Otherwise, if the bottom left corner of $C$ is a corner of $R^b$, then we set $R^{\ell}=R^b$. Otherwise, there must be at least one rectangle $R\in \xset$, whose right boundary is contained in the left boundary of $C$ (notice that the left boundary of $C$ cannot be contained in the left boundary of $B$, since in that case, the bottom left corner of $R^t$ is the top left corner of $C$). If at least one of the rectangles whose right boundary is contained in the left boundary of $C$ belongs to $\fset$, then we let $R^{\ell}$ be the topmost among all such rectangles $R$. Otherwise, there is exactly one rectangle $R\in \opt'$ (due to non-degeneracy), such that the right boundary of $R$ is contained in the left boundary of $C$. We set $R^{\ell}=R$. We define the rectangle $R^r$ similarly. Let $\dset(C)=\set{R^t,R^b,R^{\ell},R^r}$, so $|\dset(C)|\leq 4$. 
Notice that $\dset(C)$ uniquely defines the potential cell $C$: that is, if $C\neq C'$, then $\dset(C)\neq \dset(C')$. Moreover, if $C\in \pset(\xset)$ for some set $\xset\subseteq \opt'\cup \fset$ with $\fset\subseteq \xset$, then all rectangles of $\dset(C)$ must belong to $\xset\cup \bset$. We use this fact later.

We let the initial partition $\pset'$  of the bounding box be $\pset(\wset)$. Our next step is to bound the number of cells in $\pset'$. For each cell $C$ of $\pset'$, if $C\not\in \wset$, then we charge $C$ to the rectangle $R^{\ell}\in \dset(C)$, which must belong to $\wset\cup \bset$. It is easy to see that each rectangle $R\in \wset\cup \bset$ may be charged at most $3$ times: twice for the cells whose left corners coincide with the top left and bottom left corner of $R$, and once for a cell whose left boundary contains the right boundary of $R$, while each of the two rectangles in $\bset$ can be charged once. Therefore, the total number of cells in the partition $\pset'$ is at most $4|\wset|+2$. The expected number of cells in $\pset'$ is therefore at most $4(3r)+2\leq 12r+2$. Moreover, if event $\event_1$ does not happen, $\pset'$ contains at most $72r+2\leq 73r$ cells. 

\paragraph{Step 2.}
 Let $C$ be any potential cell, and assume that $C\not\in \fset$. Recall that $N_C$ is the total number of all rectangles in $\opt'$ intersecting $C$. For an integer $t\geq 0$, we say that $C$ has {\it excess $t$} iff $\floor{rN_C/|\opt'|}=t$. 
 We need the following lemma.
 
 \begin{lemma}\label{lemma: r_squared_partition}
 Let $C$ be a cell of $\pset'$, such that $C\not\in\fset$, and assume that $C$ has excess $t\geq 10$. Then there is a partition of $C$ into at most $t^2$ rectangular cells, where for each resulting cell $C'$, $N_{C'}\leq 10|\opt'|/r$.
 \end{lemma}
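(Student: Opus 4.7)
First, I would observe that only the $N_C$ rectangles of $\opt'$ that intersect $C$, which I will call $\opt'_C$, are relevant for bounding $N_{C'}$ for any sub-cell $C'\subseteq C$: no other rectangle of $\opt'$ meets $C$, so none can meet any $C'\subseteq C$. The assumption on the excess gives $t|\opt'|/r\le N_C<(t+1)|\opt'|/r$, so the target bound $10|\opt'|/r$ is $\Theta(N_C/t)$. Thus the task reduces to partitioning $C$ into at most $t^2$ axis-parallel rectangular sub-cells, each of which is intersected by at most $10|\opt'|/r$ rectangles of $\opt'_C$.

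Second, I would replay the randomized construction from Step~1 inside $C$, but with a finer sampling rate tuned to the subproblem. Fix a sufficiently large absolute constant $\gamma$ and set $p=\gamma(\log t)\,t/N_C$, and let $\wset_C\subseteq\opt'_C$ be a random sample in which each rectangle is included independently with probability $p$. Build the partition $\pset_C$ of $C$ by shooting vertical rays up and down from the top and bottom corners of each rectangle of $\wset_C$ until they hit another rectangle of $\wset_C$ or the boundary of $C$; by the same argument as for $\pset(\wset)$, every cell of $\pset_C$ is a rectangle. Applying Theorem~\ref{thm: Chernoff} to the indicator variables of the sample, $|\wset_C|\le O(t\log t)$ with high probability, and the same charging argument used for $\pset'$ yields $|\pset_C|\le 4|\wset_C|+2\le O(t\log t)\le t^2$ (using $t\ge 10$).

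Third, I would show that with positive probability every cell $C'\in\pset_C$ satisfies $N_{C'}\le 10|\opt'|/r$. For a potential cell $C'\subseteq C$, the probability that $C'$ appears in $\pset_C$ is at most $p^{|\dset(C')\cap\opt'_C|}(1-p)^{N_{C'}}\le p^4\, e^{-p N_{C'}}$, since the at most four defining rectangles in $\dset(C')\cap\opt'_C$ must be sampled while none of the $N_{C'}$ blocking rectangles of $\opt'_C$ intersecting $C'$ may be sampled. The total number of potential cells with interior defining rectangles is at most $O(N_C^4)$, so the expected number of appearing cells with $N_{C'}>10|\opt'|/r$ is at most
\[
O(N_C^4)\cdot p^4 \cdot \exp\!\bigl(-10\,p\,|\opt'|/r\bigr)
\;=\; O\!\bigl((t\log t)^4\bigr)\cdot \exp\!\bigl(-\Omega(\gamma\log t)\bigr),
\]
which is $o(1)$ once $\gamma$ is chosen large enough (e.g.\ $\gamma\ge 3$). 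Combined with the Chernoff bound on $|\wset_C|$ and a union bound over the bad events, some realization of $\wset_C$ gives a $\pset_C$ with at most $t^2$ cells such that every cell intersects at most $10|\opt'|/r$ rectangles of $\opt'$, as required.

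\textbf{Main obstacle.} The delicate point is balancing the sampling rate. The naive choice $p=\Theta(t/N_C)$ produces a partition of size $O(t)\le t^2$ immediately, but the expected number of cells with excess exceeding the target is only $\Theta(1)$, not $o(1)$, so the union bound fails. Inflating $p$ by a $\log t$ factor is the cleanest fix: the sample size rises to $O(t\log t)$, still well below $t^2$ for $t\ge 10$, and the tail in the bad-cell probability gains the polynomial factor in $t$ needed to kill the $O(N_C^4)$ union bound. A tighter route would invoke the Chazelle--Friedman observation that potential cells with many conflicts are exponentially rare in their excess, avoiding the logarithmic inflation; however, the $t^2$ budget in the statement is generous enough that this refinement is unnecessary.
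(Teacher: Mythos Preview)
Your route is quite different from the paper's, which gives a short deterministic construction. Letting $\trset$ be the $N_C$ rectangles of $\opt'$ intersecting $C$, the paper chooses vertical lines $V_1,\dots,V_{t-1}$ inside $C$ so that each resulting vertical strip contains at most $2N_C/t$ rectangles of $\trset$ (take $V_i$ to be the leftmost line with at least $iN_C/t$ rectangles of $\trset$ entirely to its left), and similarly $t-1$ horizontal lines. This yields a grid with at most $t^2$ cells. Any rectangle of $\trset$ meeting a grid cell $C'$ either contains a corner of $C'$ (at most four such rectangles) or lies entirely in the vertical or the horizontal strip of $C'$; hence $N_{C'}\le 4 + 4N_C/t\le 4+8|\opt'|/r\le 10|\opt'|/r$. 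No randomness, no union bound, and the stated constants drop out immediately.

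Your randomized replay of Step~1 is a legitimate idea in spirit, but as written it does not deliver the bound $|\pset_C|\le t^2$ for small $t$. With your suggested $\gamma\ge 3$ and $t=10$, the \emph{expected} sample size is already at least $3\cdot 10\ln 10\approx 69$, so even the expected cell count $4\cdot 69+2\approx 278$ exceeds $t^2=100$; any high-probability bound is worse. Shrinking $\gamma$ to squeeze the cell count under $t^2$ via Theorem~\ref{thm: Chernoff} (which only controls thresholds above $2e$ times the mean) forces $\gamma\lesssim 0.2$, whereas your bad-cell union bound needs the exponent $\approx 9\gamma$ to dominate the polynomial factor $(\gamma t\log t)^4$, i.e.\ $\gamma>4/9$; these two constraints on $\gamma$ are incompatible. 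The scheme can be rescued with a sharper multiplicative Chernoff tail and $\gamma$ tuned near $0.5$--$0.6$, but that analysis is absent from your write-up. A smaller slip: the displayed inequality $p^{|\dset(C')\cap\opt'_C|}(1-p)^{N_{C'}}\le p^4 e^{-pN_{C'}}$ points the wrong way when fewer than four defining rectangles come from $\opt'_C$ (since $p<1$); the correct accounting stratifies by $k=|\dset(C')\cap\opt'_C|$ and uses that there are only $O(N_C^k)$ potential cells of each type, which still yields the $O((\gamma t\log t)^4)$ factor you want.
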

 
 \begin{proof}
We partition $C$ into at most $t^2$ cells by first building a grid inside $B$,  with at most $t+1$ vertical lines and at most $t+1$ horizontal lines, and then using the partition of $C$ defined by the grid. Let $\trset\subseteq\opt'$ be the set of all rectangles intersecting $C$, so $|\trset|=N_C\geq t|\opt'|/r$. 

We build the set $\tilde{\vset}$ of the vertical lines of the grid as follows. For each $1\leq i< t$, let $V_i$ be the leftmost vertical line, such that the total number of the rectangles of $\trset$ lying completely to the left of $V_i$ is at least $i N_C/t$ (notice that this includes the rectangle whose right boundary lies on $V_i$). Observe that, since $\trset$ is non-degenerate,  the number of the rectangles of $\trset$ lying completely to the left of $V_i$ is at most $\ceil{\frac{iN_C}{t}}\leq \frac{(i+1)N_C}{t}$, as $t\leq \frac{rN_C}{|\opt'|}$, and so $\frac{N_C}{t}\geq\frac{|\opt'|}{r}\geq 2$. Let $V_0$ and $V_t$ be the left and the right boundaries of $C$, respectively. We set $\tilde{\vset}=\set{V_0,V_1,\ldots,V_{t-1},V_t}$. Observe that all lines $V_0,\ldots,V_{t}$ have integral $x$-coordinates. For each consecutive pair $V_{i-1},V_i$ of the vertical lines, consider the rectangle $S^V_i$, whose left and right boundaries are $V_{i-1}$ and $V_i$ respectively, and top and bottom boundaries coincide with the top and the bottom boundaries of $C$. We call $S^V_i$ the \emph{vertical strip of $C$ defined by $V_{i-1}$ and $V_i$}. Then the number of the rectangles $R\in \trset$, that are contained in $S^V_i$ is at most $2N_C/t$. We define the set $\tilde{\hset}=\set{H_0,\ldots,H_r}$ of the horizontal lines of the grid, and the corresponding horizontal strips $S^H_j$ of $C$ similarly.

Consider the partition $\pset^*$ of $C$ defined by the cells of the resulting grid. Then $\pset^*$ contains at most $t^2$ cells. We claim that for each cell $C'\in \pset^*$, $N_{C'}\leq 6|\opt'|/r$. Indeed, consider some cell $C'\in \pset^*$, and let $\trset'$ be the set of all rectangles of $\trset$ intersecting $C'$. 
At most four rectangles of $\trset'$ may contain the corners of the cell $C'$. Each one of the remaining rectangles must be contained in either the vertical strip of the grid to which $C'$ belongs, or the horizontal strip  of the grid to which $C'$ belongs. The total number of the rectangles of $\trset$ contained in these strips is at most $4N_C/t\leq 8|\opt'|/r$, since $t\geq \frac{rN_C}{2|\opt'|}$. Therefore, $N_{C'}\leq 4+8|\opt'|/r\leq 10|\opt'|/r$.
 \end{proof}

 We are now ready to describe the second step of the algorithm. For each cell $C\in \pset'\setminus\fset$, whose excess $t_C\geq 10$, we apply Lemma~\ref{lemma: r_squared_partition} to partition $C$ into $t_C^2$ rectangular cells, where the value $N_{C'}$ of each resulting cell $C'$ is at most $10|\opt'|/r$. We let $\pset$ be the final partition of $B$, obtained after we process all cells $C\in \pset'\setminus\fset$ whose excess is at least $10$. Clearly, the value $N_{C'}$ of each resulting cell $C'$ is at most $20|\opt'|/r$. It now only remains to prove that $\pset$ contains $O(r)$ cells with constant probability. Assuming that event $\event_1$ does not happen, $\pset'$ contains at most $73r$ cells. Therefore, it is enough to show that with probability at least $\half$, the number of cells added in the second step is $O(r)$. The following claim is central in the analysis of the algorithm.

\begin{claim}\label{claim: exponential decay}
For each $t\geq 10$, the expected number of cells in $\pset'$ that have excess $t$ is at most $O(re^{-t})$.
\end{claim}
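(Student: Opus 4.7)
The approach is a standard exponential-decay argument for random geometric samples (in the spirit of Clarkson-Shor / Chazelle-Friedman), adapted to the two-stage construction $\wset = \sset_1 \cup \sset_2$.

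First I would establish a per-cell probability bound. For any potential cell $C$, write $k_C := |\dset(C) \cap \opt'| \le 4$. Since each rectangle of $\opt'$ is independently included in $\wset$ with probability $q := 1 - (1-p)^2 \le 2p$, where $p := r/|\opt'|$, and since the event $\{C \in \pset'\}$ factors as the conjunction of two events on disjoint sets of rectangles---``all $k_C$ defining rectangles of $\dset(C) \cap \opt'$ lie in $\wset$'' and ``none of the remaining $N_C - k_C$ rectangles of $\opt'$ intersecting the interior of $C$ lies in $\wset$''---we obtain
\[
\Pr[C \in \pset'] \;=\; q^{k_C}(1-q)^{N_C - k_C} \;\leq\; (2p)^{k_C}\,e^{-2p(N_C - 4)}.
\]
For a cell with excess $t$, $N_C \ge t/p$, so this is at most $(2p)^{k_C}\,e^{-2t+O(1)}$.

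The heart of the argument is summing this bound over all potential cells of excess $t$. A naive enumeration of potential cells as quadruples from $\opt' \cup \fset \cup \bset$ gives an $O((|\opt'|+m)^4)$ count, which is far too weak. Instead I would introduce an auxiliary super-sample $\wset^\ast$ in which each rectangle of $\opt'$ is independently included with the larger probability $p^\ast := \Theta((t+1)p)$. Reapplying the linearity-in-sample-size bound from Step~1 of the theorem with $p^\ast$ in place of $p$ gives $E[\#\text{cells of }\pset(\wset^\ast)] = O((t+1)r + m) = O(tr)$. A direct ratio calculation then shows that for every potential cell $C$ with excess $t$,
\[
\Pr[C \in \pset'] \;\leq\; \left(\tfrac{2}{t+1}\right)^{k_C} e^{-\Omega(t)}\;\Pr[C \in \pset(\wset^\ast)],
\]
where the $e^{-\Omega(t)}$ gap comes from comparing $(1-q)^{N_C - k_C}$ with the corresponding factor under $\wset^\ast$ when $N_C$ is $\Theta(t/p)$. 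Summing over all potential cells and invoking the bound on $E[\#\text{cells of }\pset(\wset^\ast)]$ yields
\[
E[X_t] \;\leq\; O(tr)\cdot e^{-\Omega(t)} \;=\; O(r\,e^{-t}),
\]
after absorbing the polynomial-in-$t$ factor and the $(2/(t+1))^{k_C} \le 1$ factor into the exponential.

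The main obstacle is exactly this second step: controlling the sum over potential cells without a combinatorial enumeration. The two-stage construction of $\wset$ is what lets the per-cell probability factor cleanly into the $(2p)^{k_C}$ and $e^{-2t}$ pieces, while the super-sample $\wset^\ast$ is what allows us to replace a combinatorial enumeration of potential cells by the already-established linear-in-sample-size expected-cell-count bound; this is precisely the reason the theorem is set up with a two-stage sampling in the first place, since otherwise the second-step reduction would not produce clean independence.
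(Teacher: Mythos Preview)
Your ratio inequality with the super-sample $\wset^\ast$ is backwards. With $p^\ast = \Theta((t+1)p) > q$ you have $1-q > 1-p^\ast$, so the ``failure'' factor satisfies $(1-q)^{N_C}/(1-p^\ast)^{N_C} > 1$, not $\le e^{-\Omega(t)}$. Concretely, for an excess-$t$ cell $N_C \approx t/p$, hence $(1-q)^{N_C} \approx e^{-2t}$ while $(1-p^\ast)^{N_C} \approx e^{-\Theta(t^2)}$; the ratio $\Pr[C\in\pset']/\Pr[C\in\pset(\wset^\ast)]$ is therefore $e^{+\Theta(t^2)}$, which is useless. A larger sample makes a high-excess cell \emph{less} likely to survive, not more, so you cannot upper-bound $\Pr[C\in\pset']$ by a small multiple of $\Pr[C\in\pset(\wset^\ast)]$.

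The paper's argument exploits the two-stage construction in a different way than you describe. It compares $\pset'=\pset(\wset)$ not to a super-sample but to the \emph{sub}-partition $\pset''=\pset(\sset_1)$: since each defining rectangle in $\wset$ lies in $\sset_1$ with probability at least $1/2$, one gets $\Pr[C\in\pset''\mid C\in\pset']\ge 1/16$, hence $\Pr[C\in\pset']\le 16\,\Pr[(C\in\pset'')\wedge(C\in\pset')]$. Now conditioned on $C\in\pset''$, the event $C\in\pset'$ requires that none of the $N_C$ conflicting rectangles fall into the \emph{independent} sample $\sset_2$, which has probability at most $(1-p)^{N_C}\le e^{-t}$. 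Summing gives $E[|\cset_t\cap\pset'|]\le 16e^{-t}E[|\pset''|]=O(re^{-t})$. The role of $\sset_2$ is thus that of a fresh witness sample that kills high-excess cells, while $\sset_1$ supplies a reference partition whose expected size is $O(r)$; this is what the two-stage construction is set up for, not the independence decomposition you describe.
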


We prove the claim below, after we complete the proof of Theorem~\ref{thm: r-good partition} here. Let $n_t$ be the number of cells with excess $t$ in the partition $\pset'$. The expected number of the new cells added in the second step is then at most:

\[\sum_{t\geq 10}\expect{n_t}\cdot t^2\leq \sum_{t\geq 10}t^2\cdot O(re^{-t})\leq r\sum_{t\geq 10}O(e^{-t/2})=O(r),\]

since for $t\geq 10$, $t^2<e^{t/2}$. Using Markov's inequality, with probability at least $1/2$, the number of cells added in the second iteration is at most $O(r)$, and so overall, with constant probability, $\pset$ contains at most $O(r)$ cells.
It now only remains to prove Claim~\ref{claim: exponential decay}.

\begin{proofof}{Claim~\ref{claim: exponential decay}}
 The proof of the claim is practically identical to the proof of Lemma 3.3 in~\cite{Har-Peled14}.
 Notice that for each rectangle $R\in \opt'\cup \fset$, $\prob{R\in \sset_1\mid R\in \wset}\geq \half$: if $R\in \fset$, then this probability is $1$; otherwise, it is easy to see that this probability is at least $\half$. For convenience, we denote $\pset(\sset_1)$ by $\pset''$. 

 Consider now some potential cell $C$. We claim that $\prob{C\in \pset''\mid C\in \pset'}\geq \frac 1 {16}$.
 
 Indeed, given a cell $C$, let $\rset_C$ be the set of all rectangles of $\opt'$ intersecting $C$, and let $\event(C)$ be the event that none of the rectangles in $\rset_C$ belong to $\wset$. Then:
 
 \[\begin{split}
 \prob{C\in \pset''\mid C\in \pset'}&=\frac{\prob{(C\in \pset'')\band(C\in \pset')}}{\prob{C\in \pset'}}\\
 &=\frac{\prob{(\dset(C)\subseteq S_1)\band\event(C)}}{\prob{(\dset(C)\subseteq \wset)\band \event(C)}}\\
 &=\frac{\prob{\dset(C)\subseteq S_1}}{\prob{\dset(C)\subseteq \wset}}\\
 &=\frac{\prod_{R\in \dset(C)}\prob{R\in S_1}}{\prod_{R\in \dset(C)}\prob{R\in \wset}}\\
 &=\prod_{R\in \dset(C)}\frac{\prob{R\in S_1}}{\prob{R\in \wset}}\\
 &=\prod_{R\in \dset(C)}\prob{R\in S_1\mid R\in \wset}\geq \frac{1}{16},\end{split}\]
 
 since there are at most four rectangles in $\dset(C)$.
 
 Therefore, $ \prob{C\in \pset'}=\frac{\prob{(C\in \pset'')\band (C\in \pset')}}{\prob{C\in \pset''\mid C\in \pset'}}\leq 16\cdot \prob{(C\in \pset'')\band (C\in \pset')}$.
 Let $\cset_t$ be the set of all potential cells with excess at least $t$. Then:
 
 \[
 \begin{split}
 \expect{|\cset_t\cap \pset'|}&=\sum_{C\in \cset_t}\prob{C\in \pset'}\\
 &\leq \sum_{C\in \cset_t}16\cdot \prob{(C\in \pset'')\band (C\in \pset')}\\
 &=16\sum_{C\in \cset_t}\prob{C\in \pset'\mid C\in \pset''}\cdot \prob{C\in \pset''}.
 \end{split}
 \]
 
 Let $\event'(C)$ be the event that no rectangle $R\in \rset_C$ belongs to $\sset_2$.
 Note that if $C\in \pset''$, then $C$ can only belong to $\pset'$ if event $\event'(C)$ happens. Therefore,
 
 \[\prob{C\in \pset'\mid C\in \pset''}\leq\prod_{R\in \rset_C}\prob{R\not\in \sset_2}\leq \prod_{R\in \rset_C}\left(1-\frac{r}{|\opt'|}\right)\leq e^{-\sum_{R\in\rset_C}r/|\opt'|}=e^{-N_C\cdot r/|\opt'|}\leq  e^{-t},\]
 
  and
 
 \[\expect{|\cset_t\cap \pset'|}\leq 16e^{-t} \sum_{C\in \cset_t}\prob{C\in \pset''}\leq 16e^{-t}\expect{|\pset''|}.\]
 
 From previous discussion, the number of cells in $\pset''$ is bounded by $4|\sset_1|+2$, and $\expect{|S_1|}=2r+m\leq 3r$, so $\expect{|\pset''|}=O(r)$. We conclude that $\expect{|\cset_t\cap \pset'|}\leq O(re^{-t})$.
\end{proofof}


\subsection{Proof of Theorem~\ref{thm: balanced partition reducing boundary complexity}}
%
Let $G_1$ be the graph whose vertices are the corners of the rectangles in $\pset$, and edges are the boundary edges of the rectangles in $\pset$. Notice that $G_1$ has at most $|\pset|+1\leq c^*r+1$ faces, including the outer face, and each vertex of $G_1$ is adjacent to at most 4 faces. Let $G_2$ be the graph dual to $G_1$. Then $G_2$ contains at most $c^*r + 1$ vertices, and every face of $G_2$ has at most $4$ vertices on its boundary. Moreover, it is easy to see that $G_2$ is $2$-connected. We assign weights to the vertices of $G_2$ as follows: for each fake rectangle $F\in \fset$, we assign the weight of $1$ to the vertex of $G_2$ that corresponds to the face $F$ of $G_1$. All other vertex weights are set to $0$. The total weight of the vertices of $G_2$ is then exactly $L$. We now use Theorem~\ref{thm: balanced separators in planar graphs} to obtain a simple cycle $C$ of length at most $16\sqrt{c^*r}$ in $G_2$, such that $C$ is a weighted balanced separator. 

Let $v^*$ denote the vertex of $G_2$ corresponding to the outer face of $G_1$. Cycle $C$ partitions the plane into two regions. If $v^*\not\in V(C)$, then we define the exterior of $C$ as the region containing $v^*$, and the other region is called the interior of $C$. Otherwise, we designate one of the two regions as the exterior, and the other as the interior of $C$ arbitrarily.
Let $V_1$ be the set of vertices of $G_2$ with weight $1$ lying in the exterior of $C$, and $V_2$ the set of vertices with weight $1$ lying in the interior of $C$. Then $|V_1|,|V_2|\leq 2L/3$.

 We distinguish between two cases.
The first case happens when $v^*\in V(C)$. Since $C$ is a simple cycle, we visit $v^*$ only once while traversing $C$.  Let $\aset$ be the set of all cells of the partition $\pset$ that correspond to vertices of $V(C)\setminus\set{v^*}$, so $|\aset| \leq O( \sqrt{r})$. Let $v_1, v_2$ be the neighbors of $v^*$ in the cycle $C$. Then the cells corresponding to $v_1$ and $v_2$ must be cells whose boundaries have non-empty intersections with the boundary of the bounding box $B$. We can ``connect" these two cells through a segment $\sigma$ of the bounding box, contained in the interior of $C$.  The outer boundary of $\bigcup_{P\in \aset}P$ combined with the line $\sigma$ must now be a simple polygon, that we denote by $J$.

The second case happens when $v^*\not\in C$. In this case, we let $\aset$ be the set of all the cells of $\pset$ whose corresponding vertex belongs to $C$. Then $|\aset|\leq O(\sqrt r)$, and the outer boundary of $\bigcup_{P\in \aset}P$ is a simple polygon, that we denote by $J$.

In either case, we obtain a simple polygon $J$, whose edges are parallel to the axes, and $J$ is aligned with $Z$. Moreover, since the length of $C$ is at most $O(\sqrt r)$, the boundary of $J$ contains at most $O(\sqrt r)$ corners. We use the boundary of $J$ to define the two sets $\fset_1,\fset_2$ of the fake rectangles, so that $S(\fset_1)=(B\setminus  J)\setminus (\bigcup_{F\in \fset}F)$ and $S(\fset_2)=J\setminus (\bigcup_{F\in \fset}F)$.

From Lemma~\ref{lemma: simple tiling}, there is a set $\fset_1'$ of $O(\sqrt r)$ internally disjoint closed rectangles, whose union is $J$, such that every rectangle of $\fset_1'$ is aligned with $Z$. Let $\fset_1''$ be the set of all rectangles $F\in \fset$ that are contained in $B\setminus J$. Notice that $|\fset_1''|\leq 2L/3$, since we used a balanced cut. We let $\fset_1=\fset_1'\cup \fset_1''$. Then $|\fset_1|\leq \frac{2L}{3}+O(\sqrt r)$. From the above discussion, every rectangle in $\fset_1$ is aligned with $Z$.

We define $\fset_2$ similarly. From  Lemma~\ref{lemma: non-simple-tiling}, there is a set $\fset_2'$ of $O(\sqrt r)$ internally disjoint closed rectangles, whose union is $B\setminus J$, such that every rectangle of $\fset_2'$ is aligned with $Z$. Let $\fset_2''$ be the set of all rectangles $F\in \fset$ that have contributed to the weight of $V_2$, so $|\fset_2''|\leq 2L/3$, and let $\fset_2'''$ be the set of all fake rectangles $F\in \fset$ whose corresponding cell belongs to $\aset$, so $|\fset_2'''|=O(\sqrt r)$. We let $\fset_2=\fset_2'\cup \fset_2''\cup \fset_2'''$. Then $|\fset_2|\leq \frac{2L}{3}+O(\sqrt r)$. As before, every rectangle in $\fset_2$ is aligned with $Z$.

It is easy to see that $S(\fset_1)=(B\setminus J)\setminus(\bigcup_{F\in \fset}F)$, and $S(\fset_2)=J\setminus  (\bigcup_{F\in \fset}F)$. Therefore, $S(\fset_1)\cap S(\fset_2)=\emptyset$, and $S(\fset_1)\cup S(\fset_2)\subseteq S(\fset)$, so $(\fset_1,\fset_2)$ is a valid decomposition pair as required. The only rectangles of $\opt'$ that are not contained in $\rset(\fset_1)\cup \rset(\fset_2)$ are the rectangles that intersect the boundaries of the cells of $\aset$ (observe that line $\sigma$ does not intersect any rectangles).  The number of the rectangles of $\opt'$ that the boundary of each such cell $P$ intersects is at most $N_P\leq O(|\opt'|/r)$, and so the total number of the rectangles of $\opt'$ that do not belong to $\rset(\fset_1)\cup \rset(\fset_2)$ is bounded by $O(|\opt'|/\sqrt r)$.
We conclude that there are constants $c,c'$ with $|\fset_1|,|\fset_2|\leq \frac{2L}{3}+c\sqrt r$, and $|\opt_{\fset_1}|+|\opt_{\fset_2}|\geq |\opt_{\fset}|\left(1-\frac{c'}{\sqrt r}\right )$. Setting $c_1=\max\set{c,c',1}$ concludes the proof of the theorem.

\subsection{Proof of Theorem~\ref{thm: balanced partition general}}
The proof closely follows the proof of Theorem~\ref{thm: balanced partition reducing boundary complexity}, except that we define the weights of the vertices of $G_2$ differently.

Let $G_1$ be the graph whose vertices are the corners of the rectangles in $\pset$, and edges are the boundary edges of the rectangles in $\pset$. As before, $G_1$ has at most $c^*r+1$ faces, including the outer face, and each vertex of $G_1$ is adjacent to at most four faces. Let $G_2$ be the graph dual to $G_1$. Then $G_2$ contains at most $c^*r + 1$ vertices, and every face of $G_2$ has at most $4$ vertices on its boundary. As before, graph  $G_2$ is $2$-connected. We assign weights to the vertices of $G_2$ as follows. Consider some vertex $v\in V(G_2)$, and let $P$ be the face of $G_1$ corresponding to $v$. If $P$ is the outer face, or $P=F$ for some fake rectangle $F\in \fset$, then we set the weight of $v$ to $0$. Otherwise, the weight of $v$ is the total number of all rectangles $R\in \opt'$, whose upper left corner is either an internal point of $P$, or lies on the left or the top boundaries of $P$, excluding the bottom-left corner and the top-right corner of $P$. The total weight of the vertices of $G_2$ is then exactly $|\opt'|$. We now use Theorem~\ref{thm: balanced separators in planar graphs} to compute a simple cycle $C$ of length at most $16\sqrt{c^*r}$ in $G_2$, such that $C$ is a weighted balanced separator in $G_2$ with respect to the vertex weights. We define the interior and the exterior of $C$, and construct the polygon $J$ exactly as in the proof of Theorem~\ref{thm: balanced partition reducing boundary complexity}. 

We now claim that the total number of the rectangles of $\opt'$ contained in $J$ is at most $3|\opt'|/4$, and the same holds for $B\setminus J$. Since we have computed a balanced partition, the total weight of all vertices lying in the interior of $C$ is at most $2|\opt'|/3$, and the same is true for the total weight of all vertices in the exterior of $C$. Therefore, $B\setminus J$ contains at most $2|\opt'|/3$ rectangles of $\opt'$. Polygon $J$ contains at most $2|\opt'|/3$ rectangles, in addition to some rectangles that may have been added by including the cells of $\aset$ in $J$. However, each cell $P\in \aset$ intersects at most $20|\opt'|/r$ rectangles of $\opt'$, and so the total number of the rectangles of $\opt'$ contained in $J$ is at most $\frac{2|\opt'|}{3}+\frac{20|\opt'|}{r}\cdot 16\sqrt{c^*r}\leq \frac{3|\opt'|}{4}$, as $r\geq 2^{24}c^*$.

We next define the sets $\fset_1$ and $\fset_2$ of fake rectangles exactly like in the proof of Theorem~\ref{thm: balanced partition reducing boundary complexity}, so $S(\fset_1)=(B\setminus J)\setminus(\bigcup_{F\in \fset}F)$, and $S(\fset_2)=J\setminus  (\bigcup_{F\in \fset}F)$. As before,  $S(\fset_1)\cap S(\fset_2)=\emptyset$, and $S(\fset_1)\cup S(\fset_2)\subseteq S(\fset)$, so $(\fset_1,\fset_2)$ is a valid decomposition pair for $\fset$, as required. The only rectangles of $\opt'$ that are not contained in $\rset(\fset_1)$ or $\rset(\fset_2)$ are the rectangles that were intersected by the boundaries of the cells of $\aset$. The total number of the rectangles of $\opt'$ that the boundary of each such cell intersects is $O(|\opt'|/r)$, while $|\aset|=O(\sqrt r)$, and so the total number of the rectangles in $\opt'$ that do not belong to $\rset(\fset_1)\cup \rset(\fset_2)$ is bounded by $O(|\opt'|/\sqrt r)$.

From the above discussion, it is clear that $|\opt_{\fset_1}|, |\opt_{\fset_2}|\leq 3|\opt_{\fset}|/4$. Finally, it remains to bound $|\fset_1|$ and $|\fset_2|$. Both are bounded by $|\fset|+O(z)$, where $z=O(\sqrt r)$ is the number of corners of $J$. 
We conclude that there are constants $c,c'$ with $|\fset_1|,|\fset_2|\leq L+c\sqrt r$, and $|\opt_{\fset_1}|+|\opt_{\fset_2}|\geq |\opt_{\fset}|\left(1-\frac{c'}{\sqrt r}\right )$. Setting $c_2=\max\set{c,c',1}$ concludes the proof of the theorem.


\subsection{Proof of Corollary~\ref{corollary: main partition}}
Throughout the proof, we use the constants $c_1,c_2$ from Theorems~\ref{thm: balanced partition reducing boundary complexity} and~\ref{thm: balanced partition general}, and constant $c^*$ from the definition of $r$-good partitions.
We use a parameter $r=\floor{\left(\frac{L^*}{3(c_1+c_2)}\right )^2}$. By appropriately setting the constant $c_3$, we can ensure that $r\geq \max\set{L^*,2^{24}c^*,16c_1^2, 16c_2^2 }$, and that $c_3>6(c_1+c_2)^2$.

Assume first that $L>3$. Let $\opt'$ be any optimal solution to $\rset(\fset)$. We compute an $r$-good partition $\pset$ of $B$ with respect to $\fset$ and $\opt'$, using Theorem~\ref{thm: r-good partition} (notice that we are guaranteed that $r\leq (L^*)^2\leq |\opt_{\fset}|/64$). Let $Z$ be the set of the corners of all rectangles in $\pset$. Recall that if the corners of all rectangles in $\fset$ have integral coordinates, then all points in $Z$ have integral coordinates. We then  apply Theorem~\ref{thm: balanced partition reducing boundary complexity} to $\fset$ and $\pset$, to find a valid decomposition pair $(\fset_1',\fset_2')$ for $\fset$, such that the rectangles in $\fset_1'\cup \fset_2'$ are aligned with $Z$, so all their corners have integral coordinates. 
Moreover,  $|\fset_1'|,|\fset_2'|\leq \frac{2L}{3}+c_1\sqrt r<L^*$, from the definition of $r$. We also have that $|\opt_{\fset_1'}|+|\opt_{\fset_2'}|\geq |\opt_{\fset}|\left (1-\frac{c_1}{\sqrt r}\right )$. Assume without loss of generality that $|\opt_{\fset_1'}|\geq |\opt_{\fset_2'}|$, so $|\opt_{\fset_2'}|\leq |\opt'|/2$.
Then:

\[|\opt_{\fset_1'}|\geq \frac{|\opt_{\fset}|} 2 -\frac{|\opt_{\fset}|\cdot c_1}{2\sqrt r}\geq \frac{|\opt_{\fset}|}{4}\geq 2r,\]

since $r\geq 16c_1^2$, and $r\leq |\opt_{\fset}|/64$.
We now let $\opt''$ be the optimal solution to $\rset(\fset'_1)$. 
 Since $r\geq \max\set{L^*,2^{24}c^*}$, we can compute an $r$-good partition $\pset'$ of $B$ with respect to $\fset_1'$ and $\opt''$, using Theorem~\ref{thm: r-good partition}. Let $Z'$ be the set of the corners of all rectangles in $\pset'$. Then all vertices in $Z'$ have integral coordinates. We then apply Theorem~\ref{thm: balanced partition general} to $\fset_1'$ and $\pset'$, to find a valid decomposition pair $(\fset_1,\fset_2)$ for $\fset'_1$, such that the rectangles in $\fset_1\cup \fset_2$ are aligned with $Z'$, so all their corners have integral coordinates. 

Notice that Theorem~\ref{thm: balanced partition general} guarantees that for each $i\in\set{1,2}$, $|\opt_{\fset_i}|\leq 3|\opt_{\fset}|/4$. Moreover, $|\fset_1|,|\fset_2|\leq |\fset_1'|+c_2\sqrt r\leq \frac{2L^*}{3}+(c_1+c_2)\sqrt r\leq L^*$, from the definition of $r$. 

The final three sets of fake rectangles are $\fset_1,\fset_2$ and $\fset_2'$. To finish the proof, we observe that:

\[\begin{split}
|\opt_{\fset_1}|+|\opt_{\fset_2}|+|\opt_{\fset_2'}|&\geq  |\opt_{\fset_1'}|\cdot \left (1-\frac{c_2}{\sqrt r}\right )+|\opt_{\fset_2'}|\\
&\geq \left (|\opt_{\fset_1'}|+|\opt_{\fset_2'}|\right )\cdot  \left (1-\frac{c_2}{\sqrt r}\right )\\
&\geq |\opt_{\fset}|\cdot  \left (1-\frac{c_1}{\sqrt r}\right )\left (1-\frac{c_2}{\sqrt r}\right ) \\
&\geq |\opt_{\fset}|\cdot  \left (1-\frac{c_1+c_2}{\sqrt r}\right )\\
&\geq |\opt_{\fset}|\cdot \left (1-\frac{c_3}{L^*}\right ),\end{split}\]

from the definition of $r$, and since $c_3>6(c_1+c_2)^2$.

Assume now that $L<3$. Then instead of running the first part of the above algorithm, we simply set $\fset_1'=\fset$ and $\fset_2'=\set{B}$. We then apply the second part of the algorithm to $\fset_1'$ exactly as before (in other words, we only need to apply Theorem~\ref{thm: balanced partition general}).

\section{Proofs Omitted from Section~\ref{sec: running time exp sqrt log n}}
\subsection{Proof of Claim~\ref{claim: find a rho-accurate grid}}

Let $\tau$ be a parameter that we will determine later. Let $\opt'$ be an optimal solution to instance $\rset(\fset)$ (note that the algorithm does not know this solution or its value).
Given $\tau$, we define a grid $G_{\tau}=(\vset^{\tau},\hset^{\tau})$ as follows. The set $\vset^{\tau}$ of vertical lines of the grid is the union of two subsets, $\vset_1^{\tau}$ and $\vset_2^{\tau}$. Set $\vset_2^{\tau}$ contains, for every corner $(x,y)$ of every rectangle $F\in \fset$, a vertical line $V$ with coordinate $x$. Therefore, $|\vset_2^{\tau}|\leq 2|\fset|$. We now proceed to define $\vset_1^{\tau}$.

Initially, $\vset_1^{\tau}$ only contains $V_0$ - the left boundary of $B$. We set $t=0$, and then iterate. For each vertical line $V$ to the right of $V_t$, consider the vertical strip $S(V)$ of $B$, contained between $V_t$ and $V$. Let $\rset'\subseteq \rset(\fset)$ denote the set of all rectangles that are contained in this strip. We run the $(c_A\log \log |\opt'|)$-approximation algorithm from Corollary~\ref{cor: an log log opt approx} on $\rset'$. If the value of the solution returned by the algorithm is at least $\tau/2$, then $V$ is called a candidate line. Among all candidate lines, let $V_{t+1}$ be the leftmost one. We can assume without loss of generality, that there is a rectangle $R\in \rset$ whose right boundary lies on $V_{t+1}$ (alternatively, we can shift the line $V_{t+1}$ to the left until this happens; since both vertical strips defined by the old and by the new locations of $V_{t+1}$ contain exactly the same set of rectangles, the values of the solutions returned by the $O(\log\log |\opt'|)$-approximation algorithm are the same for both instances).  We add $V_{t+1}$ to $\vset_1^{\tau}$, set $t:=t+1$, and continue. We use the following observation.

\begin{observation}\label{obs: bound in strips}
Let $\rset'\subseteq \rset(\fset)$ contain all rectangles $R$ with $R\subseteq S(V_{t+1})$. Then the value of the optimal solution to instance $\rset'$ is at most $\ceil{\frac{\tau \cdot c_A\log\log |\opt'|} 2}$.
\end{observation}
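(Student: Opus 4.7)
The plan is to exploit the fact that $V_{t+1}$ was chosen as the \emph{leftmost} candidate line, combined with the non-degeneracy of the canonical instance $\rset$.

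First, I would introduce an auxiliary vertical line $V'$ placed strictly to the left of $V_{t+1}$ and strictly to the right of $V_t$. Because we assumed (after the shifting step) that $V_{t+1}$ has integral $x$-coordinate $x_{V_{t+1}}$ (aligned with the right boundary of some $R\in\rset$), I can set $V'$ to have $x$-coordinate $x_{V_{t+1}}-1/2$. This lies strictly between $V_t$ and $V_{t+1}$. Since $V_{t+1}$ is the leftmost candidate, $V'$ is not a candidate, so algorithm $\aset$ from Corollary~\ref{cor: an log log opt approx} returns a solution of value strictly less than $\tau/2$ on the instance defined by the rectangles contained in the strip bounded by $V_t$ and $V'$. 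Call this set $\rset_{V'}$ and note $\rset_{V'} \subseteq \rset(\fset)$. By the approximation guarantee, the optimal value on $\rset_{V'}$ satisfies
\[
|\opt_{\rset_{V'}}| \;\leq\; c_A\log\log|\opt_{\rset_{V'}}|\cdot \aset(\rset_{V'}) \;<\; \frac{\tau\cdot c_A\log\log|\opt'|}{2},
\]
where we used $|\opt_{\rset_{V'}}|\leq |\opt'|$.

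Next I would decompose any optimal solution $\opt''$ for $\rset'$ into two sets: $A$ consisting of rectangles whose right boundary lies strictly to the left of $V_{t+1}$, and $B$ consisting of rectangles whose right boundary lies exactly on $V_{t+1}$. Since $\rset$ is canonical, every rectangle has integer coordinates, so any rectangle in $A$ has its right boundary at integer $x$-coordinate at most $x_{V_{t+1}}-1 < x_{V'}$; hence every rectangle of $A$ is contained in the strip between $V_t$ and $V'$, giving $|A|\leq |\opt_{\rset_{V'}}| < \tfrac{\tau c_A\log\log|\opt'|}{2}$. For $B$, I would invoke the non-degeneracy of $\rset$: no two distinct rectangles of $\rset$ share an $x$-coordinate of any corner, so at most one rectangle of $\rset$ (the one we used to snap $V_{t+1}$) has its right boundary on $V_{t+1}$. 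Therefore $|B|\leq 1$.

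Finally, combining: since $|A|$ is a non-negative integer strictly less than $\tfrac{\tau c_A\log\log|\opt'|}{2}$, we have $|A| \leq \ceil{\tfrac{\tau c_A\log\log|\opt'|}{2}} - 1$, and thus
\[
|\opt''| \;=\; |A|+|B| \;\leq\; \ceil{\frac{\tau c_A\log\log|\opt'|}{2}},
\]
as claimed. The one place to be careful is the integer arithmetic at the end (to absorb the extra $+1$ from rectangle $R$ into the ceiling), but this is immediate once we have the strict inequality for $|A|$; no other step should require real work.
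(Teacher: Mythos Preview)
Your proof is correct and follows essentially the same approach as the paper: place an auxiliary line $V'$ just to the left of $V_{t+1}$, use that $V'$ is not a candidate to bound the optimum on the strip up to $V'$, and then use non-degeneracy to argue that at most one additional rectangle can appear when extending the strip to $V_{t+1}$. The paper phrases the last step as $|\rset'\setminus\rset''|\leq 1$ rather than splitting an optimal solution into your sets $A$ and $B$, but this is the same argument, and your handling of the ceiling to absorb the $+1$ is a bit more explicit than the paper's.
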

\begin{proof}
Let $V'$ be a vertical line lying immediately to the left of $V_{t+1}$, and let $\rset''\subseteq \rset(\fset)$ contain all rectangles $R$ with $R\subseteq S(V')$. Since $V'$ is not a candidate line, the algorithm from Corollary~\ref{cor: an log log opt approx} returned a solution to instance $\rset''$ whose value is less than  $\tau/2$, and so the optimal solution value for instance $\rset''$ is less than $\frac{\tau \cdot c_A\log\log |\opt'|} 2$. Since the input set of rectangles are non-degenerate, $|\rset'\setminus\rset''|\leq 1$, and so the optimal solution value for instance $\rset'$ is at most $\ceil{\frac{\tau \cdot c_A\log\log |\opt'|} 2}$.
\end{proof}

The algorithm terminates in iteration $t$ when no candidate lines exist anymore. The last strip $S$ of the grid must also have the property that if $\rset'\subseteq \rset(\fset)$ contains all rectangles $R$ with $R\subseteq S$, then the value of the optimal solution to instance $\rset'$ is at most $\ceil{\frac{\tau \cdot c_A\log\log |\opt'|} 2}$.

We set $\vset^{\tau}=\vset_1^{\tau}\cup \vset_2^{\tau}$, and we define the set $\hset^{\tau}$ of horizontal lines of grid $G_{\tau}$ similarly. 
This concludes the definition of the grid $G_{\tau}$.
Since we have ensured that for each vertical line $V\in \vset^{\tau}$, there is some rectangle in $\rset\cup\fset$ whose left or right boundary is contained in $V$, if all corners of all rectangles in $\rset\cup\fset$ have integral coordinates, so do all lines in $\vset^{\tau}$. A similar argument holds for the lines in $\hset^{\tau}$. Therefore, if the corners of all rectangles in $\fset\cup \rset$ have integral coordinates, then so do all vertices of the grid.

We are now ready to complete the algorithm. Let $w$ be a large enough constant, so that $\log w>2\log(c_A\log\log w)$. Notice that we can efficiently check whether $|\opt'|\geq w$ via exhaustive search. Assume first that $|\opt'|\geq w$ holds. Our algorithm needs an estimate on the value $|\opt'|$. In order to obtain this estimate,  we run the $(c_A\log \log |\opt'|)$-approximation algorithm from Corollary~\ref{cor: an log log opt approx} on instance $\rset(\fset)$, and denote by $W'$ the value of the solution returned by the algorithm. Then  $\frac{|\opt'|}{c_A\log\log |\opt'|}\leq W'\leq |\opt'|$. 
Since we have assumed that $|\opt'|\geq w$, we get that $\log |\opt'|>2\log(c_A\log\log |\opt'|)$, and so $\frac{\log\log |\opt'|}2\leq \log\log W'\leq \log\log |\opt'|$. Therefore, $W'\leq |\opt'|\leq c_AW'\log\log |\opt'|\leq 2c_AW'\log\log W'$.

We will use a new integral parameter $W$ as our guess on the value of $|\opt'|$, trying all integral values of $W$ between $W'$ and $2c_A W'\log\log W'$, so one of the values we try is guaranteed to be $|\opt'|$. For each guessed value of $W$, we let $\tau_W=\frac{W}{\rho\cdot c_A\log\log W}$, and we construct the grid $G_{\tau_{W}}$ as above. From Observation~\ref{obs: bound in strips}, we are guaranteed that for every vertical and every horizontal strip of the resulting grid, the value of the optimal solution to the sub-instance defined by the strip is at most:

 \[\ceil{\frac{\tau_W\cdot c_A\log\log |\opt'|}{2}}=\ceil{\frac{W\log\log |\opt'|}{2\rho \log\log W}}\leq \ceil{\frac{W}{\rho}\cdot \frac{\log\log |\opt'|}{2\log\log W'}}\leq \ceil{\frac{W}{\rho}}.\]

 In particular, if $W\leq |\opt'|$, then this value is bounded by $\ceil{|\opt'|/\rho}$. On the other hand, if $W=|\opt'|$, then, since every vertical strip defined by the lines in $\vset_1^{\tau_W}$ contains at least $\frac{\tau_W}{2}\geq\frac{|\opt'|}{2\rho c_A\log\log |\opt'|}$ mutually disjoint rectangles, the  number of such strips is bounded by $2\rho c_A\log\log |\opt'|\leq 4\rho c_A\log \log W'$, and the same holds for the number of the horizontal strips defined by the lines in $\hset_1^{\tau_W}$.
Therefore, for $W=|\opt'|$, we get $|\vset^{\tset_W}_1|,  |\hset^{\tset_W}_1|\leq 4\rho c_A\log\log W'$.
We choose the smallest value $W$ for which the $|\vset^{\tset_W}_1|,  |\hset^{\tset_W}_1|\leq 4\rho c_A\log\log W'$ and return the corresponding grid $G_{\tau_W}$ as the output of the algorithm. We are then guaranteed that $W\leq |\opt'|$, and  $|\vset^{\tau_W}|\leq 4\rho c_A\log\log W+2|\fset|\leq 4\rho c_A\log\log |\opt'|+2|\fset|$, and the same holds for $|\hset^{\tau_W}|$. From the above discussion, for every vertical and every horizontal strip of the grid, the optimal solution value for the instance defined by the rectangles contained in the grid is at most $\ceil{|\opt'|/\rho}$.

Finally,  consider the case when $|\opt'|<w$. In this case we can compute the optimal solution $\opt'$ exactly via exhaustive search. We  then use the grid $G^{\tau}$ for $\tau=\frac{|\opt'|}{\rho \cdot c_A\log\log |\opt'|}$. From Observation~\ref{obs: bound in strips}, for every vertical and every horizontal strip of the grid, the optimal solution value for the instance defined by the rectangles contained in the strip is at most:

\[\ceil{\frac{\tau \cdot c_A\log\log |\opt'|} 2}=\ceil{\frac{|\opt'|}{2\rho}}.\]

Since every strip of the bounding box defined by the vertical lines in $\vset_1^{\tau}$ contains at least $\tau/2\geq |\opt'|/(2\rho c_A\log\log |\opt'|)$ rectangles that are disjoint from each other, $|\vset_1^{\tau}|\leq 2\rho c_A\log\log |\opt'|$, and the same holds for $|\hset_1^{\tau}|$.


\subsection{Proof of Theorem~\ref{thm: rho-aligned r-good partition}}

\begin{proof}
Let $r'=8r$, and notice that $\max\set{m,3}\leq r'\leq |\opt'|/2$. Let $\pset$ be the $r'$-good partition of $B$ with respect to $\fset$ and $\opt'$, given by Theorem~\ref{thm: r-good partition} (this partition is not necessarily aligned with $G$). We gradually transform $\pset$ into a $G$-aligned $r$-good partition. Recall that $|\pset|\leq c^*r'$, and every cell of $\pset$ intersects at most $20|\opt'|/r'$ rectangles of $\opt'$.

Let $P$ be any cell of $\pset$, such that $P$ is not contained in any vertical strip of the grid $G$. Let $V_L$ be the leftmost vertical line of $G$ intersecting $P$, and let $V_R$ be the rightmost vertical line of $G$ intersecting $P$ (it is possible that $V_L=V_R$). We partition $P$ into up to three cells along the lines $V_R$ and $V_L$ (if $V_R=V_L$, then we partition into two cells). At the end of this procedure, for each cell $P$, either $P$ is contained in a vertical strip, or the left and the right boundaries of $C$ are aligned with the grid $G$. Nottice that cells $P\in \fset$ are not changed at the end of this step, as each such cell is aligned with $G$.

We do the same with the horizontal lines of $G$. Let $\pset'$ be the resulting partition. Then $|\pset'|\leq 9r'$, and every cell of $\pset'$ intersects at most $20|\opt'|/r'$ rectangles of $\opt'$. Moreover, every rectangle of $\fset$ is a cell of $\pset'$. We say that a cell $P'$ of $\pset'$ is \emph{small} if $P'\not\in \fset$, and it is contained in a vertical or a horizontal strip of $G$; otherwise, it is large. Notice that for every large cell $P'$ of $\pset'$, the boundary of $P'$ is aligned with the grid $G$. 

We now construct our final partition $\pset^*$ of the bounding box into cells, by starting with $\pset^*=\emptyset$, and then gradually adding rectangular cells to $\pset^*$. All cells that we add will be internally disjoint and aligned with the grid $G$. At the same time, we will consider the grid $G$, and we will mark the cells of $G$ that are covered by the rectangles currently in $\pset^*$. First, we add to $\pset^*$ all large cells that belong to $\pset'$, and we mark all cells of $G$ that are covered by such large cells of $\pset'$. (At this point, all fake rectangles in $\fset$ belong to $\pset^*$).

Let $C$ be any unmarked cell of the grid $G$. If there is some cell $P'\in \pset'$, such that one of the four corners of $P'$ belongs to $C$ or its boundary, then we add $C$ to $\pset^*$, and we mark cell $C$ of $G$. We call such a cell $C$ a {\it neutral cell}. We note that for each rectangle in $\pset'$, we add at most 16 new rectangles to $\pset^*$ at this step, as each of the four corners of each rectangle $P'\in \pset'$ may intersect a boundary of up to $4$ cells of $G$. Therefore, so far, $|\pset^*|\leq 16\cdot 9\cdot c^* r'=144 c^*r'$.

Let $C$ be any cell of $G$ that remains unmarked after the previous step. Since $C$ is not contained in any large rectangle of $\pset'$, there must be a small rectangle $P'\in \pset'$ intersecting $C$. Moreover, no corner of $P'$ is contained in $C$ or its boundary. Let $S^V$ be the vertical strip of $G$ in which $C$ lies, and let $S^H$ be the horizontal strip of $G$ in which $C$ lies. Then $P'$ is contained in $S^V$ or in $S^H$, and exactly one of these two things must happen, as no corner of $P'$ is contained in $C$. In the former case, we say that $P'$ intersects $C$ vertically, and in the latter case we say that it intersects $C$ horizontally. Notice that  If $P'$ intersects $C$ vertically, then all other small cells of $\pset'$ intersecting $C$ must intersect it vertically. We say that $C$ is a vertical cell of $G$ in such a case. Otherwise, we say that $C$ is a horizontal cell.

While $G$ contains an unmarked vertical cell $C$, let $S^V$ be the vertical strip of $G$ containing $C$. Let $\sset$ be the set of all vertical cells contained in $S^V$ (notice that $\sset$ only contains cells that are currently unmarked). Let $\sset'\subseteq \sset$ be a subset of these vertical cells, that appear consecutively in the strip $S^V$, such that $C\in \sset'$, and $\sset'$ is maximal satisfying those conditions. Let $P^*$ be the rectangle consisting of the union of the cells in $\sset'$. We add $P^*$ to $\pset^*$, and we mark all cells in $\sset'$. We call $P^*$ a vertical cell of $\pset^*$. Notice that the cell $C^*$ of $G$, appearing right below $P^*$, must be a neutral cell (since the small cells of $\pset'$ intersecting the bottommost cell of $\sset'$ must terminate at $C^*$ or its boundary). We charge $P^*$ to $C^*$.

We continue to process all unmarked vertical cells of $G$, until all of them are marked. Each neutral cell is charged at most once, so the number of cells in $\pset^*$ at most doubles.

We process all unmarked horizontal cells of $G$ similarly. In the end, we obtain a partition $\pset^*$ of the bounding box into rectangles, containing at most $4\cdot 144 \cdot c^* r'=O(r)$ cells. As observed above, $\fset\subseteq \pset^*$. It now only remains to show that every cell of $\pset^*$ intersects at most $20|\opt'|/r$ rectangles of $\opt'$.

\begin{lemma}
Each cell of $\pset^*$ intersects at most $20|\opt'|/r$ rectangles of $\opt'$.
\end{lemma}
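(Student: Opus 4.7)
The plan is to case-split on the three types of cells in $\pset^*$: large cells inherited from $\pset'$, neutral grid cells added singly, and vertical (or symmetrically, horizontal) cells assembled from stacks of unmarked grid cells. The large-cell case is immediate: each large cell of $\pset'$ is a sub-rectangle of some cell of $\pset$ obtained by cutting along grid lines, so by the $r'$-goodness of $\pset$ (recall $r'=8r$) it meets at most $20|\opt'|/r' = 20|\opt'|/(8r) \leq 20|\opt'|/r$ rectangles of $\opt'$.

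For a neutral grid cell $C$ lying in a vertical strip $S^V$ and a horizontal strip $S^H$ of $G$, I will partition the rectangles of $\opt'$ intersecting $C$ into three groups, according to whether they are contained in $S^V$, contained in $S^H$, or contained in neither. The $\rho$-accuracy of $G$ together with $\rho \geq r$ gives at most $\ceil{|\opt'|/r}$ rectangles in each of the first two groups. For the third group, a short case analysis on the sides along which $R\in\opt'$ must extend beyond $S^V$ (horizontally) and beyond $S^H$ (vertically) shows that $R$ must contain one of the four corners of $C$ in its closure; since pairwise-disjoint open rectangles can share at most four elements at any common boundary point (one per quadrant), this group has size at most $4 \cdot 4 = 16$. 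Adding up yields at most $2\ceil{|\opt'|/r} + 16 \leq 20|\opt'|/r$, where the last inequality uses $r \leq |\opt'|/16$.

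The main obstacle will be bounding the count for a vertical assembled cell $P^* = C_1 \cup \cdots \cup C_k$, consisting of consecutive non-neutral grid cells in a single vertical strip $S^V$ of horizontal extent $[a_1,a_2]$. The key structural claim I will establish is that $P^*$ is contained in a single cell of $\pset'$; given this, the bound $20|\opt'|/r' \leq 20|\opt'|/r$ follows exactly as in the large-cell case. The argument proceeds in two steps. First, each $C_i$ is covered by a unique cell $P'_i \in \pset'$, which is moreover small with $P'_i \subseteq S^V$: since $C_i$ was not marked in the first step, it is disjoint from every large cell of $\pset'$; and since no corner of any cell of $\pset'$ lies on $C_i$ or on its boundary (by non-neutrality), $P'_i$'s horizontal extent, forced to lie in $[a_1,a_2]$ yet to cover $C_i$, must equal $[a_1,a_2]$ exactly, while $P'_i$'s vertical extent must strictly contain that of $C_i$. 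Second, for consecutive $C_i, C_{i+1}$ sharing a horizontal boundary, the strict extension of $P'_i$'s top above this shared boundary forces $P'_i$ to meet the interior of $C_{i+1}$; but $C_{i+1}$ lies in $P'_{i+1}$, so the partition property of $\pset'$ forces $P'_i = P'_{i+1}$. Induction on $i$ then yields $P^* \subseteq P'$ for a single cell $P' \in \pset'$, completing the proof.
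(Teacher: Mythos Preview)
Your large-cell and neutral-cell cases are fine (the neutral bound of $16$ is looser than necessary but works). The gap is in the vertical/horizontal case: your key structural claim, that each $C_i$ --- and hence $P^*$ --- is contained in a \emph{single} cell of $\pset'$, is false.

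Here is a counterexample. Take $S^V$ with $x$-range $[0,2]$, horizontal grid lines at $y\in\{-2,0,2,4\}$, and let $\pset'$ contain the two small cells $P'_1=[0,1]\times[-1,3]$ and $P'_2=[1,2]\times[-1,3]$. Both lie in $S^V$, both have $y$-range strictly containing that of the grid cell $C_i=[0,2]\times[0,2]$, and every corner of $P'_1,P'_2$ has $y$-coordinate $-1$ or $3$, hence lies outside $C_i$. The neighbouring cells $C_{i-1},C_{i+1}$ catch these corners and become neutral, so $P^*=C_i$; yet $C_i$ is split between $P'_1$ and $P'_2$.

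The error is in the phrase ``forced to lie in $[a_1,a_2]$ yet to cover $C_i$, must equal $[a_1,a_2]$ exactly.'' Non-neutrality of $C_i$ only says that no corner of any $\pset'$-cell lies in $C_i$. Once you have (correctly) deduced that $P'_i$'s vertical extent strictly contains that of $C_i$, all four corners of $P'_i$ already sit outside $C_i$ \emph{no matter where its horizontal extent lies within} $[a_1,a_2]$. Nothing forces the horizontal extent to be all of $[a_1,a_2]$, and your covering claim does not follow.

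The paper salvages the argument by proving the weaker (true) statement that the left boundary segment $A$ of $P^*$ --- and symmetrically the right boundary $B$ --- is contained in a single cell of $\pset'$. The cell of $\pset'$ meeting $A$ on its right has left $x$-coordinate exactly $a_1$ (it lies in $S^V$ and contains points with $x=a_1$), and by the very propagation argument you sketch in your ``second step'' its vertical extent strictly contains that of every $C_i$, hence of $P^*$. One then splits the rectangles of $\opt'$ meeting $P^*$ into those contained in $S^V$ (at most $\ceil{|\opt'|/r}$ by $\rho$-accuracy and $r\le\rho$), those crossing $A$ (at most $20|\opt'|/r'$, since they all meet the single $\pset'$-cell containing $A$), and those crossing $B$ (same bound). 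Since $r'=8r$, the total is at most $40|\opt'|/r'+\ceil{|\opt'|/r}<20|\opt'|/r$.
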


\begin{proof}
 Let $P\in \pset^*$ be any such cell. Recall that every cell in $\pset'$ intersects at most $20|\opt'|/r'$ rectangles of $\opt'$, and that $r'=8r$.

If $P$ is a large cell, then $P\in \pset'$, so it intersects at most $20|\opt'|/r'\leq 20|\opt'|/r$ rectangles of $\opt'$. Assume now that $P$ is a neutral cell. Let $S^V$ and $S^H$ be the vertical and the horizontal strips of $G$ containing $P$. At most four rectangles of $\opt'$ contain the corners of $P$, and the remaining rectangles intersecting $P$ are contained in either $S^V$ or $S^H$. Since grid $G$ is $\rho$-accurate, and $ r\leq \rho$, at most $\ceil{\frac{|\opt'|}{\rho}}\leq \ceil{\frac{|\opt'|} r}$ rectangles of $\opt'$ can be contained in $S^V$, and the same holds for $S^H$. So $P$ intersects at most $\ceil{\frac{2|\opt'|}{r}}+4\leq \frac{20|\opt'|}{r}$ rectangles of $\opt'$. It now only remains to analyze the case where $P$ is a vertical or a horizontal cell. We only analyze the former; the latter case is analyzed similarly.

In order to be consistent with prior notation, we let $P^*$ be any vertical cell of $\pset^*$, and let $C$ be the vertical cell of $G$, such that $P^*$ was added to $\pset^*$ when $C$ was processed. Let $S^V$ be the vertical strip of $G$ containing $P^*$, and let $A,B$ be the left and the right boundaries of $P^*$, respectively.

The rectangles of $\opt'$ intersecting $P^*$ can be partitioned into three subsets: set $\xset_1$ containing all rectangles $R\subseteq S^V$ - their number is bounded by $\ceil{\frac{|\opt'|}{\rho}}\leq\ceil{\frac{|\opt'|}{r}}$ since $G$ is $\rho$-accurate and $r\leq \rho$; set $\xset_2$ containing all rectangles $R$ that intersect $A$; and set $\xset_3$ containing all rectangles that intersect $B$. We now show that $|\xset_2|\leq \frac{20|\opt'|}{r'}$, and $|\xset_3|$ is bounded similarly.

In order to show that $|\xset_2|\leq \frac{20|\opt'|}{r'}$, we show that there is some cell $P'\in \pset'$, such that $P'$ contains $A$, in the following claim.

\begin{claim}\label{claim: bound intersection}
There is some cell $P'\in \pset'$, such that $P'$ contains $A$ (possibly as part of its boundary).
\end{claim}

Notice that every rectangle in $\xset_2$ intersects $A$, and so, from the above claim, it also intersects $P'$. Since each cell in $\pset'$ intersects at most $\frac{20|\opt'|}{r'}$ rectangles of $\opt'$, we get that $|\xset_2|\leq \frac{20|\opt'|}{r'}$. A similar analysis shows that $|\xset_3|\leq \frac{20|\opt'|}{r'}$. Since $r'=8r$, the total number of rectangles intersecting $P^*$ is bounded by:
$\frac{40|\opt'|}{r'}+\ceil{\frac{|\opt'|}{r}}<\frac{20|\opt'|}{r}.$
It now only remains to prove Claim~\ref{claim: bound intersection}.

\begin{proof}
The intuition for the proof is that $P^*$ cannot contain a corner of any cell of $\pset'$: otherwise, one of the cells of $\sset'$ should have been neutral. Since $\pset'$ partitions the bounding box, $A$ must be contained in some cell of $\pset'$. We now give a formal proof.

Let $v$ be any point on the left boundary of the cell $C$. Then some cell $P'$ of $\pset'$ must contain $v$ (where possibly $v$ belongs to the boundary of $P'$). If $v$ lies on the right boundary of $P'$, then there is some other cell $P''\in \pset'$, such that $v$ lies on the left boundary of $P''$. We then replace $P'$ with $P''$. Therefore, we assume that $v$ does not lie on the right boundary of $P'$.

Assume first that $P'$ is a large cell. Then its four sides are aligned with the lines of the grid. Since $v$ does not lie on the right boundary of $P'$, it is easy to see that $C\subseteq P'$ must hold, which is impossible, since $C$ was unmarked when it was processed.

Assume now that $P'$ is a small cell. Since cell $C$ is not neutral, and it is vertical, $P'$ must intersect $C$ vertically. We claim that $A$ is contained in the boundary of $P'$: otherwise, the one of the corners of $P'$ is contained in $P^*$, and so one of the cells of $\sset'$ should have been neutral.
\end{proof}
\end{proof}
\end{proof}

\subsection{Proof of Corollary~\ref{corollary: main partition with grid}}

Throughout the proof, we use the constants $c_1,c_2$ from Theorems~\ref{thm: balanced partition reducing boundary complexity} and~\ref{thm: balanced partition general}, and constant $c^{**}$ from the definition of $r$-good grid-aligned partitions.
We use a parameter $r=\floor{\left(\frac{L^*}{12(c_1+c_2)}\right )^2}$. By appropriately setting the constant $\tilde c$, we can ensure that $r\geq \max\set{L^*,2^{24}c^{**},16c_1^2, 16c_2^2 }$, and that $\tc>6(c_1+c_2)^2$.


Assume first that $L>3$, and  let $\opt'$ be any optimal solution to $\rset(\fset)$.
Notice that we also are guaranteed that $\max\set{L^*,3}\leq r\leq \min\set{\rho,|\opt'|/16}$, and so we can apply Theorem~\ref{thm: rho-aligned r-good partition} to compute a $G$-aligned $r$-good partition $\pset$ of $B$ with respect to $\fset$ and $\opt'$. Let $Z$ be the set of the corners of all rectangles in $\pset$. Since grid $G$ is $\rho$-accurate for $\fset$, all rectangles in $\fset$ are aligned with the grid $G$, and so are the points of $Z$. We then  apply Theorem~\ref{thm: balanced partition reducing boundary complexity} to $\fset$ and $\pset$, to compute a valid decomposition pair $(\fset_1',\fset_2')$ for $\fset$, such that the rectangles in $\fset_1'\cup \fset_2'$ are aligned with $Z$, and hence with $G$. 
Moreover, we are guaranteed that  $|\fset_1'|,|\fset_2'|\leq \frac{2L}{3}+c_1\sqrt r\leq \frac{3L^*}{4}$ from the choice of $r$, and $|\opt_{\fset_1'}|+|\opt_{\fset_2'}|\geq |\opt_{\fset}|\left (1-\frac{c_1}{\sqrt r}\right )$. Assume w.l.o.g. that $|\opt_{\fset_1'}|\geq |\opt_{\fset_2'}|$, so $|\opt_{\fset_2'}|\leq |\opt'|/2$.
Then:

\[|\opt_{\fset_1'}|\geq \frac{|\opt_{\fset}|} 2 -\frac{|\opt_{\fset}|\cdot c_1}{2\sqrt r}\geq \frac{|\opt_{\fset}|}{4},\]

since $r\geq 16c_1^2$. Moreover, from Observation~\ref{obs: rho accurate for big subinstances}, grid $G$ remains $\rho/4$-accurate for $\fset_1'$. Let $\rho'=\rho/4$, and recall that $r\leq \rho'$. We now let $\opt''$ be the optimal solution to $\rset(\fset'_1)$, so $|\opt''|\geq|\opt_{\fset}|/4>16$. Since $\max\set{L^*,3}\leq r\leq \min\set{\rho',|\opt''|/16}$, we can again apply Theorem~\ref{thm: rho-aligned r-good partition} to compute a $G$-aligned $r$-good partition $\pset'$ of $B$ with respect to $\fset'_1$ and $\opt'$.
 Let $Z'$ be the set of the corners of all rectangles in $\pset'$. Then all vertices in $Z'$ are aligned with $G$. We then apply Theorem~\ref{thm: balanced partition general} to $\fset_1'$ and $\pset'$, to compute a valid decomposition pair $(\fset_1,\fset_2)$ for $\fset'_1$, such that the rectangles in $\fset_1\cup \fset_2$ are aligned with $Z'$, and thus with $G$. 

Notice that Theorem~\ref{thm: balanced partition general} guarantees that for each $i\in\set{1,2}$, $|\opt_{\fset_i}|\leq 3|\opt_{\fset}|/4$. Moreover, $|\fset_1|,|\fset_2|\leq |\fset_1'|+c_2\sqrt r\leq \frac{2L^*}{3}+(c_1+c_2)\sqrt r\leq 3L^*/4$, from the definition of $r$. 

The final three sets of fake rectangles are $\fset_1,\fset_2$ and $\fset_2'$. To finish the proof, we observe that:

\[\begin{split}
|\opt_{\fset_1}|+|\opt_{\fset_2}|+|\opt_{\fset_2'}|&\geq  |\opt_{\fset_1'}|\cdot \left (1-\frac{c_2}{\sqrt r}\right )+|\opt_{\fset_2'}|\\
&\geq \left (|\opt_{\fset_1'}|+|\opt_{\fset_2'}|\right )\cdot  \left (1-\frac{c_2}{\sqrt r}\right )\\
&\geq |\opt_{\fset}|\cdot  \left (1-\frac{c_1}{\sqrt r}\right )\left (1-\frac{c_2}{\sqrt r}\right ) \\
&\geq |\opt_{\fset}|\cdot  \left (1-\frac{c_1+c_2}{\sqrt r}\right )\\
&\geq |\opt_{\fset}|\cdot \left (1-\frac{\tc}{L^*}\right ),\end{split}\]

since $\tc>6(c_1+c_2)^2$.

Assume now that $L<3$. Then instead of running the first part of the above algorithm, we simply set $\fset_1'=\fset$ and $\fset_2'=\set{B}$. We then apply the second part of the algorithm to $\fset_1'$ exactly as before (in other words, we only need to apply Theorem~\ref{thm: balanced partition general}).

\end{document}
\label{-----------------------------------------------END-----------------------------------------}
Leftovers
\subsection{Outer Dynamic Program}
For convenience, in this section, all logarithms are to the base of $4/3$. Recall that $N$ denotes the value of the optimal solution to the original instance, and for each sub-instance $\fset$, $\aset(\fset)$ is the value of the solution returned by the $(c_A\log\log \opt)$-approximation algorithm from Corollary~\ref{cor: an log log opt approx} on instance $\rset(\fset)$. We will assume that $\aset(\cdot)$ is a monotonic function, that is, if $\fset'$ is a sub-instance of $\fset$, then $\aset(\fset)\geq \aset(\fset')$. We discuss how we ensure this later. For now, we assume that for every sub-instance $\fset$ that our algorithm considers, we are given a value $\aset(\fset)$, such that $\aset(\fset)\leq \opt(\fset)\leq c_A\cdot(\log\log \opt(\fset))\cdot \opt(\fset)$, and for any pair $\fset,\fset'$ of instances, where $\fset'$ is a sub-instance of $\fset$, $\aset(\fset)\geq \aset(\fset')$.
Throughout the proof we assume that $N$ is large enough, so, e.g. $\sqrt{\log\log N}>\tc,\tc'$, $(\log\log N)^{10}<\log N$, etc. We will also assume that $\frac 1 {\eps}<(4/3)^{(\log N)^{1/4}}$.

We set $L^*_1=\frac{88\tc'\sqrt{\log N}}{\eps}$. 
We call an instance $\fset$ a \emph{level-1 instance}, iff $|\fset|\leq L^*_1$, and all rectangles in $\fset$ have integral coordinates.
We build a dynamic programming table $T$, that contains, for every level-1 instance $\fset$, an entry $T[\fset]$, that will store an approximate solution to the corresponding problem instance. The number of entries in $T$ is therefore $n^{O(\sqrt{ \log N})/\eps}$.

We say that a level-1 instance $\fset$ is a \emph{basic instance} iff $\aset(\fset)\leq \left (\frac 4 3 \right )^{64\sqrt{\log N}}$. Notice that for each such basic instance $\fset$, $|\opt(\fset)|\leq O(\log\log N)\cdot \left (\frac 4 3 \right )^{64\sqrt{\log N}}=2^{O(\sqrt{\log N})}$, and so we can find a $(1-\eps/2)$-approximate solution to $\fset$ in time $n^{O(\sqrt{\log N})/\eps}$ using the algorithm from Section~\ref{sec: running time exp log n}.

The main ingredient of our algorithm is the following theorem, that allows us to fill out the entries of $T$ that correspond to non-basic instances:

\begin{theorem}\label{thm: inner dynamic program}
Suppose we are given any non-basic level-1 instance $\fset$, and assume that for each level-1 instance $\fset'$, where $\fset'$ is a strict sub-instance of $\fset$, we have a $\left (1-\frac {\eps} 2 -\frac{44\tc'\log(|\opt(\fset')|)}{L_1^*\cdot \sqrt {\log N}}\right )$-approximate solution stored in entry $T[\fset']$. Then we can compute, in time $n^{O(\sqrt{\log N})/\eps}$, a solution to instance $\fset$ of value $\left (1-\frac {\eps} 2- \frac{44\tc'\log(|\opt(\fset)|)}{L_1^*\cdot \sqrt {\log N}}\right )\cdot |\opt(\fset)|$.
\end{theorem}

We prove this theorem in the following subsection, by using an inner dynamic program. It is immediate to see that Theorem~\ref{thm: inner dynamic program} gives a $(1-\eps)$-approximation to MISR with running time $n^{O(\sqrt{\log N})/\eps}$: We fill out the entries $T[\fset]$ of the dynamic programming table from smaller to larger area of the region $S(\fset)$. Therefore, we only process $\fset$ after we have finished processing all its strict sub-instances. Each basic sub-instance is solved directly as described above, in time $n^{O(\sqrt{\log N})/\eps}$. For each non-basic sub-instance $\fset$, we simply apply Theorem~\ref{thm: inner dynamic program} in order to compute the approximate solution, which is then added to $T[\fset]$. Since for $L_1^*=88\tc'\sqrt{\log N}/\eps$, the final approximation factor is:

\[\left (1-\frac {\eps} 2 -\frac{44\tc'\log(|\opt|)}{L_1^*\cdot \sqrt {\log N}}\right )\geq \left (1-\frac {\eps} 2-\frac{44\tc'\sqrt{\log N}}{L_1^*}\right )\geq \left (1-\frac \eps 2-\frac{\eps} 2\right )
\geq 1-\eps.\]

In order to analyze the running time of the algorithm, we observe that the dynamic programming table has $n^{O(\sqrt{\log N})/\eps}$ entries, and it takes $n^{O(\sqrt{\log N})/\eps}$ time to compute each entry. Therefore, the total running time is $n^{O(\sqrt{\log N})/\eps}$. It now only remains to prove Theorem~\ref{thm: inner dynamic program}. We do so in the following subsection, by using an inner dynamic program.

\subsection{Inner Dynamic Program}
Since $\fset$ is a non-basic instance, we can assume that $|\opt(\fset)|\geq \aset(\fset)\geq(4/3)^{64\sqrt{ \log N}}$. We set $\rho=(4/3)^{8\sqrt{ \log N}}$, and we compute a $\rho$-accurate grid $G$ for $\fset$, using Claim~\ref{claim: find a rho-accurate grid}. Let $Z$ be the set of all vertices of $G$. Then $|Z|\leq \left (c_A\log\log N\cdot (4/3)^{16\sqrt{ \log N}}\right )^2\leq 2^{O(\sqrt{ \log N})}$, and all coordinates of points in $Z$ are integral. 

We define our main parameter for the inner dynamic program, $L_2^*=8\log N/\eps$. We now define several types of instances considered in our algorithm.

\begin{itemize}
\item We say that a sub-instance $\fset'$ of $\fset$ is a \emph{level-2 sub-instance for $\fset$}, iff $|\fset'|\leq L^*_2$, and $\fset'$ is aligned with $G$. 

\item We say that a level-2 sub-instance $\fset'$ of $\fset$ is a \emph{basic} sub-instance, iff $\aset(\fset')\leq \aset(\fset)/(4/3)^{\sqrt {\log N}}$. 

\item A level-2 sub-instance $\fset'$ of $\fset$ is called uninteresting iff $\aset(\fset')< \frac{\aset(\fset)}{(4/3)^{4\sqrt{ \log N}}}\cdot \left (\frac {L_2^*}{\tc'\cdot c_A\log\log N}\right )^{\log |\fset'|}$. Otherwise it is called interesting.
\end{itemize}

The reason we need to define uninteresting instances is that we need to deal with instances whose values are much smaller than $\aset(\fset)$, that may occur during our partitioning procedures. The problem with such instances is that their values are so small that our grid $G$ no longer serves as a $\rho'$-accurate grid for such instances, for any reasonable value $\rho'$. Because of this we cannot handle such instances in our dynamic program, and we cannot assume that they were handled by other dynamic programs (when smaller level-1 instances are processed), because for each such sub-instance we may have a different grid $G$, and only instances aligned with this grid are processed. Instead, we will simply set the entries of the inner dynamic program corresponding to uninteresting instances to $0$. Since the values of the uninteresting instances are very small compared to the instances processed by the algorithm, this will not harm the solution too much, as we will show.

For now, we observe that if a level-2 instance $\fset'$ is interesting, then:

\[\begin{split}\opt(\fset')&\geq \aset(\fset')\geq \frac{\aset(\fset)}{(4/3)^{4\sqrt{ \log N}}}\cdot \left (\frac {L_2^*}{\tc'\cdot c_A\log\log N}\right )^{\log |\fset'|}\\& \geq \frac{\aset(\fset)}{(4/3)^{4\sqrt{ \log N}}}\geq \frac{\opt(\fset)}{(4/3)^{4\sqrt{ \log N}}\cdot c_A\log\log N}\geq \frac{\opt(\fset)}{(4/3)^{5\sqrt{ \log N}}},\end{split}\]
 
 and so, from Observation~\ref{obs: rho accurate for big subinstances}, $G$ remains a $\rho'$-accurate grid for $\fset'$, for $\rho'\geq \rho/(4/3)^{5\sqrt{ \log N}}\geq (4/3)^{3\sqrt{ \log N}}$.
 
 On the other hand, if a level-2 instance $\fset'$ is uninteresting, then:

\[\begin{split}
\aset(\fset')&\leq \frac{\aset(\fset)}{(4/3)^{4\sqrt{ \log N}}}\cdot \left (\frac {L_2^*}{\tc'\cdot c_A\log\log N}\right )^{\log |\fset'|}\\
&\leq \frac{\aset(\fset)}{(4/3)^{4\sqrt{ \log N}}}\cdot \left (\frac {L_2^*}{\tc'\cdot c_A\log\log N}\right )^{\log L_2^*}\\
&\leq \frac{\aset(\fset)}{(4/3)^{4\sqrt{ \log N}}}\cdot  2^{(O(\log\log N))^2}\\
&\leq \frac{\aset(\fset)}{(4/3)^{3\sqrt{ \log N}}}.\end{split}\]

Therefore, if $\fset'$ is a non-basic instance, then it is interesting. Moreover, if $\fset'$ is a non-basic instance, and $\fset''$ is an uninteresting sub-instance of $\fset'$, then $\aset(\fset'')\leq \aset(\fset')/(4/3)^{2\sqrt{ \log N}}$. We use these facts later.

Finally, we note that since $|\opt(\fset)|\geq (4/3)^{64\sqrt {\log N}}$ (as $\fset$ is a non-basic level-1 instance), for each interesting sub-instance $\fset'$ of $\fset$, $|\opt(\fset')|\geq \aset(\fset')\geq (4/3)^{60\sqrt{\log N}}$.

We are now ready to define the inner dynamic program. The dynamic program consists of a dynamic programming table $T^{\fset}$, that for brevity we will denote by $T'$. For each level-2 sub-instance $\fset'$ of $\fset$, there is an entry $T'[\fset']$ in the dynamic programming table, that will store an approximate solution to instance $\rset(\fset')$. We will use $T'[\fset']$ to both denote this solution and its value. Notice that the number of the entries in the dynamic programming table is $|Z|^{L^*_2}=2^{O(\log N\sqrt{\log N})/\eps}=N^{O(\sqrt{\log N})/\eps}$.

As before, we fill out the entries of the dynamic programming from smaller to larger areas of the corresponding regions $S(\fset')$, so when we process $\fset'$, all its level-2 strict sub-instances have already been processed. Consider now some level-2 sub-instance $\fset'$ of $\fset$. We compute $T'[\fset']$ as follows.

\begin{itemize}
\item If $\fset'$ is an uninteresting instance, then we set $T'[\fset']=0$ and the corresponding solution is $\emptyset$.

\item Otherwise, if $\fset'$ is a basic and interesting instance, then let $|\fset'|=L$. 

\begin{itemize}
\item If $L\leq L^*_1$, then the outer dynamic programming table contains an entry $T[\fset']$ with a $\left (1-\frac {\eps} 2 -\frac{44\tc'\log(|\opt(\fset')|)}{L_1^*\cdot \sqrt {\log N}}\right )$-approximate solution for $\fset'$. We copy this value and the corresponding solution to $T'[\fset']$.

\item Otherwise, for every possible pair $\fset_1,\fset_2$ of disjoint level-2 sub-instances of $\fset'$, with $|\fset_1|,|\fset_2|\leq 3L/4$, we compute the value $T'[\fset_1]+T'[\fset_2]$. We select a pair $(\fset_1,\fset_2)$ for which this value is maximized, and set $T'[\fset']$ to be that value. The  solution stored in $T'[\fset']$ is the union of the solutions stored in $T'[\fset_1]$ and $T'[\fset_2]$.
\end{itemize}

\item Otherwise, $\fset'$ is a non-basic instance. In this case, for every possible triple $\fset_1,\fset_2,\fset_3$ of disjoint level-2 sub-instances of $\fset'$, we compute the corresponding value $T'[\fset_1]+T'[\fset_2]+T'[\fset_3]$. We select the triple for which this value is maximized, and store this value in $T'[\fset']$. The solution stored in $T'[\fset']$ is the union of the solutions stored in $T'[\fset_1]$, $T'[\fset_2]$ and $T'[\fset_3]$.
\end{itemize}

This finishes the description of the algorithm. As observed before, the number of entries in the dynamic programming table is $N^{O(\sqrt{\log N})/\eps}$. In order to compute each entry, in the worst case, we need to go over all possible triples of level-2 instances, which can be done in time $N^{O(\sqrt{\log N})/\eps}$. The total running time is then $N^{O(\sqrt{\log N})/\eps}$ as required. It is clear from the description of the algorithm that each entry $T'[\fset']$ stores a valid solution to instance $\rset(\fset')$. It now only remains to analyze the approximation factor of the algorithm. We do so in the following two theorems.

For each $L>L_1^*$, let $i_L$ be the smallest integer $i$, such that $L\cdot (3/4)^i\leq L_1^*$. Let $S_L=\frac{2\tc'}{L}\sum_{i=0}^{i_L}(4/3)^i$. Notice that for every $L$, $S_L\leq \frac{4}3\cdot \frac{2\tc'}{L^*_1}\sum_{i=0}^{\infty}(3/4)^i\leq \frac{32\tc'}{3L^*_1}\leq \frac{11\tc'}{L_1^*}$.

\begin{theorem}\label{thm: approx-factor-basic}
If $\fset'$ is a basic interesting level-2 sub-instance of $\fset$, with $|\fset'|=L$, then the value of the solution stored in $T'[\fset']$ is at least:
\[\left (1-\frac{\eps}{2}-\frac{44\tc'\log(|\opt(\fset')|)}{L_1^*\cdot \sqrt {\log N}}-S_L\right )|\opt(\fset')|\]
In particular, this value is at least $\left (1-\frac{\eps} 2-\frac{44\tc'\log(|\opt(\fset')|)}{L_1^*\cdot \sqrt {\log N}}-\frac{11\tc'}{L^*_1}\right )|\opt(\fset')|$.
\end{theorem}

\begin{proof}
The proof is by induction on $L=|\fset'|$. If $L\leq L^*_1$, then $\fset'$ is a level-1 instance, so from our assumption, $T[\fset']$ contains a solution of value at least $\left (1-\frac {\eps} 2 -\frac{44\tc'\log(|\opt(\fset')|)}{L_1^*\cdot \sqrt {\log N}}\right )|\opt(\fset')|$. Since $S_L=0$ in this case, the assertion follows.

Assume now that $L>L^*_1$.  Since instance $\fset'$ is interesting, as observed above, $G$ remains a $\rho'$-accurate grid for $\fset'$, for $\rho'=(4/3)^{3\sqrt{\log N}}$. Since $L\leq L_2^*=8\log N/\eps$, from our assumption that $\frac 1 {\eps}<(4/3)^{(\log N)^{1/4}}$, 
 we can assume that $\rho'\geq (L/2^9)^2$.
Therefore, we can apply Corollary~\ref{corollary: partition with grid to decrease boundary} to $\fset'$, to obtain two disjoint sub-instances $\fset_1,\fset_2$ of $\fset$, of boundary complexities $L_1$ and $L_2$, respectively, where $L_1,L_2\leq 3L/4$, and:

\[|\opt(\fset_1)|+|\opt(\fset_2)|\geq |\opt(\fset')|\cdot \left (1-\frac{\tc'}{L}\right ).\]

The pair $(\fset_1,\fset_2)$ of instances is considered by our algorithm. Therefore, it now only remains to show that $T'[\fset_1]+T'[\fset+2]\geq \left (1-\frac{\eps}{2}-\frac{44\tc'\log(|\opt(\fset')|)}{L_1^*\cdot \sqrt {\log N}}-S_L\right )|\opt(\fset')|$. 

We now consider three cases. First, if both $\fset_1,\fset_2$ are interesting, then, by the induction hypothesis:

\[\begin{split}
T'[\fset_1]+T'[\fset_2]&\geq \left (1-\frac{\eps}{2}-\frac{44\tc'\log(|\opt(\fset_1)|)}{L_1^*\cdot \sqrt {\log N}}-S_{L_1}\right )|\opt(\fset_1)|\\
&+\left (1-\frac{\eps}{2}-\frac{44\tc'\log(|\opt(\fset_2)|)}{L_1^*\cdot \sqrt {\log N}}-S_{L_2}\right )|\opt(\fset_2)|
\end{split}\]

Observe that since $L_1\leq 3L/4$, $S_{L_1}\leq S_L-2\tc/L$, and the same is true for $S_{L_2}$. We then get that the sum is at least:

\[\begin{split}
&\left (1-\frac{\eps}{2}-\frac{44\tc'\log(|\opt(\fset')|)}{L_1^*\cdot \sqrt {\log N}}-S_{L}+2\tc/L\right )|\opt(\fset_1)|\\
&\ \ \ \ +\left (1-\frac{\eps}{2}-\frac{44\tc'\log(|\opt(\fset')|)}{L_1^*\cdot \sqrt {\log N}}-S_{L}+2\tc/L\right )|\opt(\fset_2)|\\
&\geq \left (1-\frac{\eps}{2}-\frac{44\tc'\log(|\opt(\fset')|)}{L_1^*\cdot \sqrt {\log N}}-S_{L}+2\tc/L\right )\cdot \left (1-\frac{\tc'}L\right )\cdot |\opt(\fset')|\end{split}\]

For brevity, denote $\alpha=1-\frac{\eps}{2}-\frac{44\tc'\log(|\opt(\fset')|)}{L_1^*\cdot \sqrt {\log N}}-S_{L}$. We then get the bound of at least:

\[(\alpha+2\tc/L)(1-\tc'/L)|\opt(\fset')|\geq\alpha\cdot |\opt|\]

(we have used the fact that for any non-negative values $\alpha,x$, where $\alpha+2x<2$, $(\alpha+2x)(1-x)>\alpha$).

We now consider the second case, where exactly one of the sub-instances $\fset_1,\fset_2$ is uninteresting. We assume w.l.o.g. that it is $\fset_2$. 
We will show that in this case, $\aset(\fset_2)$ is much smaller than $\aset(\fset')$, and so the bound that we get is very close to the bound from Case 1.
Recall that, from the definition of uninteresting instances,

\[\begin{split}\aset(\fset_2)&\leq \frac{\aset(\fset)}{(4/3)^{4\sqrt{ \log N}}}\cdot \left (\frac {L_2^*}{\tc'\cdot c_A\log\log N}\right )^{\log L_2}\\
&\leq \frac{\aset(\fset)}{(4/3)^{4\sqrt{ \log N}}}\cdot \left (\frac {L_2^*}{\tc'\cdot c_A\log\log N}\right )^{\log 3L/4}\\
&\leq \frac{\aset(\fset)}{(4/3)^{4\sqrt{ \log N}}}\cdot \left (\frac {L_2^*}{\tc'\cdot c_A\log\log N}\right )^{\log L-1}\\
\end{split}
\]

On the other hand, since $\fset'$ is an interesting instance,

\[\aset(\fset')> \frac{\aset(\fset)}{(4/3)^{4\sqrt{ \log N}}}\cdot \left (\frac {L_2^*}{\tc'\cdot c_A\log\log N}\right )^{\log L}\]

Therefore, 

\[|\opt(\fset_2)|\leq \frac{\aset(\fset_2)}{c_A\log\log N}\leq \frac{\tc'}{L_2^*}\aset(\fset')\leq \frac{\tc'}{L}\aset(\fset')\leq \frac{\tc'}{L}|\opt(\fset')|,\]

 since $L\leq L_2^*$.
The only difference, therefore, from the calculations from Case 1, is that, instead of the value that we have assumed is stored in $T[\fset_2]$ (which in our calculations was at most $|\opt(\fset_2)|$), we need to use $0$. Therefore, the bound that we get on $T[\fset_1]+T[\fset_2]$ is at least as large as the bound in Case 1, minus $|\opt(\fset_2)|$. We conclude that in this case, using the notation from Case 1,

\[\begin{split}
T[\fset_1]+T[\fset_2]&\geq (\alpha+2\tc/L)(1-\tc'/L)|\opt(\fset')|-|\opt(\fset_2)|\\
&\geq (\alpha+2\tc/L)(1-\tc'/L)|\opt(\fset')|-\frac{\tc'}{L}|\opt(\fset')|\\
&\geq \alpha \opt(\fset')|
\end{split}\]

(we have used the fact that for any $\alpha,x$, where $\alpha+2x\leq 1$, $(\alpha+2x)(1-x)-x\geq \alpha$. Since $\alpha\leq 1-\eps/2$, and $\tc'/L_2^*\leq \eps/16\log N$, we get that $\alpha+2\tc'/L_2^*<1$.)

Finally, we need to consider the third case, where both $\fset_1$ and $\fset_2$ are uninteresting. We claim that this case cannot happen. Indeed, from the analysis of Case 2, we obtain that $|\opt(\fset_1)|,|\opt(\fset_2)|<\frac{\tc'}{L}|\opt(\fset')|$. On the other hand, we are guaranteed that $|\opt(\fset_1)|+|\opt(\fset_2)|\geq |\opt(\fset')|\cdot \left (1-\frac{\tc'}{L}\right )$, which is impossible.
\end{proof}

Finally, we analyze the non-basic instances in the following theorem.

\begin{theorem}\label{thm: approx-factor-non-basic}
If $\fset'$ is any interesting level-2 sub-instance of $\fset$, then the value of the solution stored in $T'[\fset']$ is at least:
\[\left (1-\frac{\eps}{2}-\frac{44\tc'\log(|\opt(\fset)|)}{L_1^*\cdot \sqrt {\log N}}+\frac{11\tc'}{L^*_1}-\frac{\log(|\opt(\fset')|)-\log(|\opt(\fset)|)+\sqrt{\log N}/2}{L_2^*/2\tc'}\right )|\opt(\fset')|.\]
\end{theorem}

\begin{proof}
The proof is by induction on $\aset(\fset')$. If $\aset(\fset')\leq \aset(\fset)/(4/3)^{\sqrt {\log N}}$, then $\fset'$ is a basic instance. From Theorem~\ref{thm: approx-factor-basic}, the value stored in $T'[\fset']$ is at least $\left (1-\frac{\eps}{2}-\frac{44\tc'\log(|\opt(\fset')|)}{L_1^*\cdot \sqrt {\log N}}-\frac{11\tc'}{L^*_1}\right )|\opt(\fset')|$. Since $\aset(\fset')\leq \aset(\fset)/(4/3)^{\sqrt {\log N}}$, we get that 

\[\begin{split}
\log(|\opt(\fset')|)&\leq \log(\aset(\fset')\cdot c_A\log\log |\opt(\fset')|)\\
&\leq \log(\aset(\fset'))+\log(c_A\log\log |\opt(\fset')|)\\
&\leq \log(\aset(\fset))-\sqrt{\log N}+\log(c_A\log\log |\opt(\fset')|)\\
&\leq \log(|\opt(\fset)|)-\sqrt{\log N}/2\end{split}\]

Therefore, $\log(|\opt(\fset')|)-\log(|\opt(\fset)|)+\sqrt{\log N}/2\leq 0$, and hence the bound that we get is:

\[
\begin{split}
T'[\fset']&\geq\left (1-\frac{\eps}{2}-\frac{44\tc'\log(|\opt(\fset')|)}{L_1^*\cdot \sqrt {\log N}}-\frac{11\tc'}{L^*_1}\right )|\opt(\fset')|\\
&\geq \left (1-\frac{\eps}{2}-\frac{44\tc'\log(|\opt(\fset)|)-\sqrt{\log N}/2}{L_1^*\cdot \sqrt {\log N}}-\frac{11\tc'}{L^*_1}\right )|\opt(\fset')|\\
&\geq \left (1-\frac{\eps}{2}-\frac{44\tc'\log(|\opt(\fset)|)}{L_1^*\cdot \sqrt {\log N}}+\frac{22\tc'}{L_1^*}-\frac{11\tc'}{L^*_1}\right )|\opt(\fset')|\\
&\geq \left (1-\frac{\eps}{2}-\frac{44\tc'\log(|\opt(\fset)|)}{L_1^*\cdot \sqrt {\log N}}+\frac{11\tc'}{L_1^*}\right )|\opt(\fset')|\\
&\geq \left (1-\frac{\eps}{2}-\frac{44\tc'\log(|\opt(\fset)|)}{L_1^*\cdot \sqrt {\log N}}+\frac{11\tc'}{L^*_1}-\frac{\log(|\opt(\fset')|)-\log(|\opt(\fset)|)+\sqrt{\log N}/2}{L_2^*/2\tc'}\right )|\opt(\fset')|,
\end{split}
\]

as required. From now on, in order to simplify notation, we denote:

\[\alpha=1-\frac{\eps}{2}-\frac{44\tc'\log(|\opt(\fset)|)}{L_1^*\cdot \sqrt {\log N}}+\frac{11\tc'}{L^*_1}-\frac{-\log(|\opt(\fset)|)+\sqrt{\log N}/2}{L_2^*/2\tc'}
\]

Notice that $\alpha$ is independent of the instances $\fset'$ we consider. We now only need to prove that for each level-2 sub-instance $\fset'$ of $\fset$, $T'[\fset']\geq \left(\alpha-\frac{2\tc'\log(|\opt(\fset')|)}{L_2^*}\right)|\opt(\fset')|$.
We have already shown this inequality to be true for basic instances $\fset'$. We assume from now on that instance $\fset'$ is non-basic.

 Since instance $\fset'$ is interesting, as observed above, $G$ remains a $\rho'$-accurate grid for $\fset'$, for $\rho'=(4/3)^{3\sqrt{\log N}}$. Since $L_2^*=8\log N/\eps$, and $\frac 1 {\eps}<(4/3)^{(\log N)^{1/4}}$, we can assume that $\rho'\geq 2(L/2^9)^2$.
Therefore, we can apply Corollary~\ref{corollary: main partition with grid} to $\fset'$, to obtain three disjoint sub-instances $\fset_1,\fset_2,\fset_3$ of $\fset$ that are aligned with $G$, of boundary complexities at most $L_2^*$, and:

\[\sum_{i=1}^3|\opt(\fset_i)|\geq |\opt(\fset')|\cdot \left (1-\frac{\tc'}{L_2^*}\right ).\]

The triple $(\fset_1,\fset_2,\fset_3)$ of instances is considered by our algorithm. Therefore, it now only remains to show that $\sum_{i=1}^3T'[\fset_i]\geq \left(\alpha-\frac{2\tc'\log(|\opt(\fset')|)}{L_2^*}\right)|\opt(\fset')|$.

We now consider several cases. We start with the case where all three instances $\fset_1,\fset_2,\fset_3$ are interesting. Then, by the induction hypothesis:

\[\begin{split}
\sum_{i=1}^3T'[\fset_i]&\geq \sum_{i=1}^3\left(\alpha-\frac{2\tc'\log(|\opt(\fset_i)|)}{L_2^*}\right)|\opt(\fset_i)|\\
&\geq \left(\alpha-\frac{2\tc'(\log(|\opt(\fset')|)-1)}{L_2^*}\right)\sum_{i=1}^3|\opt(\fset_i)|\\
&\geq \left (\alpha- \frac{2\tc'\log(|\opt(\fset')|)}{L_2^*}+\frac{2\tc'}{L_2^*}\right )\cdot \left (1-\frac{\tc'}{L_2^*}\right )|\opt(\fset')|\\
&\geq \left (\alpha- \frac{2\tc'\log(|\opt(\fset')|)}{L_2^*}\right )\cdot |\opt(\fset')|
\end{split}
\]

(we have used the fact that for each $i$, $|\opt(\fset_i)|\leq 3|\opt(\fset')|/4$, and so $\log(|\opt(\fset_i)|)\leq \log (|\opt(\fset')|)-1$; we also used the fact that for all $\beta,x>0$ where $\beta+x<1$, $(\beta+2x)(1-x)\geq \beta$.)

Finally, we need to consider the case where at least one of the instances $\fset_i$ is uninteresting. Recall that if $\fset_i$ is an uninteresting sub-instance of a non-basic instance $\fset'$, then

\[|\opt(\fset_i)|\leq \aset(\fset_i)\cdot c_A\log \log N\leq \frac{\aset(\fset')\cdot c_A\log\log N}{(4/3)^{2\sqrt{ \log N}}}\leq \frac{\aset(\fset')}{(4/3)^{\sqrt{ \log N}}}\leq \frac{|\opt(\fset')|}{(4/3)^{\sqrt{ \log N}}}
\]

Therefore, the bound we obtain in Case 1 overestimates $T'[\fset']$ by at most $3\frac{|\opt(\fset')|}{(4/3)^{\sqrt{ \log N}}}$, and:

\[\begin{split}
T'[\fset']&\geq \left (\alpha- \frac{2\tc'\log(|\opt(\fset')|)}{L_2^*}+\frac{2\tc'}{L_2^*}\right )\cdot \left (1-\frac{\tc'}{L_2^*}\right )|\opt(\fset')|-3\frac{|\opt(\fset')|}{(4/3)^{\sqrt{ \log N}}}\\
&\geq  \left (\alpha- \frac{2\tc'\log(|\opt(\fset')|)}{L_2^*}+\frac{2\tc'}{L_2^*}\right )\cdot \left (1-\frac{\tc'}{L_2^*}\right )|\opt(\fset')|-\frac{\tc'}{L_2^*}|\opt(\fset')|\\
&\geq  \left (\alpha- \frac{2\tc'\log(|\opt(\fset')|)}{L_2^*}\right )|\opt(\fset')|,\end{split}
\]

since for all $\beta,x\geq 0$ with $\beta+x\leq 1$, $(\beta+2x)(1-x)+x\geq\beta$.
\end{proof}

We conclude that $T'[\fset]$ stores a valid solution to instance $\rset(\fset)$ of value at least:

\[\left (1-\frac{\eps}{2}-\frac{44\tc'\log(|\opt(\fset)|)}{L_1^*\cdot \sqrt {\log N}}+\frac{11\tc'}{L^*_1}-\frac{\sqrt{\log N}}{L_2^*}\right )|\opt(\fset)|.\]

Recall that $L_1^*=88\tc'\sqrt{\log N}/\eps$, while $L_2^*=8\log N/\eps$. Therefore, 

\[\frac{11\tc'}{L^*_1}-\frac{\sqrt{\log N}}{L_2^*}=\frac{\eps}{8\sqrt {\log N}}-\frac{\eps}{8\sqrt{\log N}}=0,\]

and so the value stored in $T'[\fset]$ is at least $\left (1-\frac{\eps}{2}-\frac{44\tc'\log(|\opt(\fset)|)}{L_1^*\cdot \sqrt {\log N}}\right )|\opt(\fset)|$, as required.

\paragraph{Computing the Values $\aset(\fset)$}
We say that an instance $\fset$ is useful if it is considered by our algorithm: that is, either it is a level-1 instance, or it is a level-2 sub-instance of any non-basic level-1 instance. The number of the useful instances $\fset$ is bounded by the total number of all entries in all dynamic programming tables that we build, which is at most $n^{O(\sqrt{\log N})/\eps}$. Notice that we can compute all useful instances without having to run our algorithm.

For each such useful instance $\fset$, we run the $O(\log\log \opt)$-approximation algorithm from Corollary~\ref{cor: an log log opt approx} to obtain an initial value $\aset(\fset)$, such that  $\aset(\fset)\leq \opt(\fset)\leq c_A\cdot(\log\log \opt(\fset))\cdot \opt(\fset)$ for some constant $c_A$. We then process all instances $\fset$ in the non-decreasing order of the area of $S(\fset)$, so that, when $\fset$ is being processed, we have finished processing all its sub-instances. For each instance $\fset$, let $A'$ be the maximum value of $\aset(\fset')$ for all strict sub-instances $\fset'$ of $\fset$. If $A'>\aset(\fset)$, then we set $\aset(\fset)$ to be $A'$. It is immediate to see that we still have $\aset(\fset)\leq \opt(\fset)\leq c_A\cdot(\log\log \opt(\fset))\cdot \opt(\fset)$, since an optimal solution to instance $\fset'$ is also an optimal solution to instance $\fset$. Once we finish processing all useful instances $\fset$, we will obtain the desired values $\aset(\fset)$.